\setlist{nosep}
\newtheorem{theorem}{Theorem}[section]
\newtheorem{lemma}{Lemma}[section]
\newtheorem{assumption}{Assumption}
\newtheorem{corollary}{Corollary}[section]
\newtheorem{remark}{Remark}[section]
\newtheoremstyle{theoremd}% name of the style to be used
  {\topsep}% measure of space to leave above the theorem. E.g.: 3pt
  {\topsep}% measure of space to leave below the theorem. E.g.: 3pt
  {\itshape}% name of font to use in the body of the theorem
  {0pt}% measure of space to indent
  {\bfseries}% name of head font
  {.\!\!$^{\,\boldsymbol\prime}$}% punctuation between head and body
  { }% space after theorem head; " " = normal interword space
  {\thmname{#1}\thmnumber{ #2}\thmnote{ (#3)}}
\theoremstyle{theoremd}
\newtheorem{aprime}{Assumption}
\newcommand{\mb}[1]{\mathbb{#1}}
\newcommand{\wh}[1]{\widehat{#1}}
\newcommand{\wt}[1]{\widetilde{#1}}
\newcommand{\mf}[1]{\mathsf{#1}}
\newcommand{\mr}[1]{\mathrm{#1}}
\newcommand{\mc}[1]{\mathcal{#1}}
\newcommand{\ind}{1\!\mathrm{l}}
\newcommand{\ol}[1]{\overline{#1}}
\newcommand{\ul}[1]{\underline{#1}}
\renewcommand\paragraph{\@startsection{paragraph}{4}{\z@}%
                                    {0pt \@plus1ex \@minus.2ex}%
                                    {-1em}%
                                    {\normalfont\normalsize\bfseries}}
\begin{document}

\defaultbibliography{supgen-2}
\defaultbibliographystyle{chicago}
\begin{bibunit}

\author{%
Xiaohong Chen\thanks{%
Cowles Foundation for Research in Economics, Yale
University, Box 208281, New Haven, CT 06520, USA. E-mail address: \texttt{xiaohong.chen@yale.edu}}
\and
Timothy M. Christensen\thanks{%
Department of Economics, New York University, 19 W. 4th Street, 6th floor, New York, NY 10012, USA. E-mail address: \texttt{timothy.christensen@nyu.edu}}
}

\title{%
Optimal Sup-norm Rates and Uniform Inference on Nonlinear Functionals of Nonparametric IV Regression\thanks{%
This paper is a revised version of the preprint arXiv:1508:03365v1 \citep{ChenChristensen-adaptive-npiv}, which was in turn a major extension of Sections 2 and 3 of the preprint arXiv:1311.0412 \citep{ChenChristensen-npiv}. We are grateful to Y. Sun for careful proof-reading and useful comments and M. Parey for sharing the gasoline demand data set. We thank L.P. Hansen, R. Matzkin, W. Newey, J. Powell, A. Tsybakov and participants of SETA2013, AMES2013, SETA2014, the 2014 International Symposium in honor of Jerry Hausman, the 2014 Cowles Summer Conference,
the 2014 SJTU-SMU Econometrics Conference, the 2014 Cemmap Celebration Conference,
the 2015 NSF Conference - Statistics for Complex Systems, the 2015 International Workshop for Enno Mammen's 60th birthday, the 2015 World Congress of ES meetings, and seminars at various universities for comments. Support from the Cowles Foundation is
gratefully acknowledged.
}
}

\date{First version: August 2013. Revised: January 2017. }

\maketitle

\begin{abstract}
\singlespacing
\noindent
This paper makes several important contributions to the literature about nonparametric instrumental variables (NPIV) estimation and inference on a structural function $h_0$ and its functionals. First, we derive sup-norm convergence rates for computationally simple sieve NPIV (series 2SLS) estimators of $h_0$ and its derivatives. Second, we derive a lower bound that describes the best possible (minimax) sup-norm rates of estimating $h_0$ and its derivatives, and show that the sieve NPIV estimator can attain the minimax rates when $h_0$ is approximated via a spline or wavelet sieve. Our optimal sup-norm rates surprisingly coincide with the optimal root-mean-squared rates for severely ill-posed problems, and are only a logarithmic factor slower than the optimal root-mean-squared rates for mildly ill-posed problems. Third, we use our sup-norm rates to establish the uniform Gaussian process strong approximations and the score bootstrap uniform confidence bands (UCBs) for collections of nonlinear functionals of $h_0$ under primitive conditions, allowing for mildly and severely ill-posed problems.
Fourth, as applications, we obtain the first asymptotic pointwise and uniform inference results for plug-in sieve t-statistics of exact consumer surplus (CS) and deadweight loss (DL) welfare functionals under low-level conditions when demand is estimated via sieve NPIV. Empiricists could read our real data application of UCBs for exact CS and DL functionals of gasoline demand that reveals interesting patterns and is applicable to other markets.

\medskip

\noindent \textbf{Keywords:} Series 2SLS; Optimal sup-norm convergence rates; Uniform Gaussian process strong approximation; Score bootstrap uniform confidence bands; Nonlinear welfare functionals; Nonparametric demand with endogeneity.

\medskip

\noindent \textbf{JEL codes:} C13, C14, C36

\end{abstract}

\pagenumbering{arabic}

\newpage
\section{Introduction}

Well founded empirical evaluation of economic policy is often based upon inference on nonlinear welfare functionals of nonparametric or semiparametric structural models.  This paper makes several important contributions to estimation and inference on a flexible (i.e. nonparametric) structural function $h_0$ and nonlinear functionals of $h_0$ within the framework of a nonparametric instrumental variables (NPIV) model:
\begin{equation} \label{npreg}
 Y_i = h_0(X_i) + u_i~\quad\quad E[u_i| W_i]=0~
\end{equation}
where $h_0$ is an unknown function, $X_i$ is a vector of continuous \emph{endogenous} regressors, $W_i$ is a vector of (conditional) instrumental variables, and the conditional distribution of $X_i$ given $W_i$ is unspecified.

Given a random sample $\{(Y_i, X_i, W_i)\}_{i=1}^n$ (of size $n$) from the NPIV model (\ref{npreg}), our first two main theoretical results address how well one may estimate $h_0$ and its derivatives simultaneously in sup-norm loss, i.e. we bound
\[
 \sup_x \Big|\wh h(x) - h_0(x)\Big| \quad \quad \mbox{and} \quad \quad \sup_x \Big|\partial^k \wh h(x) - \partial^k h_0(x)\Big|
\]
for estimators $\wh h$ of $h_0$, where $\partial^k h(x)$ denotes $k$-th partial derivatives of $h$ with respect to components of $x$.
We first provide upper bounds on sup-norm convergence rates for the computationally simple sieve NPIV (i.e., series two stage least squares (2SLS)) estimators \citep{NeweyPowell,AiChen2003,BCK}. We then derive a lower bound that describes the best possible (i.e., minimax) sup-norm convergence rates among all estimators for $h_0$ and its derivatives, and show that the sieve NPIV estimator can attain the minimax lower bound when spline or wavelet base is used to approximate $h_0$.\footnote{The optimal sup-norm rates for estimating $h_0$ were in the first version \citep{ChenChristensen-npiv}. The optimal sup-norm rates for estimating derivatives of $h_0$ were in the second version \citep{ChenChristensen-adaptive-npiv}.}
Next, we apply our sup-norm rate results to establish the uniform Gaussian process strong approximation and the validity of score bootstrap uniform confidence bands (UCBs) for collections of possibly nonlinear functionals of $h_0$ under primitive conditions.\footnote{The uniform strong approximation and the score bootstrap UCBs results were in the second version \citep{ChenChristensen-adaptive-npiv}; see Theorem B.1 and its proof in that version.}
This includes valid score bootstrap UCBs for $h_0$ and its derivatives as special cases. Finally, as important applications, we establish first pointwise and uniform inference results for two leading nonlinear welfare functionals of a nonparametric demand function $h_0$ estimated via sieve NPIV, namely the exact consumer surplus (CS) and deadweight loss (DL) arising from price changes at different income levels when prices (and possibly income) are endogenous.\footnote{The pointwise inference results on exact CS and DL were in the second version \citep{ChenChristensen-adaptive-npiv}.} We present two real data applications to illustrate the easy implementation and usefulness of the score bootstrap UCBs based on sieve NPIV estimators. The first is to nonparametric exact CS and DL functionals of gasoline demand and the second is to nonparametric Engel curves and their derivatives. The UCBs reveal new interesting and sensible patterns in both data applications. We note that the score bootstrap UCBs for exact CS and DL nonlinear functionals are new to the literature even when the prices might be exogenous. Empiricists could jump to Section \ref{s-example} to read the sieve score bootstrap UCBs procedure and these real data applications without the need to read the rest of more theoretical sections.

Regardless of whether the regressor $X_i$ is endogenous or not, sup-norm convergence rates provide sharper measures of how well $h_0$ and its derivatives can be estimated nonparametrically than the usual $L^2$-norm (i.e., root-mean-squared) rates. This is also why, in the existing literature on nonparametric models without endogeneity, consistent specification tests in sup-norm (i.e., Kolmogorov-Smirnov type statistics) are widely used. Further, sup-norm rates are particularly useful for controlling nonlinearity bias when conducting inference on \emph{highly nonlinear} (i.e., beyond quadratic) functionals of $h_0$. In addition to being useful in constructing pointwise and uniform confidence bands for nonlinear functionals of $h_0$ via plug-in estimators, the sup-norm rates for estimating $h_0$ are also useful in semiparametric two-step procedures when $h_0$ enters the second-stage moment conditions (equalities or inequalities) nonlinearly.

Despite the usefulness of sup-norm convergence rates in nonparametric estimation and inference, as yet
there are no published results on optimal sup-norm convergence rates for estimating $h_0$ or its derivatives in the NPIV model (\ref{npreg}).
This is because, unlike nonparametric least squares (LS) regression (i.e. estimation of $h_0 (x) = E[Y_i |X_i =x]$ when $X_i$ is exogenous), estimation of $h_0$ in the NPIV model (\ref{npreg}) is a difficult ill-posed inverse problem with an unknown operator \citep{NeweyPowell,CFR2007}. Intuitively, $h_0$ in model (\ref{npreg}) is identified by the integral equation
\[
E[Y_i | W_i =w] =Th_0 (w) :=\int h_0 (x) f_{X|W}(x|w)\, \mr dx
\]
where $T$ must be inverted to obtain $h_0$. Since integration smoothes out features of $h_0$, a small error in estimating $E[Y_i | W_i =w]$ using the data $\{(Y_i, X_i, W_i)\}_{i=1}^n$ may lead to a large error in estimating $h_0$. In addition, the conditional density $f_{X|W}$ and hence the operator $T$ is generally unknown, so $T$ must be also estimated from the data. Due to the difficult ill-posed inverse nature, even the $L^2$-norm convergence rates for estimating $h_0$ in model (\ref{npreg}) have not been established until recently.\footnote{See e.g., \cite{HallHorowitz,BCK,ChenReiss,DFFR,Horowitz2011,ChenPouzo2012,GagliardiniScaillet,FlorensSimoni,Kato2013} and references therein.} In particular, \cite{HallHorowitz} derived minimax $L^2$-norm convergence rates for mildly ill-posed NPIV models and showed that their estimators can attain the optimal $L^2$-norm rates for $h_0$. \cite{ChenReiss} derived minimax $L^2$-norm convergence rates for mildly and severely ill-posed NPIV models and showed that sieve NPIV estimators can attain the optimal rates.\footnote{Appendix \ref{s-l2} extends the results in \cite{ChenReiss} to $L^2$-norm optimality for estimating  derivatives of $h_0$.} Moreover, it is generally much harder to obtain optimal nonparametric convergence rates in sup-norm than in $L^2$-norm.\footnote{Even for the simple nonparametric LS regression of $h_0$ (without endogeneity), the optimal sup-norm rates for series LS estimators of $h_0$ were not obtained till recently in \cite{CattaneoFarrell} for locally partitioning series LS, \cite{BCCK2014} for spline LS and \cite{ChenChristensen-reg} for wavelet LS.}

In this paper, we derive the best possible (i.e., minimax) sup-norm convergence rates of any estimator of $h_0$ and its derivatives in mildly and severely ill-posed NPIV models. Surprisingly, the optimal sup-norm convergence rates for estimating $h_0$ and its derivatives coincide with the optimal $L^2$-norm rates for severely ill-posed problems and are only a power of $\log n$ slower than optimal $L^2$-norm rates for mildly ill-posed problems. We also obtain sup-norm convergence rates for sieve NPIV estimators of $h_0$ and its derivatives. We show that a sieve NPIV estimator using a spline or wavelet basis to approximate $h_0$ can attain the minimax sup-norm rates for estimating both $h_0$ and its derivatives. When specializing to series LS regression (without endogeneity), our results automatically imply that spline and wavelet series LS estimators will also achieve the optimal sup-norm rates of \cite{Stone1982} for estimating the derivatives of a nonparametric LS regression function, which strengthen the recent sup-norm optimality results in \cite{BCCK2014} and \cite{ChenChristensen-reg} for estimating regression function $h_0$ itself. We focus on the sieve NPIV estimator because it has been used in empirical work, can be implemented as easily as 2SLS, and can reduce to simple series LS when the regressor $X_i$ is exogenous. Moreover, both $h_0$ and its derivatives may be simultaneously estimated at their respectively optimal convergence rates via a sieve NPIV estimator when the same sieve dimension is used to approximate $h_0$.
This is a desirable property to practitioners. In addition, the sieve NPIV estimator for $h_0$ in model (\ref{npreg}) and our proof of its sup-norm rates could be easily extended to estimating unknown
functions in other semiparametric models with nonparametric
endogeneity, such as a system of shape-invariant Engel curve IV regression model \citep{BCK}.

We provide two important applications of our results on sup-norm convergence rates in details; both are about inferences on nonlinear functionals of $h_0$ based on plug-in sieve NPIV estimators; see Section \ref{s-end} for discussions of additional applications.
Inference on highly nonlinear (i.e., beyond quadratic) functionals of $h_0$ in a NPIV model is very difficult because of the combined effects of nonlinearity bias and the slow convergence rates (in sup-norm and $L^2$-norm) of any estimators of $h_0$. Indeed, our minimax rate results show that \emph{any} estimator of $h_0$ in an ill-posed NPIV model must necessarily converge slower than their nonparametric LS counterpart. For example, the optimal sup- and $L^2$-norm rates for estimating $h_0$ in a severely ill-posed NPIV model is $(\log n )^{-\gamma}$ for some $\gamma>0$. It is well-known that a plug-in series LS estimate of a weighted quadratic functional could be root-$n$ consistent. But, a plug-in sieve NPIV estimate of a weighted quadratic functional of $h_0$ in a severely ill-posed NPIV model fails to be root-$n$ consistent \citep{ChenPouzo2014}. In fact, we establish the minimax convergence rate of any estimators of a simple weighted quadratic functional of $h_0$ in a severely ill-posed NPIV model is as slow as $(\log n )^{-a}$ for some $a>0$ (see Appendix \ref{s-quad}).

In the first application, we extend the seminal work of \cite{HausmanNewey1995} about pointwise inference on \emph{exact} CS and DL functionals of nonparametric demand without endogeneity to allow for prices, and possibly incomes, to be endogenous. According to \cite{Hausman1981} and \cite{HausmanNewey1995,HausmanNewey2016,HausmanNewey2017},  exact CS and DL functionals are the most widely used welfare and economic efficiency measures. Exact CS is a leading example of a complicated nonlinear functional of $h_0$, which is defined as the solution to a differential equation involving a demand function \citep{Hausman1981}. \cite{HausmanNewey1995} were the first to establish the pointwise asymptotic normality of plug-in kernel estimators of exact CS and DL functionals of a nonparametric demand without endogeneity.
\cite{Vanhems2010} was the first to estimate exact CS via the plug-in \cite{HallHorowitz} kernel NPIV estimator of $h_{0}$ when price is endogenous, and derived its convergence rate in $L^2$-norm for the mildly ill-posed case, but did not establish any inference results (such as the pointwise asymptotic normality).  Our paper is the first to provide low-level sufficient conditions to establish inference results for
plug-in (spline and wavelet) sieve NPIV estimators of exact CS and DL functionals, allowing for both mildly and severely ill-posed NPIV models.
Precisely, we use our sup-norm convergence rates for sieve NPIV estimators of $h_0$ and its derivatives to locally linearize plug-in estimators of exact CS and DL, which then leads to asymptotic normality of sieve $t$-statistics for exact CS and DL under primitive sufficient conditions. We also establish the asymptotic normality of plug-in sieve NPIV $t$-statistic for an \emph{approximate} CS functional, extending \cite{Newey1997}'s result from nonparametric exogenous demand to endogenous demand. Recently, \cite{ChenPouzo2014} presented a set of high-level conditions for the pointwise asymptotic normality of sieve $t$-statistics of possibly nonlinear functionals of $h_0$ in a general class of nonparametric conditional moment restriction models (including the NPIV model as a special case). They verified their high-level conditions for pointwise asymptotic normality of sieve $t$-statistics for linear and quadratic functionals. But, without sup-norm convergence rate result, \cite{ChenPouzo2014} were unable to provide low-level sufficient conditions for pointwise asymptotic normality of plug-in sieve NPIV estimators for complicated nonlinear (beyond quadratic) functionals such as the exact CS functional. This was actually the original motivation for us to derive sup-norm convergence rates for sieve NPIV estimators of $h_0$ and its derivatives.

In the second important application of our sup-norm rate results, we establish the uniform Gaussian process strong approximation and the validity of score bootstrap uniform confidence bands (UCBs) for collections of possibly nonlinear functionals of $h_0$, under primitive sufficient conditions that allow for mildly and severely ill-posed NPIV models. The low-level sufficient conditions for Gaussian process strong approximation and UCBs are applied to complicated nonlinear functionals such as collections of exact CS and DL functionals of nonparametric demand with endogenous price (and possibly income). When specializing to collections of linear functionals of the NPIV function $h_0$, our Gaussian process strong approximation and sieve score bootstrap UCBs for $h_0$ and its derivatives are valid under mild sufficient conditions. In particular, for a NPIV model with a scalar endogenous regressor, our sufficient conditions are comparable to those in \cite{HorowitzLee2012} for their notion of UCBs with a growing number of grid points by interpolation for $h_0$ estimated via the modified orthogonal series NPIV estimator of \cite{Horowitz2011}.
When specializing to a nonparametric LS regression (with exogenous $X_i$), our results on the Gaussian strong approximation and score bootstrap UCBs for collections of nonlinear functionals of $h_0$, such as exact CS and DL functionals, are still new to the literature and complement the important results in \cite{CLR} for $h_0$ and \cite{BCCK2014} for linear functionals of $h_0$ estimated via series LS.

Our sieve score bootstrap UCBs procedure is extremely easy to implement since it computes the sieve NPIV estimator only once using the data, and then perturbs the sieve score statistics by random weights that are mean zero and independent of the data. So it should be very useful to empirical researchers who conduct nonparametric estimation and inference on structural functions with endogeneity in diverse subfields of applied economics, such as consumer theory, IO, labor economics, public finance, health economics, development and trade, to name only a few. Two real data illustrations are presented in Section \ref{s-example}. In the first, we construct UCBs for exact CS and DL welfare functionals for a range of gasoline taxes at different income levels. For this illustration, we use the same data set as in \cite{BlundellHorowitzParey2012,BlundellHorowitzParey2013} and estimate household gasoline demand via spline sieve NPIV (other data sets and other goods could be used). Despite the slow convergence rates of NPIV estimators, the UCBs for exact CS are particularly informative.
In the second empirical illustration, we use the same data set as in \cite{BCK} to estimate Engel curves for households with kids via a spline sieve NPIV and construct UCBs for Engel curves and their derivatives for various categories of household expenditure.

The rest of the paper is organized as follows. Section \ref{s-example} presents the sieve NPIV estimator, the score bootstrap UCBs procedure and two real-data applications. This section aims at empirical researchers. Section \ref{npiv sec} establishes the minimax optimal sup-norm rates for estimating a NPIV function $h_0$ and its derivatives. Section \ref{s-ucb} presents low-level sufficient conditions for the uniform Gaussian process strong approximation and sieve score bootstrap UCBs for collections of general nonlinear functionals of a NPIV function. Section \ref{s-inference} deals with pointwise and uniform inferences on exact CS and DL, and approximate CS functionals in nonparametric demand estimation with endogeneity. Section \ref{s-end} concludes with discussions of additional applications of the sup-norm rates of sieve NPIV estimators. Appendix \ref{s-sup-lemmas} contains additional results on sup-norm convergence rates. Appendix \ref{s-l2} presents optimal $L^2$-norm rates for estimating derivatives of a NPIV function under extremely weak conditions. Appendix \ref{s-quad} establishes the minimax lower bounds for estimating quadratic functionals of a NPIV function. The main online supplementary appendix contains pointwise normality of sieve $t$ statistics for nonlinear functionals of NPIV under lower-level sufficient conditions than those in \cite{ChenPouzo2014} (Appendix \ref{s-pw});
background material on B-spline and wavelet sieves (Appendix \ref{ax-basis}); and useful lemmas on random matrices (Appendix \ref{ax-supp}). The secondary online appendix contains additional lemmas and all of the proofs (Appendix \ref{ax-proofs}).

\section{Estimator and motivating applications to UCBs}\label{s-example}

This section describes the sieve NPIV estimator and a score bootstrap UCBs procedure for collections of functionals of the NPIV function. It mentions intuitively why sup-norm convergence rates of a sieve NPIV estimator are needed to formally justify the validity of the computationally simple score bootstrap UCBs procedure. It then present two real data applications of uniform inferences on functionals of a NPIV function: UCBs for exact CS and DL functionals of nonparametric demand with endogenous price, and UCBs for nonparametric Engel curves and their derivatives when the total expenditure is endogenous. This section is presented to practitioners.

\textbf{Sieve NPIV estimators}. Let $\{(Y_i,X_i,W_i)\}_{i=1}^n$ denote a random sample from the NPIV model (\ref{npreg}). The sieve NPIV estimator $\wh h$ of $h_0$ is simply the 2SLS estimator applied to some basis functions of $X_i$ (the endogenous regressors) and $W_i$ (the conditioning variables), namely
\begin{equation} \label{def-series-2SLS}
 \wh h(x) = \psi^J(x)'\wh c~~\mbox{ with }~~ \wh c = [\Psi'B(B'B)^-B'\Psi]^- \Psi'B (B'B)^- B'{Y}
\end{equation}
where ${Y} = (Y_1,\ldots,Y_n)'$,
\begin{align}
 \psi^J(x) & = (\psi_{J1}(x),\ldots,\psi_{JJ}(x))' & & \Psi = (\psi^J(X_1),\ldots,\psi^J(X_n))' \label{def-psi} \\
 b^K(w) & = (b_{K1}(w),\ldots,b_{KK}(w))'& & B  = (b^K(W_1),\ldots,b^K(W_n))' \label{def-b}
\end{align}
and $\{\psi_{J1},\ldots,\psi_{JJ}\}$ and $\{b_{K1},\ldots,b_{KK}\}$ are collections of basis functions of dimension $J$ and $K$ for approximating $h_0$ and the instrument space, respectively \citep{BCK,ChenPouzo2012,Newey2013}). The \emph{regularization} parameter $J$ is the dimension of the sieve for approximating $h_0$. The \emph{smoothing} parameter $K$ is the dimension of the instrument sieve. From the analogy with 2SLS, it is clear that we need $K \geq J$. \cite{BCK,ChenReiss,ChenPouzo2012} have previously shown that $\lim_{J} (K/J) = c \in [1,\infty)$ can lead to the optimal $L^2$-norm convergence rate for sieve NPIV estimator. Thus we assume that $K$ grows to infinity at the same rate as that of $J$, say $J \leq K \leq c J$ for some finite $c>1$ for simplicity.\footnote{Monte Carlo evidences in \citep{BCK,ChenPouzo2014} and others suggest that sieve NPIV estimators often perform better with $K > J$ than with $K = J$, and that the regularization parameter $J$ is important for finite sample performance while the parameter $K$ is not as important as long as it is larger than $J$. See our second version \citep{ChenChristensen-adaptive-npiv} for data-driven choice of $J$.} When $K=J$ and $b^K =\psi^J$ being an orthogonal series basis, the sieve NPIV estimator becomes \cite{Horowitz2011}'s modified orthogonal series NPIV estimator.
Note that the sieve NPIV estimator (\ref{def-series-2SLS}) reduces to a series LS estimator $\wh h(x) = \psi^J(x)' [\Psi'\Psi]^- \Psi'{Y}$ when $X_i =W_i$ is exogenous, $J=K$ and $\psi^J(x)=b^K(w)$ \citep{Newey1997,Huang1998}.

\subsection{Uniform confidence bands for nonlinear functionals}

One important motivating application is to \emph{uniform inference} on a collection of nonlinear functionals $\{f_t(h_0) : t \in \mc T\}$ where $\mc T$ is an index set (e.g. an interval). Uniform inference may be performed via uniform confidence bands (UCBs) that contain the function $t \mapsto f_t(h_0)$ with prescribed coverage probability. UCBs for $h_0$ (or its derivatives) are obtained as a special case with $\mc T = \mc X$ (support of $X_i$) and $f_t(h_0) = h_0(t)$ (or $f_t(h_0) = \partial^k h_0(t)$ for $k$-th derivative). We present applications below to uniform inference on exact CS and DL functionals over a range of price changes as well as UCBs for Engel curves and their derivatives.

A $100(1-\alpha)\%$ bootstrap-based UCB for $\{f_t(h_0) : t \in \mc T\}$ is constructed as
\begin{equation} \label{e-ucb}
 t \mapsto \bigg[f_t(\widehat h) - z_{1-\alpha}^* \frac{\wh\sigma (f_t)}{\sqrt n}  \; , \; f_t(\widehat h) + z_{1-\alpha}^* \frac{\wh\sigma (f_t)}{\sqrt n}  \bigg] \,.
\end{equation}
In this display $f_t(\wh h)$ is the plug-in sieve NPIV estimator of $f_t(h_0)$, $\wh\sigma^2 (f_t)$ is a \emph{sieve variance estimator} for $f_t(\wh h)$, and $z_{1-\alpha}^*$ is a bootstrap-based critical value to be defined below.

To compute the sieve variance estimator for $f_t(\widehat h)$ with $\wh h(x) = \psi^J(x)'\wh c$ given in (\ref{def-series-2SLS}), one would first compute the 2SLS covariance matrix estimator (but applied to basis functions) for $\wh c$:
\begin{equation} \label{cov-2sls}
 \wh \mho = [\wh S' \wh G_b^{-1} \wh S]^{-1} \wh S' \wh G_b^{-1} \wh \Omega \wh G_b^{-1} \wh S[\wh S' \wh G_b^{-1} \wh S]^{-1}
\end{equation}
where $\wh S = B'\Psi/n$, $\wh G_b = B'B/n$, $\wh \Omega = n^{-1} \sum_{i=1}^n \wh u_i^2 b^K(W_i) b^K(W_i)'$ and $\wh u_i = Y_i - \wh h(X_i)$. One then compute a ``delta-method'' correction term, a $J \times 1$ vector $Df_t(\wh h)[\psi^J]:=\big(Df_t(\wh h)[\psi_{J1}],\ldots,Df_t(\wh h)[\psi_{JJ}]\big)'$, by calculating $Df_t(\wh h)[v] = \lim_{\delta \rightarrow 0^{+}} [\delta^{-1} f_t(\wh h +\delta v)]$, which is the (functional directional) derivative of $f_t$ at $\wh h$ in direction $v$, for $v=\psi_{J1},\ldots,\psi_{JJ}$. The sieve variance estimator for $f_t(\widehat h)$ is then
\begin{equation} \label{e-sieve-var}
 \wh\sigma^2 (f_t) =\big( Df_t(\wh h)[\psi^J]\big)' \,\wh \mho\, \big(Df_t(\wh h)[\psi^J]\big) \,.
\end{equation}

We use the following \emph{sieve score bootstrap} procedure to calculate the critical value $z_{1-\alpha}^*$. Let $\varpi_{1},\ldots,\varpi_{n}$ be IID random variables independent of the data with mean zero, unit variance and finite 3rd moment, e.g. $N(0,1)$.\footnote{Other examples of distributions with these properties include the re-centered exponential (i.e. $\varpi_i = \mr{Exp}(1)-1$), Rademacher (i.e. $\pm 1$ each with probability $\frac{1}{2}$), or the two-point distribution of \cite{Mammen1993} (i.e. $(1-\sqrt 5)/2$ with probability $(\sqrt 5 + 1)/(2 \sqrt 5)$ and $(\sqrt 5+1)/\sqrt 2$ with remaining probability.} We define the bootstrap sieve $t$-statistic process $\{\mb Z_n^*(t) : t \in \mathcal T\}$ as
\begin{equation} \label{bscore}
 \mb Z_n^*(t) := \frac{(D f_t(\wh h)[\psi^J])' [\wh S' \wh G_b^{-1} \wh S]^{-1} \wh S' \wh G_b^{-1}}{\wh\sigma (f_t)} \left(\frac{1}{\sqrt n} \sum_{i=1}^n  b^K(W_i) \wh u_i \varpi_i \right) \quad \mbox{for each $t \in \mathcal T$} \,.
\end{equation}
To compute $z_{1-\alpha}^*$, one would calculate $\sup_{t \in \mc T} |\mb Z_n^*(t)|$ for a large number of independent draws of $\varpi_{1},\ldots,\varpi_{n}$. The critical value $z_{1-\alpha}^*$ is the $(1-\alpha)$ quantile of $\sup_{t \in \mc T} |\mb Z_n^*(t)|$ over the draws. Note that this sieve score bootstrap procedure is different from the usual nonparametric bootstrap (based on resampling the data, then recomputing the estimator): here we only compute the estimator once, and then perturb the sieve $t$-statistic process by the innovations $\varpi_1,\ldots,\varpi_n$.

An intuitive description of why sup-norm rates are very useful to justify this procedure is as follows. Under regularity conditions, the sieve $t$-statistic for an individual functional $f_t(h_0)$ admits an expansion
\begin{equation} \label{e-sieve-tstat}
 \frac{\sqrt n (f_t(\wh h) - f_t(h_0))}{\wh\sigma (f_t)} \; = \; \wh{ \mb Z}_n (t) \; + \; \mbox{nonlinear remainder term}
\end{equation}
(see equation \ref{bahadur-Z}) for the definition of $\wh{ \mb Z}_n (t)$). The term $\wh{ \mb Z}_n (t)$ is a CLT term, i.e. $\wh{ \mb Z}_n (t) \to_d N(0,1)$ for each fixed $t \in \mc T$. Therefore, the sieve $t$-statistic for $f_t(h_0)$ also converges to a $N(0,1)$ random variable provided that the ``nonlinear remainder term'' is asymptotically negligible (i.e. $o_p(1)$) (see Assumption 3.5 in \cite{ChenPouzo2014}). Our sup-norm rates are very useful for providing weak regularity conditions under which the remainder is $o_p(1)$ for fixed $t$.\footnote{\cite{ChenPouzo2014} verified their high-level Assumption 3.5 for a plug-in sieve estimator of a weighted quadratic functional example. Without sup-norm convergence rates, it is difficult to verify their Assumption 3.5 for nonlinear functionals (such as the exact CS) that are more complicated than quadratic functionals.} This justifies constructing confidence intervals for \emph{individual} functionals $f_t(h_0)$ for any fixed $t \in \mc T$ by inverting the sieve $t$-statistic (on the left-hand side of display (\ref{e-sieve-tstat})) and using $N(0,1)$ critical values. However, for \emph{uniform} inference the usual $N(0,1)$ critical values are no longer appropriate as we need to consider the sampling error in estimating the whole process $t \mapsto f_t(h_0)$. For this purpose, display (\ref{e-sieve-tstat}) is strengthened to be valid uniformly in $t \in \mc T$ (see Lemma \ref{lem-bahadur}). Under some regularity conditions, $\sup_{t\in \mc T} |\wh{\mb Z}_n(t)|$ converges in distribution to the supremum of a (non-pivotal) Gaussian process. As its critical values are generally not available, we use the sieve score bootstrap procedure to estimate its critical values.

Section \ref{s-ucb} formally justifies the use of this procedure for constructing UCBs for $\{f_t(h_0) : t \in \mc T\}$. The sup-norm rates are useful for controlling the nonlinear remainder terms for UCBs for collections of nonlinear functionals. Theorem \ref{t-dist-u} appears to be the first to establish the consistency of sieve score bootstrap UCBs for general nonlinear functionals of NPIV under low-level conditions, allowing for mildly and severely ill-posed problems. It includes as special cases the score bootstrap UCBs for nonlinear functionals of $h_0$ under exogeneity when $h_0$ is estimated via series LS, and the score bootstrap UCBs for the NPIV function $h_0$ and its derivatives.\footnote{One also needs to use sup-norm convergence rates of $\wh h$ to $h_0$ to build a valid UCB for $\{h_0 (t): t \in \mc X \}$.}
Theorem \ref{t-dist-u} is applied in Section \ref{s-inference} to formally justify the validity of score bootstrap UCBs for exact CS and DL functionals over a range of price changes when demand is estimated nonparametrically via sieve NPIV.

\subsection{Empirical application 1: UCBs for nonparametric exact CS and DL functionals}\label{s-ucb-dl}

Here we apply our methodology to study the effect of gasoline price changes on household welfare. We extend the important work by \cite{HausmanNewey1995} on pointwise confidence bands for exact CS and DL of demand without endogeneity to UCBs for exact CS and DL of demand with endogeneity.

Let demand of consumer $i$ be
\[
  \mf Q_i = h_0(\mf P_i,\mf Y_i) + \mf u_i
\]
where $\mf Q_i$ is quantity, $\mf P_i$ is price, which may be endogenous, $\mf Y_i$ is income of consumer $i$, and $\mf u_i$ is an error term.\footnote{Endogeneity may also be an issue in the estimation of static models of labor supply, in which $\mf Q_i$ represents hours worked, $\mf P_i$ is the wage, and $\mf Y_i$ is other income. In this setting it is reasonable to allow for endogeneity of both $\mf P_i$ and $\mf Y_i$ (see \cite{BlundellDuncanMeghir}, \cite{BlundellMaCurdyMeghir}, and references therein). } \cite{Hausman1981} shows that the exact CS from a price change from $\mf p^0$ to $\mf p^1$ at income level $\mf y$, denoted $\mf S_{\mf y}(\mf p^0)$, solves
\begin{equation}\label{e-cs}
 \begin{array}{rcl}
 \displaystyle \frac{\partial \mf S_{\mf y}(\mf p(u))}{\partial u} & = & -h_0\big(\mf p(u),\mf y - \mf S_{\mf y}(\mf p(u))\big)\displaystyle \frac{\mathrm d \mf p(u)}{\mathrm d u} \\[8pt]
 \mf S_{\mf y}(\mf p(1)) & = & 0 \end{array}
\end{equation}
where $\mf p :  [0,1] \to \mb R$ is a twice continuously differentiable path with $\mf p(0) = \mf p^0$ and $\mf p(1) = \mf p^1$. The corresponding DL functional $\mf D_{\mf y}(\mf p^0)$ is
\begin{equation} \label{e-dwl}
 \mf D_{\mf y}(\mf p^0) = \mf S_{\mf y}(\mf p^0) - (\mf p^1-\mf p^0)h_0(\mf p^1,\mf y)\,.
\end{equation}
As is evident from (\ref{e-cs}) and (\ref{e-dwl}), exact CS and DL are (typically nonlinear) functionals of  $h_0$. An exception is when demand is independent of income, in which case exact CS and DL are linear functionals of $h_0$. Let $t = (\mf p^0,\mf p^1,\mf y)$ index the initial price, final price, and income level and let $\mc T \subseteq [\ul {\mf p}^0 , \ol {\mf p}^0] \times [\ul {\mf p}^1 , \ol {\mf p}^1] \times [\ul{\mf y}, \ol{\mf y}]$ denote a range of price changes and/or incomes over which inference is to be performed. To denote dependence on $h_0$, we use the notation
\begin{eqnarray}
 f_{CS,t}(h) & = & \mbox{solution to (\ref{e-cs}) with $h$ in place of $h_0$} \label{e-fcs} \\
 f_{DL,t}(h) & = & f_{CS,t}(h) - (\mf p^1 - \mf p^0)h(\mf p^1,\mf y) \label{e-fdwl}
\end{eqnarray}
so $\mf S_{\mf y}(\mf p^0) = f_{CS,t}(h_0)$ and $\mf D_{\mf y}(\mf p^0) = f_{DL,t}(h_0)$.

We estimate exact CS and DL using the plug-in estimators $f_{CS,t}(\wh h)$ and $f_{DL,t}(\wh h)$. The sieve variance estimators $\wh \sigma^2 (f_{CS,t})$ and $\wh \sigma^2 (f_{DL,t})$ are as described in (\ref{e-sieve-var}) with the delta-method correction terms
\begin{eqnarray}
 Df_{CS,t}(\wh h)[\psi^J] & = & \int_{0}^{1}  \psi^J( \mf p(u),\mf y-\wh{\mf S}_{\mf y}(\mf p(u)))  e^{-\int_{0}^u \partial_2 \wh h( \mf p(v),\mf y-\wh{\mf S}_{\mf y}(\mf p(v))) \mf p'(v)\,\mathrm dv} \mf p'(u) \,\mathrm du  \label{e-Df_CS} \\
 Df_{DL,t}(\wh h)[\psi^J] & = & Df_{CS,t}(\wh h)[\psi^J] - (\mf p^1 - \mf p^0) \psi^J(\mf p^1,\mf y) \label{e-Df_DL}
\end{eqnarray}
where $\mf p'(u) = \frac{\mathrm d \mf p(u)}{\mathrm d u}$, $\partial_2 h$ denotes the partial derivative of $h$ with respect to its second argument and $\wh{\mf S}_{\mf y}(\mf p(u))$ denotes the solution to (\ref{e-cs}) with $\wh h$ in place of $h_0$.

We use the 2001 National Household Travel Survey gasoline demand data from \cite{BlundellHorowitzParey2012,BlundellHorowitzParey2013}.\footnote{We are grateful to Matthias Parey for sharing the dataset with us. We refer the reader to section 3 of \cite{BlundellHorowitzParey2012} for a detailed description of the data.} The main variables are annual household gasoline consumption (in gallons), average price (in dollars per gallon) in the county in which the household is located, household income, and distance from the Gulf coast to the capital of the state in which the household is located. Due to censoring, we consider the subset of households with incomes less than \$100,000 per year. To keep households somewhat homogeneous, we select household with incomes above \$25,000 per year (the 8th percentile), with at most 6 inhabitants, and 1 or 2 drivers. The resulting sample has size $n = 2753$.\footnote{We also exclude one household that reports 14,635 gallons; the next largest is 8089 gallons. Similar results are obtained using the full set of $n = 4811$ observations.} Table \ref{tab-gas} presents summary statistics.

\begin{table}[t]
 \centering {
\begin{tabular}{c|ccc}
\hline \hline
 & Quantity (gal) & Price (\$/gal) & Income (\$) \\ \hline
mean & 1455 & 1.33 & 58307 \\
25th \% & 871 & 1.28 & 42500 \\
median & 1269 & 1.32 & 57500 \\
75th \% & 1813 & 1.40 & 72500 \\
std dev & 894 & 0.07 & 19584 \\    \hline \hline
\end{tabular}
\parbox{5.0in}{\caption{ \label{tab-gas} \small Summary statistics for gasoline demand data. } }}
\end{table}

We estimate the household gasoline demand function in levels via sieve NPIV using distance as instrument for price. To implement the estimator, we form $\Psi_J$ by taking a tensor product of quartic B-spline bases of dimension 5 for both price and income (so $J = 25$) and $B_K$ by taking a tensor product of quartic B-spline bases of dimension 8 for distance and 5 for income (so $K = 40$) with interior knots spaced evenly at quantiles.

We consider exact CS and DL resulting from price increases from $\mf p^0 \in [\$1.20,\$1.40]$ to $\mf p^1 = \$1.40$ at income levels of $\mf y = \$42,500$ (low) and $\mf y = \$72,500$ (high). We estimate exact CS at each initial price level by solving the ODE (\ref{e-cs}) by backward differences. We construct UCBs for exact CS as described above by setting $\mc T = [\$1.20,\$1.40] \times \{\$1.40\} \times \{\$42,500\}$ for the low-income group and $\mc T = [\$1.20,\$1.40] \times \{\$1.40\} \times \{\$72,500\}$ for the high-income group, $f_t(h) = f_{CS,t}(h)$  from display (\ref{e-fcs}), and  $D f_{t}(\wh h)[\psi^J] = D f_{CS,t}(\wh h)[\psi^J]$ from display (\ref{e-Df_CS}). The ODE (\ref{e-cs}) is solved numerically by backward differences and the integrals in (\ref{e-Df_CS}) are computed numerically. UCBs for DL are formed similarly, $f_t(h) = f_{DL,t}(h)$  from display (\ref{e-fdwl}), and  $D f_{t}(\wh h)[\psi^J] = D f_{DL,t}(\wh h)[\psi^J]$ from display (\ref{e-Df_DL}). We draw the bootstrap innovations $\varpi_i$ from Mammen's two-point distribution with $1000$ bootstrap replications.

\begin{figure}[p]
  \centering
  \includegraphics[trim = 20mm 70mm 20mm 75mm, clip, width=0.7\textwidth]{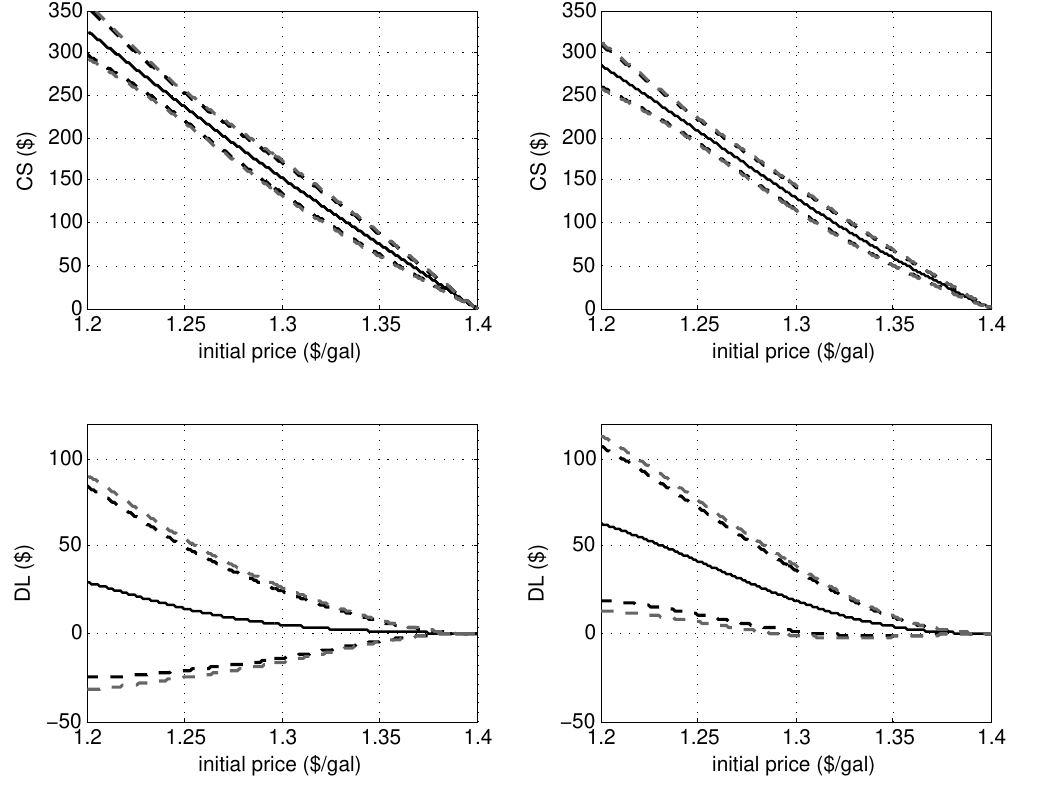}
  \parbox{5.0in}{\caption{ \label{f-cs-dwl} \small Estimated CS and DL from a price increase to \$1.40/gal (solid black line) and their  bootstrap UCBs (dashed black lines are 90\%, dashed grey lines are 95\%) when demand is estimated via sieve NPIV. Left panels are for household income of \$72,500; right panels are for household income of \$42,500.} }
\end{figure}

\begin{figure}[p]
  \centering
  \includegraphics[trim = 20mm 70mm 20mm 75mm, clip, width=0.7\textwidth]{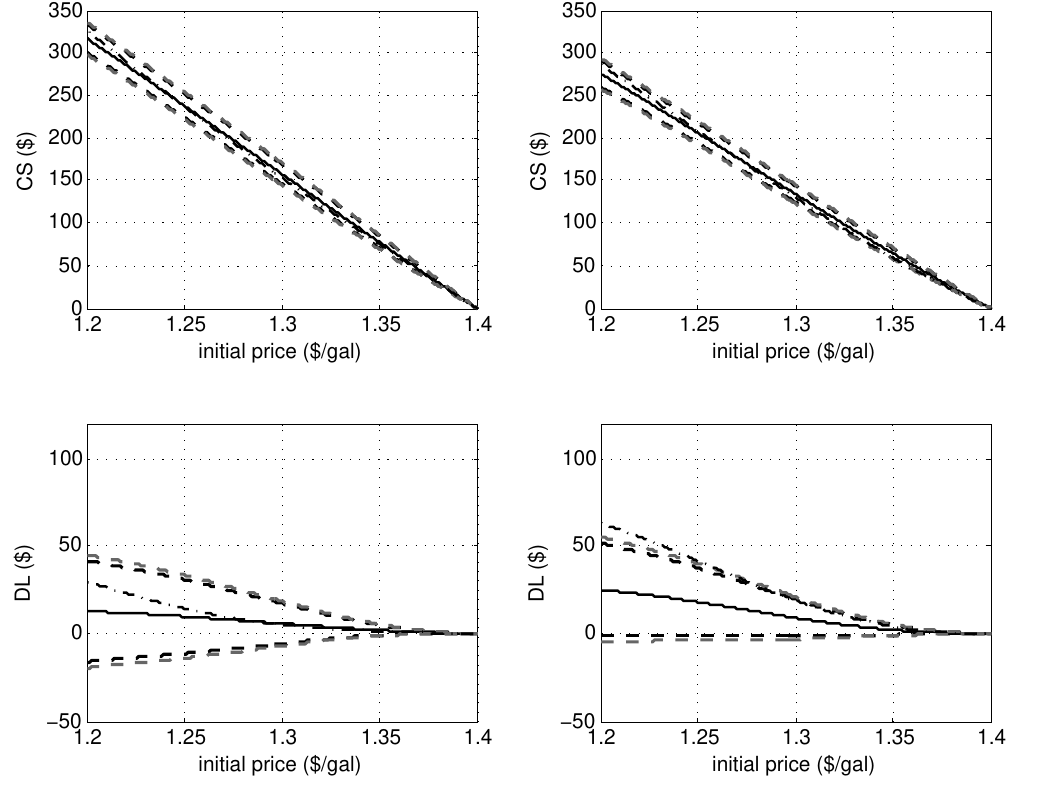}
  \parbox{5.0in}{\caption{ \label{f-cs-dwl-ls} \small Estimated CS and DL from a price increase to \$1.40/gal (solid black lines) and their bootstrap UCBs  (dashed black lines are 90\%, dashed grey lines are 95\%) when demand is estimated via series LS. CS and DL when demand is estimated via NPIV are also shown (black dash-dot lines). Left panels are for household income of \$72,500; right panels are for household income of \$42,500.} }
\end{figure}

The exact CS and DL estimates are presented in Figure \ref{f-cs-dwl} together with their UCBs. It is clear that exact CS is much more precisely estimated than DL. This is to be expected, since exact CS is computed by essentially integrating over one argument of the estimated demand function and is therefore smoother than the DL functional, which depends on $h_0$ estimated at the point $(\mf p^1,\mf y)$. In fact, even though the sieve NPIV $\wh h$ itself converges slowly, the UCBs for exact CS are still quite informative. At their widest point (with initial price \$1.20), the 95\% UCBs for exact CS for low-income households are $[\$259,\$314]$. In terms of comparison across high- and low-income households, the exact CS estimates are higher for the high-income households whereas DL estimates are higher for the low-income households.

Figure \ref{f-cs-dwl-ls} displays estimates obtained when we treat price as exogenous and estimate demand ($h_0$) by series LS regression. This is a special case of the preceding analysis with $X_i = W_i = (\mf P_i,\mf Y_i)'$, $K = J$ and $\psi^J = b^K$. These estimates display several notable features. First, the exact CS estimates are very similar whether demand is estimated via series LS or via sieve NPIV. Second, the UCBs for exact CS estimates are of a similar width to those obtained when demand was estimated via sieve NPIV, even though NPIV is an ill-posed inverse problem whereas nonparametric LS regression is not. Third, the UCBs for DL are noticeably narrower when demand is estimated via series LS than when demand is estimated via sieve NPIV. Fourth, the DL estimates for LS and sieve NPIV are similar for high income households but quite different for low income households. This is consistent with \cite{BlundellHorowitzParey2013}, who find some evidence of endogeneity in gasoline prices for low income groups.

\subsection{Empirical application 2: UCBs for Engel curves and their derivatives}

Engel curves describe the household budget share for expenditure categories as a function of total household expenditure. Following \cite{BCK}, we use sieve NPIV to estimate Engel curves, taking log total household income as an instrument for log total household expenditure. We use data from the 1995 British Family Expenditure Survey, focusing on the subset of married or cohabitating couples with one or two children, with the head of household aged between 20 and 55 and in work. This leaves a sample of size $n = 1027$.  We consider six categories of nondurables and services expenditure: food in, food out, alcohol, fuel, travel, and leisure.

We construct UCBs for Engel curves as described above by setting $\mc T = [4.75,6.25]$ (approximately the 5th to 95th percentile of log expenditure), $f_t(h) = h(t)$, and $D f_t(\wh h)[\psi^J] = \psi^J(t)$. We also construct UCBs for derivatives of the Engel curves by setting $\mc T = [4.75,6.25]$, $f_t(h)$ to be the derivative of $h$ evaluated at $t$, and $D f_t(\wh h)[\psi^J]$ to be the vector formed by taking derivatives of $\psi_{J1},\ldots,\psi_{JJ}$ evaluated at $t$. For both constructions, we use a quartic B-spline basis of dimension $J = 5$ for $\Psi_J$ and a quartic B-spline basis of dimension $K = 9$ for $B_K$, with interior knots evenly spaced at quantiles (an important feature of sieve estimators is that the same sieve dimension can be used for optimal estimation of the function and its derivatives; this is not the case for kernel-based estimators). We draw the bootstrap innovations $\varpi_i$ from Mammen's two-point distribution with $1000$ bootstrap replications.

\begin{figure}[p]
  \centering
  \includegraphics[trim = 20mm 70mm 20mm 75mm, clip, width=0.7\textwidth]{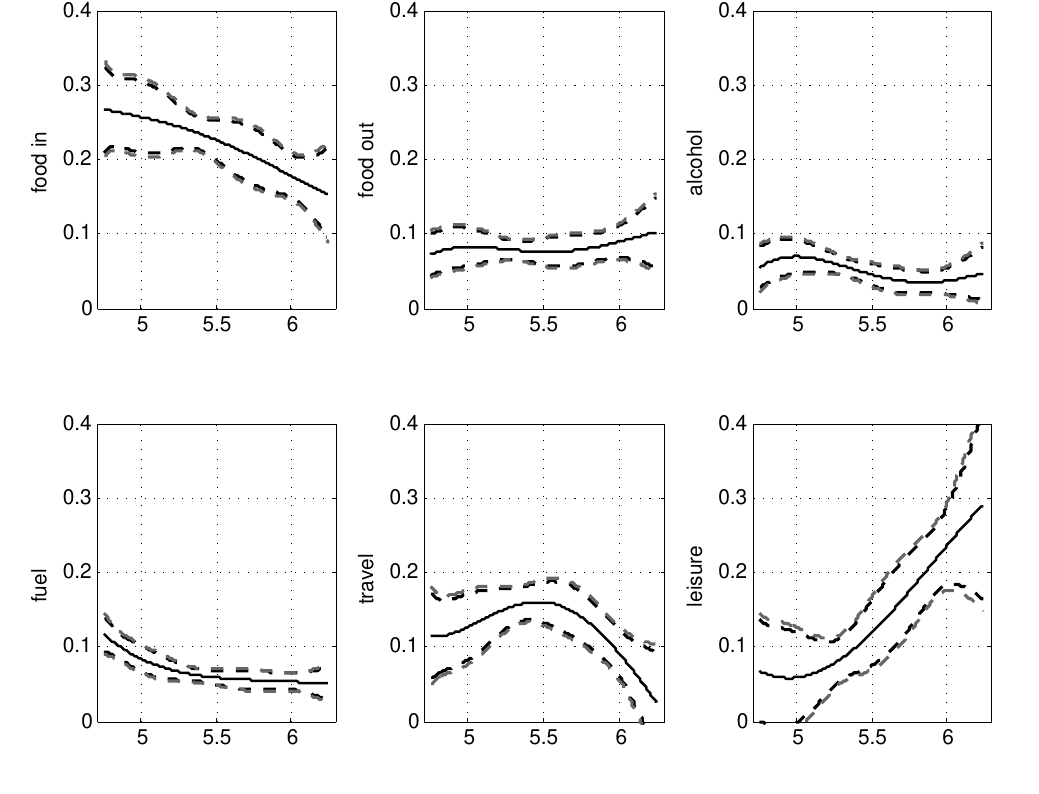}
  \parbox{5.0in}{\caption{ \label{f-engel} \small Estimated Engel curves (black line) with bootstrap uniform confidence bands (dashed black lines are 90\%, dashed grey lines are 95\%). The $x$-axis is log total household expenditure, the $y$-axis is household budget share.} }
\end{figure}

\begin{figure}[p]
  \centering
  \includegraphics[trim = 20mm 70mm 20mm 75mm, clip, width=0.7\textwidth]{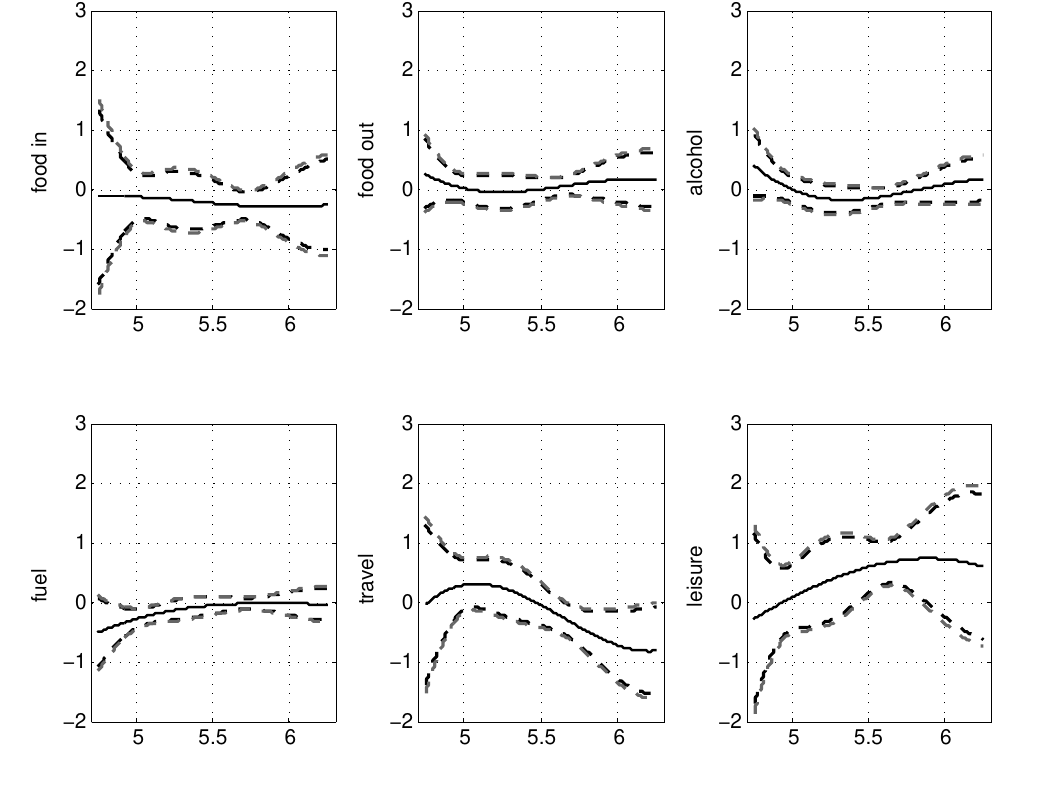}
  \parbox{5.0in}{\caption{ \label{f-engel-deriv} \small Estimated Engel curve derivatives (black line) with bootstrap uniform confidence bands (dashed black lines are 90\%, dashed grey lines are 95\%).} }
\end{figure}

The Engel curves presented in Figure \ref{f-engel} and their derivatives presented in Figure \ref{f-engel-deriv} exhibit several interesting features. The curves for food-in and fuel (necessary goods) are both downward sloping, with the curve for fuel exhibiting a pronounced downward slope at lower income levels. The derivative of the curve for fuel is negative, though the UCBs are positive at the extremities. In contrast, the curve for leisure expenditure (luxury good) is strongly upwards sloping and its derivative is positive except at low income levels. Remaining curves for food-out, alcohol and travel appear to be non-monotonic.

\section{Optimal sup-norm convergence rates}\label{npiv sec}

This section presents several results on sup-norm convergence rates. Subsection \ref{s-upper} presents upper bounds on sup-norm convergence rates of NPIV estimators of $h_0$ and its derivatives. Subsection \ref{s-lower} presents (minimax) lower bounds. Subsection \ref{s-exog} considers NPIV models with endogenous and exogenous regressors that are useful in empirical studies.

\textbf{Notation:}
We work on a probability space $(\Omega,\mathcal F,\mb P)$.  $\mathcal A^c$ denotes the complement of an event $\mathcal A \in \mathcal F$. We abbreviate ``with probability approaching one'' to ``wpa1'', and say that a sequence of events $\{\mathcal A_n \} \subset \mathcal F$ holds wpa1 if $\mb P(\mathcal A_n^c) = o(1)$. For a random variable $X$ we define the space $L^q(X)$ as the equivalence class of all measurable functions of $X$ with finite $q$th moment if $1 \leq q < \infty$; when $q = \infty$ we denote $L^\infty(X)$ as the set of all bounded measurable functions $g : \mathcal X \to \mb R$ endowed with the sup norm $\|g\|_\infty = \sup_x |g(x)|$. Let $\langle\cdot,\cdot\rangle_X$ denote the inner product on $L^2(X)$. For matrix and vector norms, $\|\cdot\|_{\ell^q}$ denotes the vector $\ell^q$ norm when applied to vectors and the operator norm induced by the vector $\ell^q$ norm when applied to matrices. If $a$ and $b$ are scalars we let $a \vee b := \max\{a,b\}$ and $a \wedge b := \min\{a,b\}$. Minimum and maximum eigenvalues are denoted by $\lambda_{\min}$ and $\lambda_{\max}$. If $\{a_n\}$ and $\{b_n\}$ are sequences of positive numbers, we say that $a_n \lesssim b_n$ if $\limsup_{n \to \infty} a_n/b_n < \infty$ and we say that $a_n \asymp b_n$ if $a_n \lesssim b_n$ and $b_n \lesssim a_n$.

\textbf{Sieve measure of ill-posedness}. For a NPIV model (\ref{npreg}), an important quantity is the \emph{measure of ill-posedness} which, roughly speaking, measures how much the conditional expectation $h \mapsto E[h(X_i)|W_i = w]$ smoothes out $h$. Let $T : L^2(X) \to L^2(W)$ denote the conditional expectation operator given by
\[
 T h(w) = E[h(X_i)|W_i = w]\,.
\]
Let $\Psi_J = clsp\{\psi_{J1},\ldots,\psi_{JJ}\} \subset L^2(X)$ and $B_K = clsp \{b_{K1},\ldots,b_{KK}\}\subset L^2(W)$ denote the sieve spaces for the endogenous variables and instrumental variables, respectively. Let $\Psi_{J,1} = \{h \in \Psi_J : \|h\|_{L^2(X)} = 1\}$. The \emph{sieve $L^2$ measure of ill-posedness} is
\begin{equation*}
 \tau_J  =  \sup_{h \in \Psi_J : h \neq 0} \frac{\|h\|_{L^2(X)}}{\|Th\|_{L^2(W)}} =  \frac{1}{ \inf_{h \in \Psi_{J,1}} \|Th\|_{L^2(W)}}\,.
\end{equation*}
Following \cite{BCK}, we call a NPIV model (\ref{npreg}) with $X_i$ being a $d$-dimensional random vector: \\
(i) \emph{mildly ill-posed} if $\tau_J = O(J^{\varsigma/d})$ for some $\varsigma >
0 $; and \\
(ii) \emph{severely ill-posed} if $\tau_J = O(\exp(\frac{1}{2}
J^{\varsigma /d}))$ for some $\varsigma > 0$.

See our second version \citep{ChenChristensen-adaptive-npiv} for simple consistent estimation of the sieve measure of ill-posedness $\tau_J$.

\subsection{Sup-norm convergence rates}\label{s-upper}

We first introduce some basic conditions on the basic NPIV model (\ref{npreg}) and the sieve spaces.

\begin{assumption} \label{a-data}
(i) $X_i$ has compact rectangular support $\mathcal X \subset \mb R^d$ with nonempty interior and the density of $X_i$ is uniformly bounded away from $0$ and $\infty$ on $\mathcal X$; (ii) $W_i$ has compact rectangular support $\mathcal W \subset \mb R^{d_w}$ and the density of $W_i$ is uniformly bounded away from $0$ and $\infty$ on $\mathcal W$; (iii) $T : L^2(X) \to L^2(W)$ is injective; and (iv) $h_0 \in \mathcal H \subset L^\infty(X)$, and ${\cup_J \Psi_J}$ is dense in $(\mathcal H,\|\cdot\|_{L^\infty (X)})$.
\end{assumption}

\begin{assumption} \label{a-residuals}
(i) $\sup_{w\in \mathcal W} E[u_i^2|W_i = w] \leq \overline \sigma^2 < \infty$; and (ii) $E[|u_i|^{2+\delta}] < \infty$ for some $\delta > 0$.
\end{assumption}

The following assumptions concern the basis functions. Define
\[
 \begin{array}{rcccl}
 G_\psi & = & G_{\psi,J} & = & E[\psi^J(X_i)\psi^J(X_i)']  =  E[ \Psi'\Psi/n]\\
 G_b & = & G_{b,K} & = & E[b^K(W_i)b^K(W_i)']  =  E[ B'B/n]\\
 S & = & S_{KJ} & = & E[b^K(W_i)\psi^J(X_i)']  =  E[ B'\Psi/n]\,.
 \end{array}
\]
We assume throughout that the basis functions are not linearly dependent, i.e. $S$ has full column rank $J$ and  $G_{\psi,J}$ and $G_{b,K}$ are positive definite for each $J$ and $K$, i.e. $e_J = \lambda_{\min}(G_{\psi,J}) >0$ and $e_{b,K} = \lambda_{\min}(G_{b,K}) >0$, although $e_J$ and $e_{b,K}$ could go to zero as $K\geq J$ goes to infinity. Let
\begin{align*}
 \zeta_{\psi} & = \zeta_{\psi,J}  =  \sup_{x} \|G_\psi^{-1/2} \psi^J(x)\|_{\ell^2} & & \zeta_{b}  = \zeta_{b,K}  =  \sup_{w} \|G_b^{-1/2} b^K(w)\|_{\ell^2}\\
  \xi_{\psi} & = \xi_{\psi,J} = \sup_{x} \|\psi^J (x)\|_{\ell^1}
\end{align*}
for each $J$ and $K$ and define $\zeta = \zeta_J = \zeta_{b,K} \vee \zeta_{\psi,J}$. Note that $\zeta_{\psi,J}$ has some useful properties: $\|h\|_\infty \leq \zeta_{\psi,J} \|h\|_{L^2(X)}$ for all $h \in \Psi_J$, and $\sqrt J =(E[\|G_\psi^{-1/2} \psi^J(X)\|_{\ell^2}^2])^{1/2} \leq \zeta_{\psi,J} \leq \xi_{\psi,J} / \sqrt{e_J}$; clearly $\zeta_{b,K}$ has similar properties.

We say that the sieve basis for $\Psi_J$ is H\"older continuous if there exist finite constants $\omega \geq 0, \omega' >0$ such that $\|G_{\psi,J}^{-1/2}\{\psi^J(x) - \psi^J(x')\}\|_{\ell^2} \lesssim J^\omega \|x - x'\|_{\ell^2}^{\omega'}$ for all $x,x' \in \mathcal X$.

\begin{assumption} \label{a-sieve}
(i) the basis spanning $\Psi_J$ is H\"older continuous; (ii) $\tau_J \zeta^2 /\sqrt n = O(1)$; and (iii) $\zeta^{(2+\delta)/\delta} \sqrt{(\log n)/n} = o(1)$.
\end{assumption}

Let  $\Pi_J: L^2(X) \to \Psi_J$ denote the $L^2(X)$ orthogonal (i.e. least squares) projection onto $\Psi_J$, namely $\Pi_J h_0 = \mathrm{arg}\min_{h \in \Psi_J} \|h_0 - h\|_{L^2(X)}$ and let $\Pi_K :L^2(W) \to B_K$ denote the $L^2(W)$ orthogonal (i.e. least-squares) projection onto $B_K$.
Let $Q_J h_0 = \mathrm{arg}\min_{h \in \Psi_J} \|\Pi_K T(h_0 - h)\|_{L^2(W)}$ denote the sieve 2SLS projection of $h_0$ onto $\Psi_J$. We may write $Q_J h_0 = \psi^J(\cdot)'c_{0,J}$ where
\[
c_{0,J} = [S' G_b^{-1} S]^{-1} S' G_b^{-1} E[b^K(W_i)h_0 (X_i)]\,.
\]

\begin{assumption}\label{a-approx}
(i) $\sup_{h \in \Psi_{J,1}}\|(\Pi_K T - T)h\|_{L^2(W)} = o(\tau_J^{-1})$; (ii) $\tau_J \times \|T(h_0 - \Pi_J h_0)\|_{L^2(W)} \leq \mathrm{const} \times \|h_0 - \Pi_J h_0\|_{L^2(X)}$; and (iii) $\|Q_J (h_0 - \Pi_J h_0) \|_\infty \leq O(1) \times \|h_0 - \Pi_J h_0\|_{\infty}$.
\end{assumption}

\textbf{Discussion of Assumptions}.
Assumption \ref{a-data} is standard. Assumption \ref{a-data}(iii) is stronger than needed for convergence rates in sup-norm only. We impose it as a common sufficient condition for convergence rates in both sup-norm and $L^2$-norm (Appendix \ref{s-l2}). For sup-norm convergence rate only, Assumption \ref{a-data}(iii) could be replaced by the following weaker identification condition:\\
 \textbf{Assumption \ref{a-data}} \emph{(iii-sup) $h_0 \in \mathcal H \subset L^\infty(X)$, and $T[h-h_0]=0 \in L^2(W)$ for any $h\in \mathcal H$ implies that $\|h-h_0\|_\infty =0$.}
\\
This in turn is implied by the injectivity of $T:L^\infty(X)\to L^2(W)$ (or the bounded completeness), which is weaker than the injectivity of $T : L^2(X) \to L^2(W)$ (i.e., the $L^2$-completeness). Bounded completeness or $L^2$-completeness condition is often assumed in models with endogeneity (e.g. \cite{NeweyPowell,CFR2007,BCK,Andrews2011,CCLN}) and is generically satisfied according to \cite{Andrews2011}. The parameter space $\mathcal H$ for $h_0$ is typically taken to be a H\"older or Sobolev class of smooth functions. Assumption \ref{a-data}(i) could be relaxed to unbounded support, and the proofs need to be modified slightly using wavelet basis and weighted compact embedding results, see, e.g., \cite{BCK,ChenPouzo2012,Triebel2006} and references therein. To present the sup-norm rate results in a clean way we stick to the simplest Assumption \ref{a-data}. Assumption \ref{a-residuals} is also imposed for sup-norm convergence rates for series LS regression under exogeneity (e.g., \cite{ChenChristensen-reg}). Assumption \ref{a-sieve}(i) is satisfied by many commonly used sieve bases, such as splines, wavelets, and cosine bases. Assumption \ref{a-sieve}(ii)(iii) restrict the rate at which $J$ can grow with $n$. Upper bounds for $\zeta_{\psi,J}$ and $\zeta_{b,K}$ are known for commonly used bases. For instance, under Assumption \ref{a-data}(i)(ii), $\zeta_{b,K} = O(\sqrt K)$ and $\zeta_{\psi,J} = O(\sqrt J)$ for (tensor-product) polynomial spline, wavelet and cosine bases, and $\zeta_{b,K} = O(K)$ and $\zeta_{\psi,J} = O(J)$ for (tensor-product) orthogonal polynomial bases; see, e.g., \cite{Newey1997}, \cite{Huang1998} and main online Appendix \ref{ax-basis}. Assumption \ref{a-approx}(i) is a mild condition on the approximation properties of the basis used for the instrument space and is similar to the first part of Assumption 5(iv) of \cite{Horowitz2014}. In fact, $\|(\Pi_K T - T)h\|_{L^2(W)} = 0$ for all $h \in \Psi_J$ when the basis functions for $B_K$ and $\Psi_J$ form either a Riesz basis or eigenfunction basis for the conditional expectation operator. Assumption \ref{a-approx}(ii) is the usual $L^2$ ``stability condition'' imposed in the NPIV literature (cf. Assumption 6 in \cite{BCK} and Assumption 5.2(ii) in \cite{ChenPouzo2012}). Assumption \ref{a-approx}(iii) is a new $L^{\infty}$ ``stability condition'' to control the sup-norm bias. It turns out that Assumption \ref{a-approx}(ii) and \ref{a-approx}(iii) are also automatically satisfied by Riesz bases; see Appendix \ref{s-sup-lemmas} for further discussions and sufficient conditions.

To derive the sup-norm (uniform) convergence rate we split $\|\wh h - h_0\|_\infty$ into so-called ``bias'' and ``standard deviation'' terms and derive sup-norm convergence rates for the two terms. Specifically, let
\begin{equation*}
 \widetilde h(x)  = \psi^J(x)'\widetilde c ~~\mbox{ with }~~ \widetilde c = [\Psi'B(B'B)^-B'\Psi]^- \Psi'B (B'B)^- B'H_0
\end{equation*}
where $H_0 = (h_0(X_1),\ldots,h_0(X_n))'$. We refer loosely to $\|\widetilde h - h_0\|_\infty$ as the  ``bias'' term and $\|\wh h - \widetilde h\|_\infty$ as the ``standard deviation'' (or sometimes ``variance'') term. Both are random quantities. We first bound the sup-norm ``standard deviation'' term in the following lemma.

\begin{lemma}\label{lem-chat}
Let Assumptions \ref{a-data}(i)(iii), \ref{a-residuals}(i)(ii), \ref{a-sieve}(ii)(iii), and \ref{a-approx}(i) hold. Then:\\
(1) $\|\wh h - \widetilde h\|_\infty = O_p \big( \tau_J \xi_{\psi,J} \sqrt{ (\log J)/ (n e_J )} \big)$.\\
(2) If Assumption \ref{a-sieve}(i) also holds, then: $\|\wh h - \widetilde h\|_\infty = O_p \big(  \tau_J \zeta_{\psi,J} \sqrt{(\log n)/n} \big)\,.$
\end{lemma}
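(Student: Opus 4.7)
The plan is to write $\widehat h - \widetilde h$ as a stochastic linearization plus a higher-order remainder, then control the sup of the leading term by Bernstein's inequality. Setting $\widehat S = B'\Psi/n$ and $\widehat G_b = B'B/n$, a direct computation gives
\[
\widehat c - \widetilde c = (\widehat S'\widehat G_b^{-}\widehat S)^{-}\widehat S'\widehat G_b^{-}(B'u/n),
\]
where $u = (u_1,\ldots,u_n)'$. Define the infeasible counterpart
\[
\Delta c := (S'G_b^{-1}S)^{-1}S'G_b^{-1}(B'u/n),
\]
and split $\widehat h(x) - \widetilde h(x) = \psi^J(x)'\Delta c + \psi^J(x)'(\widehat c - \widetilde c - \Delta c)$. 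The linearization remainder will be shown to be of smaller order by operator-norm concentration of $\widehat G_b - G_b$ and $\widehat S - S$ (via a Rudelson-type matrix Bernstein inequality), using Assumption \ref{a-sieve}(ii)(iii) to absorb the resulting factors into the main term.

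For part (1), I use the crude bound $\sup_x|\psi^J(x)'\Delta c| \le \xi_{\psi,J}\|\Delta c\|_{\ell^\infty}$ and control $\|\Delta c\|_{\ell^\infty}$ coordinatewise. For fixed $j$,
\[
e_j'\Delta c = \frac{1}{n}\sum_{i=1}^n e_j'(S'G_b^{-1}S)^{-1}S'G_b^{-1}b^K(W_i)\,u_i
\]
is mean-zero with conditional variance at most $\overline\sigma^2\,e_j'(S'G_b^{-1}S)^{-1}e_j \le \overline\sigma^2\tau_J^2/e_J$, where the final inequality follows from the positive semidefinite bound $(S'G_b^{-1}S)^{-1}\preceq \tau_J^2 G_\psi^{-1}$ (which is essentially a restatement of the definition of $\tau_J$) combined with $\lambda_{\min}(G_\psi) = e_J$. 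Applying Bernstein's inequality after truncating $u_i$ at a level chosen using Assumption \ref{a-residuals}(ii) (with the tail handled by Markov's inequality together with the factor $\zeta^{(2+\delta)/\delta}\sqrt{(\log n)/n}$ from Assumption \ref{a-sieve}(iii)), and taking a union bound over $j=1,\ldots,J$, yields $\|\Delta c\|_{\ell^\infty} = O_p(\tau_J\sqrt{(\log n)/(ne_J)})$ because $\log J = O(\log n)$. This gives part (1).

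For part (2), the slack in $\sup_x\|\psi^J(x)\|_{\ell^1}$ is avoided by a chaining/discretization argument using Assumption \ref{a-sieve}(i). For fixed $x$, $\mathrm{Var}(\psi^J(x)'\Delta c) \le \overline\sigma^2\tau_J^2\zeta_{\psi,J}^2/n$ since $\psi^J(x)'(S'G_b^{-1}S)^{-1}\psi^J(x)\le \tau_J^2\|G_\psi^{-1/2}\psi^J(x)\|_{\ell^2}^2\le \tau_J^2\zeta_{\psi,J}^2$, so Bernstein gives a pointwise bound of order $\tau_J\zeta_{\psi,J}\sqrt{(\log n)/n}$. H\"older continuity of the basis gives the oscillation bound
\[
|\psi^J(x)'\Delta c - \psi^J(x')'\Delta c| \lesssim J^\omega\|x-x'\|_{\ell^2}^{\omega'}\,\|G_\psi^{1/2}\Delta c\|_{\ell^2},
\]
and a trace-variance calculation using $\mathrm{tr}(G_\psi^{1/2}(S'G_b^{-1}S)^{-1}G_\psi^{1/2})\le \tau_J^2 J$ yields $\|G_\psi^{1/2}\Delta c\|_{\ell^2} = O_p(\tau_J\sqrt{J/n})$. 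Choosing an $\epsilon$-net on $\mathcal X$ with $\epsilon$ polynomially small in $n$ so that the oscillation term is dominated by the pointwise bound, and applying Bernstein together with a union bound over the polynomially sized net, gives $\sup_x|\psi^J(x)'\Delta c| = O_p(\tau_J\zeta_{\psi,J}\sqrt{(\log n)/n})$. The hardest step throughout is controlling the linearization remainder uniformly in $x$: showing that $\psi^J(x)'(\widehat c - \widetilde c - \Delta c)$ is of smaller order than the leading term requires combining matrix concentration for $\widehat G_b$ and $\widehat S$ with the growth condition $\tau_J\zeta^2/\sqrt{n} = O(1)$ in Assumption \ref{a-sieve}(ii), which is precisely calibrated to make this remainder negligible.
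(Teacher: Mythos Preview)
Your proposal is correct and follows essentially the same route as the paper: the same infeasible-plus-remainder decomposition, truncation of $u_i$ with Bernstein's inequality and a union bound over coordinates for part (1), a discretization (net) argument exploiting H\"older continuity for part (2), and matrix Bernstein bounds on $\widehat G_b - G_b$ and $\widehat S - S$ for the linearization remainder. One small imprecision: the bound $(S'G_b^{-1}S)^{-1}\preceq \tau_J^2 G_\psi^{-1}$ is not a restatement of the definition of $\tau_J$ (which involves $\|Th\|_{L^2(W)}$, not $\|\Pi_K Th\|_{L^2(W)}$); what follows directly is $(S'G_b^{-1}S)^{-1}\preceq s_{JK}^{-2}G_\psi^{-1}$ with $s_{JK}=s_{\min}(G_b^{-1/2}SG_\psi^{-1/2})$, and the paper then invokes Assumption \ref{a-approx}(i) (via its Lemma~\ref{lem-ip}) to obtain $s_{JK}^{-1}\asymp\tau_J$.
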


Recall that $\sqrt{J} \leq \zeta_{\psi,J} \leq \xi_{\psi,J} / \sqrt{e_J}$. Result (2) of Lemma \ref{lem-chat} provides a slightly tighter upper bound on the variance term than Result (1) does, while Result (1) allows for slightly more general basis to approximate $h_0$. For splines and wavelets, we show in Appendix \ref{ax-basis} that $\xi_{\psi,J} / \sqrt{e_J} \lesssim \sqrt{J}$, so Results (1) and (2) produce the same tight upper bound $\|\wh h - \widetilde h\|_\infty = O_p(\tau_J \sqrt{(J\log n )/n})$ when $J \asymp n^r$ for some constant $r > 0$.

Before we present an upper bound on the ``bias'' term in Theorem \ref{t-rate} part (1) below, we mention one more property of the sieve space $\Psi_J$ that is crucial for sharp bounds on the sup-norm bias term. Let $h_{0,J} \in \Psi_J$ denote the best approximation to $h_0$ in sup-norm, i.e. $h_{0,J}$ solves $\inf_{h \in \Psi_J}\|h_0 - h\|_\infty$. Then by Lebesgue's Lemma \cite[p. 30]{DeVoreLorentz}:
\[
 \|h_0 - \Pi_J h_0 \|_\infty   \leq  (1 + \|\Pi_J \|_{\infty}) \times \|h_0 - h_{0,J}\|_\infty
\]
where $\|\Pi_J \|_{\infty}$ is the Lebesgue constant for the sieve $\Psi_J$. Recently it has been established that $\|\Pi_J\|_\infty \lesssim 1$ when $\Psi_J$ is spanned by a tensor product B-spline basis (\cite{Huang2003}) or a tensor product Cohen-Daubechies-Vial (CDV) wavelet basis (\cite{ChenChristensen-reg}).\footnote{See \cite{DeVoreLorentz} and \cite{BCCK2014} for examples of other bases with bounded Lebesgue constant or with Lebesgue constant diverging slowly with the sieve dimension.} Boundedness of the Lebesgue constant is crucial for attaining optimal sup-norm rates.

\begin{theorem}\label{t-rate}
(1) Let Assumptions \ref{a-data}(iii), \ref{a-sieve}(ii) and \ref{a-approx} hold.  Then:
\[
 \|\widetilde h -  h_0\|_\infty = O_p \left(  \|h_0 - \Pi_J h_0\|_\infty \right) \,.
\]
(2) Let Assumptions \ref{a-data}(i)(iii)(iv), \ref{a-residuals}(i)(ii), \ref{a-sieve}(ii)(iii), and \ref{a-approx} hold. Then:
\[
 \|\wh h - h_0\|_\infty = O_p \left( \|h_0 - \Pi_J h_0\|_\infty +   \tau_J \xi_{\psi,J} \sqrt{ (\log J)/ (n e_J )} \right)\,.
\]
(3) Further, if the linear sieve $\Psi_J$ satisfies $\|\Pi_J\|_\infty \lesssim 1$ and $\xi_{\psi,J} / \sqrt{e_J} \lesssim \sqrt{J}$, then
\[
 \|\wh h - h_0\|_\infty   =   O_p \left( \|h_0 - h_{0,J}\|_\infty +  \tau_J  \sqrt{(J \log J)/n} \right)\,.
\]
\end{theorem}

Theorem \ref{t-rate}(2)(3) follows directly from part (1) (for bias) and Lemma \ref{lem-chat}(1) (for standard deviation). See Appendix \ref{s-sup-lemmas} for additional details about bound on sup-norm bias.

The following corollary provides concrete sup-norm convergence rates of $\wh h$ and its derivatives. To introduce the result, let $B^p_{\infty,\infty}$ denote the H\"older space of smoothness $p > 0$ and $\|\cdot\|_{B^p_{\infty,\infty}}$ denote its norm (see Section 1.11.10 of \cite{Triebel2006}). Let $B_\infty(p,L) = \{h \in B^p_{\infty,\infty} : \|h\|_{B^p_{\infty,\infty}} \leq L\}$ denote a H\"older ball of smoothness $p > 0$ and radius $ L \in (0, \infty )$. Let $\alpha_1,\ldots,\alpha_d$ be non-negative integers, let $|\alpha| = \alpha_1 + \ldots + \alpha_d $, and define
\[
 \partial^\alpha h(x) := \frac{\partial ^{|\alpha|} h}{\partial^{\alpha_1} x_1 \cdots \partial ^{\alpha_d} x_d} h(x) \,.
\]
Of course, if $|\alpha| = 0$ then $\partial^\alpha h = h$.\footnote{If $|\alpha| > 0$ then we assume $h$ and its derivatives can be continuously extended to an open set containing $\mathcal X$.}

\begin{corollary}\label{c-deriv}
Let Assumptions \ref{a-data}(i)(ii)(iii) and \ref{a-approx} hold. Let $h_0 \in B_\infty(p,L)$, $\Psi_J$ be spanned by a B-spline basis of order $\gamma > p$ or a CDV wavelet basis of regularity $\gamma > p$, $B_K$ be spanned by a cosine, spline or wavelet basis.\\
(1) If Assumption \ref{a-sieve}(ii) holds, then
\[
 \|\partial^\alpha \widetilde h - \partial^\alpha h_0 \|_\infty = O_p \Big( J^{-(p-|\alpha|)/d} \Big)~~\mbox{ for all }~~0 \leq |\alpha| < p\,.
\]
(2) If Assumptions \ref{a-residuals}(i)(ii) and \ref{a-sieve}(ii)(iii) hold, then
\[
 \|\partial^\alpha \wh h - \partial^\alpha h_0 \|_\infty = O_p \Big( J^{-(p-|\alpha|)/d} +  \tau_J J^{|\alpha|/d} \sqrt{(J\log J)/n} \Big)~~\mbox{ for all }~~0 \leq |\alpha| < p\,.
\]
(2.a) Mildly ill-posed case: with $p \geq d/2$ and $\delta \geq d/(p + \varsigma)$, choosing $ J \asymp (n/\log n)^{d/(2(p+\varsigma)+d)}$ implies that Assumption \ref{a-sieve}(ii)(iii) holds and
\begin{equation*}
 \|\partial^\alpha \wh h - \partial^\alpha h_0 \|_\infty = O_p ( (n/\log n)^{-(p-|\alpha|)/(2(p+\varsigma)+d)})\,.
\end{equation*}
(2.b) Severely ill-posed case: choosing $J = (c_0\log n)^{d/\varsigma}$ with $c_0 \in (0,1)$ implies that Assumption \ref{a-sieve}(ii)(iii) holds and
\begin{equation*}
 \|\partial^\alpha  \wh h - \partial^\alpha  h_0 \|_{\infty} = O_p ( (\log n)^{-(p-|\alpha|)/\varsigma})\,.
\end{equation*}
\end{corollary}

Corollary \ref{c-deriv} shows that, for sieve NPIV estimators, taking derivatives has the same impact on the bias and standard deviation terms in terms of the order of convergence, and that the same choice of sieve dimension $J$ can lead to optimal sup-norm convergence rates for estimating $h_0$ and its derivatives simultaneously (since they match the lower bounds in Theorem \ref{npiv lower bound} below). When specializing to series LS regression (without endogeneity, i.e., $\tau_J =1$), Corollary \ref{c-deriv}(2.a) with $\varsigma=0$ automatically implies that spline and wavelet series LS estimators will also achieve the optimal sup-norm rates of \cite{Stone1982} for estimating the derivatives of a nonparametric LS regression function. This strengthens the recent results in \cite{BCCK2014} and \cite{ChenChristensen-reg} for sup-norm rate optimality of spline and wavelet LS estimators of the regression function $h_0$ itself.
This is in contrast to kernel based LS regression estimators where different choices of bandwidth are needed for the optimal rates of estimating $h_0$ and its derivatives.

Corollary \ref{c-deriv} is useful for estimating functions with certain shape properties. For instance, if $h_0 : [a,b] \to \mb R$ is strictly monotone and/or strictly concave/convex, then knowing that $\partial {\wh h}(x)$ and/or $\partial^2 {\wh h}(x)$ converge uniformly to $\partial h_0 (x)$ and/or $\partial^2 h_0(x)$ implies that $\wh h$ will also be strictly monotone and/or strictly concave/convex wpa1. In this paper, we shall illustrate the usefulness of Corollary \ref{c-deriv} in controlling the nonlinear remainder terms for pointwise and uniform inferences on highly nonlinear (i.e., beyond quadratic) functionals of $h_0$; see Sections \ref{s-ucb} and \ref{s-inference} for details.

\subsection{Lower bounds}\label{s-lower}

We now establish that the sup-norm rates obtained in Corollary \ref{c-deriv} are the best possible (i.e. minimax) sup-norm convergence rates for estimating $h_0$ and its derivatives.

To establish a lower bound, we require a \emph{link condition} that relates smoothness of $T$ to the parameter space for $h_0$. Let $\widetilde \psi_{j,k,G}$ denote a tensor-product CDV wavelet basis for $[0,1]^d$ of regularity $\gamma > p$. Appendix \ref{ax-basis} provides details on the construction and properties of this basis.

\paragraph{Condition LB} \emph{(i) Assumption \ref{a-data}(i)--(iii) holds; (ii)  $E[u_i^2 |W_i = w] \geq \underline \sigma^2 > 0$ uniformly for $w \in \mathcal W$; and (iii) there is a positive decreasing function $\nu$ s.t. $\|T h\|_{L^2(W)}^2 \lesssim \sum_{j,G,k} [\nu(2^j)]^2 \langle h, \widetilde \psi_{j,k,G} \rangle_X^2$ holds for all $h \in B_\infty(p,L)$.}

Condition LB is standard in the optimal rate literature (see \cite{HallHorowitz} and \cite{ChenReiss}).
The mildly ill-posed case corresponds to choosing $\nu(t) = t^{-\varsigma}$, and says roughly that the conditional expectation operator $T$ makes $p$-smooth functions of $X$ into $(\varsigma+p)$-smooth functions of $W$. The severely ill-posed case, which corresponds to choosing $\nu(t) = \exp(-\frac{1}{2}t^{\varsigma})$ and says roughly that $T$ maps smooth functions of $X$ into ``supersmooth'' functions of $W$.

\begin{theorem} \label{npiv lower bound}
Let Condition LB hold for the NPIV model with a random sample $\{(X_{i},Y_{i},W_i)\}_{i=1}^n$. Then for any $0 \leq |\alpha| < p$:
\begin{equation*}
 \liminf_{n \to \infty} \inf_{\wh g_n} \sup_{h \in B_\infty(p,L)} \mb P_h \left( \|\wh g_n - \partial^\alpha h\|_\infty \geq c r_n \right) \geq c'>0
\end{equation*}
where
\[
 r_n = \left[ \begin{array}{ll}
 (n/\log n)^{-(p-|\alpha|)/(2(p+\varsigma)+d)} & \mbox{in the mildly ill-posed case } \\
 (\log n)^{-(p-|\alpha|)/\varsigma} & \mbox{in the severely ill-posed case,}
 \end{array} \right.
\]
$\inf_{\wh g_n}$ denotes the infimum over all estimators of $\partial^\alpha h$ based on the sample of size $n$, $\sup_{h \in B_\infty(p,L)} \mb P_h$ denotes the sup over $h \in B_\infty(p,L)$  and distributions of $(X_i,W_i,u_i)$ that satisfy Condition LB with fixed $\nu$, and the finite positive constants $c, c'$ do not depend on $n$.
\end{theorem}

According to Theorem \ref{npiv lower bound} and Theorem \ref{t-lb-l2} (in Appendix \ref{s-l2}), the minimax lower bounds in sup-norm  for estimating $h_0$ and its derivatives coincide with those in $L^2$  for severely ill-posed NPIV problems, and are only a factor of $[\log( n)]^{\epsilon}$
(with $\epsilon =\frac{p-|\alpha|}{2(p+\varsigma)+d}<\frac{p}{2p+d}<\frac{1}{2}$) worse than those in $L^2$ for mildly ill-posed problems.
Our proof of sup-norm lower bound for NPIV models is similar to that of \cite{ChenReiss} for $L^2$-norm lower bound. Similar sup-norm lower bounds for density deconvolution were recently obtained by \cite{LouniciNickl}.

\subsection{Models with endogenous and exogenous regressors}\label{s-exog}

In many empirical studies, some regressors might be endogenous while others are exogenous. Consider the model
\begin{equation}\label{e-p-exog}
 Y_i = h_0(X_{1i},Z_i) + u_i
\end{equation}
where $X_{1i}$ is a vector of endogenous regressors and $Z_i$ is a vector of exogenous regressors. Let $X_i = (X_{1i}',Z_i')'$. Here the vector of instrumental variables $W_i$ is of the form $W_i = (W_{1i}',Z_i')'$ where $W_{1i}$ are instruments for $X_{1i}$. We refer to this as the ``partially endogenous case''.
The sieve NPIV estimator is implemented in exactly the same way as the ``fully endogenous'' setting in which $X_i$ consists only of endogenous variables, just like 2SLS with endogeneous and exogenous regressors.\footnote{ All that changes here is that $J$ may grow more quickly as the degree of ill-posedness will be smaller. In contrast, other NPIV estimators based on estimating the conditional densities of the regressors and instrumental variables must be implemented separately for each value of $z$  \citep{HallHorowitz,Horowitz2011,GagliardiniScaillet}.}
Our convergence rates presented in Section \ref{s-upper} and Appendix \ref{s-l2} apply equally to the partially endogenous model (\ref{e-p-exog}) under the stated regularity conditions: all that differs between the two cases is the interpretation of the sieve measure of ill-posedness.

Consider first the fully endogenous case where $T : L^2(X) \to L^2(W)$ is compact under mild conditions on the conditional density of $X$ given $W$ (see, e.g., \cite{NeweyPowell,BCK,DFFR,Andrews2011}). Then $T$ admits a singular value decomposition (SVD) $\{\phi_{0j},\phi_{1j},\mu_j\}_{j=1}^\infty$ where $(T^*T)^{1/2} \phi_{0j} = \mu_j \phi_{0j}$, $\mu_j \geq \mu_{j+1}$ for each $j$ and $\{\phi_{0j}\}_{j=1}^\infty$ and $\{\phi_{1j}\}_{j=1}^\infty$ are orthonormal bases for $L^2(X)$ and $L^2(W)$, respectively. Suppose that $\Psi_J$ spans $\phi_{0j},\ldots,\phi_{0J}$. Then the sieve measure of ill-posedness is $\tau_J = \mu_J^{-1}$.

Now consider the partially endogenous case. Similar to \cite{Horowitz2011}, we suppose that for each value of $z$ the conditional expectation operator $T_z : L^2(X_1|Z=z) \to L^2(W_1|Z=z)$ given by $(T_z h)(w_1) = E[h(X_1)|W_{1i} = w_1,Z_i = z]$ is compact. Then each $T_z$ admits a SVD $\{\phi_{0j,z},\phi_{1j,z},\mu_{j,z}\}_{j=1}^\infty$ where $T_z \phi_{0j,z} = \mu_{j,z} \phi_{1j,z}$, $(T^*_{z}T^{\phantom *}_{z})^{1/2} \phi_{0j,z} = \mu_{j,z} \phi_{0j,z}$, $(T^{\phantom *}_{z}T^*_{z})^{1/2} \phi_{1j,z} = \mu_{j,z} \phi_{1j,z}$,
$\mu_{j,z} \geq \mu_{j+1,z}$ for each $j$ and $z$, and $\{\phi_{0j,z}\}_{j=1}^\infty$ and $\{\phi_{1j,z}\}_{j=1}^\infty$ are orthonormal bases for $L^2(X_1|Z=z)$ and $L^2(W_1|Z=z)$, respectively, for each $z$.
The following result adapts Lemma 1 of \cite{BCK} to the partially endogenous setting.

\begin{lemma}\label{lem-bck}
Let $T_z$ be compact with SVD $\{\phi_{0j,z},\phi_{1j,z},\mu_{j,z}\}_{j=1}^\infty$ for each $z$. Let $\mu_j^{2}=E[\mu_{j,Z_i}^2]$ and $\phi_{0j}(\cdot,z)=\phi_{0j,z}(\cdot)$ for each $z$ and $j$. Then:
(1) $\tau_J \geq \mu_J^{-1}$.\\
(2) If, in addition, $\phi_{01},\ldots,\phi_{0J} \in \Psi_J$, then: $\tau_J \leq \mu_J^{-1}$.
\end{lemma}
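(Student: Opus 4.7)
My approach adapts \cite{BCK}'s Lemma 1 proof to the partially endogenous setting by replacing the SVD of $T$ with the per-fiber SVDs of $T_z$. The key inputs are two Parseval-type identities. For any $h \in L^2(X)$, expand $h(\cdot, z)$ in the orthonormal basis $\{\phi_{0j,z}\}_j$ of $L^2(X_1|Z=z)$ as $h(x_1,z) = \sum_j a_j(z)\phi_{0j,z}(x_1)$ with $a_j(z) := \langle h(\cdot,z), \phi_{0j,z}\rangle_{L^2(X_1|Z=z)}$. The tower law together with Parseval on each fiber gives
\[
 \|h\|_{L^2(X)}^2 = E\Big[\sum_{j \geq 1} a_j(Z)^2\Big].
\]
Applying $T_z$ to this expansion, using $T_z\phi_{0j,z} = \mu_{j,z}\phi_{1j,z}$, and invoking orthonormality of $\{\phi_{1j,z}\}_j$ in $L^2(W_1|Z=z)$ yields
\[
 \|Th\|_{L^2(W)}^2 = E[\|T_Z h(\cdot,Z)\|_{L^2(W_1|Z)}^2] = E\Big[\sum_{j \geq 1} a_j(Z)^2 \mu_{j,Z}^2\Big].
\]
Taking expectations of the per-$z$ monotonicity $\mu_{j,z} \geq \mu_{j+1,z}$ also gives $\mu_j \geq \mu_{j+1}$.

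For part (2), the hypothesis $\phi_{01},\ldots,\phi_{0J} \in \Psi_J$ together with $\dim\Psi_J = J$ forces $\Psi_J = \mathrm{span}\{\phi_{0j}\}_{j=1}^J$. For $h = \sum_{j=1}^J c_j\phi_{0j}$, the fiber coefficients collapse to constants $a_j(z) = c_j$ for $j \leq J$ (and zero otherwise), so the identities reduce to $\|h\|^2 = \sum_{j=1}^J c_j^2$ and $\|Th\|^2 = \sum_{j=1}^J c_j^2 \mu_j^2 \geq \mu_J^2\|h\|^2$, giving $\tau_J \leq \mu_J^{-1}$.

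For part (1), I would invoke \cite{BCK}'s dimension-counting: the subspace $V := \mathrm{span}\{\phi_{01},\ldots,\phi_{0,J-1}\} \subset L^2(X)$ has dimension $J-1$, so its $L^2(X)$-orthogonal complement $V^\perp$ has codimension $J-1$, and the $J$-dimensional $\Psi_J$ must intersect $V^\perp$ nontrivially. Picking a nonzero $h \in \Psi_J \cap V^\perp$, the orthogonality conditions $\langle h,\phi_{0j}\rangle_X = 0$ for $j < J$ combined with the Parseval identities and the per-fiber monotonicity of $\mu_{j,z}$ in $j$ yield $\|Th\|^2 \leq \mu_J^2\|h\|^2$, whence $\tau_J \geq \mu_J^{-1}$.

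The delicate step is this last bound in part (1): $\langle h,\phi_{0j}\rangle_X = 0$ only implies the averaged condition $E[a_j(Z)] = 0$ for $j < J$, not the pointwise-in-$z$ condition $a_j(z) \equiv 0$ that would be available in the fully endogenous case. Closing this gap requires a careful rearrangement argument that simultaneously exploits the cross-fiber orthogonality and the within-fiber decrease of $\mu_{j,z}$, and this is the main point where the proof genuinely departs from BCK's; the rest of the argument is then a direct transcription of the fully endogenous proof under the Parseval identities above.
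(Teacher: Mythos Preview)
Your Part (2) argument is essentially the paper's.

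For Part (1), the gap you flag is real and is not closable along the line you sketch. From $\langle h,\phi_{0j}\rangle_X=0$ you obtain only $E[a_j(Z)]=0$ for $j<J$, and this is far too weak to force $E\!\left[\sum_j a_j(Z)^2\mu_{j,Z}^2\right]\le \mu_J^2\,E\!\left[\sum_j a_j(Z)^2\right]$: an $h\in\Psi_J\cap V^\perp$ can still place substantial fiber mass on low-index coefficients (where $\mu_{j,z}$ is large), with cancellation occurring only in the $z$-average. No ``rearrangement'' repairs this, because $J-1$ scalar constraints cannot control an infinite-dimensional family $\{a_j(\cdot)\}_{j<J}$ of functions of $z$. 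The step you defer is not a detail---it is the whole content of the bound.

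The paper sidesteps the difficulty by working not with the $L^2(X)$-orthogonal complement $V^\perp$ but with the \emph{fiber-wise} complement
\[
P_{J-1}^\perp:=\{h\in L^2(X):h(\cdot,z)\perp\phi_{0j,z}\ \text{in}\ L^2(X_1\mid Z=z)\ \text{for every}\ z\ \text{and}\ j<J\}.
\]
On $P_{J-1}^\perp$ one has $a_j(z)\equiv 0$ pointwise in $z$ for $j<J$, so the per-fiber singular-value characterization gives $\|T_z h(\cdot,z)\|^2\le\mu_{J,z}^2\|h(\cdot,z)\|^2$ directly; integrating over $z$ under the normalization $\|h(\cdot,z)\|_{L^2(X_1|Z=z)}=1$ for all $z$ yields $\|Th\|^2\le E[\mu_{J,Z}^2]=\mu_J^2$. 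The paper then runs the chain
\[
\tau_J^{-2}=\inf_{\Psi_J,\ \|h\|=1}\ \le\ \inf_{\Psi_J\cap P_{J-1}^\perp}\ \le\ \sup_{\Psi_J\cap P_{J-1}^\perp}\ \le\ \sup_{P_{J-1}^\perp}\ \le\ \mu_J^2,
\]
the last three terms under the unit-fiber-norm constraint. This abandons dimension counting entirely; the trade-off is that $P_{J-1}^\perp$ has infinite codimension in $L^2(X)$, so nonemptiness of the intermediate set $\Psi_J\cap P_{J-1}^\perp$ with unit fiber norm is no longer automatic and the paper does not justify it explicitly.
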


Consider the following partially-endogenous stylized example from \cite{HoderleinHolzmann}. Let $X_{1i}$, $W_{1i}$ and $Z_i$ be scalar random variables with
\[
 \left( \begin{array}{c} X_{1i} \\ W_{1i} \\ Z_i \end{array} \right)
 \sim N \left( \left( \begin{array}{c} 0 \\ 0 \\ 0 \end{array} \right) , \left( \begin{array}{ccc}
 1 & \rho_{xw} & \rho_{xz} \\
 \rho_{xw} & 1 & \rho_{wz} \\
 \rho_{xz} & \rho_{wz} & 1 \end{array} \right) \right) \,.
\]
Then
\begin{equation} \label{e-mvar-partial}
 \left( \left. \begin{array}{c} \frac{X_{1i} - \rho_{xz} z}{\sqrt{1-\rho^2_{xz}}} \\ \frac{W_{1i} - \rho_{wz} z}{\sqrt{1-\rho^2_{wz}}}  \end{array} \right| Z_i = z \right)
 \sim N \left( \left( \begin{array}{c} 0 \\ 0 \end{array} \right) , \left( \begin{array}{ccc}
 1 & \rho_{xw| z} \\
 \rho_{xw| z}  & 1 \end{array} \right) \right)
\end{equation}
where
\[
 \rho_{xw| z} = \frac{\rho_{xw} - \rho_{xz} \rho_{wz}}{\sqrt{(1-\rho^2_{xz})(1-\rho^2_{wz})}}
\]
is the partial correlation between $X_{1i}$ and $W_{1i}$ given $Z_i$.
For each $j \geq 1$ let $H_j$ denote the $j$th Hermite polynomial (the Hermite polynomials form an orthonormal basis with respect to Gaussian density). Since $T_z : L^2(X_1|Z=z) \to L^2(W_1|Z=z)$ is compact for each $z$, it follows from Mehler's formula that $T_z$ has a SVD $\{\phi_{0j,z},\phi_{1j,z},\mu_{j,z}\}_{j=1}^\infty$ with
\[
 \phi_{0j,z}(x_1) = H_{j-1} \bigg( \frac{x_1 - \rho_{xz}z}{\sqrt{1-\rho_{xz}^2}} \bigg), \quad
 \phi_{1j,z}(w_1) = H_{j-1} \bigg( \frac{w_1 - \rho_{wz}z}{\sqrt{1-\rho_{wz}^2}} \bigg), \quad
 \mu_{j,z} = |\rho_{xw| Z}|^{j-1}
\]
for each $z$.
Since  $\mu_{J,z} = |\rho_{xw | z}|^{J-1}$ for each $z$, we have $\mu_J = |\rho_{xw | z}|^{J-1}\asymp |\rho_{xw | z}|^{J}$. If $X_{1i}$ and $W_{1i}$ are uncorrelated with $Z_i$, then $\mu_J = |\rho|^{J-1}$ where $\rho = \rho_{xw}$.

In contrast, consider the following fully-endogenous model in which $X_i$ and $W_i$ are bivariate with
\[
 \left( \begin{array}{c} X_{1i} \\ X_{2i} \\ W_{1i} \\ W_{2i} \end{array} \right)
 \sim N \left( \left( \begin{array}{c} 0 \\ 0 \\ 0 \\ 0 \end{array} \right) , \left( \begin{array}{cccc}
 1 & 0 & \rho_1 & 0 \\
 0 & 1 & 0 & \rho_2 \\
 \rho_1 & 0 & 1 & 0 \\
 0 & \rho_2 & 0 & 1 \end{array} \right) \right)
\]
where $\rho_1$ and $\rho_2$ are such that the covariance matrix is invertible. It is straightforward to verify that $T$ has singular value decomposition with
\[
 \phi_{0j}(x) = H_{j-1}(x_1) H_{j-1}(x_2) \, \quad \phi_{1j}(w) = H_{j-1}(w_1) H_{j-2}(w_2), \quad \mu_j = |\rho_1 \rho_2|^{j-1}\,,
\]
and $\mu_J = \rho^{2(J-1)}$ if $\rho_1 = \rho_2 = \rho$. Thus, the measure of ill-posedness diverges faster in the fully-endogenous case ($\mu_J = \rho^{2(J-1)}$) than that in the partially endogenous case ($\mu_J =|\rho|^{J-1}$).

\section{Uniform inference on collections of nonlinear functionals}\label{s-ucb}

In this section we apply our sup-norm rate results and tight bounds on random matrices (in main online Appendix \ref{ax-supp}) to establish uniform Gaussian process strong approximation and the consistency of the score bootstrap UCBs defined in (\ref{e-ucb}) for collections of (possibly) nonlinear functionals $\{f_t(\cdot) : t \in \mathcal T\}$ of a NPIV function $h_0$. See Section \ref{s-end} for discussions of other applications.

We consider functionals $f_t : \mc H \subset L^\infty(X) \to \mb R$ for each $t \in \mathcal T$ for which $Df_t(h)[v] = \lim_{\delta \rightarrow 0^{+}} [\delta^{-1} f_t(h +\delta v)]$ exists for all $v\in \mc H - \{h_0\}$ for all $h$ in a small neighborhood of $h_0$ (where the neighborhood is independent of $t$). This is trivially true for, say, $f_t(h) = h(t)$ with $\mc T \subseteq \mc X$ for UCBs for $h_0$. Let $\Omega = E[u_i^2 b^K(W_i) b^K(W_i)']$. Then the ``2SLS covariance matrix'' for $\wh c$ (given in (\ref{def-series-2SLS})) is
$$\mho = [S' G_b^{-1}  S]^{-1}  S'  G_b^{-1}  \Omega  G_b^{-1}  S[ S'  G_b^{-1}  S]^{-1}~,$$ and the sieve variance for $f_t (\wh h)$ is
\begin{equation*}
[\sigma_n(f_t)]^2   = \big(Df_t(h_{0})[\psi^J]\big)' \mho  \big(Df_t(h_{0})[\psi^J]\big)~.
\end{equation*}

\setcounter{assumption}{1}

\begin{assumption}[continued]%\label{a-residuals}
 (iii) $E[u_i^2 |W_i = w] \geq \underline \sigma^2 > 0$ uniformly for all $w \in \mathcal W$; and (iv) $\sup_w E[|u_i|^3|W_i = w] < \infty$.
\end{assumption}
Assumptions \ref{a-residuals}(iii)(iv) are reasonably mild conditions used to derive the uniform limit theory.
Define
\begin{align*}
 v_{n}(f_t)(x) & = \psi^J(x)' [S' G_b^{-1} S]^{-1} Df_t(h_{0})[\psi^J]~, &
\wh v_{n}(f_t)(x) & = \psi^J(x)' [S' G_b^{-1} S]^{-1} Df_t(\wh h) [\psi^J]~,
\end{align*}
where, for each fixed $t$, $v_{n}(f_t)$ could be viewed as a ``sieve 2SLS Riesz representer''. Note that $v_{n}(f_t)=\wh v_{n}(f_t)$ whenever $f_t$ is linear. Under Assumption \ref{a-residuals}(i)(iii) we have that
\begin{equation*}
[\sigma_n(f_t)]^2   \asymp Df_t(h_{0})[\psi^J]' [S' G_b^{-1} S]^{-1} Df_t(h_{0})[\psi^J]=\|\Pi_K T  v_{n}(f_t))\|_{L^2(W)}^2~~~\text{uniformly in}~t~.
\end{equation*}
Following \cite{ChenPouzo2014}, we call $f_t (\cdot)$ an irregular functional of $h_0$ (i.e., slower than $\sqrt n$-estimable) if $\sigma_n(f_t)\nearrow +\infty$ as $n \to \infty$. This includes the evaluation functionals $h_0(t)$ and $\partial^{\alpha}h_0(t)$ as well as $f_{CS,t}(h_0)$ and $f_{DL,t}(h_0)$. In this paper we shall focus on applications of sup-norm rate results to inference on irregular functionals.

\setcounter{assumption}{4}

\begin{assumption}
\label{a-functional}
Let $\eta_n$ and $\eta_n'$ be sequences of nonnegative numbers such that $\eta_n = o(1)$ and $\eta_n' = o(1)$. Let $\sigma_n(f_t) \nearrow +\infty$ as $n \to \infty$ for each $t \in \mathcal T$. Either (a) or (b) of the following holds:
\begin{itemize}
\item[\emph{(a)}] \emph{$f_t $ is a linear functional for each $t \in \mathcal T$ and $\sup_{t \in \mathcal T}\sqrt{n}%
(\sigma_n(f_t))^{-1} |f_t(\widetilde{h})-f_t(h_{0})| = O_{p}(\eta_n)$; or}
\item[\emph{(b)}] \emph{(i) $v\mapsto Df_t(h_{0})[v]$
is a linear functional for each $t \in \mathcal T$; (ii) }
\[
 \sup_{t \in \mathcal T} \left| \sqrt n  \frac{f_t(\wh h) - f(h_0) }{\sigma_n(f_t)} - \sqrt n  \frac{Df_t(h_0)[\wh h - \widetilde h]}{\sigma_n(f_t)} \right| = O_p(\eta_n)\,;
\]
\emph{and (iii) $\sup_{t \in \mathcal T} \frac{\|\Pi_K T (\wh v_{n}(f_t) - v_{n}(f_t))\|_{L^2(W)}}{\sigma_n(f_t)} = O_p(\eta_n')$.}
\end{itemize}
\end{assumption}
Assumption \ref{a-functional}(a)(b)(i)(ii) are similar to uniform-in-$t$ versions of Assumption 3.5 of \cite{ChenPouzo2014}. Assumption \ref{a-functional}(b)(iii) controls any additional error arising in the estimation of $\sigma_n(f_t)$ by $\wh\sigma(f_t)$ (given in equation \eqref{e-sieve-var}) due to nonlinearity of $f_t(\cdot)$, and is automatically satisfied with $\eta_n'=0$ when $f_t(\cdot)$ is a linear functional.

The next remark presents a set of sufficient conditions for Assumption \ref{a-functional} when $\{f_t : t \in \mc T\}$ are irregular functionals of $h_0$. Since the functionals are irregular, the quantity $\ul \sigma_n := \inf_{t \in \mathcal T} \sigma_n(f_t)$ will typically satisfy $\ul \sigma_n \nearrow +\infty$ as $n \to \infty$. Our sup-norm rates for $\wh h$ and $\wt h$, together with divergence of $\ul \sigma_n$, helps to control the nonlinearity bias terms.

\begin{remark} \label{rmk-a5suff}
Let $\mathcal H_n \subseteq \mathcal H$ be a sequence of neighborhoods of $h_0$ with $\wh h,\widetilde h \in \mathcal H_n$ wpa1 and assume $\ul \sigma_n :=\inf_{t \in \mathcal T} \sigma_n(f_t) >0$ for each $n$. Then: Assumption \ref{a-functional}(a) is implied by (a'), and Assumption \ref{a-functional}(b) is implied by (b'), where
\begin{itemize}
\item[(a')] (i) $f_t$ is a linear functional for each $t \in \mathcal T$ and there exists $\alpha$ with $|\alpha| \geq 0$ s.t. $\sup_t |f_t(h-h_0)| \lesssim \|\partial^\alpha h- \partial^\alpha h_0 \|_\infty$ for all $h \in \mathcal H_n$; and (ii) $n^{1/2}\ul \sigma_n^{-1}\|\partial^\alpha \widetilde h - \partial^\alpha h_0\|_\infty = O_{p}(\eta_n)$; or
\item[(b')] (i) $v\mapsto Df_t(h_{0})[v]$
is a linear functional for each $t\in \mathcal T$ and there exists $\alpha$ with $|\alpha| \geq 0$ s.t. $\sup_t|D f_t(h_{0})[h-h_0]| \lesssim \|\partial^\alpha h - \partial^\alpha h_0\|_\infty$ for all $h \in \mathcal H_n$;\\
(ii) there are $\alpha_1$, $\alpha_2$ with $|\alpha_1|,|\alpha_2| \geq 0$ s.t.
\begin{align*}
(ii.1)& \; \sup_t \left| f_t(\wh h) - f_t(h_0) -Df_t(h_{0})[\wh h - h_0] \right| \lesssim \|\partial^{\alpha_1} \wh h - \partial^{\alpha_1} h_0\|_\infty \|\partial ^{\alpha_2} \wh h - \partial^{\alpha_2} h_0\|_\infty\,~~and \\
(ii.2)& \; n^{1/2} \ul \sigma_n^{-1} \big( \|\partial^{\alpha_1}\wh h - \partial^{\alpha_1}h_0\|_\infty \|\partial ^{\alpha_2} \wh h - \partial^{\alpha_2} h_0\|_\infty + \|\partial^\alpha \widetilde h - \partial^\alpha h_0\|_\infty \big) = O_{p}(\eta_n)~;
\end{align*}
and (iii) $\sup_{t\in \mathcal T} \frac{(\tau_J)\sqrt{\sum_{j=1}^J \left( Df_{t}(\wh h)[(G_\psi^{-1/2} \psi^J)_j] - Df_{t}(h_0)[(G_\psi^{-1/2} \psi^J)_j] \right)^2}}{\sigma_n(f_{t})}=O_p(\eta_n')$.
\end{itemize}
\end{remark}

Condition (a')(i) is automatically satisfied by functionals of the form $f_t(h) = \partial^\alpha h(t)$ with $\mathcal T \subseteq \mathcal X$ and $\mathcal H_n = \mathcal H$. Conditions (a')(i) and (b')(i)(ii) are sufficient conditions that are formulated to take advantage of the sup-norm rate results in Section \ref{npiv sec}. For example, condition (b')(i)(ii.1) is easily satisfied by exact CS and DL functionals (lemma A.1 of \cite{HausmanNewey1995}). Condition (b')(ii.2) is simply satisfied by applying our sup-norm rate results.
Condition (b')(iii) is a sufficient condition for Assumption \ref{a-functional}(b)(iii), and is needed for uniform-in-$t$ consistent estimation of $\sigma_n(f_t)$ by $\wh\sigma (f_t)$ only, and is automatically satisfied with $\eta_n'=0$ when $f_t(\cdot)$ is a linear functional.

The next assumption concerns the set of normalized sieve 2SLS Riesz representers, given by
\[
 u_n(f_t)(x) = v_n(f_t)(x)/\sigma_n(f_t) \,.
\]
Let $d_n$ denote the semi-metric on $\mathcal T$ given by $d_n(t_1,t_2)^2 = E[(u_n(f_{t_1})(X_i) - u_n(f_{t_2})(X_i))^2]$ and $N(\mathcal T,d_n,\epsilon)$ be the $\epsilon$-covering number of $\mathcal T$ with respect to $d_n$. Let $\eta_n$ and $\eta_n'$ be from Assumption \ref{a-functional}, and $\delta_{h,n}$ be a sequence of positive constants such that $\|\wh h - h_0\|_\infty = O_p(\delta_{h,n})=o_p (1)$. Denote $\delta_{V,n} \equiv \big[ \zeta_{b,K}^{(2+\delta)/\delta} \sqrt{(\log K)/n} \big]^{\delta/(1+\delta)}  + \tau_J \zeta \sqrt{(\log J)/n} + \delta_{h,n}$.

\setcounter{assumption}{5}

\begin{assumption} \label{a-Lipschitz}
(i)  there is a sequence of finite constants $c_n \gtrsim 1$ that could grow to infinity such that
\[
 1 + \int_0^\infty \sqrt{\log N(\mathcal T,d_n,\epsilon)}\,\mathrm d \epsilon = O( c_n)\,;
\]
and (ii) there is a sequence of constants $r_n>0$ decreasing to zero slowly such that\\
(ii.1) $r_n c_n \lesssim 1$ and $\frac{\zeta_{b,K} J^2}{r_n^3 \sqrt n} = o(1)$; and\\
(ii.2) $\tau_J \zeta \sqrt{(J \log J)/n} + \eta_n + ( \delta_{V,n} + \eta_n' )\times c_n  = o(r_n)$,
with $\eta_n' \equiv 0$ when $f_t(\cdot)$ is linear.
\end{assumption}

Assumption \ref{a-Lipschitz}(i) is a mild regularity condition requiring that the class $\{u_n(f_t) : t \in \mathcal T\}$ not be too complex; see Remark \ref{rmk-a6suff} below for sufficient conditions to bound $c_n$. Assumption \ref{a-Lipschitz}(ii) strengthens conditions on the growth rate of $J$. Condition $\frac{\zeta_{b,K} J^2}{r_n^3 \sqrt n} = o(1)$ of Assumption \ref{a-Lipschitz}(ii.1) is used to apply Yurinskii's coupling \cite[Theorem 10, p. 244]{CLR,PollardUGMTP} to derive uniform Gaussian process strong approximation to the linearized sieve process $\{\wh {\mb Z}_n(t):t\in \mathcal T \}$ (defined in equation (\ref{bahadur-Z})). This condition could be improved if other types of strong approximation probability tools are used.
Assumption \ref{a-Lipschitz}(ii.2) ensures that both the \emph{nonlinear} remainder terms and the error in estimating $\sigma_n(f_t)$ by $\wh \sigma(f_t)$ vanish sufficiently fast. While the consistency of $\wh \sigma(f)$ is enough for the pointwise asymptotic normality of the plug-in sieve $t$-statistic for $f(h_0)$ (see Theorem \ref{t-dist} in the main online Appendix \ref{s-pw}), we need the following rate of convergence for uniform inference
\[
 \sup_{t \in \mathcal T} \left| \frac{\sigma_n(f_t)}{\wh\sigma (f_t)} -1 \right| = O_p( \delta_{V,n} + \eta_n')~,
\]
which is established using our results on sup-norm convergence rates of sieve NPIV; see Lemma \ref{lem-varest-u} in the secondary online Appendix \ref{ax-proofs}.

\begin{remark} \label{rmk-a6suff}
Let Assumptions \ref{a-data}(iii) and \ref{a-approx}(i) hold. Let $\mathcal T$ be a compact subset in $\mb R^{d_T}$, and there exist positive sequences $\Gamma_n$ and $\gamma_n$ such that for any $t_1, t_2 \in \mathcal T$,
\[
 \sup_{h \in \Psi_J : \|h\|_{L^2(X)} = 1} \left| \left( Df_{t_1}(h_0)[h] -  Df_{t_2}(h_0)[h] \right) \right| \leq \Gamma_n \|t_1 - t_2 \|_{\ell^2}^{\gamma_n}~.
\]
Then: Assumption \ref{a-Lipschitz}(i) holds with $c_n = 1 + \int_0^\infty \sqrt{ \{ (d_T /\gamma_n )\log ( \Gamma_n \tau_J/(\epsilon \ul \sigma_n ))\} \vee 0 }\,\mathrm d \epsilon$.
\end{remark}

The next lemma is about uniform Bahadur representation and uniform Gaussian process strong approximation for the sieve $t$-statistic process for (possibly) nonlinear functionals of NPIV. Define
\begin{align}
 \wh {\mb Z}_n(t) & =  \frac{(D f_t(h_0)[\psi^J])' [ S'  G_b^{-1}  S]^{-1}  S'  G_b^{-1/2}}{\sigma_n(f_t)} \left(\frac{1}{\sqrt n} \sum_{i=1}^n G_b^{-1/2} b^K(W_i) u_i\right)~,\label{bahadur-Z} \\
 \mb Z_n(t) & =  \frac{(D f_t(h_0)[\psi^J])' [ S'  G_b^{-1}  S]^{-1}  S'  G_b^{-1/2}}{\sigma_n(f_t)}  \mathcal Z_n~\nonumber
\end{align}
with $\mathcal Z_n \sim N(0, G_b^{-1/2}\Omega G_b^{-1/2} )$. Note that $\mb Z_n(t)$ is a Gaussian process indexed by $t \in \mc T$.

\begin{lemma}\label{lem-bahadur}
Let Assumptions \ref{a-data}(iii), \ref{a-residuals}, \ref{a-sieve}(ii)(iii), \ref{a-approx}(i), \ref{a-functional} and \ref{a-Lipschitz} hold. Then:
\begin{equation} \label{e:sa}
 \sup_{t \in \mc T} \left| \frac{\sqrt n (f_t(\wh h) - f_t(h_0))}{\wh\sigma (f_t)} - {\mb Z}_n(t) \right| = \sup_{t \in \mc T} \left| \frac{\sqrt n (f_t(\wh h) - f_t(h_0))}{\wh\sigma (f_t)} - \wh{\mb Z}_n(t) \right| + o_p(r_n ) = o_p(r_n)\,.
\end{equation}
\end{lemma}

Lemma \ref{lem-bahadur} is used in this paper to establish the consistency of the sieve score bootstrap for estimating the critical values of the uniform sieve $t$-statistic process, $\sup_{t \in \mathcal T} \left|\frac{\sqrt n (f_t(\wh h) - f_t(h_0))}{\wh\sigma (f_t)} \right|$, for a NPIV model. The strong approximation result, however, is also useful for various applications to testing equality and/or inequality (such as shape) constraints on $f_t(h_0)$, and is therefore of independent interest.

In what follows, $\mb P^*(\cdot)$ denotes a probability measure conditional on the data $Z^n:= \{(X_i,Y_i,W_i)\}_{i=1}^n$. Recall that $\mb Z_n^*(t)$ is defined in equation (\ref{bscore}).

\begin{theorem} \label{t-dist-u}
Let conditions of Lemma \ref{lem-bahadur} hold. Let $\eta_n' \sqrt J = o(r_n)$ for nonlinear $f_t()$. Let the bootstrap weights $\{\varpi_{i}\}_{i=1}^n$ be IID with zero mean, unit variance and finite 3rd moment, and independent of the data. Then:
\begin{equation}\label{boots-ucb}
 \sup_{s \in \mb R} \left| \mb P \left( \sup_{t \in \mathcal T} \left|\frac{\sqrt n (f_t(\wh h) - f_t(h_0))}{\wh\sigma (f_t)} \right| \leq s \right) - \mb P^*\left( \sup_{t \in \mathcal T} |\mb Z_n^*(t)| \leq s\right)  \right| = o_p(1)\,.
\end{equation}
\end{theorem}

Theorem \ref{t-dist-u} appears to be the first to establish consistency of a sieve score bootstrap for uniform inference on general nonlinear functionals of NPIV under low-level conditions. When specializing to collections of linear functionals,
Lemma \ref{lem-bahadur}, Theorem \ref{t-dist-u} and Corollary \ref{c-deriv} immediately imply the following result.

\begin{corollary} \label{cor-ucb}
Consider a collection of linear functionals $\{f_t(h_0) =\partial^\alpha h_0 (t): t\in \mathcal T \}$ of the NPIV function $h_0$, with $\mathcal T$ a compact convex subset of $\mathcal X$. Let Assumptions \ref{a-data}(i)(ii)(iii) and \ref{a-residuals} (with $\delta \geq 1$) hold, $h_0 \in B_\infty(p,L)$, $\Psi_J$ be formed from a B-spline basis of regularity $\gamma > (p \vee 2 + |\alpha|)$, $B_K$ be a B-spline, wavelet or cosine basis, and $\sigma_n(f_t) \asymp \tau_J J^{a}$ uniformly in $t$ with $a =\frac{1}{2} + \frac{|\alpha|}{d}$. For $\kappa \in [1/2,1]$ we set $J^5(\log n)^{6\kappa} /n = o(1)$, $\tau_J J (\log J)^{\kappa+0.5} / \sqrt{n}= o(1)$ and $J^{-p/d}=o([\log J]^{-\kappa} \tau_J \sqrt{J/n})$. Then: Results (\ref{e:sa}) (with $r_n=(\log J)^{-\kappa}$) and (\ref{boots-ucb}) hold for $f_t(h_0) =\partial^\alpha h_0 (t)$.
\end{corollary}

Recently \cite{HorowitzLee2012} developed a notion of UCBs for a NPIV function $h_0$ of a scalar endogenous regressor $X_i \in [0,1]$ based on interpolation over a growing number of uniformly generated random grid points on $[0,1]$, with $h_0$ estimated via the modified orthogonal series NPIV estimator of \cite{Horowitz2011}.\footnote{Remark 4 in \cite{HorowitzLee2012} mentioned that their notion of UCB is different from the standard UCBs. They also proved the consistency of their bootstrap confidence bands over fixed finite number of grid points.}
When specializing Corollary \ref{cor-ucb} to a NPIV function of a scalar regressor (i.e., $d=1$ and $|\alpha|=0$), our sufficient conditions are comparable to theirs (see their theorem 4.1). Our score bootstrap UCBs would be computationally much simpler for a NPIV function of a multivariate endogenous regressor $X_i$, however.

When $X_i$ is exogenous, the sieve NPIV estimator $\wh h$ reduces to the series LS estimator of a nonparametric regression $h_0(x) = E[Y_i | W_i = x]$ with $X_i = W_i$, $K =J$ and $b^K = \psi^J$ with $\tau_J =1$. Lemma \ref{lem-bahadur} and Theorem \ref{t-dist-u} immediately imply the validity of Gaussian strong approximation and sieve score bootstrap UCBs for collections of general nonlinear functionals of a nonparametric LS regression. We note that the regularity conditions in Lemma \ref{lem-bahadur} and Theorem \ref{t-dist-u} are much weaker for models with exogenous regressors. For instance, when specializing Corollary \ref{cor-ucb} to a nonparametric LS regression with exogenous regressor $X_i$, the conditions on $J$ simplify to $J^5(\log n)^{6\kappa} /n = o(1)$ and $J^{-p/d}=o([\log J]^{-\kappa}\sqrt{J/n} )$ for $\kappa \in [1/2,1]$, and Results (\ref{e:sa}) (with $r_n = [\log J]^{-\kappa}$) and (\ref{boots-ucb}) both hold for linear functionals $\{f_t(h_0) = \partial^\alpha h(t_0 ): t\in \mathcal T \}$ of $h_0(\cdot) =E[Y_i | X_i = \cdot]$. These conditions on $J$ are the same as those in \cite{CLR} for $h_0$ (see their theorem 7) and \cite{BCCK2014} for linear functionals of $h_0$ (see their theorem 5.5 with $r_n = [\log J]^{-1/2}$) estimated via series LS.

To the best of our knowledge, there is no published work on uniform Gaussian process strong approximation and sieve score bootstrap for general \emph{nonlinear functionals} of sieve NPIV or series LS regression. The results in this section are thus presented as non-trivial applications of our sup-norm rate results for sieve NPIV, and are not aimed at weakest sufficient conditions.

\subsection{Monte Carlo}

We now evaluate the finite sample performance of our sieve score bootstrap UCBs for $h_0$ in NPIV model (\ref{npreg}). We use the experimental design of \cite{NeweyPowell}, in which IID draws are generated from
\[
 \left( \begin{array}{c} u_i \\ V_i^* \\ W_i^* \end{array} \right) \sim N \left( \left( \begin{array}{c} 0 \\ 0 \\ 0 \end{array} \right) , \left( \begin{array}{ccc} 1 & 0.5 & 0 \\ 0.5 & 1 & 0 \\ 0 & 0 & 1 \end{array} \right) \right)
\]
from which we then set $X_i^* = W_i^* + V_i^*$. To ensure compact support of the regressor and instrument, we rescale $X_i^*$ and $W_i^*$ by defining $X_i = \Phi(X_i^*/\sqrt 2)$ and $W_i = \Phi(W_i^*)$ where $\Phi$ is the Gaussian cdf. We use $h_0(x) = 4x-2$ for our \emph{linear} design and $h_0(x) = \log(|16x-8|+1)\mathrm{sgn}(x-\frac{1}{2})$ for our \emph{nonlinear} design (our nonlinear $h_0$ is a re-scaled version of the $h_0$ used in \cite{NeweyPowell}). Note that $p$ for the nonlinear $h_0$ is between $1$ and $2$, so $h_0$ is not particularly smooth ($h_0'(x)$ has a kink at $x = \frac{1}{2}$).

We generate 1000 samples of length 1000 and implement our procedure using a B-spline basis for $B_K$ and $\Psi_J$. For each simulation, we calculate the 90\%, 95\%, and 99\% uniform confidence bands for $h_0$ over the support $[0.05,0.95]$ with 1000 bootstrap replications for each simulation. We draw the bootstrap innovations $\varpi_i$ from the two-point distribution of \cite{Mammen1993}. We then calculate the MC coverage probabilities of our uniform confidence bands.

\begin{table}[p]
 \centering {
\begin{tabular}{cccc|cccccc}
\hline \hline
 & & & &  \multicolumn{3}{c}{Design 1: Linear $h_0$} & \multicolumn{3}{c}{Design 2: nonlinear $h_0$} \\
  $\Psi_J$ & $B_K$ & $J$ & $K$ & 90\% CI & 95\% CI & 99\% CI & 90\% CI & 95\% CI & 99\% CI \\ \hline
  C  &  C  & 5 & 5 & 0.962 & 0.983 & 0.996 & 0.896 & 0.942 & 0.987 \\
  C  &  C  & 5 & 6 & 0.957 & 0.983 & 0.996 & 0.845 & 0.924 & 0.981 \\
  C  &  Q  & 5 & 5 & 0.961 & 0.982 & 0.996 & 0.884 & 0.939 & 0.985 \\
  C  &  Q  & 5 & 6 & 0.958 & 0.983 & 0.997 & 0.846 & 0.921 & 0.981 \\
  Q  &  Q  & 5 & 5 & 0.964 & 0.984 & 0.997 & 0.913 & 0.948 & 0.989 \\
  Q  &  Q  & 5 & 6 & 0.961 & 0.985 & 0.996 & 0.886 & 0.937 & 0.983 \\
   \hline \hline
\end{tabular} \parbox{5.0in}{\caption{ \label{tab-ucb-1} \small MC coverage probabilities of uniform confidence bands for $h_0$. Results are presented for cubic (C) and quartic (Q) B-spline bases for $\Psi_J$ and $B_K$.   } }}
\end{table}

\begin{figure}[p]
  \centering {
  \includegraphics[trim = 20mm 70mm 20mm 75mm, clip, width=0.7\textwidth]{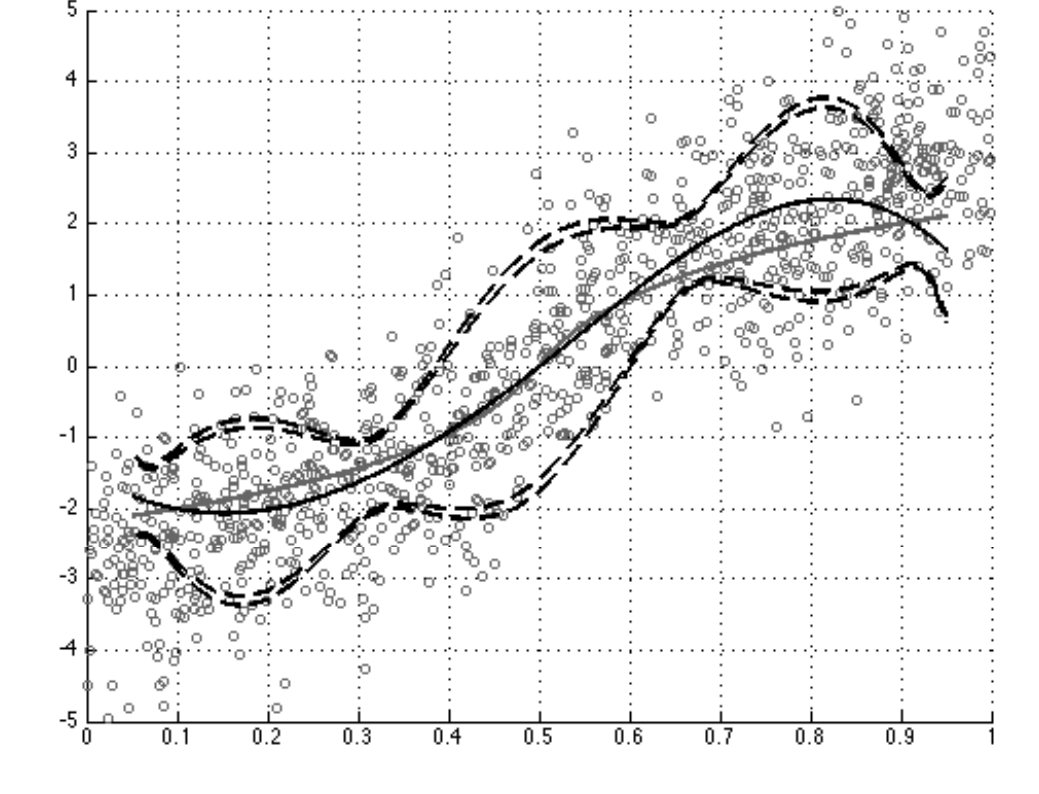}
  \parbox{5.0in}{\caption{ \label{f-ucb} \small 90\% and 95\% uniform confidence bands for $h_0$ (dashed lines; innermost are 90\%), NPIV estimate $\wh h$ (solid black line), true structural function $h_0$ (solid grey line) for the nonlinear design.} }}
\end{figure}

Figure \ref{f-ucb} displays the estimated structural function $\wh h$ and confidence bands together with a scatterplot of the sample $(X_i,Y_i)$ data for the nonlinear design. The true function $h_0$ is seen to lie inside the UCBs.
The results of this MC experiment are presented in Table \ref{tab-ucb-1}. Comparing the MC coverage probabilities with their nominal values, it is clear that the uniform confidence bands for the linear design are slightly too conservative. However, the uniform confidence bands for the nonlinear design using cubic B-splines to approximate $h_0$ have MC converge much closer to the nominal coverage probabilities.

\section{Pointwise and uniform inference on nonparametric welfare functionals}\label{s-inference}

We now apply our sup-norm rate results to study pointwise and uniform inference on nonlinear welfare functionals in nonparametric demand estimation with endogeneity. First, we provide mild sufficient conditions under which plug-in sieve $t$-statistics for exact CS and DL and approximate CS functionals are asymptotically $N(0,1)$, allowing for mildly and severely ill-posed NPIV models (subsections \ref{ss-cs-endog} and \ref{ss-apcs}). Second, under stronger sufficient conditions but still allowing for severely ill-posed NPIV models, the validity of uniform Gaussian process strong approximations and sieve score bootstrap UCBs for exact CS and DL over a range of taxes and/or incomes (subsection \ref{s-ucb-cs}) are presented. When specialized to inference on exact CS and DL and approximate CS functionals of nonparametric demand estimation without endogeneity, our pointwise asymptotic normality results are valid under sufficient conditions weaker than those in the existing literature, while our uniform inference results appear to be new (subsection \ref{ss-cs-exog}).

Previously, \cite{HausmanNewey1995} and \cite{Newey1997} provided sufficient conditions for pointwise asymptotic normality for plug-in nonparametric LS estimators of exact CS and DL functionals and of approximate CS functionals respectively, when prices and incomes are exogenous. \cite{Vanhems2010} studied consistency and convergence rates of kernel-based plug-in estimators of CS functional allowing for mildly ill-posed NPIV models. \cite{BlundellHorowitzParey2012} and \cite{HausmanNewey2016} estimated CS and DL of nonparametric gasoline demand allowing for prices to be endogenous, but did not provide theoretical justification for their inference approach under endogeneity.
Therefore, although presented as applications of our sup-norm rate results, our inference results contribute nicely to the literature on nonparametric welfare analysis.

\subsection{Pointwise inference on exact CS and DL with endogeneity} \label{ss-cs-endog}

Here we present primitive regularity conditions for pointwise asymptotic normality of the sieve $t$-statistics for exact CS and DL. We suppress dependence of the functionals on $t = (\mf p^0, \mf p^1,\mf y)$.

Let $\mf X_i = (\mf P_i,\mf Y_i)$. We assume in what follows that the support of both $\mf P_i$ and $\mf Y_i$ is bounded away from zero. If both $\mf P_i$ and $\mf Y_i$ are endogenous, let $\mf W_i$ be a $2\times 1$ vector of instruments. Let $T : L^2(\mf X) \to L^2(\mf W)$ be compact and injective with singular value decomposition (SVD) $\{\phi_{0j},\phi_{1j},\mu_j\}_{j=1}^\infty$ where
\[
 T \phi_{0j} = \mu_j \phi_{1j}, \quad (T^*T)^{1/2} \phi_{0j} = \mu_j \phi_{0j}, \quad (TT^*)^{1/2} \phi_{1j} = \mu_j \phi_{1j}
\]
and $\{\phi_{0j}\}_{j=1}^\infty$ and $\{\phi_{0j}\}_{j=1}^\infty$ are orthonormal bases for $L^2(\mf X)$ and $L^2(\mf W)$, respectively. If $\mf P_i$ is endogenous but $\mf Y_i$ is exogenous, we take $\mf W_i = (\mf W_{1i},\mf Y_i)'$ with $\mf W_{1i}$ an instrument for $\mf P_i$. Let $T_{\mf y} : L^2(\mf P|\mf Y = \mf y) \to L^2(\mf W_1|\mf Y=\mf y)$ be compact and injective with SVD $\{\phi_{0j,\mf y},\phi_{1j,\mf y},\mu_{j,\mf y}\}_{j=1}^\infty$ for each $\mf y$ where
\[
 T_{\mf y} \phi_{0j,\mf y} = \mu_{j,\mf y} \phi_{1j,\mf y}, \quad (T^*_{\mf y}T^{\phantom *}_{\mf y})^{1/2} \phi_{0j,\mf y} = \mu_{j,\mf y} \phi_{0j,\mf y}, \quad (T^{\phantom *}_{\mf y}T^*_{\mf y})^{1/2} \phi_{1j,\mf y} = \mu_{j,\mf y} \phi_{1j,\mf y}
\]
and $\{\phi_{0j,\mf y}\}_{j=1}^\infty$ and $\{\phi_{0j,\mf y}\}_{j=1}^\infty$ are orthonormal bases for $L^2(\mf P|\mf Y = \mf y)$ and $L^2(\mf W_1|\mf Y = \mf y)$, respectively. In this case, we define $\phi_{0j}(\mf p,\mf y) = \phi_{0j,\mf y}(\mf p)$, $\phi_{1j}(\mf w_1,\mf y) = \phi_{1j,\mf y}(\mf w_1)$, and $\mu_j^2 = E[\mu_{j,\mf Y_i}^2]$ (see Section \ref{s-exog} for further details).

In both cases, we follow \cite{ChenPouzo2014} and assume that $\Psi_J$ and $B_K$ are Riesz bases in that they span $\phi_{01},\ldots,\phi_{0J}$ and $\phi_{11},\ldots,\phi_{1K}$, respectively. This implies that $\tau_J \asymp \mu_J^{-1}$.
For fixed $\mf p^0$, $\mf p^1$, and $\mf y$ we define
\begin{equation*}
 a_j = a_j(\mf p^0,\mf p^1,\mf y) = \int_{0}^{1} \left( \phi_{0j}(\mf p(u),\mf y-\mf S_{\mf y}(\mf p(u))) e^{-\int_{0}^u \partial_2 h_0(\mf p(v),\mf y-\mf S_{\mf y}(\mf p(v)))\mf p'(v)\,\mathrm dv} \mf p'(u) \right)\mathrm du \,
\end{equation*}
for the exact CS functional.

\paragraph{Assumption CS} \emph{(i) $\mf X_i$ and $\mf W_i$ both have compact rectangular support and densities bounded away from $0$ and $\infty$; (ii) $h_0 \in B_{\infty}(p,L)$ with $p > 2$ and $0 < L < \infty$; (iii) $E[\mf u_i^2|\mf W_i = w]$ is uniformly bounded away from $0$ and $\infty$, $E[|\mf u_i|^{2+\delta}]$ is finite for some $\delta>0$, and $
\sup_w E[u_i^2 \{ |u_i | > \ell(n)\}|\mf W_i = w] = o(1)$ for any positive sequence with $\ell(n) \nearrow \infty$; (iv) $\Psi_J$ is spanned by a (tensor-product) B-spline basis of order $\gamma > p$ or continuously differentiable wavelet basis of regularity $\gamma > p$ and $B_K$ is spanned by a (tensor-product) B-spline, wavelet or cosine basis; (v) $J^{(2+\delta)/(2\delta)}\sqrt{(\log n) / n} = o(1)$ and
\[
 \frac{\sqrt n}{ \big( \sum_{j=1}^J (a_j/\mu_j)^2 \big)^{1/2} } \times \bigg( J^{-p/2} +  \mu_J^{-2} \frac{J^2 \sqrt{\log J}}{n}  \bigg) = o(1)\,.
\]
}

Assumption CS(i)--(iv) is standard even for series LS regression without endoegenity. Let $[\sigma_n(f_{CS})]^2   = \big(Df_{CS} (h_{0})[\psi^J]\big)' \mho  \big(Df_{CS} (h_{0})[\psi^J]\big)$ be the sieve variance of the plug-in sieve NPIV estimator $f_{CS} (\wh h_0 )$. Then these assumptions imply that $[\sigma_n(f_{CS})]^2 \asymp \sum_{j=1}^J (a_j/\mu_j)^2 \lesssim J \mu_J^{-2}$. Assumption CS(v) is sufficient for Remark \ref{rmk-a5suff}(b') for a fixed $t$.

Our first result is for exact CS functionals, established by applying Theorem \ref{t-dist} in the main online Appendix \ref{s-pw}. Let
\[
 \wh \sigma^2 (f_{CS})  = Df_{CS}(\wh h)[\psi^J]'\, \wh \mho\, Df_{CS}(\wh h)[\psi^J]
\]
with
\[
 Df_{CS}(\wh h)[\psi^J]  = \int_{0}^{1}  \psi^J( \mf p(u),\mf y-\wh{\mf S}_{\mf y}(\mf p(u)))  e^{-\int_{0}^u \partial_2 \wh h( \mf p(v),\mf y-\wh{\mf S}_{\mf y}(\mf p(v))) \mf p'(v)\,\mathrm dv} \mf p'(u) \,\mathrm du \,.
\]

\begin{theorem}\label{t-cs}
Let Assumption CS hold. Then: the sieve $t$-statistic for $f_{CS}(h_0)$ is asymptotically $N(0,1)$, i.e.,
\[
 \sqrt n\frac{ f_{CS}(\wh h) - f_{CS}(h_0)}{\wh \sigma (f_{CS})} \to_d N(0,1)\,.
\]
\end{theorem}

Since $\mu_j >0$ decreases as $j$ increases, we could use the following relation
\begin{equation}\label{sigma-lb}
\mu_J^{-2} J  \gtrsim \mu_J^{-2}\sum_{j=1}^J a_j^2 \geq \sum_{j=1}^J (a_j/\mu_j)^2 \geq \max \left( (\min_{1\leq j\leq J} a_j^2) \sum_{j=1}^J \mu_j^{-2}, \max_{1\leq j\leq J} ( a_j^2 \mu_j^{-2} ), \mu_1^{-2}\sum_{j=1}^J a_j^2 \right)
\end{equation}
to provide simpler sufficient conditions for Assumption CS(v) that could be satisfied by both mildly and severely ill-posed NPIV models. Corollary \ref{c-cs} provides one set of concrete sufficient conditions for Assumption CS(v).

\begin{corollary}\label{c-cs}
Let Assumption CS(i)--(iv) hold and $a_j^2 \asymp j^{a}$ for $a\leq 0$. Then: $[\sigma_n(f_{CS})]^2 \asymp \sum_{j=1}^J (j^{a} \mu_j^{-2})$.

(1) Mildly ill-posed case: let $\mu_j \asymp j^{-\varsigma/2}$ for $\varsigma \geq 0, a + \varsigma > -1$. Then:
\[
[\sigma_n(f_{CS})]^2 \asymp J^{(a+\varsigma)+1}~;
\]
further, if $\delta \geq 2/(2+\varsigma- a)$, $n J^{-(p+a+\varsigma+1)} = o(1)$ and $J^{3+\varsigma - a} (\log n)/n = o(1)$, then: Assumption CS(v) is satisfied, and the sieve $t$-statistic for $f_{CS}(h_0)$ is asymptotically $N(0,1)$.

(2) Severely ill-posed case: let $\mu_j \asymp \exp(-\frac{1}{2} j^{\varsigma/2})$, $\varsigma > 0$ and $J =( \log (n/(\log n)^{\varrho}))^{2/\varsigma}$ for $\varrho>0$. Then:
\[
 [\sigma_n(f_{CS})]^2 \gtrsim \frac{n}{(\log n)^{\varrho}} \times (\log (n/(\log n)^\varrho))^{2a/\varsigma}\,;
\]
further, if $\varrho>0$ is chosen such that $2 p > \varrho \varsigma - 2a$ and $\varrho \varsigma > 8-2a $, then: Assumption CS(v) is satisfied, and the sieve $t$-statistic for $f_{CS}(h_0)$ is asymptotically $N(0,1)$.
\end{corollary}

Note that in Corollary \ref{c-cs}, $J$ may be chosen to satisfy the stated conditions in the mildly ill-posed case whenever $p > 2-2a$, and in the severely ill-posed case whenever $p > 4 - 2a$.

Our next result is for DL functionals. Note that DL is the sum of CS and a tax receipts functional, namely $(\mf p^1 - \mf p^0)h_0(\mf p^1,\mf y)$. Note that the tax receipts functional is typically less smooth and hence converges slower than that of CS functional.
Therefore, $[\sigma_n(f_{DL})]^2   = \big(Df_{DL}(h_{0})[\psi^J]\big)' \mho  \big(Df_{DL}(h_{0})[\psi^J]\big)$ will typically grow at the order of $(\tau_J \sqrt J)^2$, which is the growth order of the sieve variance term for estimating the unknown NPIV function $h_0$ at a fixed point. For this reason we do not derive the joint asymptotic distribution of $f_{CS}(\wh h)$ and $f_{DL}(\wh h)$. The next result adapts Theorem \ref{t-cs} to derive asymptotic normality of plug-in sieve $t$-statistics for DL functionals. Let
\[
 \wh \sigma^2 (f_{DL})  = Df_{DL}(\wh h)[\psi^J]'\, \wh \mho\, Df_{DL}(\wh h)[\psi^J]
\]
with
\[
 Df_{DL}(\wh h)[\psi^J]  = Df_{CS}(\wh h)[\psi^J] - (\mf p^1 - \mf p^0) \psi^J(\mf p^1,\mf y)\,.
\]

\begin{theorem} \label{t-dl}
Let Assumption CS(i)--(iv) hold. Let $\sigma_n (f_{DL}) \asymp \mu_J^{-1} \sqrt J$, $\sqrt n \mu_J J^{-(p+1)/2} = o(1)$ and $(J^{(2+\delta)/(2\delta)}\sqrt{\log n} \vee \mu_J^{-1}J^{3/2}\sqrt{\log J}) /\sqrt n= o(1)$. Then:
\[
 \sqrt n\frac{ f_{DL}(\wh h) - f_{DL}(h_0)}{\wh \sigma (f_{DL})} \to_d N(0,1)\,.
\]
\end{theorem}

\subsection{Pointwise inference on approximate CS with endogeneity} \label{ss-apcs}

Suppose instead that demand of consumer $i$ for some good is estimated in logs, i.e.
\begin{equation} \label{e-logQ}
\log \mf Q_i = h_0(\log \mf P_i ,\log \mf Y_i) + u_i \,.
\end{equation}
As $h_0$ is the log-demand function, any linear functional of demand is a nonlinear functional of $h_0$. One such example is the weighted average demand functional of the form
\begin{equation*}
 f_A(h) = \int w(\mf p) e^{h(\log \mf p,\log \mf y)}\, \mathrm d \mf p
\end{equation*}
where  $w(\mf p)$ is a non-negative weighting function and $\mf y$ is fixed. With $w(\mf p) = \ind \{ \ul {\mf p} \leq \mf p \leq \ol{\mf p}\}$, the functional $f(h)$ may be interpreted as the approximate CS. The functional is  defined for fixed $\mf y$, so it will typically be an irregular functional of $h_0$.

The setup is similar to the previous subsection. Let $\mf X_i = (\log \mf P_i,\log \mf Y_i)$. If both $\mf P_i$ and $\mf Y_i$ are endogenous, we let $\mf W_i$ be a $2\times 1$ vector of instruments and $T : L^2(\mf X) \to L^2(\mf W)$ be compact with SVD $\{\phi_{0j},\phi_{1j},\mu_j\}_{j=1}^\infty$. If $\mf P_i$ is endogenous but $\mf Y_i$ is exogenous, we let $\mf W_i = (\mf W_{1i},\log \mf Y_i)'$ with $\mf W_{1i}$ an instrument for $\mf P_i$, and let $T_{\mf y} : L^2(\log \mf P|\log \mf Y = \log \mf y) \to L^2(\mf W_1|\log \mf Y=\log \mf y)$ be compact with SVD $\{\phi_{0j,\mf y},\phi_{1j,\mf y},\mu_{j,\mf y}\}_{j=1}^\infty$ for each $\mf y$. In this case, we define $\phi_{0j}(\log \mf p,\log \mf y) = \phi_{0j,\mf y}(\log \mf p)$, $\phi_{1j}(\mf w_1,\log \mf y) = \phi_{1j,\mf y}(\mf w_1)$, and $\mu_j^2 = E[\mu_{j,\mf Y_i}^2]$.
We again assume that $\Psi_J$ and $B_K$ are Riesz bases. For each $j \geq 1$, define
\begin{align*}
 a_j & = a_j(\mf y) = \int w(\mf p) e^{h_0(\log \mf p,\log \mf y)}\phi_{0j}(\log \mf p,\log \mf y) \, \mathrm d \mf p \,.
\end{align*}

The next result follows from Theorem \ref{t-dist} (in the main online Appendix \ref{s-pw}). Let
\[
 \wh \sigma^2 (f_A)  = Df_{A}(\wh h)[\psi^J]'\, \wh \mho\, Df_{A}(\wh h)[\psi^J]
\]
with
\[
 Df_A(\wh h)[\psi^J]  = \int w(\mf p) e^{\wh h(\log \mf p,\log \mf y)}\psi^J(\log \mf p,\log \mf y) \, \mathrm d \mf p \,.
\]

\begin{theorem} \label{t-apcs} Let Assumption CS(i)--(iv) hold for the log-demand model (\ref{e-logQ}) with $p > 0$, and let $ J^{(2+\delta)/(2\delta)}\sqrt{(\log n)/n} = o(1)$ and
\[
 \frac{\sqrt n}{ \big( \sum_{j=1}^J (a_j/\mu_j)^2 \big)^{1/2} } \times \bigg( J^{-p/2} + \mu_J^{-2} \frac{J^{3/2} \sqrt{\log J}}{n}   \bigg) = o(1)~.
\]
Then:
\[
 \frac{\sqrt n ( f_A(\widehat h) - f_A(h_0))}{ \wh \sigma (f_A)} \to_d N(0,1)\,.
\]
\end{theorem}

\subsection{Uniform inference on collections of exact CS and DL functionals with endogeneity}\label{s-ucb-cs}

Here we apply Lemma \ref{lem-bahadur} and Theorem \ref{t-dist-u} to present sufficient conditions for uniform Gaussian process strong approximations and bootstrap UCBs for exact CS and DL under endogeneity. We maintain the setup described at the beginning of Subsection \ref{ss-cs-endog}. We take $t = (\mf p^0,\mf p^1,\mf y) \in \mc T = [\ul {\mf p}^0 , \ol {\mf p}^0] \times [\ul {\mf p}^1 , \ol {\mf p}^1] \times [\ul{\mf y}, \ol{\mf y}]$, where the intervals $[\ul {\mf p}^0 , \ol {\mf p}^0]$ and $[\ul {\mf p}^1 , \ol {\mf p}^1]$ are in the interior of the support of $\mf P_i$ and $[\ul{\mf y}, \ol{\mf y}]$ is in the interior of the support of $\mf Y_i$. For each $t \in \mc T$ we let
\begin{equation}
 a_{j,t} = a_{j,t}(\mf p^0,\mf p^1,\mf y) = \int_{0}^{1} \left( \phi_{0j}(\mf p(u),\mf y-\mf S_{\mf y}(\mf p(u))) e^{-\int_{0}^u \partial_2 h_0(\mf p(v),\mf y-\mf S_{\mf y}(\mf p(v)))\mf p'(v)\,\mathrm dv} \mf p'(u) \right)\mathrm du
\end{equation}
for each $j \geq 1$ (where $\mf p(u)$ is a smooth price path from $\mf p^0 = \mf p(0)$ to $\mf p^1 = \mf p(1)$). Also define $\ul \sigma_n = \inf_{t \in \mc T}(( \sum_{j=1}^J (a_{j,t}/\mu_j)^2 )^{1/2} $.

\paragraph{Assumption U-CS} \emph{(i) $E[\mf u_i^2|\mf W_i = w]$ is uniformly bounded away from $0$, $E[|\mf u_i|^{2+\delta}]$ is finite with $\delta \geq 1$, and $\sup_w E[|\mf u_i|^3 |W_i = w]$ is finite; (ii) the H\"older condition in Remark \ref{rmk-a6suff} holds with $\gamma_n = \gamma$ and $\Gamma_n \lesssim J^c$ for some finite positive constants $\gamma$ and $c$; (iii) $J^5(\log n)^{3} /n = o(1)$, $\frac{\sqrt {n(\log J)}}{ \ul \sigma_n } J^{-p/2} =o(1)$; (iv) let $\eta_n'  =  \frac{ J^{3/2} \mu_J^{-1}}{ \ul \sigma_n } \left( J^{-p/2} + \mu_J^{-1} \sqrt{J(\log J)/n} \right)$, either (iv.1) $\eta_n' (\log J)=o(1)$, or (iv.2) $\eta_n' \sqrt{J(\log J)} =o(1)$.
}

Assumption U-CS (i) is slightly stronger than Assumption CS(iii) (since $\delta = 1$ in Assumption U-CS(i) is enough). Assumption U-CS(ii) is made for simplicity to verify Assumption \ref{a-Lipschitz}(i); other sufficient conditions could also be used. Assumption U-CS(iii)(iv.1) strengthens Assumption CS(v) to ensure uniform Gaussian process strong approximation with an error rate of $r_n =(\log J)^{-1/2}$. Again, one could use bounds on $\ul \sigma_n$ that is analogous to Relation (\ref{sigma-lb}) to provide sufficient conditions for Assumption U-CS(iii)(iv) that could be satisfied by mildly and severely ill-posed NPIV models. See Remark \ref{rm-ucs} below for one concrete set of such sufficient conditions.

\begin{remark}\label{rm-ucs}
Let $\ul \sigma_n^2 \gtrsim \sum_{j=1}^J (j^{a}\mu_j^{-2})$ for $a \leq 0$.\\
(1) Mildly ill-posed case: let $\mu_j \asymp j^{-\varsigma/2}$ for $\varsigma \geq 0$ and $ a+\varsigma  > -1$. Let $J^{5 \vee (4 + \varsigma-a)} (\log n)^{3} /n = o(1)$ and $nJ^{-(p+a+\varsigma+1)}{(\log J)} = o(1)$. Then Assumption U-CS(iii)(iv) holds.\\
(2) Severely ill-posed case: let $\mu_J \asymp \exp(-\frac{1}{2}j^{\varsigma/2})$, $\varsigma > 0$. Let
%$p > 4-a$ and
$J = (\log(n/(\log n)^\varrho))^{2/\varsigma}$ with $\varrho > 0$ chosen such that $2p > \varrho \varsigma - 2a$ and $\varrho \varsigma > 10-2a$. Then Assumption U-CS(iii)(iv) holds.
\end{remark}

The next results are about the uniform Gaussian process strong approximation and validity of score bootstrap UCBs for exact CS and DL functionals.

\begin{theorem}\label{t-ucb-cs} Let Assumptions CS(i)(ii)(iv) and U-CS(i)(ii)(iii) hold. Then:\\
(1) If Assumption U-CS(iv.1) holds, then Result (\ref{e:sa}) (with $r_n=(\log J)^{-1/2}$) holds for $f_t = f_{CS,t}$;\\
(2) If Assumption U-CS(iv.2) holds, then Result (\ref{boots-ucb}) also holds for $f_t = f_{CS,t}$.
\end{theorem}

In the next theorem the condition $\ul \sigma_n \asymp \mu_J^{-1} \sqrt J$ is implied by the assumption that $\sigma_n (f_{DL,t}) \asymp \mu_J^{-1} \sqrt J$ uniformly for $t \in \mc T$, which is reasonable for the DL functional.

\begin{theorem}\label{t-ucb-dl}
Let Assumptions CS(i)(ii)(iv) and U-CS(i)(ii)(iii) hold with $\ul \sigma_n \asymp \mu_J^{-1} \sqrt J$. Then:\\
(1) If Assumption U-CS(iv.1) holds, then Result (\ref{e:sa}) (with $r_n=(\log J)^{-1/2}$) holds for $f_t = f_{DL,t}$;\\
(2) If Assumption U-CS(iv.2) holds, then Result (\ref{boots-ucb}) also holds for $f_t = f_{DL,t}$.
\end{theorem}

\subsection{Inference on welfare functionals without endogeneity}\label{ss-cs-exog}

This subsection specializes the pointwise and uniform inference results for welfare functionals from the preceding subsections to nonparametric demand estimation with exogenous price and income. Precisely, we let $\mf X_i = \mf W_i$, $J = K$, $b^K = \psi^J$, $\mu_J \asymp 1$, $\tau_J \asymp 1$ and so the sieve NPIV estimator reduces to the usual series LS estimator of $h_0 (x)=E[Y_i |W_i=x]$.

The next two corollaries are direct consequences of our Theorems \ref{t-cs}, \ref{t-dl} and \ref{t-apcs} for pointwise asymptotic normality of sieve $t$ statistics for exact CS and DL and approximate CS functionals under exogeneity, and hence the proofs are omitted.

\begin{corollary}\label{c-cs-exog}
Let Assumption CS(i)--(iv) hold with $\mf X_i = \mf W_i$, $J = K$, $b^K = \psi^J$ and $\mu_J \asymp 1$ and let $\sum_{j=1}^J a_j^2 \gtrsim J^{a+1}$ with $ 0\geq a \geq -1$.

(1) Let $n J^{-(p+a+1)} = o(1)$, $J^{3-a} (\log J)/n = o(1)$, and $\delta \geq 2/(2-a)$. Then: the sieve $t$-statistic for $f_{CS}(h_0)$ is asymptotically $N(0,1)$.

(2) Let $n J^{-(p+1)} = o(1)$, $J^3 (\log J)/n = o(1)$, and $a=0$, $\delta \geq 1$. Then: the sieve $t$-statistic for $f_{DL}(h_0)$ is asymptotically $N(0,1)$.
\end{corollary}

Previously \cite{HausmanNewey1995} established the pointwise asymptotic normality of $t$-statistics for exact CS and DL based on plug-in kernel LS estimators of demand without endogeneity. They also established root-$n$ asymptotic normality of t-statistics for \emph{averaged} exact CS and DL (i.e. CS/DL averaged over a range of incomes) based on plug-in power series LS estimator of demand without endogeneity, under some regularity conditions including that $\sup_{x} E[|u_i|^4|\mf X_i = x]<\infty$ (which, in our notation, implies $\delta = 2$), $p =\infty$ (i.e., $h_0$ is infinitely times differentiable) and $J^{22}/n = o(1)$. Corollary \ref{c-cs-exog} complements their work by providing conditions for the pointwise asymptotic normality of exact CS and DL functionals based on spline and wavelet LS estimators of demand.

\begin{corollary} \label{c-apcs-exog}
Let Assumption CS(i)--(iv) hold for the log-demand model (\ref{e-logQ}) with $\mf X_i = \mf W_i$, $J = K$, $b^K = \psi^J$,  $\mu_J \asymp 1$ and $p > 0$, let $\sum_{j=1}^J a_j^2 \gtrsim J^{c+1}$ with $0\geq c \geq -1$. Let $n J^{-(p+c+1)} = o(1)$, $ J^{2-c} (\log J)/n = o(1)$ and $\delta \geq 2/(1-c)$. Then: the sieve $t$ statistic for $f_{A}(h_0)$ is asymptotically $N(0,1)$.
\end{corollary}

Previously \cite{Newey1997} established the pointwise asymptotic normality of $t$-statistics for \emph{approximate} CS functionals based on plug-in series LS estimators of exogenous demand under some regularity conditions including that $\sup_{x} E[|u_i|^4|\mf X_i = x]<\infty$ (which implies $\delta = 2$), $n J^{-p}=o(1)$ and either $J^6/n = o(1)$ for power series or $J^4/n=o(1)$ for splines.

The final corollary is a direct consequence of our Theorems \ref{t-ucb-cs} and \ref{t-ucb-dl} and Remark \ref{rm-ucs} for uniform inferences based on sieve $t$ processes for exact CS and DL nonlinear functionals under exogeneity, and hence its proof is omitted.

\begin{corollary}\label{c-ucb-exog}
Let Assumptions CS(i)(ii)(iv) and U-CS(i)(ii) hold with $\mf X_i = \mf W_i$, $J = K$, $b^K = \psi^J$ and $\mu_J \asymp 1$. Let $\ul \sigma_n^2 \gtrsim J^{a+1}$ with $ 0\geq a \geq -1$. Let $J^5(\log n)^{3} /n = o(1)$ and $nJ^{-(p+a+1)}{(\log J)} = o(1)$. Then: Results (\ref{e:sa}) (with $r_n=(\log J)^{-1/2}$) and (\ref{boots-ucb}) hold for $f_t =f_{CS,t}, f_{DL,t}$.
\end{corollary}

We note that $\ul \sigma_n^2 \asymp J$ (or $a=0$) for $f_t =f_{DL,t}$. Corollary \ref{c-ucb-exog} appears to be a new addition to the existing literature. The sufficient conditions for uniform inference for collections of nonlinear exact CS and DL functionals of nonparametric demand estimation under exogeneity are mild and simple.

\section{Conclusion}\label{s-end}

This paper makes several important contributions to inference on nonparametric models with endogeneity. We derive the minimax sup-norm convergence rates for estimating the structural NPIV function $h_0$ and its derivatives. We also provide upper bounds for sup-norm convergence rates of computationally simple sieve NPIV (series 2SLS) estimators using any sieve basis to approximate unknown $h_0$, and show that the sieve NPIV estimator using spline or wavelet basis can attain the minimax sup-norm rates. These rate results are particularly useful for establishing validity of pointwise and uniform inference procedures for nonlinear functionals of $h_0$. In particular, we use our sup-norm rates to establish the uniform Gaussian process strong approximation and the validity of score bootstrap-based UCBs for collections of nonlinear functionals of $h_0$ under primitive conditions, allowing for mildly and severely ill-posed problems. We illustrate the usefulness of our UCBs procedure with two real data applications to nonparametric demand analysis with endogeneity. We establish the pointwise and uniform limit theories for sieve $t$-statistics for exact (and approximate) CS and DL nonlinear functionals under low-level conditions when the demand function is estimated via sieve NPIV. Our theoretical and empirical results for CS and DL are new additions to the literature on nonparametric welfare analysis.

We conclude the paper by mentioning some further extensions and applications of sup-norm convergence rates of sieve NPIV estimators.

\textbf{Extensions to semiparametric IV models}. Although our rate results are presented for purely nonparametric IV models, the results may be adapted easily to some semiparametric models with nonparametric endogeneity, such as partially linear IV regression \citep{AiChen2003,Florens-linear}, shape-invariant Engel curve IV regression \citep{BCK}, and single index IV regression \citep{CCLN}, to list a few.
For example, consider the partially linear NPIV model
\[
 Y_i = X_{1i}'\beta_0 + h_0(X_{2i}) + u_i~\quad\quad E[u_i|W_{1i},W_{2i}]=0~
\]
where $X_{1i}$ and $X_{2i}$ are of dimensions $d_1$ and $d_2$ and do not contain elements in common, and $W_i = (W_{1i},W_{2i})$ is the (conditional) IV. See \cite{Florens-linear,CCLN} for identification of $(\beta_0, h_0)$ in this model. We can still estimate $(\beta_0, h_0)$ via sieve NPIV or series 2SLS as before, replacing $\Psi$ and $B$ in equations (\ref{def-psi})--(\ref{def-b}) by:
\begin{align*}
 \psi^J(x) & = (x_1',\psi^J_2(x_2)')'  & \psi^J_2(x) & = (\psi_{J1}(x_2),\ldots,\psi_{JJ}(x_2))'  \\
  b^K(w) & = (w_1',b^K_2(w_2)')' & b^K_2(w) & = (b_{K1}(w_2),\ldots,b_{KK}(w_2))'
\end{align*}
where $x = (x_1',x_2')'$, $w = (w_1',w_2')'$, $\psi_{J1},\ldots,\psi_{JJ}$ denotes a sieve of dimension $J$ for approximating $h_0(x_2)$ and $b_{K1},\ldots,b_{KK}$ denotes a sieve of dimension $K$ for the instrument space for $W_2$.
We then partition $\wh c$ in (\ref{def-series-2SLS}) into $\wh c = (\wh \beta'_{\phantom{2}},\wh c_2')'$
and set $\widehat h(x) = \psi^J_2(x)'\widehat c_2$. Note that $\wh \beta$ is root-$n$ consistent and asymptotically normal for $\beta_0$ under mild conditions (see \cite{AiChen2003,ChenPouzo2009}), and hence would not affect the optimal convergence rate of $\wh h$ to $h_0$. Our rate results may be slightly altered to derive sup-norm convergence rates for $\widehat h$ and its derivatives.

\textbf{Nonparametric specification testing in NPIV models}. Structural models may specify a parametric form $m_{\theta_0}(x)$ where $\theta_0 \in \Theta \subseteq \mb R^{d_\theta}$ for the unknown structural function $h_0(x )$ in NPIV model (\ref{npreg}). We may be interested in testing the parametric model $\{m_\theta : \theta \in \Theta\}$ against a nonparametric alternative that only assumes some smoothness on $h_0$.
Specification tests for nonparametric regression without endogeneity have typically been performed via either a quadratic-form-based statistic or a Kolmogorov-Smirnov (KS) type sup statistic.\footnote{See, e.g., \cite{Bierens82}, \cite{HardleMammen}, \cite{HongWhite}, \cite{FanLi}, \cite{LavergneVuong}, \cite{StinchcombeWhite1998} and \cite{HorowitzSpokoiny} to list a few.} However, specification tests for NPIV models have so far only been performed via quadratic-form-based statistics; see, e.g.,  \cite{Horowitz2006,Horowitz2011,Horowitz2012,BlundellHorowitz2007,Breunig2015}. Equipped with our sup-norm rate and UCBs results for NPIV function and its derivatives, one could also perform specification tests in NPIV models using KS type statistics of the form
\[
 T_n = \sup_{x} \frac{\Big| \wh h (x) - \wh m(x,\wh \theta) \Big|}{s_n(x)}
\]
where $\wh \theta$ is a first-stage estimator of $\theta_0$, and $\wh m (x,\wh \theta)$ is obtained from series 2SLS regression of $m(X_1,\wh \theta),\ldots,m(X_n,\wh \theta)$ on the same basis functions as in $\wh h$, and $s_n(x)$ is a normalization factor. Alternatively, one could consider a KS statistic formed in terms of the projection of $[\wh h (x) - \wh m(x,\wh \theta)]$ onto the instrument space. Sup-norm convergence rates and uniform limit theory derived in this paper would be useful in deriving the large-sample distribution of these KS type statistics. Further, based on our rate results (in sup- and $L^2$-norm) for estimating derivatives of $h_0$ in a NPIV model,
one could also perform nonparametric tests of significance by testing whether partial derivatives of NPIV function $h_0$ are identically zero, via KS or quadratic-form-based test statistics.

If one is interested in specifications or inferences on functionals directly, then one might consider KS type sup statistics for (possibly nonlinear) functionals directly.
For example, if one is interested in exact CS functional of a demand and concerns about the potential endogeneity of price. Then one could estimate exact CS functional using a series LS estimated demand (under exogeneity) and series 2SLS estimated demand (under endogeneity), and then compare the two estimated exact CS functionals via KS type or quadratic-form-based test. In fact, the score bootstrap-based UCBs reported in Figure \ref{f-cs-dwl-ls} indicates that such a test based on exact CS functional directly could be quite informative.

\textbf{Semiparametric 2-step procedures with NPIV first stage}. Many semiparametric two-step or multi-step estimation and inference procedures involve a nonparametric first stage. There are many theoretical results when the first stage is a purely nonparametric LS regression (without endogeneity) and its sup-norm convergence rate is used to assistant subsequent analysis. For structural estimation and inference, it is natural to allow for the presence of nonparametric endogeneity in the first stage as well. For instance, if there is endogeneity present in the conditional moment inequality application of the famous intersection bound paper of \cite{CLR}, one could simply use our sup-norm rate and UCBs results for sieve NPIV instead of their series LS regression in the first stage. As another example, consider semiparametric two-step GMM models
$E[g(Z_i,\theta_0,h_0(X_i))] =0$, where $h_0$ is the NPIV function in model (\ref{npreg}), $g$ is a $\mb R^{d_g}$-valued vector of moment functions with $d_g \geq d_\theta$, and $\theta_0 \in \mb R^{d_\theta}$ is a finite-dimensional parameter of interest, such as the average exact CS parameter of a nonparametric demand function with endogeneity. A popular estimator $\wh \theta$ of $\theta_0$ is a solution to the semiparametric two-step GMM with a weighting matrix $\wh W$:
\[
 \min_{\theta } \left( \frac{1}{n} \sum_{i=1}^n g(Z_i,\theta,\wh h(X_i)) \right)' \wh W \left( \frac{1}{n} \sum_{i=1}^n g(Z_i,\theta,\wh h(X_i)) \right)
\]
where $\wh h$ is a sieve NPIV estimator of $h_0$. When $h_0$ enters the moment function $g(\cdot)$ nonlinearly, sup-norm convergence rates of $\wh h$ to $h_0$ are useful in deriving the asymptotic properties of $\wh \theta$.

\bigskip

\appendix

{\small \singlespacing
\putbib
}

\section{Additional lemmas for sup-norm rates} \label{s-sup-lemmas}

Let $s_{\min}(A)$ denote the minimum singular value of a rectangular matrix $A$. For a positive-definite symmetric matrix $A$ we let $A^{1/2}$ be its positive definite square root. We define $s_{JK} = s_{\min} (G_b^{-1/2} S G_\psi^{-1/2})$, which satisfies
\begin{equation*}
 s_{JK}^{-1}  =  \sup_{h \in \Psi_J : h \neq 0} \frac{\|h\|_{L^2(X)}}{\|\Pi_K Th\|_{L^2(W)}} \geq \tau_J
\end{equation*}
for all $K \geq J >0$. The following lemma is used throughout the paper.

\begin{lemma}\label{lem-ip}
Let Assumptions \ref{a-data}(iii) and \ref{a-approx}(i) hold. Then: $ (1-o(1)) s_{JK}^{-1} \leq \tau_J \leq s_{JK}^{-1}$ as $J \to \infty$.
\end{lemma}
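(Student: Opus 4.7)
The plan is to prove the two inequalities separately. The upper bound $\tau_J \le s_{JK}^{-1}$ is essentially already observed in the display just preceding the lemma: writing any $h\in\Psi_J$ as $h=\psi^J(\cdot)'c$, one has $\|h\|_{L^2(X)}^2=c'G_\psi c$ and $\|\Pi_K Th\|_{L^2(W)}^2 = c'S'G_b^{-1}S c$ (because $\Pi_K Th = b^K(\cdot)'G_b^{-1}Sc$), and substituting $c=G_\psi^{-1/2}d$ gives $s_{JK}=\inf_{h\in\Psi_{J,1}}\|\Pi_K Th\|_{L^2(W)}$. Since $\Pi_K$ is an $L^2(W)$-orthogonal projection, $\|\Pi_K Th\|_{L^2(W)}\le\|Th\|_{L^2(W)}$ for every $h\in\Psi_{J,1}$, so taking infima yields $s_{JK}\ge \tau_J^{-1}$, i.e. $\tau_J\le s_{JK}^{-1}$. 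I would record this briefly.

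The main content is the reverse inequality $(1-o(1))s_{JK}^{-1}\le \tau_J$. For any $h\in\Psi_{J,1}$, the triangle inequality gives
\[
\|\Pi_K Th\|_{L^2(W)} \;\ge\; \|Th\|_{L^2(W)} - \|(\Pi_K T-T)h\|_{L^2(W)}.
\]
The first term on the right is bounded below by $\tau_J^{-1}$ by the definition of $\tau_J$, while Assumption \ref{a-approx}(i) bounds the second term uniformly on $\Psi_{J,1}$ by $o(\tau_J^{-1})$. Consequently, uniformly over $h\in\Psi_{J,1}$,
\[
\|\Pi_K Th\|_{L^2(W)}\;\ge\;\tau_J^{-1}-o(\tau_J^{-1})=(1-o(1))\tau_J^{-1}.
\]
Taking the infimum over $\Psi_{J,1}$ on the left gives $s_{JK}\ge (1-o(1))\tau_J^{-1}$, which rearranges to $(1-o(1))s_{JK}^{-1}\le \tau_J$.

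There is no real obstacle here: the argument is a one-line triangle inequality combined with the sieve approximation condition \ref{a-approx}(i) and the variational characterizations of $\tau_J$ and $s_{JK}^{-1}$. The only small care needed is to keep the ``$o(1)$'' uniform in $h\in\Psi_{J,1}$, but that is exactly what Assumption \ref{a-approx}(i) supplies, since it is stated as a supremum over $\Psi_{J,1}$. I would therefore present the proof as (i) a sentence recalling the variational expression for $s_{JK}$ from $s_{\min}(G_b^{-1/2}SG_\psi^{-1/2})$, (ii) the easy projection bound for the upper inequality, and (iii) the triangle-inequality display above for the lower inequality, concluding with the claimed asymptotic sandwich as $J\to\infty$.
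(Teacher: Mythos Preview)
Your approach is essentially identical to the paper's: both use the variational characterization $s_{JK}=\inf_{h\in\Psi_{J,1}}\|\Pi_K Th\|_{L^2(W)}$, the contraction property of $\Pi_K$ for the upper bound, and the triangle inequality combined with Assumption~\ref{a-approx}(i) for the lower bound. One slip to fix: from $\|\Pi_K Th\|_{L^2(W)}\le\|Th\|_{L^2(W)}$ the infimum inequality goes the other way, namely $s_{JK}\le \tau_J^{-1}$ (not $\ge$), which is what actually delivers your stated conclusion $\tau_J\le s_{JK}^{-1}$.
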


Before we provide a bound on the sup-norm ``bias'' term, we present some sufficient conditions for Assumption \ref{a-approx}(iii). This involves three projections of $h_0$ onto the sieve approximating space $\Psi_J$. These projections imply different, but closely related, approximation biases for $h_0$. Recall that $\Pi_J: L^2(X) \to \Psi_J$ is the $L^2(X)$ orthogonal (i.e. least squares) projection onto $\Psi_J$, namely $\Pi_J h_0 = \mathrm{arg}\min_{h \in \Psi_J} \|h_0 - h\|_{L^2(X)}$, and $Q_J h_0 = \mathrm{arg}\min_{h \in \Psi_J} \|\Pi_K T(h_0 - h)\|_{L^2(W)}$ is the sieve 2SLS projection of $h_0$ onto $\Psi_J$. Let $\pi_J h_0 = \mathrm{arg}\min_{h \in \Psi_J} \|T(h_0 - h)\|_{L^2(W)}$ denote the IV projection of $h_0$ onto $\Psi_J$. Note that each of these projections are non-random.

Instead of Assumption \ref{a-approx}(iii), we could impose:

 \textbf{Assumption \ref{a-approx}} \emph{(iii') $ (\zeta_{\psi,J} \tau_J) \times  \|(\Pi_K T - T) (Q_J h_0 - \pi_Jh_0 )\|_{L^2(W)} \leq \mathrm{const} \times   \|Q_J h_0 - \pi_J h_0 \|_{L^2(X)}$.}

Assumption \ref{a-approx}(iii') seems mild and is automatically satisfied by Riesz basis. This is because $\|(\Pi_K T - T)h\|_{L^2(W)} = 0$ for all $h \in \Psi_J$ when the basis functions for $B_K$ and $\Psi_J$ form either a Riesz basis or eigenfunction basis for the conditional expectation operator.
The following lemma collects some useful facts about the approximation properties of $\pi_J h_0$.

\begin{lemma}\label{lem-approx-new}
 Let Assumptions \ref{a-data}(iii) and \ref{a-approx}(ii) hold. Then: \\
 (1) $\|h_0 - \pi_J h_0\|_{L^2(X)}\asymp \|h_0 - \Pi_J h_0\|_{L^2(X)} $; \\
 (2) If Assumption \ref{a-approx}(i) also holds, then: $\|Q_J h_0 - \pi_J h_0 \|_{L^2(X)} \leq o(1) \times \|h_0 - \pi_J h_0\|_{L^2(X)}$. \\
(3) Further, if Assumption \ref{a-approx}(iii') and
\begin{equation} \label{e-piapprox}
 \|\Pi_J h_0 - \pi_J h_0\|_\infty  \leq \mathrm{const} \times \|h_0 - \Pi_J h_0\|_\infty
\end{equation}
hold then Assumption \ref{a-approx}(iii) is satisfied.
\end{lemma}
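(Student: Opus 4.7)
I would prove the three parts in order, with each building on its predecessor. Throughout, the key engines are the sieve ill-posedness inequality $\|f\|_{L^2(X)} \leq \tau_J \|Tf\|_{L^2(W)}$ valid for $f \in \Psi_J$; the first-order conditions (FOCs) that characterize $\Pi_J h_0$, $\pi_J h_0$, and $Q_J h_0$ as orthogonal projections onto $\Psi_J$ in $L^2(X)$, in $L^2(W)$ weighted by $T$, and in $L^2(W)$ weighted by $\Pi_K T$, respectively; and the decomposition $T = \Pi_K T + (T - \Pi_K T)$ on elements of $\Psi_J$, so that Assumption \ref{a-approx}(i) or (iii') can be invoked.

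For (1), the bound $\|h_0 - \Pi_J h_0\|_{L^2(X)} \leq \|h_0 - \pi_J h_0\|_{L^2(X)}$ is immediate from the defining property of $\Pi_J$. For the reverse, the triangle inequality gives
\[
\|h_0 - \pi_J h_0\|_{L^2(X)} \leq \|h_0 - \Pi_J h_0\|_{L^2(X)} + \|\Pi_J h_0 - \pi_J h_0\|_{L^2(X)},
\]
and since $\Pi_J h_0 - \pi_J h_0 \in \Psi_J$, the sieve ill-posedness inequality, the triangle inequality in $L^2(W)$, and the $T$-minimality of $\pi_J h_0$ yield $\|\Pi_J h_0 - \pi_J h_0\|_{L^2(X)} \leq 2\tau_J \|T(h_0 - \Pi_J h_0)\|_{L^2(W)}$, which is $O(\|h_0 - \Pi_J h_0\|_{L^2(X)})$ by Assumption \ref{a-approx}(ii).

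For (2), start from $\|Q_J h_0 - \pi_J h_0\|_{L^2(X)} \leq \tau_J \|T(Q_J h_0 - \pi_J h_0)\|_{L^2(W)}$ and split $T = \Pi_K T + (T - \Pi_K T)$. Assumption \ref{a-approx}(i) bounds the $(T - \Pi_K T)$ contribution by $o(\tau_J^{-1})\|Q_J h_0 - \pi_J h_0\|_{L^2(X)}$. For the $\Pi_K T$ piece, the FOC for $Q_J$ (together with $Q_J h_0 - \pi_J h_0 \in \Psi_J$) yields the Pythagorean identity
\[
\|\Pi_K T(Q_J h_0 - \pi_J h_0)\|_{L^2(W)}^2 + \|\Pi_K T(h_0 - Q_J h_0)\|_{L^2(W)}^2 = \|\Pi_K T(h_0 - \pi_J h_0)\|_{L^2(W)}^2,
\]
so $\|\Pi_K T(Q_J h_0 - \pi_J h_0)\|_{L^2(W)} \leq \|T(h_0 - \pi_J h_0)\|_{L^2(W)}$; the $T$-minimality of $\pi_J h_0$ combined with Assumption \ref{a-approx}(ii) then bounds this by $O(\tau_J^{-1})\|h_0 - \Pi_J h_0\|_{L^2(X)}$. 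Multiplying through by $\tau_J$, absorbing the $o(1)$ self-term on the left, and invoking (1) delivers the claim. For (3), the triangle inequality $\|Q_J h_0 - \Pi_J h_0\|_\infty \leq \|Q_J h_0 - \pi_J h_0\|_\infty + \|\pi_J h_0 - \Pi_J h_0\|_\infty$ reduces the work to bounding $\|Q_J h_0 - \pi_J h_0\|_\infty$, since the second term is controlled by the hypothesis \eqref{e-piapprox}. Since $Q_J h_0 - \pi_J h_0 \in \Psi_J$, the sieve Markov-type inequality $\|f\|_\infty \leq \zeta_{\psi,J}\|f\|_{L^2(X)}$ combined with the $\tau_J$ bound yields $\|Q_J h_0 - \pi_J h_0\|_\infty \leq \zeta_{\psi,J}\tau_J \|T(Q_J h_0 - \pi_J h_0)\|_{L^2(W)}$. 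Decomposing $T$ as in (2), Assumption \ref{a-approx}(iii') is calibrated precisely so that its built-in $\zeta_{\psi,J}\tau_J$ factor cancels the prefactor when bounding the $(T - \Pi_K T)$ piece, leaving a constant times $\|Q_J h_0 - \pi_J h_0\|_{L^2(X)}$ (which is controlled by (2)); the $\Pi_K T$ piece is handled as in (2).

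The main obstacle is the careful bookkeeping that links the three norms $L^2(X)$, $L^\infty(X)$, and $L^2(W)$ through the operators $T$ and $\Pi_K T$, especially in (3), where the $\zeta_{\psi,J}$ factor arising from the sup-norm-to-$L^2$ conversion on $\Psi_J$ would otherwise spoil the bound. Assumption \ref{a-approx}(iii') is tuned exactly to absorb this factor, and the non-trivial control of $\|Q_J h_0 - \pi_J h_0\|_{L^2(X)}$ obtained in (2) is what keeps the sup-norm argument non-circular.
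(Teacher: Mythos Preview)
Your argument for part (1) is fine (the paper gets a constant of $1$ rather than your $2$ by using the FOC for $\pi_J$ to obtain the Pythagorean relation $\|T(\pi_J h_0 - \Pi_J h_0)\|_{L^2(W)} \leq \|T(h_0 - \Pi_J h_0)\|_{L^2(W)}$ directly, but for the $\asymp$ claim this makes no difference).

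There is, however, a genuine gap in parts (2) and (3). In (2) your Pythagorean identity only uses the FOC for $Q_J$, and it yields the crude bound $\|\Pi_K T(Q_J h_0 - \pi_J h_0)\|_{L^2(W)} \leq \|T(h_0 - \pi_J h_0)\|_{L^2(W)}$. Multiplied by $\tau_J$ and combined with Assumption~\ref{a-approx}(ii), this gives only $O(1)\times\|h_0-\pi_J h_0\|_{L^2(X)}$, not the required $o(1)$. The paper's key step is to use the FOCs for \emph{both} $\pi_J$ and $Q_J$ with the common test direction $Q_J h_0 - \pi_J h_0$ and subtract, which produces the sharper identity
\[
\|\Pi_K T(Q_J h_0 - \pi_J h_0)\|_{L^2(W)}^2 \;=\; \bigl|\langle (T-\Pi_K T)(Q_J h_0 - \pi_J h_0),\, T(h_0 - \pi_J h_0)\rangle_W\bigr|\,.
\]
Now the $(T-\Pi_K T)$ factor is already present on the right, so Cauchy--Schwarz together with Assumption~\ref{a-approx}(i) and $s_{JK}^{-1}\asymp\tau_J$ yields $\|Q_J h_0-\pi_J h_0\|_{L^2(X)}\leq o(1)\times\tau_J\|T(h_0-\pi_J h_0)\|_{L^2(W)}$, which is the missing $o(1)$.

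The same gap wrecks part (3). With your decomposition, the $\Pi_K T$ piece contributes $\zeta_{\psi,J}\tau_J\|\Pi_K T(Q_J h_0-\pi_J h_0)\|_{L^2(W)}$, and your Pythagorean bound only controls this by $\zeta_{\psi,J}\times O(1)\times\|h_0-\pi_J h_0\|_{L^2(X)}$, which blows up. The paper instead applies Assumption~\ref{a-approx}(iii') to the cross-term identity above (in place of Assumption~\ref{a-approx}(i)), which yields the sharper $L^2$ bound $\|Q_J h_0-\pi_J h_0\|_{L^2(X)}\leq \mathrm{const}\times\zeta_{\psi,J}^{-1}\|h_0-\pi_J h_0\|_{L^2(X)}$; the built-in $\zeta_{\psi,J}^{-1}$ then cancels the $\zeta_{\psi,J}$ arising from the sup-norm-to-$L^2$ conversion. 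In short, your split $T=\Pi_K T+(T-\Pi_K T)$ applied \emph{linearly} is too blunt; the trick is the bilinear identity obtained from subtracting the two FOCs.
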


In light of Lemma \ref{lem-approx-new} parts (1) and (2), Condition (\ref{e-piapprox}) seems mild. In fact, Condition (\ref{e-piapprox}) is trivially satisfied when the basis for $\Psi_J$ is a Riesz basis because then $\pi_J h_0 = \Pi_J h_0$ (see section 6 in \cite{ChenPouzo2014}). See Lemma \ref{lem-approx} in the online Appendix \ref{ax-proofs} for more detailed relations among $\Pi_J h_0$, $\pi_J h_0$ and $Q_J h_0$.

The next lemma provides a bound on the sup-norm ``bias'' term.
\begin{lemma}\label{lem-bias}
Let Assumptions \ref{a-data}(iii), \ref{a-sieve}(ii) and \ref{a-approx} hold. Then:\\
(1) $\|\widetilde h - \Pi_J h_0\|_\infty \leq O_p(1) \times \|h_0 - \Pi_J h_0\|_{\infty}$.\\
(2) $\|\widetilde h - h_0\|_\infty \leq O_p\left( 1 + \|\Pi_J \|_{\infty} \right) \times \|h_0 - h_{0,J}\|_\infty$.
\end{lemma}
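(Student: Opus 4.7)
Plan for Lemma \ref{lem-bias}. Part (2) follows from part (1) by the triangle inequality together with the displayed approximation bound $\|h_0 - \Pi_J h_0\|_\infty \leq (1 + \|\Pi_J\|_\infty)\|h_0 - h_{0,J}\|_\infty$ proved just before the lemma: $\|\widetilde h - h_0\|_\infty \leq \|\widetilde h - \Pi_J h_0\|_\infty + \|\Pi_J h_0 - h_0\|_\infty \leq O_p(1)\|h_0 - \Pi_J h_0\|_\infty \leq O_p(1+\|\Pi_J\|_\infty)\|h_0 - h_{0,J}\|_\infty$. So essentially all of the work lies in part (1).

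For part (1), I would split
\[
\widetilde h - \Pi_J h_0 \;=\; (\widetilde h - Q_J h_0) \;+\; (Q_J h_0 - \Pi_J h_0).
\]
The second summand is controlled immediately by the $L^\infty$ stability condition in Assumption \ref{a-approx}(iii). For the first summand, I would exploit the reproducing property of the sample sieve 2SLS map. Writing $P(f)(\cdot) := \psi^J(\cdot)'[\Psi'B(B'B)^-B'\Psi]^-\Psi'B(B'B)^-B'(f(X_i))_{i=1}^n$, one has $\widetilde h = P(h_0)$ and, since $Q_J h_0 = \psi^J(\cdot)'c_{0,J}\in\Psi_J$, the identity $P(Q_J h_0) = Q_J h_0$ holds wpa1 once $\Psi'B(B'B)^-B'\Psi$ is invertible. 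The latter is ensured by Assumptions \ref{a-data}(iii), \ref{a-sieve}(ii), and \ref{a-approx}(i), via concentration of $\widehat S'\widehat G_b^-\widehat S$ around $S'G_b^{-1}S$, whose minimum eigenvalue is bounded below by $\tau_J^{-2}$. Linearity then gives $\widetilde h - Q_J h_0 = P(r_J)$ with $r_J := h_0 - Q_J h_0$, and Assumption \ref{a-approx}(iii) provides $\|r_J\|_\infty \leq \|h_0 - \Pi_J h_0\|_\infty + \|Q_J h_0 - \Pi_J h_0\|_\infty \lesssim \|h_0 - \Pi_J h_0\|_\infty$. So it only remains to show $\|P(r_J)\|_\infty = O_p(\|r_J\|_\infty)$.

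To prove the last claim, write the sieve coefficient of $P(r_J)$ as $\widehat M(B'R_0/n)$ with $\widehat M := [\widehat S'\widehat G_b^-\widehat S]^-\widehat S'\widehat G_b^-$ and $R_0 = (r_J(X_i))_{i=1}^n$. Its population analog vanishes because $Q_J r_J = 0$, i.e.\ $M\, E[b^K(W_i)r_J(X_i)] = 0$ for $M := [S'G_b^{-1}S]^{-1}S'G_b^{-1}$. Decompose
\[
\widehat M(B'R_0/n) \;=\; \widehat M\bigl(B'R_0/n - E[b^K(W_i)r_J(X_i)]\bigr) \;+\; (\widehat M - M)\,E[b^K(W_i)r_J(X_i)].
\]
The first bracket is a centered sample mean of $K$-vectors with sup-norm envelope $\zeta_b\|r_J\|_\infty$; this is handled by the same Bernstein/random-matrix machinery as in Lemma \ref{lem-chat} (replacing $u_i$ by $r_J(X_i)$ and using $\sup_w E[r_J(X_i)^2\mid W_i=w]\leq\|r_J\|_\infty^2$). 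The second term is controlled by the concentration of $\widehat G_b, \widehat S$ around $G_b, S$ collected in Appendix \ref{ax-supp}, combined with Assumption \ref{a-approx}(i) (so that $\Pi_K T$ inherits the invertibility of $T$ on $\Psi_J$) and the elementary estimate $\|G_b^{-1/2}E[b^K r_J]\|_{\ell^2} \leq \|r_J\|_{L^2(X)} \leq \|r_J\|_\infty$.

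The main obstacle will be preventing a spurious $\zeta_{\psi,J}$ factor from entering the sup-norm bound for $\psi^J(x)'(\widetilde c - c_{0,J})$: a naive $L^2\to L^\infty$ passage via $\|\cdot\|_\infty \leq \zeta_{\psi,J}\|\cdot\|_{L^2(X)}$ would leave a $\tau_J\zeta_{\psi,J}\zeta_b/\sqrt n$ multiplicative error that does not vanish under Assumption \ref{a-sieve}(ii). The remedy, which is the core of the argument, is to keep every estimate in sup norm and to use Assumption \ref{a-approx}(iii) to pin down the sup-norm operator norm of the population 2SLS projection, while Assumption \ref{a-approx}(i) and random-matrix concentration of $\widehat S, \widehat G_b$ transfer that bound to the sample projection, yielding $\|P(r_J)\|_\infty = O_p(\|r_J\|_\infty)$ and hence part (1).
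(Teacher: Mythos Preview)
Your decomposition is essentially the paper's, just regrouped: the paper writes $\widetilde h-\Pi_Jh_0=Q_J(h_0-\Pi_Jh_0)+T_2+T_3$, where $T_2$ has the population operator $(G_b^{-1/2}S)^-_lG_b^{-1/2}$ acting on a centered sample mean and $T_3$ has the difference of sample and population operators acting on the full sample mean; since $Q_J(h_0-\Pi_Jh_0)=Q_Jh_0-\Pi_Jh_0$, this coincides with your split once $\widetilde h-Q_Jh_0$ is decomposed further.

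There is, however, a genuine gap in your execution. Your ``elementary estimate'' $\|G_b^{-1/2}E[b^Kr_J]\|_{\ell^2}\le\|r_J\|_{L^2(X)}\le\|r_J\|_\infty$ is too weak for the $(\widehat M-M)E[b^Kr_J]$ term: combined with $\|G_\psi^{1/2}(\widehat M-M)G_b^{1/2}\|_{\ell^2}=O_p(\tau_J^2\zeta\sqrt{(\log J)/n})$ and the $\zeta_{\psi,J}$ from the sup-norm passage, it leaves an uncanceled factor of $\tau_J$ that diverges. The paper handles its analogous $T_3$ by invoking Assumption~\ref{a-approx}(ii), which supplies exactly the missing $\tau_J^{-1}$: in your notation, $\|G_b^{-1/2}E[b^Kr_J]\|_{\ell^2}=\|\Pi_KTr_J\|_{L^2(W)}\le\|T(h_0-\Pi_Jh_0)\|_{L^2(W)}\lesssim\tau_J^{-1}\|h_0-\Pi_Jh_0\|_{L^2(X)}$, the middle inequality by optimality of $Q_J$. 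You never invoke Assumption~\ref{a-approx}(ii). Relatedly, your ``main obstacle'' paragraph is misguided: the paper uses precisely the $L^2\to L^\infty$ passage you worry about, and the resulting factor $\tau_J\zeta_{\psi,J}\sqrt{K/n}\le\tau_J\zeta^2/\sqrt n=O(1)$ by Assumption~\ref{a-sieve}(ii) is all the lemma requires. Your proposed remedy---reading a sup-norm operator bound on $Q_J$ out of Assumption~\ref{a-approx}(iii)---is unavailable, since that assumption controls only $\|Q_Jh_0-\Pi_Jh_0\|_\infty$ for the specific function $h_0$, not $\|Q_Jg\|_\infty/\|g\|_\infty$ uniformly in $g$.
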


\section{Optimal $L^2$-norm rates for derivatives}\label{s-l2}

Here we show that the sieve NPIV estimator can attain the optimal $L^2$-norm convergence rates for estimating $h_0$ and its derivatives under much weaker conditions. The optimal $L^2$-norm rates for sieve NPIV derivative estimation presented in this section are new, and should be very useful for inference on some nonlinear functionals involving derivatives such as $f(h) = \|\partial^\alpha h\|_{L^2(X)}^2$.

Instead of Assumption \ref{a-data}(iii), we impose the following condition for identification in $(\mathcal H, \|\cdot \|_{L^2(X)})$:

 \textbf{Assumption \ref{a-data}} \emph{(iii')
$h_0 \in \mathcal H \subset L^2 (X)$, and $T[h-h_0]=0 \in L^2(W)$ for any $h\in \mathcal H$ implies that $\|h-h_0\|_{L^2(X)} =0$.}

\begin{theorem}\label{t-l2}
Let Assumptions \ref{a-data}(iii') and \ref{a-approx}(i)(ii) hold and let $\tau_J\zeta \sqrt{(\log J)/n} = o(1)$. Then:\\
(1) $\|\widetilde h - h_0 \|_{L^2(X)} \leq O_p (1) \times \|h_0 - \Pi_J h_0\|_{L^2(X)}$.\\
(2) Further, if Assumption \ref{a-residuals}(i) holds then
\[
 \|\wh h - h_0\|_{L^2(X)} = O_p\left( \|h_0 - \Pi_J h_0\|_{L^2(X)} + \tau_J\sqrt{J/n}\right)\,.
\]
\end{theorem}

The following corollary provides concrete $L^2$ norm convergence rates of $\wh h$ and its derivatives.
Let $B^p_{2,2}$ denote the Sobolev space of smoothness $p > 0$, $\|\cdot\|_{B^p_{2,2}}$ denote a Sobolev norm of smoothness $p$, and $B_2(p,L)= \{h \in B^p_{2,2} : \|h\|_{B^p_{2,2}} \leq L\}$ where radius $0 < L < \infty$ \cite[Section 1.11]{Triebel2006}.

\begin{corollary}\label{c-rate2}
Let Assumptions \ref{a-data}(i)(ii)(iii') and \ref{a-approx}(i)(ii) hold. Let $h_0 \in B_2(p,L)$, $\Psi_J$ be spanned by a cosine basis, B-spline basis of order $\gamma > p$, or CDV wavelet basis of regularity $\gamma > p$, $B_K$ be spanned by a cosine, spline, or wavelet basis. Let $\tau_J \sqrt{(J \log J)/n} = o(1)$ hold. Then:\\
(1) $\|\partial^\alpha \widetilde h - \partial^\alpha h_0 \|_{L^2(X)} = O_p \left( J^{-(p-|\alpha|)/d} \right)$ for all $0 \leq |\alpha| < p$.\\
(2) Further if Assumption \ref{a-residuals}(i) holds, then
\[
 \|\partial^\alpha \wh h - \partial^\alpha h_0 \|_{L^2(X)} = O_p \left( J^{-(p-|\alpha|)/d} +  \tau_J J^{|\alpha|/d} \sqrt{J/n} \right)~~\mbox{ for all }~~0 \leq |\alpha| < p\,.
\]
(2.a) Mildly ill-posed case: choosing $ J \asymp n^{d/(2(p+\varsigma)+d)}$ yields $\tau_J \sqrt{(J \log J)/n} = o(1)$ and
\begin{equation*}
 \|\partial^\alpha \wh h - \partial^\alpha h_0 \|_{L^2(X)} = O_p ( n^{-(p-|\alpha|)/(2(p+\varsigma)+d)}).
\end{equation*}
(2.b) Severely ill-posed case: choosing $J = (c_0\log n)^{d/\varsigma}$ for any $c_0 \in (0,1)$ yields $\tau_J \sqrt{(J \log J)/n} = o(1)$ and
\begin{equation*}
 \|\partial^\alpha \wh h - \partial^\alpha h_0 \|_{L^2(X)} = O_p ( (\log n)^{-(p-|\alpha|)/\varsigma})\,.
\end{equation*}
\end{corollary}

The conclusions of Corollary \ref{c-rate2} remain true for any basis $B_K$ under the condition $\tau_J \zeta_b \sqrt{(\log J)/n} = o(1)$. Previously, assuming some rates on estimating the unknown operator $T$, \cite{Johannesetal2011} obtained similar $L^2$-norm rates for derivatives of iteratively Tikhonov-regularized estimators in a NPIV model with scalar regressor $X_i$ and scalar instrument $W_i$.

Our next theorem shows that the rates obtained in Corollary \ref{c-rate2} are optimal. It extends the earlier work by \cite{ChenReiss} on $L^2$-norm lower-bounds for $h_0$ to lower bounds for derivative estimation.

\begin{theorem} \label{t-lb-l2}
Let Condition LB hold with $B_2(p,L)$ in place of $B_\infty(p,L)$ for the NPIV model with a random sample $\{(X_{i},Y_{i},W_i)\}_{i=1}^n$. Then for any $0 \leq |\alpha| < p$:
\begin{equation*}
 \liminf_{n \to \infty} \inf_{\wh g_n} \sup_{h \in B_2(p,L)} \mb P_h \left( \|\wh g_n - \partial^\alpha h\|_{L^2(X)} \geq c n^{-(p-|\alpha|)/(2(p+\varsigma)+d)} \right) \geq c'>0
\end{equation*}
in the mildly ill-posed case, and
\begin{equation*}
 \liminf_{n \to \infty} \inf_{\wh g_n} \sup_{h \in B_2(p,L)} \mb P_h \left( \|\wh g_n - \partial^\alpha h\|_{L^2(X)} \geq c(\log n)^{-(p-|\alpha|)/\varsigma} \right) \geq c'>0
\end{equation*}
in the severely ill-posed case,
where $\inf_{\wh g_n}$ denotes the infimum over all estimators of $\partial^\alpha h$ based on the sample of size $n$, $\sup_{h \in B_2(p,L)} \mb P_h$ denotes the sup over $h \in B_2(p,L)$  and distributions of $(X_i,W_i,u_i)$ that satisfy Condition LB with $\nu$ fixed, and the finite positive constants $c, c'$ do not depend on $n$.
\end{theorem}

\section{Lower bounds for quadratic functionals} \label{s-quad}

In this section we study quadratic functionals of the form
\[
 f(h) = \int (\partial^{\alpha} h(x))^2 \mu(x) \mr d x
\]
where $\mu(x) \geq \ul \mu > 0$ is a positive weighting function. These functionals are very important for nonparametric specification and goodness-of-fit testing, as outlined in the conclusion section. We derive lower bounds on convergence rates of estimators of the functional $f(h_0)$.

\begin{theorem}\label{t-lbquad}
Let Condition LB hold with $B_2(p,L)$ in place of $B_\infty(p,L)$ for the NPIV model with a random sample $\{(X_{i},Y_{i},W_i)\}_{i=1}^n$. Then for any $0 \leq |\alpha| < p$:
\[
 \liminf_{n \to \infty} \inf_{\wh g_n} \sup_{h \in B_2(p,L)} \mb P_h \left( |\wh g_n - f(h) | > c r_n \right) \geq c' > 0
\]
where
\[
 r_n = \left[ \begin{array}{ll}
 n^{-1/2} & \mbox{in the mildly ill-posed case when } p \geq \varsigma + 2|\alpha| + d/4 \\
 n^{-4(p-|\alpha|)/(4(p+\varsigma)+d)} & \mbox{in the mildly ill-posed case when } \varsigma < p < \varsigma + 2|\alpha| + d/4 \\
 (\log n)^{-2(p-|\alpha|)/\varsigma} & \mbox{in the severely ill-posed case,}
 \end{array} \right.
\]
$\inf_{\wh g_n}$ denotes the infimum over all estimators of $f(h)$ based on the sample of size $n$, $\sup_{h \in B_2(p,L)} \mb P_h$ denotes the sup over $h \in B_2(p,L)$  and distributions  $(X_i,W_i,u_i)$ which satisfy Condition LB with $\nu$ fixed, and the finite positive constants $c, c'$ do not depend on $n$.
\end{theorem}

In the mildly ill-posed case, Theorem \ref{t-lbquad} shows that the rate exhibits a so-called elbow phenomenon, in which $f(h_0)$ is $\sqrt n$-estimable when $p \geq \varsigma + 2|\alpha| + d/4$ and irregular otherwise. Moreover, $f(h_0)$ is always irregular in the severely ill-posed case.

Consider estimation using the plug-in estimator $f(\widehat h)$. Expanding the quadratic, we see that
\[
  f(\wh h) - f(h_0) = \int \partial^\alpha h_0(x) (\partial^\alpha \wh h(x) - \partial^\alpha h_0(x))\mu(x)\,\mr d x +  \|  \partial^\alpha \wh h - \partial^\alpha h_0\|^2_{L^2(\mu)}\,.
\]
Under appropriate normalization, the first term on the right-hand side will be the ``CLT term''. Consider the quadratic remainder term. Since $\mu$ is bounded away from zero and the density of $X_i$ is bounded away from zero and infinity, the quadratic remainder term behaves like $\|  \partial^\alpha \wh h - \partial^\alpha h_0\|^2_{L^2(X)}$. In the mildly ill-posed case, the optimal convergence rate of this term has been shown to be $O_p ( n^{-2(p-|\alpha|)/(2(p+\varsigma)+d)})$ (see Appendix \ref{s-l2}). This term vanishes faster than $n^{-1/2}$ provided that $p > \varsigma + 2|\alpha| + d/2$, which is a stronger condition than that is required for $f(h_0)$ to be $\sqrt n$-estimable. Therefore, when $\varsigma + 2|\alpha| + d/4 < p < \varsigma + 2|\alpha| + d/2$, the weighted quadratic functional $f(h_0)$ is $\sqrt n$-estimable but its simple plug-in estimator $f(\wh h)$ fails to attain the optimal rate.

\end{bibunit}

%%%%%%%%% MAIN ONLINE APPENDIX %%%%%%%%%%
\newpage
\begin{bibunit}

\pagenumbering{arabic}\renewcommand{\thepage}{\arabic{page}}

\begin{center} \doublespacing
{\Large Main Online Appendix to}
\vskip 12pt
{\LARGE \thetitle}
\vskip 24pt
{\large Xiaohong Chen \quad Timothy M. Christensen
\vskip 12 pt
\thedate}

\end{center}

{

This main online supplementary appendix contains material to support our paper ``\thetitle''. Appendix \ref{s-pw} presents pointwise normality of sieve $t$ statistics for nonlinear functionals of NPIV under low-level sufficient conditions. Appendix \ref{ax-basis} contains background material on B-spline and wavelet bases and the equivalence between Besov and wavelet sequence norms. Appendix \ref{ax-supp} contains material on useful matrix inequalities and convergence results for random matrices. The secondary online supplementary appendix contains additional technical lemmas and all of the proofs (Appendix \ref{ax-proofs}).

\section{Pointwise asymptotic normality of sieve $t$-statistics}\label{s-pw}

In this section we derive the pointwise asymptotic normality of sieve $t$-statistics for nonlinear functionals of a NPIV function under low-level sufficient conditions. Previously under some high-level conditions, \cite{ChenPouzo2014} established the pointwise asymptotic normality of sieve $t$ statistics for (possibly) nonlinear functionals of $h_0$ satisfying general semi/nonparametric conditional moment restrictions including NPIV and nonparametric quantile IV models as special cases.
As the sieve NPIV estimator $\wh h$ has a closed-form expression and for the sake of easy reference, we derive the limit theory directly rather than appealing to the general theory in \cite{ChenPouzo2014}. Our low-level sufficient conditions are tailored to the case in which the functional $f(\cdot )$ is \emph{irregular} in $h_0$ (i.e. slower than root-$n$ estimable), so that they are directly comparable to the sufficient conditions for the uniform inference theory in Section \ref{s-ucb}.

We consider a functional $f : \mc H \subset L^\infty(X) \to \mb R$ for which $Df(h)[v] = \lim_{\delta \rightarrow 0^{+}} [\delta^{-1} f(h +\delta v)]$ exists for all $v\in \mc H - \{h_0\}$ for all $h$ in a small neighborhood of $h_0$. Recall that the sieve 2SLS Riesz representer of $Df(h_0)$ is
\[
 v_n(f)(x) = \psi^J(x)' [S' G_b^{-1} S]^{-1} Df(h_{0})[\psi^J]~,
\]
and let
\[
 [s_n(f)]^2 = \| \Pi_K T v_n(f)\|_{L^2(W)}^2 =  (Df(h_{0})[\psi^J])'[S' G_b^{-1} S]^{-1} Df(h_{0})[\psi^J]
\]
denote its weak norm. \cite{ChenPouzo2014} called that the functional $f(\cdot)$ is an irregular (i.e. slower than $\sqrt n$-estimable) functional of $h_0$ if $s_n(f) \nearrow \infty$ and a regular (i.e. $\sqrt n$-estimable) functional of $h_0$ if $\lim_{n} s_n(f) < \infty$. Denote
\[
\wh v_n(f)(x) = \psi^J(x)' [S' G_b^{-1} S]^{-1} Df(\wh h)[\psi^J]~.
\]
It is clear that $v_n(f) = \wh v_n(f)$ whenever $f(\cdot)$ is linear.

Recall that $\Omega = E[u_i^2 b^K(W_i) b^K(W_i)']$, and the ``2SLS covariance matrix'' for $\wh c$ (given in equation (\ref{def-series-2SLS})) is
$$\mho = [S' G_b^{-1}  S]^{-1}  S'  G_b^{-1}  \Omega  G_b^{-1}  S[ S'  G_b^{-1}  S]^{-1}~,$$ and the sieve variance for $f (\wh h)$ is
\begin{equation*}
[\sigma_n(f)]^2   = \big(Df(h_{0})[\psi^J]\big)' \mho  \big(Df(h_{0})[\psi^J]\big)~.
\end{equation*}
Under Assumption \ref{a-residuals}(i)(iii) we have that $[\sigma_n(f)]^2 \asymp [s_n(f)]^2$.
Therefore $f()$ is an irregular functional of $h_0$ iff $\sigma_n(f)\nearrow +\infty$ as $n \to \infty$.
Recall the sieve variance estimator is
\begin{equation*}
[\wh\sigma (f)]^2   = \big(Df(\wh h)[\psi^J]\big)' \wh \mho  \big(Df(\wh h)[\psi^J]\big)~.
\end{equation*}
where $\wh \mho$ is defined in equation (\ref{cov-2sls}).

\setcounter{assumption}{1}

\begin{assumption}[continued]\label{a-residuals-app}
(iv') $\sup_w E[u_i^2 \{ |u_i | > \ell(n)\}|W_i = w] = o(1)$ for any positive sequence with $\ell(n) \nearrow \infty$.
\end{assumption}
Assumption \ref{a-residuals-app}(iv') is a mild condition which is trivially satisfied if $E[|u_i|^{2+\epsilon}|W_i = w]$ is uniformly bounded for some $\epsilon > 0$.

\setcounter{aprime}{4}

\begin{aprime}\label{a-functional-prime}
Assumption \ref{a-functional} holds with $f_t = f$ and $\mc T$ a singleton.
\end{aprime}
Assumption \ref{a-functional-prime}'(a) and \ref{a-functional-prime}'(b)(i)(ii) is similar to Assumption 3.5 of \cite{ChenPouzo2014}. Assumption \ref{a-functional-prime}'(b)(iii) controls any additional error arising in the estimation of $\sigma_n(f)$ due to nonlinearity of $f(\cdot)$ and is automatically satisfied when $f(\cdot)$ is a linear functional.

\begin{remark}
Remark \ref{rmk-a5suff} presents sufficient conditions for Assumption \ref{a-functional-prime}' as a special case, with $f_t = f$, $\ul \sigma_n = \sigma_n(f)$, and $\mc T$ a singleton.
\end{remark}

Again these sufficient conditions are formulated to take advantage of the sup-norm rate results in Section \ref{npiv sec}. Denote
\begin{equation*}
 \wh{\mb Z}_n \equiv \frac{(D f(h_0)[\psi^J])' [S' G_b^{-1} S]^{-1} S' G_b^{-1} }{\sigma_n(f)} \frac{1}{\sqrt n } \sum_{i=1}^n b^K( W_i) u_i \,,
\end{equation*}
and $\delta_{V,n} \equiv \big[ \zeta_{b,K}^{(2+\delta)/\delta} \sqrt{(\log K)/n} \big]^{\delta/(1+\delta)}  + \tau_J \zeta \sqrt{(\log J)/n} + \delta_{h,n}$, where $\delta_{h,n}=o_p (1)$ is a positive finite sequence such that $\|\wh h - h_0\|_\infty = O_p(\delta_{h,n})$.

\begin{theorem} \label{t-dist}
(1) Let Assumptions \ref{a-data}(iii), \ref{a-residuals}(i)(iii)(iv'), \ref{a-approx}(i), and either \ref{a-functional-prime}'(a) or \ref{a-functional-prime}'(b)(i)(ii) hold, and let $\tau_J \zeta \sqrt{(J \log J)/n} = o(1)$. Then:
\begin{equation*}
 \sqrt{n} \frac{(f(\wh{h})-f(h_{0}))}{\sigma_n(f)} = \wh {\mb Z}_n + o_p(1) \to_d N(0,1)\,.
\end{equation*}
(2) If $\|\wh h - h_0\|_\infty = o_p(1)$ and Assumptions \ref{a-residuals}(ii) and \ref{a-sieve}(iii) hold (and \ref{a-functional-prime}'(b)(iii) also holds if $f$ is nonlinear), then:
\begin{eqnarray*}
 \left| \frac{\wh\sigma (f)}{\sigma_n(f) } - 1 \right| & = & = O_p( \delta_{V,n} + \eta_n')=o_p(1)~,
\end{eqnarray*}
and
\begin{eqnarray*}
 \sqrt{n} \frac{(f(\wh{h})-f(h_{0}))}{\wh\sigma (f)} = \wh{\mb Z}_n + o_p(1) \to_d N(0,1) \,.
\end{eqnarray*}
\end{theorem}

By exploiting the closed form expression of the sieve NPIV estimator and by applying exponential inequalities for random matrices, Theorem \ref{t-dist} derives the pointwise limit theory under lower-level sufficient conditions than those in \cite{ChenPouzo2014} for irregular nonlinear functionals. In particular, when specialized to the exogenous case of $X_i =W_i$, $h_0(x) = E[Y_i | W_i = x]$, $K =J$ and $b^K = \psi^J$ with $\tau_J =1$, the regularity conditions for Theorem \ref{t-dist} become about the same mild conditions for Theorem 3.2 in \cite{ChenChristensen-reg} on asymptotic normality of sieve $t$ statistics for nonlinear functionals of series LS estimators. It is now obvious that one could also derive the asymptotic normality of sieve $t$-statistics for regular (i.e., root-$n$ estimable) nonlinear functionals of a NPIV function under lower-level sufficient conditions by using our sup-norm rates results to verify Assumption 3.5(ii) and Remark 3.1 in \cite{ChenPouzo2014}.

\section{Spline and wavelet bases}\label{ax-basis}

In this section we bound the terms $\xi_{\psi,J}$, $e_J = \lambda_{\min}(G_{\psi,J})$ and $\kappa_\psi(J)$ for B-spline and CDV wavelet bases. Although we state the results for the space $\Psi_J$, they may equally be applied to $B_K$ when $B_K$ is constructed using B-spline or CDV wavelet bases.

\subsection{Spline bases}

We construct a univariate B-spline basis of order $r \geq 1$ (or degree $r-1 \geq 0$) with $m \geq 0$ interior knots and support $[0,1]$ in the following way. Let $0=t_{-(r-1)}=\ldots =t_{0}\leq t_{1}\leq \ldots \leq t_{m}\leq
t_{m+1}=\ldots =t_{m+r}=1$
denote the extended knot sequence and let $I_{1}=[t_{0},t_{1}),\ldots ,I_{m}=[t_{m},t_{m+1}]$. A
basis of order $1$ is constructed by setting
\begin{equation*}
N_{j,1}(x)= \left\{ \begin{array}{rl} 1 & \mbox{if } x \in  I_j \\ 0 & \mbox{otherwise} \end{array} \right.
\end{equation*}%
for $j=0,\ldots m$. Bases of order $r>1$ are generated recursively
according to
\begin{equation*}
N_{j,r}(x)=\frac{x-t_{j}}{t_{j+r-1}-t_{j}}N_{j,r-1}(x)+%
\frac{t_{j+r}-x}{t_{j+r}-t_{j+1}}N_{j+1,r-1}(x)
\end{equation*}%
for $j=-(r-1),\ldots ,m$ where we adopt the convention $\frac{1}{0}:=0$ (see Section 5 of \cite{DeVoreLorentz}).
This results in a total of $m+r$ splines of order $r$, namely $N_{-(r-1),r},\ldots,N_{m,r}$. Each spline is a
polynomial of degree $r-1$ on each interior interval $I_{1},\ldots ,I_{m}$
and is $(r-2)$-times continuously differentiable on $[0,1]$ whenever $r\geq
2 $. The mesh ratio is defined as
\begin{equation*}
\mbox {mesh}(m)=\frac{\max_{0\leq j\leq m}(t_{j+1}-t_{j})}{\min_{0\leq j\leq
m}(t_{j+1}-t_{j})}\,.
\end{equation*}%
Clearly $\mbox{mesh}(m) = 1$ whenever the knots are placed evenly (i.e. $t_i = \frac{i}{m+1}$ for $i = 1,\ldots,m$ and $m \geq 1$) and we say that the mesh ratio is \emph{uniformly
bounded} if $\mbox {mesh}(m)\lesssim 1$ as $m \to \infty$. Each of  has continuous derivatives of orders $\leq r-2$ on $(0,1)$. We let the space $\mbox {BSpl}(r,m,[0,1])$ be the closed linear span of the $%
m+r$ splines $N_{-(r-1),r},\ldots,N_{m,r}$.

We construct B-spline bases for $[0,1]^d$ by taking tensor products of univariate bases. First generate $d$ univariate bases $N_{-(r-1),r,i},\ldots,N_{m,r,i}$ for each of the $d$ components $x_i$ of $x$ as described above. Then form the vector of basis functions $\psi^J$ by taking the tensor product of the vectors of univariate basis functions, namely:
\[
 \psi^J(x_1,\ldots,x_d) = \bigotimes_{i=1}^d \left( \begin{array}{c} N_{-(r-1),r,i}(x_i) \\ \vdots \\ N_{m,r,i}(x_i) \end{array} \right)\,.
\]
The resulting vector $\psi^J$ has dimension $J = (r+m)^d$. Let $\psi_{J1},\ldots,\psi_{JJ}$ denote its $J$ elements.

\paragraph{Stability properties:}

The following two Lemmas bound $\xi_{\psi,J}$, and the  minimum eigenvalue and condition number of $G_\psi = G_{\psi,J} = E[\psi^J(X_i)\psi^J(X_i)']$ when $\psi_{J1},\ldots,\psi_{JJ}$ is constructed using univariate and tensor-products of B-spline bases with uniformly bounded mesh ratio.

\begin{lemma}\label{lem-spline1}
Let $X$ have support $[0,1]$ and let $\psi_{J1} = N_{-(r-1),r},\ldots,\psi_{JJ}=N_{m,r}$ be a univariate B-spline basis of order $r \geq 1$ with $m = J-r \geq 0$ interior knots and uniformly bounded mesh ratio.
Then: (a) $\xi_{\psi,J} = 1$ for all $J \geq r$; (b) If the density of $X$ is uniformly bounded away from $0$ and $\infty$ on $[0,1]$, then there exists finite positive constants $c_\psi$ and $C_\psi$ such that $c_\psi J \leq \lambda_{\max}(G_\psi)^{-1} \leq \lambda_{\min}(G_\psi)^{-1} \leq C_\psi J$ for all $J \geq r$; (c) $\lambda_{\max}(G_\psi)/\lambda_{\min}(G_\psi) \leq C_\psi/c_\psi$ for all $J \geq r$.
\end{lemma}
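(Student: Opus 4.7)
}

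The plan is to reduce each of (a), (b), and (c) to standard stability properties of univariate B-spline bases, which can be invoked directly from \cite{DeVoreLorentz} (see also Schumaker). First I would handle (a) using the partition-of-unity property. Recall that by construction each $N_{j,r}$ is nonnegative on $[0,1]$, and it is a classical fact (e.g.\ Theorem 5.2.2 in \cite{DeVoreLorentz}) that the B-splines of order $r$ on the extended knot sequence satisfy $\sum_{j=-(r-1)}^{m} N_{j,r}(x) = 1$ for every $x \in [0,1]$. Combining nonnegativity with partition of unity gives
\[
 \|\psi^J(x)\|_{\ell^1} \;=\; \sum_{j=-(r-1)}^{m} |N_{j,r}(x)| \;=\; \sum_{j=-(r-1)}^{m} N_{j,r}(x) \;=\; 1
\]
for all $x \in [0,1]$ and all $J \geq r$, which yields $\xi_{\psi,J} = 1$ and proves (a).

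For (b), I would appeal to the de Boor--type $L^2$-stability of B-splines on quasi-uniform partitions. Specifically, for a B-spline basis of order $r$ with knot sequence of mesh size $h = \max_j (t_{j+1}-t_j)$ and uniformly bounded mesh ratio, there exist constants $0 < c_r \leq C_r < \infty$ depending only on $r$ and the mesh-ratio bound such that for every coefficient vector $a = (a_{-(r-1)},\ldots,a_{m})' \in \mathbb R^J$,
\[
 c_r\, h\, \|a\|_{\ell^2}^2 \;\leq\; \Big\| \sum_j a_j N_{j,r} \Big\|_{L^2([0,1])}^2 \;\leq\; C_r\, h\, \|a\|_{\ell^2}^2.
\]
Since the mesh is quasi-uniform we have $h \asymp J^{-1}$. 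Under the hypothesis that the density $f_X$ of $X$ is uniformly bounded away from $0$ and $\infty$ on $[0,1]$, there exist $0 < \underline f \leq \overline f < \infty$ with
\[
 \underline f \, \Big\| \sum_j a_j N_{j,r} \Big\|_{L^2([0,1])}^2 \;\leq\; a' G_\psi a \;=\; \int_0^1 \Big( \sum_j a_j N_{j,r}(x) \Big)^2 f_X(x)\,\mathrm dx \;\leq\; \overline f \, \Big\| \sum_j a_j N_{j,r} \Big\|_{L^2([0,1])}^2.
\]
Combining these two chains gives $c_\psi J^{-1} \|a\|_{\ell^2}^2 \leq a'G_\psi a \leq C_\psi^{-1} J^{-1} \|a\|_{\ell^2}^2$ for suitable finite positive constants $c_\psi, C_\psi$, so $\lambda_{\min}(G_\psi) \asymp \lambda_{\max}(G_\psi) \asymp J^{-1}$, giving the stated bound on $\lambda_{\max}(G_\psi)^{-1}$ and $\lambda_{\min}(G_\psi)^{-1}$. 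Part (c) follows immediately by dividing the upper bound on $\lambda_{\max}(G_\psi)$ by the lower bound on $\lambda_{\min}(G_\psi)$.

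There is no real obstacle here: both the partition-of-unity identity and the quasi-uniform $L^2$-stability estimate are well documented in the spline literature, so the proof amounts to citing the appropriate results from \cite{DeVoreLorentz} and combining them with the density bounds on $f_X$. The only care needed is to make the dependence of the constants on $r$ and the mesh-ratio bound explicit, so that the constants $c_\psi, C_\psi$ in the statement are seen to be independent of $J$.
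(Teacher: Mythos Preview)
Your proposal is correct and follows essentially the same approach as the paper: part (a) via the partition-of-unity identity for B-splines (the paper cites equation (3.4) on p.~141 of \cite{DeVoreLorentz}), part (b) via the $L^2$-stability estimate for B-splines on quasi-uniform partitions combined with the density bounds (the paper cites Theorem~4.2, p.~145 of \cite{DeVoreLorentz} with $p=2$), and part (c) as an immediate consequence of (b).
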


\begin{lemma}\label{lem-spline2}
Let $X$ have support $[0,1]^d$ and let $\psi_{J1},\ldots,\psi_{JJ}$ be a B-spline basis formed as the tensor product of $d$ univariate bases of order $r \geq 1$ with $m = J^{1/d}-r \geq 0$ interior knots and uniformly bounded mesh ratio. Then: (a) $\xi_{\psi,J} = 1$ for all $J \geq r^d$; (b) If the density of $X$ is uniformly bounded away from $0$ and $\infty$ on $[0,1]^d$, then there exists finite positive constants $c_\psi$ and $C_\psi$ such that $c_\psi J \leq \lambda_{\max}(G_\psi)^{-1} \leq \lambda_{\min}(G_\psi)^{-1} \leq C_\psi J$ for all $J \geq r^d$; (c) $\lambda_{\max}(G_\psi)/\lambda_{\min}(G_\psi) \leq C_\psi/c_\psi$ for all $J \geq r^d$.
\end{lemma}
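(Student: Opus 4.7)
The plan is to reduce each statement to the univariate Lemma \ref{lem-spline1} by exploiting the tensor-product structure of $\psi^J$.

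For part (a), I would write $\psi^J(x) = \bigotimes_{i=1}^d \psi^{J_i}(x_i)$ with $J_i = m+r = J^{1/d}$, so that the components of $\psi^J$ are products of univariate B-splines. Then
\[
\|\psi^J(x)\|_{\ell^1} \;=\; \prod_{i=1}^{d} \|\psi^{J_i}(x_i)\|_{\ell^1} \;=\; \prod_{i=1}^{d} \xi_{\psi,J_i} \;=\; 1,
\]
using part (a) of Lemma \ref{lem-spline1} (the partition-of-unity property of univariate B-splines). Taking the sup over $x\in[0,1]^d$ gives $\xi_{\psi,J}=1$.

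For part (b), I would first analyze the Gram matrix under Lebesgue measure, namely $G_\psi^{L} = \int_{[0,1]^d} \psi^J(x)\psi^J(x)'\,\mathrm dx$. The tensor-product definition of $\psi^J$ and Fubini give the Kronecker factorization
\[
G_\psi^{L} \;=\; \bigotimes_{i=1}^{d} G_{\psi,J_i}^{L},
\]
where $G_{\psi,J_i}^{L}$ is the univariate Lebesgue Gram matrix for direction $i$. Using the standard fact that eigenvalues of a Kronecker product are products of the eigenvalues of the factors, together with part (b) of Lemma \ref{lem-spline1} applied to each factor (with uniform Lebesgue density, which trivially satisfies the boundedness hypothesis), I get
\[
\lambda_{\min}(G_\psi^{L}) \;=\; \prod_{i=1}^{d}\lambda_{\min}(G_{\psi,J_i}^{L}) \;\asymp\; \prod_{i=1}^{d} J_i^{-1} \;=\; J^{-1},
\]
and similarly $\lambda_{\max}(G_\psi^{L}) \asymp J^{-1}$. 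To pass from $G_\psi^{L}$ to $G_\psi = E[\psi^J(X)\psi^J(X)']$, I would use the boundedness of the density $f_X$ of $X$: letting $0<\underline f \leq f_X \leq \overline f <\infty$ on $[0,1]^d$, the identity $c'G_\psi c = \int (c'\psi^J(x))^2 f_X(x)\,\mathrm dx$ yields $\underline f\, G_\psi^{L} \preceq G_\psi \preceq \overline f\, G_\psi^{L}$ in the positive semidefinite order, so $\lambda_{\min}(G_\psi) \asymp J^{-1}$ and $\lambda_{\max}(G_\psi)\asymp J^{-1}$, giving the desired constants $c_\psi, C_\psi$.

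Part (c) is then immediate by dividing the bounds in part (b). None of the steps is really an obstacle: the only item that could look nontrivial is the Kronecker-product identity for $G_\psi^{L}$, but this is a direct consequence of the tensor-product definition of $\psi^J$ combined with Fubini's theorem, and the eigenvalue identity for Kronecker products is standard linear algebra.
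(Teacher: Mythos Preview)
Your proof is correct. Part (a) matches the paper exactly. For part (b), however, you take a somewhat different route than the paper: you factor the \emph{Lebesgue} Gram matrix as a Kronecker product $G_\psi^L = \bigotimes_{i=1}^d G_{\psi,J_i}^L$, use the standard fact that eigenvalues of a Kronecker product are products of the factor eigenvalues, apply the univariate Lemma~\ref{lem-spline1} to each factor, and only then sandwich $G_\psi$ between $\underline f\, G_\psi^L$ and $\overline f\, G_\psi^L$ via the density bound. The paper instead bounds the density first and then works directly on the quadratic form $v'G_\psi v$, peeling off one coordinate at a time via Fubini and applying the DeVore--Lorentz inequality (their Theorem~4.2) at each step. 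The two arguments are equivalent in content---your Kronecker identity is precisely what the paper's iterated Fubini computation is establishing on the level of quadratic forms---but your packaging is cleaner and makes transparent why the univariate rates multiply to $J^{-1}$, while the paper's argument stays closer to the analytic proof of Lemma~\ref{lem-spline1} and avoids invoking matrix algebra. Part (c) is identical.
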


\subsection{Wavelet bases}

We construct a univariate wavelet basis with support $[0,1]$ following \cite{CDV1993} (CDV hereafter).
Let $(\varphi,\psi)$ be a Daubechies pair such that $\varphi$ has support $%
[-N+1,N]$. Given $j$ such that $2^j -2N > 0$, the orthonormal (with respect
to the $L^2([0,1])$ inner product) basis for the space $V_j$ includes $%
2^j-2N$ interior scaling functions of the form $\varphi_{j,k}(x) = 2^{j/2}
\varphi(2^j x - k)$, each of which has support $[2^{-j}(-N+1+k),2^{-j}(N+k)]$
for $k = N,\ldots,2^j-N-1$. These are augmented with $N$ left scaling
functions of the form $\varphi^0_{j,k}(x) = 2^{j/2}\varphi_k^l(2^j x)$ for $%
k = 0,\ldots,N-1$ (where $\varphi^l_0,\ldots,\varphi^l_{N-1}$ are fixed
independent of $j$), each of which has support $[0,2^{-j}(N+k)]$, and $N$
right scaling functions of the form $\varphi_{j,2^j-k}(x) = 2^{j/2}
\varphi^r_{-k}(2^j(x-1))$ for $k = 1,\ldots,N$ (where $\varphi^r_{-1},%
\ldots,\varphi^r_{-N}$ are fixed independent of $j$), each of which has
support $[1-2^{-j}(1-N-k),1]$. The resulting $2^j$ functions $%
\varphi^0_{j,0},\ldots,\varphi^0_{j,N-1},\varphi_{j,N},\ldots,%
\varphi_{j,2^j-N-1},\varphi^1_{j,2^j-N},\ldots,\varphi^1_{j,2^j-1}$ form an
orthonormal basis (with respect to the $L^2([0,1])$ inner product) for their closed linear span $V_j$.

An orthonormal wavelet basis for the space $W_j$, defined as the orthogonal
complement of $V_j$ in $V_{j+1}$, is similarly constructed form the mother
wavelet. This results in an orthonormal basis of $2^j$ functions, denoted $%
\psi^0_{j,0},\ldots,\psi^0_{j,N-1},\psi_{j,N},\ldots,\psi_{j,2^j-N-1},%
\psi^1_{j,2^j-N},\ldots,\psi^1_{j,2^j-1}$ (we use this conventional notation without confusion with the $\psi_{Jj}$ basis functions spanning $\Psi_J$) where the ``interior'' wavelets $\psi_{j,N},\ldots,\psi_{j,2^j-N-1}$ are of the form $\psi_{j,k}(x) = 2^{j/2} \psi(2^j x-k)$. To simplify notation we ignore
the $0$ and $1$ superscripts on the left and right wavelets and scaling
functions henceforth. Let $L_0$ and $L$ be integers such that $2N< 2^{L_0} \leq 2^L$. A wavelet
space at resolution level $L$ is the $2^{L+1}$-dimensional set of functions given by
\[
\mathrm{Wav}(L,[0,1]) = \left\{ \sum_{k=0}^{2^{L_0}-1} a_{L_0,k} \varphi_{L_0,k} +
\sum_{j=L_0}^L \sum_{k=0}^{2^j-1} b_{j,k} \psi_{j,k} : a_{L_0,k}, b_{j,k}
\in \mb{R }\right\} \,.
\]
We say that $\mbox {Wav}(L,[0,1])$ has \emph{regularity} $\gamma $ if $\psi \in C^\gamma$ (which can be achieved by choosing $N$ sufficiently large) and write $\mbox {Wav}(L,[0,1],\gamma )$ for a wavelet space
of regularity $\gamma$ with continuously differentiable basis functions.

We construct wavelet bases for $[0,1]^d$ by taking tensor products of univariate bases. We again take $L_0$ and $L$ to be integers such that $2N< 2^{L_0} \leq 2^L$. Let $\widetilde \psi_{j,k,G}(x)$ denote an orthonormal tensor-product wavelet for $L^2([0,1]^d)$ at resolution level $j$ where $k = (k_1,\ldots,k_d) \in \{0,\ldots,2^{j}-1\}^d$ and where $G \in G_{j,L} \subseteq \{w_\varphi,w_\psi\}^d$ denotes which elements of the tensor product are $\psi_{j,k_i}$ (indices corresponding to $w_\psi$) and which are $\varphi_{j,k_i}$ (indices corresponding to $w_\varphi$). For example, $\widetilde \psi_{j,k,w_\psi^d} = \prod_{i=1}^d \psi_{j,k_i}(x_i)$. Note that each $G \in G_{j,L}$ with $j > L$ has an element that is $w_\psi$ (see \cite{Triebel2006} for details). We have $\#( G_{L_0,L_0}) = 2^d$, $\# (G_{j,L_0}) = 2^d-1$ for $j > L_0$. Let $\mbox{Wav}(L,[0,1]^d,\gamma)$ denote the space
\begin{equation} \label{e-wav-multi}
 \mbox{Wav}(L,[0,1]^d,\gamma) = \left\{ \sum_{j = L_0}^L \sum_{G \in G_{j,L_0}} \sum_{k \in \{0,\ldots,2^{j}-1\}^d} a_{j,k,G} \widetilde \psi_{j,k,G} : a_{j,k,G} \in \mb R \right\}
\end{equation}
where each univariate basis has regularity $\gamma$. This definition clearly reduces to the above definition for $\mbox{Wav}(L,[0,1],\gamma)$ in the univariate case.

\paragraph{Stability properties:}

The following two Lemmas bound $\xi_{\psi,J}$, as well as the minimum eigenvalue and condition number of $G_\psi = G_{\psi,J} = E[\psi^J(X_i)\psi^J(X_i)']$ when $\psi_{J1},\ldots,\psi_{JJ}$ is constructed using univariate and tensor-products of CDV wavelet bases.

\begin{lemma} \label{lem-wavelet1}
Let $X$ have support $[0,1]$ and let be a univariate CDV wavelet basis of resolution level $L = \log_2(J)-1$.
Then: (a) $\xi_{\psi,J} = O(\sqrt J)$ for each sieve dimension $J = 2^{L+1}$; (b) If the density of $X$ is uniformly bounded away from $0$ and $\infty$ on $[0,1]$, then there exists finite positive constants $c_\psi$ and $C_\psi$ such that $c_\psi \leq \lambda_{\max}(G_\psi)^{-1} \leq \lambda_{\min}(G_\psi)^{-1} \leq C_\psi$ for each $J$; (c)  $\lambda_{\max}(G_\psi)/\lambda_{\min}(G_\psi) \leq C_\psi/c_\psi$ for each $J$.
\end{lemma}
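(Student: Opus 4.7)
\textbf{Proof proposal for Lemma \ref{lem-wavelet1}.}

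\emph{Part (a).} The plan is to exploit the compact supports of the CDV basis functions to count, at each resolution level $j$, how many basis functions are nonzero at a given point $x \in [0,1]$, and to sum the resulting geometric series over $j$. Specifically, an interior scaling function $\varphi_{j,k}$ has support of length $(2N-1)2^{-j}$, so for each fixed $x$ at most $O(N)$ of the scaling functions $\{\varphi_{L_0,k}\}_k$ are nonzero at $x$; the same bound applies to each collection of wavelets $\{\psi_{j,k}\}_k$ at level $j$, uniformly in $j$, once we include the $O(1)$ boundary-adapted functions $\psi^0_{j,\cdot}$ and $\psi^1_{j,\cdot}$. Since $\|\varphi_{j,k}\|_\infty \lesssim 2^{j/2}$ and $\|\psi_{j,k}\|_\infty \lesssim 2^{j/2}$ (with the same bound for the boundary functions, which are fixed rescalings of functions independent of $j$), this yields
\[
 \|\psi^J(x)\|_{\ell^1} \;\lesssim\; 2^{L_0/2} + \sum_{j=L_0}^L 2^{j/2} \;\lesssim\; 2^{L/2} \;\asymp\; \sqrt{J}\,,
\]
giving part (a).

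\emph{Part (b).} The key observation is that the CDV basis is orthonormal with respect to the Lebesgue inner product on $L^2([0,1])$. For any coefficient vector $a \in \mathbb{R}^J$, let $h_a = a' \psi^J$. By orthonormality, $\|h_a\|_{L^2(\mathrm{Leb})}^2 = \|a\|_{\ell^2}^2$. Writing $f_X$ for the density of $X$ and using that $f_X$ is bounded above and below uniformly on $[0,1]$ by positive constants $\underline f, \overline f$, we obtain
\[
 \underline f \,\|a\|_{\ell^2}^2 \;\leq\; a' G_\psi a \;=\; \int (a'\psi^J(x))^2 f_X(x)\,\mathrm d x \;\leq\; \overline f\, \|a\|_{\ell^2}^2\,,
\]
so $\lambda_{\min}(G_\psi) \geq \underline f$ and $\lambda_{\max}(G_\psi) \leq \overline f$, uniformly in $J$. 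Setting $c_\psi = 1/\overline f$ and $C_\psi = 1/\underline f$ yields part (b).

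\emph{Part (c).} This is immediate from part (b): $\lambda_{\max}(G_\psi)/\lambda_{\min}(G_\psi) \leq \overline f / \underline f = C_\psi / c_\psi$, uniformly in $J$.

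The routine step is the sup-norm bound $\|\psi_{j,k}\|_\infty \lesssim 2^{j/2}$, which follows from the dilation formula $\psi_{j,k}(x) = 2^{j/2} \psi(2^j x - k)$ together with boundedness of the mother and scaling functions (and their finitely many boundary-adapted counterparts). The only mild subtlety is the bookkeeping at the boundary of $[0,1]$, handled by noting that there are only $O(1)$ left/right boundary functions per level, each a rescaling of a fixed bounded function, so they contribute $O(2^{j/2})$ to $\|\psi^J(x)\|_{\ell^1}$ for $x$ near $0$ or $1$ in the same way as the interior basis functions.
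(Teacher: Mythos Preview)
Your proposal is correct and follows essentially the same approach as the paper: for part (a) you count, level by level, the $O(N)$ basis functions nonzero at any point and sum the resulting geometric series in $2^{j/2}$; for part (b) you sandwich $a'G_\psi a$ using the Lebesgue-orthonormality of the CDV basis and the density bounds; and part (c) follows immediately. The paper's proof is identical in structure, differing only in minor bookkeeping (it makes the count of $3N$ nonzero functions per level explicit rather than writing $O(N)$).
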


\begin{lemma} \label{lem-wavelet2}
Let $X$ have support $[0,1]^d$ and let $\psi_{J1} , \ldots ,\psi_{JJ}$ be a wavelet basis formed as the tensor product of $d$ univariate bases of resolution level $L$.
Then: (a) $\xi_{\psi,J} = O(\sqrt J)$ each $J$; (b) If the density of $X$ is uniformly bounded away from $0$ and $\infty$ on $[0,1]^d$, then there exists finite positive constants $c_\psi$ and $C_\psi$ such that $c_\psi \leq \lambda_{\max}(G_\psi)^{-1} \leq \lambda_{\min}(G_\psi)^{-1} \leq C_\psi$ for each $J$; (c) $\lambda_{\max}(G_\psi)/\lambda_{\min}(G_\psi) \leq C_\psi/c_\psi$ for each $J$.
\end{lemma}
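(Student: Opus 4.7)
The plan is to reduce all three parts to the univariate case handled in Lemma \ref{lem-wavelet1}, exploiting the key structural features of the tensor-product CDV wavelet basis: (i) orthonormality with respect to the Lebesgue inner product on $[0,1]^d$, (ii) compact support of each basis function at each resolution level, and (iii) the standard $2^{jd/2}$ scaling at level $j$.

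For part (a), I would fix $x \in [0,1]^d$ and split the sum defining $\xi_{\psi,J} = \sup_x \|\psi^J(x)\|_{\ell^1}$ level by level, using the multiresolution decomposition implicit in (\ref{e-wav-multi}). At each resolution level $j \in \{L_0,\ldots,L\}$ and each choice $G \in G_{j,L_0}$, the compact support of each univariate factor implies that only $O(1)$ values of $k_i$ give $\widetilde\psi_{j,k,G}(x) \neq 0$ in each coordinate direction, so at most $O(1)$ tensor-product basis functions at level $j$ are nonzero at $x$. Since each of these has supremum magnitude $O(2^{jd/2})$ (from the univariate bound $O(2^{j/2})$ in each coordinate), summing the geometric series over $j$ from $L_0$ to $L$ gives a total $\ell^1$-norm of $O(2^{Ld/2})=O(\sqrt J)$, using $J \asymp 2^{(L+1)d}$.

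For part (b), the central observation is that the tensor-product basis is orthonormal with respect to the Lebesgue inner product on $[0,1]^d$: this follows directly from the univariate orthonormality of the CDV basis and the product form of the Lebesgue measure. Consequently, for any vector of coefficients $c \in \mathbb R^J$ and the associated function $h = c'\psi^J$, we have $\int_{[0,1]^d} h(x)^2\,dx = \|c\|_{\ell^2}^2$. Under the density assumption, $\underline f \leq f(x) \leq \overline f$ for some $0 < \underline f \leq \overline f < \infty$, so $c' G_\psi c = \int h(x)^2 f(x)\,dx \in [\underline f \|c\|_{\ell^2}^2,\,\overline f\|c\|_{\ell^2}^2]$. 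This sandwich gives $\underline f \leq \lambda_{\min}(G_\psi) \leq \lambda_{\max}(G_\psi) \leq \overline f$ uniformly in $J$, which yields (b) with $c_\psi = 1/\overline f$ and $C_\psi = 1/\underline f$. Part (c) is then immediate from (b) since $\lambda_{\max}(G_\psi)/\lambda_{\min}(G_\psi) \leq \overline f/\underline f = C_\psi/c_\psi$.

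The main obstacle, such as it is, is bookkeeping: one must confirm that the multiresolution indexing in (\ref{e-wav-multi}) really does yield an orthonormal basis of $V_{L+1}$ for $[0,1]^d$ (which requires care with the boundary-corrected left and right CDV functions), and that the compact-support-plus-scaling argument in (a) combines correctly across the $d$ coordinates and across the $\#(G_{j,L_0}) \leq 2^d$ types. Neither step is conceptually difficult, so the whole argument is essentially a product-structure extension of Lemma \ref{lem-wavelet1}.
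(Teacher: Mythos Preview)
Your proposal is correct and follows essentially the same approach as the paper: for part (a) you count nonzero basis functions level by level and sum the geometric series $\sum_j O(2^{jd/2}) = O(2^{Ld/2}) = O(\sqrt J)$, and for parts (b) and (c) you use orthonormality of the tensor-product CDV basis in $L^2([0,1]^d)$ together with the density sandwich. The paper's proof is slightly terser (it phrases (a) via the multiplicativity of the $\ell^1$ norm under tensor products rather than your ``count nonzero $\times$ max magnitude'' bookkeeping, and simply refers back to the univariate lemma for (b) and (c)), but the substance is identical.
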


\paragraph{Wavelet characterization of Besov norms:}

When the wavelet basis just described is of regularity $\gamma > 0$, the norms $\|\cdot\|_{B^p_{\infty,\infty}}$ for $p < \gamma$ can be restated in terms of the wavelet coefficients. We briefly explain the multivariate case as it nests the univariate case. Any $f \in L^2([0,1]^d)$ may be represented as
\[
 f = \sum_{j,G,k} a_{j,k,G}(f) \widetilde \psi_{j,k,G}
\]
with the sum is understood to be taken over the same indices as in display (\ref{e-wav-multi}). If $f \in B^p_{\infty,\infty}([0,1]^d)$ then
\[
 \|f\|_{B^p_{\infty,\infty}} \asymp \|f\|_{b^p_{\infty,\infty}} := \sup_{j,k,G} 2^{j(p+d/2)}|a_{j,k,G}(f)|
\,.
\]
and if $f \in B^p_{2,2}([0,1])$ then
\[
 \|f\|_{B^p_{2,2}}^2 \asymp \|f\|_{b^p_{2,2}}^2 := \sum_{j,k,G} 2^{j p} a_{j,k,G}(f)^2
\]
See \cite{Johnstone2013} and \cite{Triebel2006} for more thorough discussions.

\section{Useful results on random matrices}\label{ax-supp}

\textbf{Notation:} For a $r \times c$ matrix $A$ with $r \leq c$ and full row rank $r$ we let $A^-_l$ denote its left pseudoinverse, namely $(A'A)^{-}A'$ where $'$ denotes transpose and $^{-}$ denotes generalized inverse. We let $s_{\min}(A)$ denote the minimum singular value of a rectangular matrix $A$. For a positive-definite symmetric matrix $A$ we let $\lambda_{\min}(A)$ and $\lambda_{\max}(A)$ denote its minimum and maximum eigenvalue, respectively.

\subsection{Some matrix inequalities}

The following Lemmas are used throughout the proofs in this paper and are stated here for convenience.

\begin{lemma}[Weyl's inequality]\label{lem-Weyl}
Let $A,B \in \mb R^{r \times c}$ and let $s_i(A)$, $s_i(B)$ denote the $i$th (ordered) singular value of $A$ and $B$ respectively, for $1 \leq i \leq (r \wedge c)$. Then: $|s_i(A) - s_i(B)| \leq \|A - B\|_{\ell^2}$
for all $1 \leq i \leq (r \wedge c)$. In particular, $|s_{\min}(A) - s_{\min}(B)| \leq \|A - B\|_{\ell^2}$.
\end{lemma}

\begin{lemma}\label{lem-inv}
Let $A \in \mb R^{r \times r}$ be nonsingular. Then:
 $\|A^{-1} - I_r\|_{\ell^2} \leq \|A^{-1}\|_{\ell^2} \|A - I_r\|_{\ell^2}\,.$
\end{lemma}

\begin{lemma}[\cite{Schmitt1992}]\label{lem-sqrtm}
Let $A,B \in \mb R^{r \times r}$ be positive definite. Then:
\[
 \|A^{1/2} - B^{1/2}\|_{\ell^2} \leq \frac{1}{\sqrt{\lambda_{\min}(B)} + \sqrt{\lambda_{\min}(A)}} \|A - B\|_{\ell^2}\,.
\]
\end{lemma}

\begin{lemma}\label{lem-linv}
Let $A,B \in \mb R^{r \times c}$ with $r \leq c$ and let $A$ and $B$ have full row rank $r$. Then:
\[
 \| B^-_l - A^-_l \|_{\ell^2} \leq \frac{1 + \sqrt 5}{2} (s_{\min}(A)^{-2} \vee s_{\min}(B)^{-2}) \| A - B\|_{\ell^2} \,.
\]
If, in addition, $\|A - B\|_{\ell^2} \leq \frac{1}{2} s_{\min}(A)$ then
\[
 \| B^-_l - A^-_l \|_{\ell^2} \leq 2(1 + \sqrt 5) s_{\min}(A)^{-2} \|A - B\|_{\ell^2} \,.
\]
\end{lemma}

\begin{lemma}\label{lem-mpnorm}
Let $A \in \mb R^{r \times c}$ with $r \leq c$ have full row rank $r$. Then: $\|A^-_l\|_{\ell^2} \leq s_{\min}(A)^{-1}$.
\end{lemma}

\begin{lemma}\label{lem-projm}
Let $A,B \in \mb R^{r \times c}$ with $r \leq c$ and let $A$ and $B$ have full row rank $r$. Then:
\[
 \| A'(AA')^{-1}A - B'(BB')^{-1}B \|_{\ell^2} \leq ( s_{\min}(A)^{-1} \vee s_{\min}(B)^{-1} ) \|A - B\|_{\ell^2}\,.
\]
\end{lemma}

\subsection{Convergence of the matrix estimators}

Before presenting the following lemmas, we define the \emph{orthonormalized} matrix estimators
\begin{eqnarray*}
 \wh G_b^o & = & G_b^{-1/2} \wh G_b G_b^{-1/2} \\
 \wh G_\psi^o & = & G_\psi^{-1/2} \wh G_\psi G_\psi^{-1/2} \\
 \wh S^o & = & G_b^{-1/2} \wh S G_\psi^{-1/2}
\end{eqnarray*}
and let $G_b^o = I_K$, $G_\psi^o = I_J$ and $S^o$ denote their respective expected values.

\begin{lemma}\label{lem-matl2}
The orthonormalized matrix estimators satisfy the exponential inequalities:
\begin{eqnarray*}
 \mb P \left( \|\wh G_\psi^o - G_\psi^o \|_{\ell^2} > t \right) & \leq & 2\exp\left\{\log J - \frac{t^2/2}{\zeta_{\psi,J}^2(1+ 2 t/3)/n} \right\} \\
 \mb P \left( \|\wh G_b^o - G_b^o \|_{\ell^2} > t \right) & \leq & 2\exp\left\{\log K - \frac{t^2/2}{\zeta_{b,K}^2(1+2t/3)/n} \right\} \\
 \mb P \left( \|\wh S^o - S^o\|_{\ell^2} > t \right) & \leq & 2\exp\left\{\log K - \frac{t^2/2}{(\zeta_{b,K}^2 \vee \zeta_{\psi,J}^2)/n+ 2 \zeta_{b,K} \zeta_{\psi,J} t/(3n)} \right\}
\end{eqnarray*}
and therefore
\begin{eqnarray*}
 \|\wh G_\psi^o - G_\psi^o \|_{\ell^2} & = & O_p (\zeta_{\psi,J} \sqrt{( \log J)/n}) \\
 \|\wh G_b^o - G_b^o \|_{\ell^2} & = & O_p (\zeta_{b,K} \sqrt{(\log K)/n}) \\
 \|\wh S^o - S^o\|_{\ell^2} & = & O_p ((\zeta_{b,K} \vee \zeta_{\psi,J})\sqrt{(\log K)/n})\,.
\end{eqnarray*}
as $n,J,K \to \infty$ provided $(\zeta_{b,K} \vee \zeta_{\psi,J})\sqrt{(\log K)/n} = o(1)$.
\end{lemma}

\begin{lemma}[\cite{Newey1997}, p. 162]\label{lem-BuL2}
Let Assumption \ref{a-residuals}(i) hold. Then: $\|G_b^{-1/2}B'u/n\|_{\ell^2} = O_p(\sqrt{K/n})$.
\end{lemma}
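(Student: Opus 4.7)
The plan is to compute the second moment of $\|G_b^{-1/2}B'u/n\|_{\ell^2}$ and then apply Markov's inequality. Since this is a standard calculation, I will keep the argument brief.

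First I would expand the squared norm as $\|G_b^{-1/2}B'u/n\|_{\ell^2}^2 = n^{-2} \sum_{i,j=1}^n u_i u_j\, b^K(W_i)' G_b^{-1} b^K(W_j)$ and take expectations. For the off-diagonal terms $i \neq j$, independence across observations combined with $E[u_i | W_i] = 0$ (from the NPIV model (\ref{npreg})) makes each cross term vanish via iterated expectations. This leaves only the diagonal contribution
\[
 E\big[\|G_b^{-1/2}B'u/n\|_{\ell^2}^2\big] = \frac{1}{n} E\big[u_i^2\, b^K(W_i)' G_b^{-1} b^K(W_i)\big].
\]

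Next I would condition on $W_i$ and invoke Assumption \ref{a-residuals}(i) to bound $E[u_i^2 | W_i] \leq \bar\sigma^2$, giving
\[
 E\big[\|G_b^{-1/2}B'u/n\|_{\ell^2}^2\big] \leq \frac{\bar\sigma^2}{n} E\big[b^K(W_i)' G_b^{-1} b^K(W_i)\big] = \frac{\bar\sigma^2}{n} \operatorname{tr}\big(G_b^{-1} E[b^K(W_i)b^K(W_i)']\big) = \frac{\bar\sigma^2 K}{n},
\]
using the cyclic property of the trace together with $E[b^K(W_i) b^K(W_i)'] = G_b$. Finally, Markov's (or Chebyshev's) inequality applied to the non-negative random variable $\|G_b^{-1/2}B'u/n\|_{\ell^2}^2$ yields the desired $O_p(\sqrt{K/n})$ bound on $\|G_b^{-1/2}B'u/n\|_{\ell^2}$.

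There is no real obstacle here; the only point requiring a moment of care is the vanishing of the off-diagonal terms, which uses independence of the sample together with the conditional mean-zero restriction on $u_i$ given $W_i$ from (\ref{npreg}). Everything else is a direct trace identity and a bounded-variance bound.
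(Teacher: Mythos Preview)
Your proof is correct and is precisely the standard second-moment/Markov argument. The paper does not supply its own proof for this lemma; it simply attributes the result to \cite{Newey1997}, p.~162, whose argument is exactly the one you wrote.
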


\begin{lemma}\label{lem-BHJ}
Let $h_J(x) = \psi^J(x)'c_J$ for any deterministic $c_J \in \mb R^J$ and $H_J = (h_J(X_1),\ldots,h_J(X_n))'=\Psi c_J$. Then:
\begin{eqnarray*}
 & & \|G_b^{-1/2} (B'(H_0 - \Psi c_J)/n - E[ b^K(W_i)(h_0(X_i) - h_J(X_i))])\|_{\ell^2} \\
 & & \quad = \quad O_p \left( \Big( \sqrt{K/n} \times \|h_0 - h_J\|_{\infty} \Big) \wedge \Big( \zeta_{b,K}/\sqrt{n} \times \|h_0 - h_J\|_{L^2(X)}  \Big) \right) \,.
\end{eqnarray*}
\end{lemma}

\begin{lemma}\label{lem-SGl2}
Let $s_{JK}^{-1} \zeta \sqrt{(\log J)/n} = o(1)$ and let $J \leq K = O(J)$. Then:
\begin{eqnarray*}
 (a) & & \|(\wh G_b^{-1/2} \wh S)^-_l\wh G_b^{-1/2}G_b^{1/2} - (G_b^{-1/2} S)^-_l \|_{\ell^2} = O_p \Big(s_{JK}^{-2} \zeta \sqrt{(\log J)/(n e_J)}\Big) \\
 (b) & & \|G_\psi^{1/2}\{(\wh G_b^{-1/2} \wh S)^-_l\wh G_b^{-1/2}G_b^{1/2} - (G_b^{-1/2} S)^-_l\} \|_{\ell^2} = O_p \Big(s_{JK}^{-2} \zeta \sqrt{(\log J)/n)}\Big) \\
 (c) & & \|G_b^{-1/2} S\{(\wh G_b^{-1/2} \wh S)^-_l\wh G_b^{-1/2}G_b^{1/2} - (G_b^{-1/2} S)^-_l\} \|_{\ell^2} = O_p \Big(s_{JK}^{-1} \zeta \sqrt{(\log J)/n}\Big)\,.
\end{eqnarray*}
\end{lemma}

\singlespacing
\small
\putbib
}
\end{bibunit}

%%%%%%%%% SECONDARY ONLINE APPENDIX %%%%%%%%%%
\newpage
\begin{bibunit}

\pagenumbering{arabic}\renewcommand{\thepage}{\arabic{page}}

\begin{center} \doublespacing
{\Large Secondary Online Appendix to}
\vskip 12pt
{\LARGE \thetitle}
\vskip 24pt
{\large Xiaohong Chen \quad Timothy M. Christensen
\vskip 12 pt
\thedate}

\end{center}

{\small

\section{Supplementary Lemmas and Proofs} \label{ax-proofs}

All the notation follow from the main text and the main online appendix. For a $r \times c$ matrix $A$ with $r \leq c$ and full row rank $r$ we let $A^-_l$ denote its left pseudoinverse, namely $(A'A)^{-}A'$ where $'$ denotes transpose and $^{-}$ denotes generalized inverse. We let $s_{\min}(A)$ denote the minimum singular value of a rectangular matrix $A$.

Let $s_{JK} = s_{\min} (G_b^{-1/2} S G_\psi^{-1/2})$. Throughout the proofs in the appendix we use the identity
\begin{eqnarray*}
 \psi^J(x)'(G_b^{-1/2}S )_l^- & = & \psi^J(x)'(  S' G_b^{-1} S)^{-1}  S' G_b^{-1/2} \\
 & = & \psi^J(x)'G_\psi^{-1/2}( G_\psi^{-1/2} S' G_b^{-1} S G_\psi^{-1/2})^{-1} G_\psi^{-1/2} S' G_b^{-1/2} \\
 & = & \psi^J(x)'G_\psi^{-1/2}(G_b^{-1/2} S G_\psi^{-1/2})_l^-
\end{eqnarray*}
which implies that
\begin{eqnarray}
 \| \psi^J(x)'(G_b^{-1/2}S)_l^- \|_{\ell^2} & \leq & \| \psi^J(x)'G_\psi^{-1/2}\|_{\ell^2} \|(G_b^{-1/2} S G_\psi^{-1/2})_l^-\|_{\ell^2} \notag \\
 & \leq & \zeta_{\psi,J} \|(G_b^{-1/2} S G_\psi^{-1/2})_l^-\|_{\ell^2} \notag \\
 & \leq & \zeta_{\psi,J} s_{JK}^{-1} \label{e-xident}
\end{eqnarray}
by definition of $\zeta_{\psi,J}$ and the fact that $\|A^-_l\|_{\ell^2} \leq s_{\min}(A)^{-1}$ (see Lemma \ref{lem-mpnorm}).

\subsection{Proofs for Appendix \ref{s-sup-lemmas} and Section \ref{s-upper}}

Since the proofs of results in Section \ref{s-upper} built upon those for results in Appendix \ref{s-sup-lemmas}, we shall present the proofs for Appendix \ref{s-sup-lemmas} first.

\subsubsection{Proofs for Appendix \ref{s-sup-lemmas}}

\begin{proof}[\textbf{Proof of Lemma \ref{lem-ip}}]
First note that $\tau_J > 0$ for all $J$ by compactness and injectivity of $T$. Then:
\[
 s_{JK} = \inf_{h \in \Psi_J : \|h\|_{L^2(X)} = 1} \|\Pi_K T h\|_{L^2(W)} \leq \inf_{h \in \Psi_J : \|h\|_{L^2(X)} = 1} \| T h\|_{L^2(W)} = \tau_J^{-1}
\]
holds uniformly in $J$ because $\Pi_K$ is a contraction, whence $\tau_J \leq s_{JK}^{-1}$. To derive a lower bound on $\tau_J$, the triangle inequality and Assumption \ref{a-approx}(i) yield:
\begin{eqnarray*}
 s_{JK} & = & \inf_{h \in \Psi_{J,1}} \|\Pi_K T h\|_{L^2(W)} \\
 & \geq & \inf_{h \in \Psi_{J,1}} \| T h\|_{L^2(W)} - \sup_{h \in \Psi_{J,1}} \|(\Pi_K T - T) h\|_{L^2(W)}  \\
 & = & (1-o(1)) \tau_J^{-1}\,.
\end{eqnarray*}
Therefore, $s_{JK}^{-1} \leq (1-o(1))^{-1} \tau_J$.
\end{proof}

It is clear that Lemma \ref{lem-approx-new} is implied by the following lemma.

\begin{lemma}\label{lem-approx}
 Let Assumptions \ref{a-data}(iii) and \ref{a-approx}(ii) hold. Then: \\
 (1) (a) $\|h_0 - \pi_J h_0\|_{L^2(X)}\asymp \|h_0 - \Pi_J h_0\|_{L^2(X)} $; and \\
 \phantom{(1)} (b) $\tau_J \times \|T(h_0 - \pi_J h_0)\|_{L^2(W)} \leq \mathrm{const} \times \|h_0 - \pi_J h_0\|_{L^2(X)}$. \\
(2) If Assumption \ref{a-approx}(i) also holds, then: (a) $\|Q_J h_0 - \pi_J h_0 \|_{L^2(X)} \leq o(1) \times \|h_0 - \pi_J h_0\|_{L^2(X)}$; and \\ \phantom{(2)} (b) $\|h_0 - \Pi_J h_0\|_{L^2(X)} \asymp \|h_0 - Q_J h_0\|_{L^2(X)}$. \\
(3) If Assumption \ref{a-approx}(iii') also holds, then: $\|Q_J h_0 - \pi_J h_0\|_\infty \leq O(1) \times \|h_0 - \pi_J h_0\|_{L^2(X)}$. \\
(4) Further, if Condition (\ref{e-piapprox}) also holds, then Assumption \ref{a-approx}(iii) is satisfied.
\end{lemma}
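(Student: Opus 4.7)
The plan is to treat the four parts in sequence, leveraging three simple ingredients: (i) the sieve ill-posedness relations $\|h\|_{L^2(X)}\leq \tau_J\|Th\|_{L^2(W)}$ for $h\in\Psi_J$, together with $\|\Pi_K T h\|_{L^2(W)}\geq s_{JK}\|h\|_{L^2(X)}$ and $s_{JK}^{-1}\asymp\tau_J$ from Lemma~\ref{lem-ip}; (ii) the first-order optimality conditions, which give $\langle T(h_0-\pi_J h_0),Tg\rangle_W=0$ and $\langle \Pi_K T(h_0-Q_J h_0), Tg\rangle_W=0$ for all $g\in\Psi_J$ (the latter after noting that $(I-\Pi_K)Tg\in B_K^\perp$ while $\Pi_K T(h_0-Q_J h_0)\in B_K$); and (iii) the inverse estimate $\|h\|_\infty\leq \zeta_{\psi,J}\|h\|_{L^2(X)}$ for $h\in\Psi_J$.

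For part (1a), the direction $\|h_0-\Pi_J h_0\|_{L^2(X)}\leq \|h_0-\pi_J h_0\|_{L^2(X)}$ is immediate from the definition of $\Pi_J$. For the converse I would add and subtract $\Pi_J h_0$, bound $\|\Pi_J h_0-\pi_J h_0\|_{L^2(X)}\leq \tau_J\|T(\Pi_J h_0-\pi_J h_0)\|_{L^2(W)}$ via (i), split $T(\Pi_J h_0-\pi_J h_0)=T(\Pi_J h_0-h_0)+T(h_0-\pi_J h_0)$, dominate the second summand by the first using optimality of $\pi_J$, and finally invoke Assumption~\ref{a-approx}(ii). Part (1b) is then quick: $\|T(h_0-\pi_J h_0)\|_{L^2(W)}\leq\|T(h_0-\Pi_J h_0)\|_{L^2(W)}$ by optimality, Assumption~\ref{a-approx}(ii) converts $\tau_J$ times the right hand side into $O(1)\|h_0-\Pi_J h_0\|_{L^2(X)}$, and (1a) interchanges $\Pi_J h_0$ and $\pi_J h_0$.

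For (2a), I would set $d=Q_J h_0-\pi_J h_0\in\Psi_J$ and combine the two FOCs in (ii) with the decomposition $d=(Q_J h_0-h_0)+(h_0-\pi_J h_0)$ and the identity $\langle\Pi_K Td,Td\rangle_W=\|\Pi_K Td\|_{L^2(W)}^2$ to obtain the key equation
\begin{equation*}
\|\Pi_K T d\|_{L^2(W)}^2=-\langle (I-\Pi_K)T(h_0-\pi_J h_0),(I-\Pi_K)Td\rangle_W.
\end{equation*}
Cauchy--Schwarz, Assumption~\ref{a-approx}(i) applied to $d\in\Psi_J$ to get $\|(I-\Pi_K)Td\|_{L^2(W)}\leq o(\tau_J^{-1})\|d\|_{L^2(X)}$, part (1b) together with $\|I-\Pi_K\|\leq 1$ to get $\|(I-\Pi_K)T(h_0-\pi_J h_0)\|_{L^2(W)}\leq O(\tau_J^{-1})\|h_0-\pi_J h_0\|_{L^2(X)}$, and the lower bound $\|\Pi_K Td\|_{L^2(W)}^2\gtrsim\tau_J^{-2}\|d\|_{L^2(X)}^2$ from Lemma~\ref{lem-ip} then combine to yield (2a) after canceling one factor of $\|d\|_{L^2(X)}$. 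Part (2b) is an immediate triangle-inequality consequence of (1a) and (2a).

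Parts (3) and (4) reuse the same identity. For (3) I would replace Assumption~\ref{a-approx}(i) by the sharper Assumption~\ref{a-approx}(iii'), which tightens the bound on $\|(I-\Pi_K)Td\|_{L^2(W)}$ by a factor of $\zeta_{\psi,J}^{-1}$; the same chain of inequalities then delivers $\|d\|_{L^2(X)}\leq O(\zeta_{\psi,J}^{-1})\|h_0-\pi_J h_0\|_{L^2(X)}$, and the inverse estimate in (iii) absorbs the lost factor to give the sup-norm bound. Finally (4) is assembled from $\|Q_J h_0-\Pi_J h_0\|_\infty\leq \|Q_J h_0-\pi_J h_0\|_\infty+\|\pi_J h_0-\Pi_J h_0\|_\infty$, where the first summand is controlled by (3) combined with (1a) and $\|\cdot\|_{L^2(X)}\leq\|\cdot\|_\infty$, and the second is exactly Condition~(\ref{e-piapprox}). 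The main obstacle throughout will be the FOC computation that leads to the key identity underlying (2a) and (3); once that identity is in hand, the gain of $o(1)$ in (2a) versus $O(\zeta_{\psi,J}^{-1})$ in (3) comes entirely from how tightly one can bound $\|(\Pi_K T-T)d\|_{L^2(W)}$ for $d\in\Psi_J$.
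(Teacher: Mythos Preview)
Your proposal is correct and follows essentially the same route as the paper. The only cosmetic difference is in the form of the key identity for parts (2a) and (3): you write $\|\Pi_K Td\|_{L^2(W)}^2=-\langle (I-\Pi_K)T(h_0-\pi_J h_0),(I-\Pi_K)Td\rangle_W$, whereas the paper writes $\|\Pi_K Td\|_{L^2(W)}^2=-\langle (I-\Pi_K)Td,\,T(h_0-\pi_J h_0)\rangle_W$; these are identical because $I-\Pi_K$ is an orthogonal projection, and the subsequent Cauchy--Schwarz and ill-posedness bounds proceed the same way in both versions.
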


 \begin{proof}[\textbf{Proof of Lemma \ref{lem-approx}}]
 In what follows, ``$ \mathrm{const} $'' denotes a generic positive constant that may be different from line to line.
Assumption \ref{a-data}(iii) guarantees $\tau_J$ and $\pi_J h_0$ are well defined. For part (1.a), we have:
\begin{eqnarray*}
 \|h_0 - \Pi_J h_0\|_{L^2(X)} & \leq & \|h_0 - \pi_J h_0\|_{L^2(X)} \\
 & \leq & \|h_0 - \Pi_J h_0\|_{L^2(X)} + \|\Pi_J h_0 - \pi_J h_0\|_{L^2(X)} \\
 & \leq & \|h_0 - \Pi_J h_0\|_{L^2(X)} + \tau_J  \|T (\pi_J h_0 - \Pi_J h_0)\|_{L^2(W)} \\
 & = & \|h_0 - \Pi_J h_0\|_{L^2(X)} + \tau_J  \|T \pi_J (h_0 - \Pi_J h_0 )\|_{L^2(W)} \\
 & \leq & \|h_0 - \Pi_J h_0\|_{L^2(X)} + \tau_J  \|T (h_0 - \Pi_J h_0 )\|_{L^2(W)} \\
 & = & (1+ \mathrm{const}) \times \|h_0 - \Pi_J h_0\|_{L^2(X)}
\end{eqnarray*}
where the third line is by definition of $\tau_J$, the fourth is because $\pi_J h =h$ for all $h \in \Psi_J$, the final line is by Assumption \ref{a-approx}(ii), and the fifth is because $\pi_J$ is a weak contraction under the norm $h \mapsto \|T h\|_{L^2(W)}$. More precisely by the definition of $\pi_J h_0$ we have:
\begin{equation}\label{FOC-pi}
 \langle Th , T(h_0 - \pi_J h_0) \rangle_W  =  0
\end{equation}
for all $h \in \Psi_J$, where $\langle \cdot, \cdot \rangle_W$ denotes the $L^2(W)$ inner product. With $h=\pi_J h_0 - \Pi_J h_0 \in \Psi_J$ this implies
\begin{equation*}
 \langle T(\pi_J h_0 - \Pi_J h_0) , T(h_0 - \pi_J h_0) \rangle_W  =  0.
\end{equation*}
\begin{equation*}
 \langle T(\pi_J h_0 - \Pi_J h_0) , T(h_0 - \Pi_J h_0) \rangle_W  = \langle T(\pi_J h_0 - \Pi_J h_0) , T(\pi_J h_0 - \Pi_J h_0) \rangle_W .
\end{equation*}
Thus $ \|T (\pi_J h_0 - \Pi_J h_0)\|_{L^2(W)} \leq \|T (h_0 - \Pi_J h_0 )\|_{L^2(W)}$.\\
For part (1.b):
\begin{eqnarray*}
 \tau_J \|T(h_0 - \pi_J h_0)\|_{L^2(W)} & \leq & \tau_J \|T(h_0 - \Pi_J h_0)\|_{L^2(W)} \\
 & \leq & \mathrm{const} \times \|h_0 - \Pi_J h_0\|_{L^2(X)} \\
 & \leq & \mathrm{const} \times \|h_0 - \pi_J h_0\|_{L^2(X)}
\end{eqnarray*}
where the first and final inequalities are by definition of $\pi_J h_0$ and $\Pi_J h_0$ and the second inequality is by Assumption \ref{a-approx}(ii).

For part (2.a), Lemma \ref{lem-ip} guarantees that $Q_J h_0$ is well defined  and that $s_{JK}^{-1} \leq 2 \tau_J$ for all $J$ sufficiently large. By definition of $Q_J h_0$ we have:
\begin{equation}\label{FOC-Q}
 \langle \Pi_K Th , T(h_0 - Q_J h_0) \rangle_W  =  0
\end{equation}
for all $h \in \Psi_J$,
where we use the fact that $\langle \Pi_K f, g \rangle_W = \langle \Pi_K f,\Pi_K g \rangle_W$ holds for any $f,g \in L^2(W)$ since $\Pi_K$ is a projection).
Substituting $h = Q_Jh_0 -\pi_Jh_0 \in \Psi_J$ into the two equations (\ref{FOC-pi}) and (\ref{FOC-Q}) yields:
\begin{eqnarray}
 \langle (T-\Pi_K T)(Q_Jh_0 -\pi_Jh_0) , T(h_0 - \pi_J h_0) \rangle_W + \langle \Pi_K T (Q_Jh_0 -\pi_Jh_0) , T(h_0 - \pi_J h_0) \rangle_W & = & 0  \label{e-ipt1n} \\
 \langle \Pi_K T (Q_Jh_0 -\pi_J h_0) , T(h_0 - Q_J h_0) \rangle_W &= & 0 \,. \label{e-ipt2n}
\end{eqnarray}
By subtracting (\ref{e-ipt2n}) from (\ref{e-ipt1n}) we obtain
\begin{eqnarray*}
 \langle (T - \Pi_K T)(Q_J h_0 - \pi_J h_0),T(h_0 - \pi_J h_0) \rangle_W + \| \Pi_K T(Q_J h_0 - \pi_J h_0)\|_{L^2(W)}^2 &= &0
\end{eqnarray*}
We have therefore proved
\begin{eqnarray}
 \| \Pi_K T(Q_J h_0 - \pi_J h_0)\|_{L^2(W)}^2 & = & |\langle (T - \Pi_K T)(Q_J h_0 - \pi_J h_0),T(h_0 - \pi_J h_0) \rangle_W | \,. \label{e-ipt3n}
\end{eqnarray}
It follows from (\ref{e-ipt3n}), the Cauchy-Schwarz inequality, and Assumption \ref{a-approx}(i)  that:
\begin{eqnarray}
 s_{JK}^2 \|Q_J h_0 - \pi_J h_0\|^2_{L^2(X)} & \leq & \| \Pi_K T (Q_J h_0 - \pi_J h_0) \|^2_{L^2(W)} \notag \\
 & \leq & \|(T - \Pi_K T)(Q_J h_0 - \pi_J h_0)\|_{L^2(W)} \|T (h_0 - \pi_J h_0)\|_{L^2(W)} \label{e-ipt6n} \\
 & \leq & o(\tau_J^{-1}) \|Q_J h_0 - \pi_J h_0 \|_{L^2(X)}  \|T(h_0 - \pi_J h_0)\|_{L^2(W)}  \label{e-ipt4n} \,.
\end{eqnarray}
It follows by (\ref{e-ipt4n}) and the relation $s_{JK}^{-1} \leq 2 \tau_J$  for all $J$ large that:
\begin{eqnarray*}
 \|Q_J h_0 - \pi_J h_0\|_{L^2(X)} & \leq & o(1) \times \tau_J \|T (h_0 - \pi_J h_0)\|_{L^2(W)} \\
 & \leq & o(1) \times \mathrm{const} \times \|h_0 - \pi_J h_0\|_{L^2(X)}
\end{eqnarray*}
where the final line is by part (1.b). For part (2.b), by definition of $Q_J$, $\Pi_J$ and results in part (1.a) and part (2.a), we have:
\begin{eqnarray*}
 \|h_0 - \Pi_J h_0\|_{L^2(X)} & \leq & \|h_0 - Q_J h_0\|_{L^2(X)} \\
 & \leq & \|h_0 - \pi_J h_0\|_{L^2(X)} + \|\pi_J h_0 - Q_J h_0\|_{L^2(X)} \\
 & \leq & \|h_0 - \pi_J h_0\|_{L^2(X)} + o(1) \times \|h_0 - \pi_J h_0\|_{L^2(X)} \\
 & = & (1+ \mathrm{const}) \times \|h_0 - \Pi_J h_0\|_{L^2(X)}.
\end{eqnarray*}
This proves part (2.b).

For part (3), it follows from (\ref{e-ipt6n}) and Assumption \ref{a-approx}(iii') that
\[
  s_{JK}^2 \|Q_J h_0 - \pi_J h_0\|_{L^2(X)} \leq  \mathrm{const} \times  (\zeta_{\psi,J} \tau_J)^{-1}   \|T(h_0 - \pi_J h_0)\|_{L^2(W)}  \,.
\]
and hence
\begin{eqnarray}
 \|Q_J h_0 - \pi_J h_0\|_{L^2(X)} & \leq &  \mathrm{const} \times    \zeta_{\psi,J}^{-1}  \times \tau_J  \|T(h_0 - \pi_J h_0)\|_{L^2(W)} \notag  \\
  & \leq & \zeta_{\psi,J}^{-1} \times \mathrm{const} \times  \|h_0 - \pi_J h_0 \|_{L^2(X)} \label{e-ipt8n}
\end{eqnarray}
by Part (1.b) and the fact that $s_{JK}^{-1} \leq 2 \tau_J$ for all $J$ large.
Therefore,
\begin{eqnarray*}
 \|Q_J h_0 - \pi_J h_0\|_\infty
 & \leq & \zeta_{\psi} \|Q_J h_0 - \pi_J h_0\|_{L^2(X)} \\
 & \leq & \mathrm{const} \times   \|h_0 - \pi_J h_0\|_{L^2(X)}
\end{eqnarray*}
where the last inequality is due to (\ref{e-ipt8n}).

For part (4), by the triangle inequality, the results in part (1.a) and (3) and Condition (\ref{e-piapprox}) we have:
\begin{eqnarray*}
 \|Q_J ( h_0 - \Pi_J h_0) \|_\infty
 & \leq &  \|Q_J h_0 - \pi_J h_0\|_\infty + \|\pi_J h_0 - \Pi_J h_0\|_\infty \\
 & \leq & \mathrm{const} \times   \|h_0 - \Pi_J h_0\|_{L^2(X)} + \| h_0 - \Pi_J h_0\|_\infty \\
 & \leq & O(1) \times   \|h_0 - \Pi_J h_0\|_\infty
\end{eqnarray*}
which completes the proof.
\end{proof}

Note that we may write $\Pi_J h_0(x) = \psi^J(x)'c_J$ for some $c_J$ in $\mb R^J$. We use this notation hereafter.

\begin{proof}[\textbf{Proof of Lemma \ref{lem-bias}}]
We first prove \textbf{Result (1)}. We begin by writing
\begin{eqnarray*}
 \widetilde h(x) - \Pi_J h_0(x)
 & = & Q_J (h_0 - \Pi_J h_0)(x) \\
 & & + \psi^J(x)'(G_b^{-1/2}S)^-_l \{ G_b^{-1/2} (B'(H_0 - \Psi c_J)/n - E[b^K(W_i)(h_0(X_i) - \Pi_J h_0(X_i))] )\}\\
 & & + \psi^J(x)'\{(\wh G_b^{-1/2} \wh S)^-_l\wh G_b^{-1/2}G_b^{1/2} - (G_b^{-1/2} S)^-_l  \}G_b^{-1/2} B'(H_0 - \Psi c_J)/n \\
 & =: & T_1 + T_2 + T_3
\end{eqnarray*}
where $Q_J : L^2(X) \to \Psi_J$ is the sieve 2SLS projection operator given by
\[
 Q_J h(x) = \psi^J(x)' [S' G_b^{-1} S]^{-1} S' G_b^{-1} E[b^K(W_i)h(X_i)]\,.
\]
Note that $Q_J h = h$ for all $h \in \Psi_J$.

Control of $\|T_1\|_\infty$: $\|T_1\|_\infty = O(1) \times \|h_0 - \Pi_J h_0\|_\infty$ by Assumption \ref{a-approx}(iii).

Control of $\|T_2\|_\infty$: Using equation (\ref{e-xident}), the Cauchy-Schwarz inequality, and Lemma \ref{lem-BHJ}, we obtain:
\begin{eqnarray*}
 \|T_2\|_{\infty} & \leq & \sup_x \|\psi^J (x)( G_b^{-1/2} S)^-_l \|_{\ell^2} \| G_b^{-1/2}(B'(H_0 - \Psi c_J)/n - E[b^K(W_i)(h_0(X_i) - \Pi_J h_0(X_i))] ) \|_{\ell^2}  \\
 & \leq & \zeta_{\psi,J} s_{JK}^{-1}\| G_b^{-1/2} (B'(H_0 - \Psi c_J)/n - E[b^K(W_i)(h_0(X_i) - \Pi_J h_0(X_i))] ) \|_{\ell^2}  \\
 & = & \zeta_{\psi,J} s_{JK}^{-1} \times O_p( \sqrt{K/n}) \times \|h_0 - \Pi_J h_0 \|_{\infty}  \,.
\end{eqnarray*}
It then follows by the relations $s_{JK}^{-1} \asymp \tau_J$ (Lemma \ref{lem-ip}) and $\zeta_{\psi,J} \geq \sqrt J \asymp \sqrt K$ and Assumption \ref{a-sieve}(ii) that:
\begin{equation*}
 \|T_2\|_{\infty}  =  O_p (\tau_J \zeta_{\psi,J} \sqrt{J/n}) \times \|h_0 - \Pi_J h_0\|_\infty  = O_p(1) \times \|h_0 - \Pi_J h_0\|_\infty \,.
\end{equation*}

Control of $\|T_3\|_{\infty}$: Similar to $T_2$ in the proof of Lemma \ref{lem-chat}, we may use Lemmas \ref{lem-SGl2}(b) and \ref{lem-ip} to obtain:
\begin{eqnarray}
  \|T_3\|_{\infty} & \leq & \zeta_{\psi,J} \|G_\psi^{1/2}\{(\wh G_b^{-1/2} \wh S)^-_l\wh G_b^{-1/2}G_b^{1/2} - (G_b^{-1/2} S)^-_l\} \|_{\ell^2} \| G_b^{-1/2} B'(H_0 - \Psi c_J)/n\|_{\ell^2} \notag \\
  & = & \zeta_{\psi,J} \times O_p( \tau_J^2  \zeta \sqrt{(\log J)/n)} ) \times \| G_b^{-1/2} B'(H_0 - \Psi c_J)/n\|_{\ell^2} \label{e-bt31} \,.
\end{eqnarray}
Then by Lemma \ref{lem-BHJ} and the triangle inequality, we have:
\begin{eqnarray}
 \| G_b^{-1/2} B'(H_0 - \Psi c_J)/n\|_{\ell^2} & \leq &  O_p ( \sqrt{K/n} )\times \|h_0 - \Pi_J h_0\|_{\infty}  + \|\Pi_K T (h_0 - \Pi_J h_0) \|_{L^2(W)} \notag \\
  & \leq &  O_p ( \sqrt{K/n} )\times \|h_0 - \Pi_J h_0\|_{\infty}  + \| T (h_0 - \Pi_J h_0) \|_{L^2(W)}\,. \label{e-bt32}
\end{eqnarray}
Substituting (\ref{e-bt32}) into (\ref{e-bt31}) and using Assumptions \ref{a-sieve}(ii) and \ref{a-approx}(ii):
\begin{eqnarray*}
 \|T_3\|_{\infty} & \leq &  O_p( \tau_J \zeta^2 /\sqrt n) \times  \left( O_p(\tau_J \sqrt{K(\log J)/n} ) \times \|h_0 - \Pi_J h_0\|_\infty + \tau_J  \|T(h_0 - \Pi_J h_0)\|_{L^2(W)}  \right)\\
 & = & O_p(1) \times \left( o_p(1) \times  \|h_0 - \Pi_J h_0\|_{\infty} + O_p(1) \times \|h_0 - \Pi_J h_0\|_{L^2(X)} \right) \\
 & \leq & O_p(1) \times \|h_0 - \Pi_J h_0\|_\infty
\end{eqnarray*}
where the final line is by the relation between the $L^2(X)$ and sup norms.

\textbf{Result (2)} then follows because
\begin{eqnarray*}
 \|\widetilde h - h_0\|_\infty & \leq & \|\widetilde h - \Pi_J h_0\|_\infty + \|\Pi_J h_0 - h_0\|_\infty \\
  & \leq & (1+O_p(1)) \|\Pi_J h_0 - h_0\|_\infty \\
  & \leq & (1+O_p(1)) (1+\|\Pi_J\|_\infty) \|h_0 - h_{0,J}\|_\infty\,.
\end{eqnarray*}
where the second inequality is by Result (1) and the final line is by Lebesgue's lemma.
\end{proof}

\subsubsection{Proofs for Section \ref{s-upper}}

\begin{proof}[\textbf{Proof of Lemma \ref{lem-chat}}]
Let $u = (u_1,\ldots,u_n)'$. Let $M_n$ be a sequence of positive constants diverging to $+\infty$, and decompose $u_i = u_{1,i} + u_{2,i}$ where
\begin{eqnarray*}
 u_{1,i} & = & u_i\{|u_i| \leq M_n \} - E[u_i\{|u_i| \leq M_n \}|W_i] \\
 u_{2,i} & = & u_i\{|u_i| > M_n \} - E[u_i \{|u_i| > M_n\}|W_i] \\
 u_1 & = & (u_{1,1},\ldots,u_{1,n})' \\
 u_2 & = & (u_{2,1},\ldots,u_{2,n})' \,.
\end{eqnarray*}

For \textbf{Result (1)}, recall that $\xi_{\psi,J} = \sup_{x} \|\psi^J(x)\|_{\ell^1}$. By H\"older's inequality we have
\[
 \|\wh h - \widetilde h\|_\infty = \sup_x |\psi^J(x)'(\wh c - \widetilde c)| \leq \xi_{\psi,J} \|(\wh c - \widetilde c)\|_{\ell^\infty} \,.
\]
To derive the sup-norm convergence rate of the standard deviation term $\wh h - \widetilde h$, it suffices to bound the $\ell^\infty$ norm of the $J \times 1$ random vector $(\wh c - \widetilde c)$. Although this appears like a crude bound, $\xi_{\psi,J}$ grows slowly in $J$ for certain sieves whose basis functions have local support. For such bases the above bound, in conjunction with the following result
\begin{equation} \label{result1}
 \|\wh c - \widetilde c\|_{\ell^\infty} = O_p\left (s_{JK}^{-1} \sqrt{(\log J)/(n e_J)}\right)
\end{equation}
leads to a tight bound on the convergence rate of $\|\wh h - \widetilde h\|_\infty$.

To prove (\ref{result1}), we begin by writing
\begin{eqnarray*}
 \wh c - \widetilde c & = & (\wh G_b^{-1/2} \wh S)^-_l\wh G_b^{-1/2} B'u/n \\
 & = & (G_b^{-1/2}S)^-_l G_b^{-1/2} B'u/n + \{(\wh G_b^{-1/2} \wh S)^-_l\wh G_b^{-1/2} - (G_b^{-1/2} S)^-_l G_b^{-1/2} \} B'u /n \\
 & =: & T_1 + T_2 \,.
\end{eqnarray*}
We will show that $\|T_1\|_{\ell^\infty} = O_p \left(s_{JK}^{-1} \sqrt{(\log J)/(n e_J)}\right)$ and $\|T_2\|_{\ell^\infty} = O_p \left(s_{JK}^{-1} \sqrt{(\log J)/(n e_J)}\right)$.

Control of $\|T_1\|_{\ell^\infty}$. Note that $T_1 = (G_b^{-1/2}S)^-_l G_b^{-1/2} B'u_1/n + (G_b^{-1/2}S)^-_l G_b^{-1/2} B'u_2/n$.

Let $(a)_j$ denote the $j$th element of a vector $a$. By the definition of $\|\cdot\|_{\ell^\infty}$ and the union bound,
\begin{eqnarray}
 \mb P \left( \|(G_b^{-1/2}S)^-_l G^{-1/2}_b B'u_1/n\|_{\ell^\infty} > t \right) & \leq & \mb P \left( \bigcup_{j =1}^J |((G_b^{-1/2}S)^-_l G^{-1/2}_b B'u_1/n)_j | > t \right) \notag \\
 & \leq & \sum_{j =1}^J \mb P \left(  |((G_b^{-1/2}S)^-_l G^{-1/2}_b B'u_1/n)_j | > t \right) \notag \\
 & = & \sum_{j =1}^J \mb P \left( \left| \sum_{i=1}^n q_{j,JK}(W_i) u_{1,i}/n \right| > t \right) \label{e-prob}
\end{eqnarray}
where $q_{j,JK}(W_i) = ((G_b^{-1/2}S)^-_l G_b^{-1/2} b^K(W_i))_j$. The summands may be bounded by noting that
\begin{eqnarray}
 |q_{j,JK}(W_i)| & \leq & \|(G_b^{-1/2}S)^-_l\|_{\ell^2} \| G_b^{-1/2} b^K(W_i)\|_{\ell^2} \notag \\
 & = & \|G_\psi^{-1/2} [G_\psi^{-1/2} S'G_b^{-1} S G_\psi^{-1/2}]^{-1} G_\psi^{-1/2} S' G_b^{-1/2}\|_{\ell^2} \| G_b^{-1/2} b^K(W_i)\|_{\ell^2} \notag \\
 & \leq & \|G_\psi^{-1/2}\|_{\ell^2} \| [G_\psi^{-1/2} S'G_b^{-1} S G_\psi^{-1/2}]^{-1} G_\psi^{-1/2} S'G_b^{-1/2}\|_{\ell^2} \| G_b^{-1/2} b^K(W_i)\|_{\ell^2} \notag \\
 & \leq & \frac{\zeta_{b,K}}{s_{JK} \sqrt{e_J}} \label{e-max1}
\end{eqnarray}
uniformly in $i$ and $j$. Therefore,
\begin{equation} \label{e-max}
 |q_{j,JK}(W_i) u_{1,i}/n| \leq \frac{2 M_n \zeta_{b,K}}{ns_{JK} \sqrt{e_J}}
\end{equation}
uniformly in $i$ and $j$.

Let $(A)_{j|}$ denote the $j$th row of the matrix $A$ and let $(A)_{jj}$ denote its $j$th diagonal element. The second moments of the summands may be bounded by observing that
\begin{eqnarray}
 E[q_{j,JK}(W_i)^2] & = & E[ ((G_b^{-1/2}S)^-_l)_{j|} G_b^{-1/2} b^K(W_i))^2 ] \notag \\
 & = & E [((G_b^{-1/2}S)^-_l)_{j|} G_b^{-1/2} b^K(W_i)b^K(W_i)' G_b^{-1/2} ((G_b^{-1/2}S)^-_l)_{j|}'] \notag \\
 & = & ((G_b^{-1/2}S)^-_l ((G_b^{-1/2}S)^-_l)')_{jj} \notag \\
 & = & ((S'G_b^{-1}S)^{-1})_{jj} \notag \\
 & \leq & \| (S' G_b^{-1} S)^{-1} \|_{\ell^2} \notag \\
 & = & \| G_\psi^{-1/2} [G_\psi^{-1/2} S'G_b^{-1} S G_\psi^{-1/2}]^{-1} G_\psi^{-1/2} \|_{\ell^2} \notag \\
 & \leq & \frac{1}{ s_{JK}^2 e_J} \label{e-var0}
\end{eqnarray}
and so
\begin{equation} \label{e-var}
 E[( q_{j,JK}(W_i) u_{1,i}/n )^2] \leq  \frac{\overline \sigma^2}{n^2 s_{JK}^2 e_J}
\end{equation}
by Assumption \ref{a-residuals}(i) and the law of iterated expectations. Bernstein's inequality and expressions (\ref{e-prob}), (\ref{e-max}) and (\ref{e-var}) yield
\begin{eqnarray}
 & & \mb P \left( \|(G_b^{-1/2}S)^-_l G^{-1/2} B'u_1/n\|_{\ell^\infty} > C s_{JK}^{-1} \sqrt{(\log J)/(n e_J)} \right) \notag \\
 & & \quad \leq \quad  2\exp \left\{ \log J - \frac{  C^2 (\log J)/(ns_{JK}^2 e_J)  }{c_1 /(n s_{JK}^2 e_J) + c_2 C M_n \zeta_{b,K} \sqrt{\log J}/(n^{3/2} s_{JK}^2 e_J)} \right\} \notag \\
 & & \quad = \quad  2\exp \left\{ \log J - \frac{  C^2 (\log J)/(ns_{JK}^2 e_J)  }{1 /(n s_{JK}^2 e_J) [c_1 + c_2 C M_n \zeta_{b,K} \sqrt{(\log J)/n}] } \right\} \label{e-prob-lem1}
\end{eqnarray}
for finite positive constants $c_1$ and $c_2$. Then (\ref{e-prob-lem1}) is $o(1)$ for all large $C$ provided $M_n \zeta_{b,K} \sqrt{(\log J)/n} = o(1)$.

By the triangle and Markov inequalities and (\ref{e-max1}), we have
\begin{eqnarray*}
 \mb P \left( \|(G_b^{-1/2}S)^-_l G^{-1/2} B'u_2/n\|_{\ell^\infty} > t \right)
 & = & \mb P \left( \max_{1 \leq j \leq J} \left| \sum_{i=1}^n q_{j,JK}(W_i) u_{2,i}/n \right| > t \right) \\
 & \leq & \mb P \left( \frac{\zeta_{b,K}}{s_{JK} \sqrt{e_J}}  \sum_{i=1}^n |u_{2,i}/n| > t \right) \\
 & \leq &  \frac{2\zeta_{b,K}}{ts_{JK} \sqrt{e_J}}  E[|u_i|\{|u_i| > M_n\}] \\
 & \leq &  \frac{2\zeta_{b,K}}{ts_{JK} \sqrt{e_J}M_n^{1+\delta}}   E[|u_i|^{2+\delta}\{|u_i| > M_n\}]
\end{eqnarray*}
which, by Assumption \ref{a-residuals}(ii), is $o(1)$ when $t = C s_{JK}^{-1} \sqrt{ ( \log J)/(ne_J)}$ provided $\zeta_{b,K}\sqrt{n/(\log J)} = O(M_n^{1+\delta})$.

Choosing $M_n^{1+\delta} \asymp \zeta_{b,K} \sqrt{n/\log J}$ satisfies the condition $\zeta_{b,K} \sqrt{n/(\log J)} = O(M_n^{1+\delta})$ trivially, and satisfies the condition $M_n \zeta_{b,K} \sqrt{(\log J)/n} = o(1)$ provided $\zeta_{b,K}^{(2+\delta)/\delta} \sqrt{(\log J)/n} = o(1)$, which holds by Assumption \ref{a-sieve}(iii).

Control of $\|T_2\|_{\ell^\infty}$: Using the fact that $\|\cdot\|_{\ell^\infty} \leq \|\cdot\|_{\ell^2}$ on $\mb R^J$ and Lemmas \ref{lem-SGl2}(a) and \ref{lem-BuL2}, we have:
\begin{eqnarray*}
 \|T_2\|_{\ell^\infty} & = & \|\{(\wh G_b^{-1/2} \wh S)^-_l\wh G_b^{-1/2}G_b^{1/2} - (G_b^{-1/2} S)^-_l  \} G_b^{-1/2} B'u /n \|_{\ell^\infty} \\
 & \leq & \| (\wh G_b^{-1/2} \wh S)^-_l\wh G_b^{-1/2} G_b^{1/2} - (G_b^{-1/2} S)^-_l  \|_{\ell^2} \| G_b^{-1/2}  B'u/n\|_{\ell^2} \\
 & = & O_p\left ( s_{JK}^{-2} \zeta \sqrt{(\log K)/(n e_J)} \right) \times O_p ( \sqrt{K/n} ) \\
 & = &  O_p \left(s_{JK}^{-1} \sqrt{(\log J)/(n e_J)}\right) \times O_p (s_{JK}^{-1} \zeta \sqrt{K/n} ) \\
 & = &  O_p \left(s_{JK}^{-1} \sqrt{(\log J)/(n e_J)}\right)
\end{eqnarray*}
where the last equality follows from Assumption \ref{a-sieve}(ii) and the facts that $\zeta \geq \sqrt K$ and $ J \asymp K$.

For \textbf{Result (2)}, we begin by writing
\begin{eqnarray*}
 \wh h(x) - \widetilde h(x) & = & \psi^J(x)' (\wh G_b^{-1/2} \wh S)^-_l\wh G_b^{-1/2} B'u/n \\
 & = & \psi^J(x)'(G_b^{-1/2}S)^-_l G_b^{-1/2} B'u/n + \psi^J(x)'\{(\wh G_b^{-1/2} \wh S)^-_l\wh G_b^{-1/2} - (G_b^{-1/2} S)^-_l G_b^{-1/2} \} B'u /n \\
 & =: & T_1 + T_2 \,.
\end{eqnarray*}
We will show that $\|T_1\|_{\infty} = O_p \left(\tau_J \zeta_{\psi,J} \sqrt{(\log n)/n}\right)$ and $\|T_2\|_{\infty} = O_p \left(\tau_J \zeta_{\psi,J} \sqrt{(\log n)/n}\right)$.

Control of $\|T_1\|_{\infty}$. Note that $T_1 = \psi^J(x)'(G_b^{-1/2}S)^-_l G_b^{-1/2} B'u_1/n + \psi^J(x)'(G_b^{-1/2}S)^-_l G_b^{-1/2} B'u_2/n$.

Let $\mathcal X_n \subset \mathcal X$ be a grid of finitely many points such that for each $x \in \mathcal X$ there exits a $\bar x_n(x) \in \mathcal X_n$ such that $\|x - \bar x_n(x)\| \lesssim (\zeta_{\psi,J} J^{-(\omega + \frac{1}{2})})^{1/\omega'}$, where $\omega,\omega'$ are as in Assumption \ref{a-sieve}(i). By compactness and convexity of the support $\mathcal X$ of $X_i$, we may choose $\mathcal X_n$ to have cardinality $\#(\mathcal X_n) \lesssim n^\beta$ for some $0 < \beta < \infty$. Therefore,
\begin{align*}
 & \sup_x \|\psi^J(x)' ( G_b^{-1/2}  S)^-_l G_b^{-1/2} B'u_1/n\|_\infty  \\
 & \leq \max_{ x_n \in \mathcal X_n} |\psi^J( x_n)' ( G_b^{-1/2}  S)^-_l G_b^{-1/2} B'u_1/n| + \sup_x |\{\psi^J(x) - \psi^J(\bar x_n(x))\}' ( G_b^{-1/2}  S)^-_l G_b^{-1/2} B'u_1/n| \\
 & \leq \max_{ x_n \in \mathcal X_n} |\psi^J( x_n)' ( G_b^{-1/2}  S)^-_l G_b^{-1/2} B'u_1/n| + C_\omega J^\omega \zeta_{\psi,J}  J^{-(\omega + \frac{1}{2} )} s_{JK}^{-1} \| G_b^{-1/2} B'u_1/n\|_{\ell^2} \\
 & = \max_{ x_n \in \mathcal X_n} |\psi^J( x_n)' ( G_b^{-1/2}  S)^-_l G_b^{-1/2} B'u_1/n|  + C_\omega J^\omega \zeta_{\psi,J}  J^{-(\omega + \frac{1}{2} )} s_{JK}^{-1} \times O_p(\sqrt{J/n}) \\
 & = \max_{ x_n \in \mathcal X_n} |\psi^J( x_n)' ( G_b^{-1/2}  S)^-_l G_b^{-1/2} B'u_1/n| + o_p( s_{JK}^{-1} \zeta_{\psi,J} \sqrt{(\log J)/n} )
\end{align*}
for some finite positive constant $C_\omega$, where the first inequality is by the triangle inequality, the second is by H\"older continuity of the basis for $\Psi_J$ and similar reasoning to that used in equation (\ref{e-xident}), the first equality is by Lemma \ref{lem-BuL2} and the fact that $J \asymp K$, and the final equality is because $(\log J)^{-1/2} = o(1)$. For each $x_n \in \mathcal X_n$ we may write
\begin{eqnarray*}
 \psi^J( x_n)' ( G_b^{-1/2}  S)^-_l G_b^{-1/2} B'u_1/n & = & \frac{1}{n} \sum_{i=1}^n g_{n,i}(x_n) u_{1,i}, \mbox{ where} \\
 g_{n,i}(x_n) & = & \psi^J( x_n)' ( G_b^{-1/2}  S)^-_l G_b^{-1/2} b^K(W_i)\,.
\end{eqnarray*}
It follows from equation (\ref{e-xident}) and the Cauchy-Schwarz inequality that the bounds
\begin{eqnarray*}
 |g_{n,i}(x_n) | & \leq & s_{JK}^{-1} \zeta_{\psi,J} \zeta_{b,K} \\
 E[g_{n,i}(x_n)^2] & = & \psi^J(x_n)' ( G_b^{-1/2}  S)^-_l (( G_b^{-1/2}  S)^-_l )' \psi^J(x_n) \quad \leq \quad s_{JK}^{-2} \zeta_{\psi,J}^2
\end{eqnarray*}
hold uniformly for $x_n \in \mathcal X_n$. Therefore, by Assumption \ref{a-residuals}(i) and iterated expectations, the bounds
\begin{eqnarray*}
 |g_{n,i}(x_n) u_{1,i}| & \leq &2 s_{JK}^{-1} \zeta_{\psi,J} \zeta_{b,K} M_n\\
 E[g_{n,i}(x_n)^2 u_{1,i}^2] & \leq &\overline \sigma^2 \psi^J(x_n)' ( G_b^{-1/2}  S)^-_l (( G_b^{-1/2}  S)^-_l )' \psi^J(x_n)  \quad \leq \quad \overline \sigma^2 s_{JK}^{-2} \zeta_{\psi,J}^2
\end{eqnarray*}
hold uniformly for $x_n \in \mathcal X_n$. It follows by the union bound and Bernstein's inequality that
\begin{eqnarray}
 & & \mb P \left( \max_{ x_n \in \mathcal X_n} |\psi^J( x_n)' ( G_b^{-1/2}  S)^-_l G_b^{-1/2} B'u_1/n| > C s_{JK}^{-1}\zeta_{\psi,J} \sqrt{(\log n)/n} \right) \notag \\
 & & \quad \leq \quad \#(\mathcal X_n) \max_{x_n \in \mathcal X_n} \mb P \left(  \left|\frac{1}{n} \sum_{i=1}^n g_{n,i}(x_n) u_{1,i}\right| > C s_{JK}^{-1} \zeta_{\psi,J} \sqrt{(\log n)/n} \right) \notag \\
 & & \quad \lesssim \quad  \exp \left\{ \beta \log n - \frac{ C^2 \zeta_{\psi,J}^2 (\log n)/(ns_{JK}^2 )  }{c_1 \zeta_{\psi,J}^2 /(n s_{JK}^2)[1 + (c_2/c_1) (C M_n \zeta_{b,K} \sqrt{(\log n)/n} ]} \right\}  \label{e-prob-lem2n}
\end{eqnarray}
for finite positive constants $c_1$ and $c_2$. Then (\ref{e-prob-lem2n}) is $o(1)$ for all large $C$ provided $M_n \zeta_{b,K} \sqrt{(\log n)/n} = o(1)$.

By the triangle and Markov inequalities and equation (\ref{e-xident}), we have
\begin{eqnarray*}
 \mb P \left( \|\psi^J(x)(G_b^{-1/2}S)^-_l G_b^{-1/2} B'u_2/n\|_{\infty} > t \right)
 & \leq & \mb P \left(  s_{JK}^{-1} \zeta_{\psi,J} \| G_b^{-1/2} B'u_2/n \|_{\ell^2} > t \right) \\
 & \leq & \mb P \left( s_{JK}^{-1} \zeta_{\psi,J} \zeta_{b,K} \sum_{i=1}^n |u_{2,i}/n| > t \right) \\
 & \leq &  \frac{2\zeta_{\psi,J} \zeta_{b,K}}{ts_{JK}}  E[|u_i|\{|u_i| > M_n\}] \\
 & \leq &  \frac{2\zeta_{\psi,J} \zeta_{b,K}}{ts_{JK} M_n^{1+\delta}}   E[|u_i|^{2+\delta}\{|u_i| > M_n\}]
\end{eqnarray*}
which, by Assumption \ref{a-residuals}(ii), is $o(1)$ when $t = C s_{JK}^{-1}\zeta_{\psi,J} \sqrt{ ( \log n)/n}$ provided $\zeta_{b,K}\sqrt{n/(\log n)} = O(M_n^{1+\delta})$.

Choosing $M_n^{1+\delta} \asymp \zeta_{b,K} \sqrt{n/\log n}$ satisfies the condition $\zeta_{b,K} \sqrt{n/(\log n)} = O(M_n^{1+\delta})$ trivially, and satisfies the condition $M_n \zeta_{b,K} \sqrt{(\log n)/n} = o(1)$ provided $\zeta_{b,K}^{(2+\delta)/\delta} \sqrt{(\log n)/n} = o(1)$, which holds by Assumption \ref{a-sieve}(iii). We have therefore proved that $\|T_1\|_\infty = O_p( s_{JK}^{-1} \zeta_{\psi,J} \sqrt{(\log n)/n})$. It follows by the relation $\tau_J \asymp s_{JK}^{-1}$ (Lemma \ref{lem-ip}) that $\|T_1\|_\infty = O_p( \tau_J \zeta_{\psi,J} \sqrt{(\log n)/n})$.

Control of $\|T_2\|_{\infty}$: Using the fact that $\|h\|_{\infty} \leq \zeta_{\psi,J} \|h\|_{L^2(X)}$ on $\Psi_J$ and Lemmas \ref{lem-SGl2}(b) and \ref{lem-BuL2} and the relation $\tau_J \asymp s_{JK}^{-1}$, we have:
\begin{eqnarray*}
 \|T_2\|_{\infty} & \leq &\zeta_{\psi,J} \| G_\psi^{1/2}\{(\wh G_b^{-1/2} \wh S)^-_l\wh G_b^{-1/2}G_b^{1/2} - (G_b^{-1/2} S)^-_l  \} G_b^{-1/2} B'u /n \|_{\ell^2} \\
 & \leq & \zeta_{\psi,J} \| G_\psi^{1/2} \{ (\wh G_b^{-1/2} \wh S)^-_l\wh G_b^{-1/2} G_b^{1/2} - (G_b^{-1/2} S)^-_l\}  \|_{\ell^2} \| G_b^{-1/2}  B'u/n\|_{\ell^2} \\
 & = & \zeta_{\psi,J} O_p\left ( \tau_J^2 \zeta \sqrt{(\log J)/n} \right) \times O_p ( \sqrt{K/n} ) \\
 & = &  O_p \left(\tau_J \zeta_{\psi,J} \sqrt{(\log J)/n}\right) \times O_p (\tau_J \zeta \sqrt{K/n} ) \\
 & = &  O_p \left(\tau_J \zeta_{\psi,J} \sqrt{(\log J)/n}\right) \times O_p(1)
\end{eqnarray*}
where the last equality follows from Assumption \ref{a-sieve}(ii) and the fact that $\zeta \geq \sqrt J \asymp \sqrt K$.
\end{proof}

\begin{proof}[\textbf{Proof of Theorem \ref{t-rate}}]
We decompose $\|\wh h - h_0\|_\infty$ into three parts:
\[
 \|\wh h - h_0\|_\infty \leq \|\wh h - \widetilde h\|_\infty + \|\widetilde h - \Pi_J h_0\|_\infty + \|\Pi_J  h_0 - h_0\|_\infty\,.
\]
where $\|\wh h - \widetilde h\|_\infty = O_p (\tau_J \xi_{\psi,J} \sqrt{ (\log J)/ (n e_J )} )$ by Lemma \ref{lem-chat}(1) and $\|\widetilde h - \Pi_J h_0\|_\infty = O_p (1) \times \|\Pi_J  h_0 - h_0\|_\infty$ by Lemma \ref{lem-bias}.
\end{proof}

\begin{proof}[\textbf{Proof of Corollary \ref{c-deriv}}]
For \textbf{Result (1)}, note that Assumption \ref{a-sieve}(ii) is satisfied with $\zeta = O(J^{1/2} )$ for $\Psi_J$ and $B_K$ being spline, or wavelet or cosine sieves. Next, by the lemmas in Appendix \ref{ax-basis}, $\xi_{\psi,J} /\sqrt{e_J} =O(J^{1/2} )$ for $\Psi_J$ being spline or wavelet sieves. Also, $\|\Pi_J\|_\infty \lesssim 1$ for $\Psi_J$ being a spline sieve (\cite{Huang2003}) or a tensor product CDV wavelet sieve (\cite{ChenChristensen-reg}). For $h_0 \in B_\infty(p,L)$ and $\Psi_J$ being spline or wavelet sieves, Lemma \ref{lem-bias} implies that
\[
\|\widetilde h - h_0\|_\infty = O_p (J^{-p/d}).
\]
Note that Bernstein inequalities (or inverse estimates) from approximation theory imply that
\[
 \| \partial^\alpha h\|_\infty = O(J^{|\alpha|/d}) \|h\|_\infty
\]
for all $h \in \Psi_J$ (see \cite{Schumaker2007} for splines and \cite{Cohen2003} for wavelets on domains). Therefore,
\begin{align*}
 \|\partial^\alpha \widetilde h - \partial^\alpha h_0 \|_\infty & \leq \|\partial^\alpha \widetilde h - \partial^\alpha (\Pi_J h_0) \|_\infty + \|\partial^\alpha (\Pi_J h_0) - \partial^\alpha h_0 \|_\infty \\
 & \leq O(J^{|\alpha|/d}) \|\widetilde h - \Pi_J h_0 \|_\infty + \|\partial^\alpha (\Pi_J h_0) - \partial^\alpha h_0 \|_\infty \\
 & \leq O_p( J^{-(p-|\alpha|)/d} ) + \|\partial^\alpha (\Pi_J h_0) - \partial^\alpha h_0 \|_\infty
\end{align*}
Let $h_J$ be any element of $\Psi_J$. Since $\Pi_J h_J = h_J$, we have:
\begin{align*}
 \|\partial^\alpha (\Pi_J h_0) - \partial^\alpha h_0 \|_\infty & = \|\partial^\alpha (\Pi_J (h_0 -  h_J)) + \partial^\alpha h_J - \partial^\alpha h_0 \|_\infty \notag \\
 & \leq O(J^{|\alpha|/d}) \|\Pi_J (h_0 -  h_J))\|_\infty + \|\partial^\alpha h_J - \partial^\alpha h_0 \|_\infty \notag \\
 & \leq O(J^{|\alpha|/d}) \times \mathrm{const} \times \|h_0 -  h_J\|_\infty + \|\partial^\alpha h_J - \partial^\alpha h_0 \|_\infty \label{e-bias-hj-leb} \,.
\end{align*}
The above inequality holds uniformly in $h_J \in \Psi_J$. Choosing $h_J$ such that $\|h_0 -  h_J\|_\infty = O(J^{-p/d})$ and $\|\partial^\alpha h_J - \partial^\alpha h_0 \|_\infty = O(J^{-(p-|\alpha|)/d})$ yields the desired result.

For \textbf{Result (2)}, Theorem \ref{t-rate} implies that
\[
\|\wh h - h_0\|_\infty = O_p (J^{-p/d} + \tau_J \sqrt{ (J\log J)/n} ).
\]
By similar arguments to the above, we have:
\begin{align*}
 \|\partial^\alpha \wh h - \partial^\alpha h_0 \|_\infty & \leq \|\partial^\alpha \wh h - \partial^\alpha \widetilde h \|_\infty + \|\partial^\alpha \widetilde h - \partial^\alpha h_0 \|_\infty \\
 & \leq O(J^{|\alpha|/d}) \|\wh h - \widetilde h \|_\infty + \|\partial^\alpha \widetilde h - \partial^\alpha h_0 \|_\infty \\
 & \leq O_p\left( J^{|\alpha|/d}  \left( \tau_J \sqrt{(J \log J)/n} \right)  \right) + \|\partial^\alpha \widetilde h - \partial^\alpha h_0 \|_\infty
\end{align*}
and the result follows by Result (1).

For \textbf{Results (2.a) and (2.b)}, Assumption \ref{a-sieve}(ii)(iii) is satisfied if $\tau_J \times J /\sqrt n = O(1)$ and $J^{(2+\delta)/\delta} (\log n)/n = o(1)$. This is satisfied given the stated conditions with the optimal choice of $J$ for mildly ill-posed case and severely ill-posed case respectively.
\end{proof}

\subsection{Proofs for Section \ref{s-lower}}

\begin{proof}[\textbf{Proof of Theorem \ref{npiv lower bound}}]
Consider the Gaussian reduced-form NPIR model with known operator $T$:
\begin{equation}  \label{e-npir}
\begin{array}{rcl}
Y_{i} & = & Th_0(W_{i}) + u_i\\
u_i|W_i & \sim & N(0,\sigma^2 (W_i) ) %\\
\end{array}%
\end{equation}
for $1 \leq i \leq n$, where $W_i$ is continuously distributed over $\mathcal W$ with density uniformly bounded away from $0$ and $\infty$. As in \cite{ChenReiss}, Theorem \ref{npiv lower bound} is proved by (i) noting that the risk (in sup-norm loss) for the NPIV model is at least as large as the risk (in sup-norm loss) for the NPIR model, and (ii) calculating a lower bound (in sup-norm loss) for the NPIR model.  Theorem \ref{npiv lower bound} therefore follows from a sup-norm analogue of Lemma 1 of \cite{ChenReiss} and Theorem \ref{npir lower bound}, which establishes a lower bound on minimax risk over H\"older classes under sup-norm loss for the NPIR model.
\end{proof}

\begin{theorem}\label{npir lower bound}
Let Condition LB hold for the NPIR model (\ref{e-npir}) with a random sample $\{(W_i,Y_i)\}_{i=1}^n$.  Then for any $0 \leq |\alpha| < p$:
\begin{equation*}
 \liminf_{n \to \infty} \inf_{\wh g_n} \sup_{h \in B_\infty(p,L)} \mb P_h \left( \|\wh g_n - \partial^\alpha h\|_\infty \geq c(n/\log n)^{-(p-|\alpha|)/(2(p+\varsigma)+d)} \right) \geq c'>0
\end{equation*}
in the mildly ill-posed case, and
\begin{equation*}
 \liminf_{n \to \infty} \inf_{\wh g_n} \sup_{h \in B_\infty(p,L)} \mb P_h \left( \|\wh g_n - \partial^\alpha h\|_\infty \geq c(\log n)^{-(p-|\alpha|)/\varsigma} \right) \geq c'>0
\end{equation*}
in the severely ill-posed case, where $\inf_{\wh g_n}$ denotes the infimum over all estimators of $\partial^\alpha h$ based on the sample of size $n$, $\sup_{h \in B_\infty(p,L)} \mb P_h$ denotes the sup over $h \in B_\infty(p,L)$  and distributions  $(W_i,u_i)$ which satisfy Condition LB with $\nu$ fixed, and the finite positive constants $c, c'$ depend only on $p,L,d,\varsigma$ and $\sigma_0$.
\end{theorem}

\begin{proof}[\textbf{Proof of Theorem \ref{npir lower bound}}]
We establish the lower bound by applying Theorem 2.5 of \cite{Tsybakov2009} (see Theorem \ref{t-tsybakov} below). We first explain the scalar $(d = 1)$ case in detail. Let $\{\phi_{j,k},\psi_{j,k}\}_{j,k}$ be a wavelet basis of regularity $\gamma > p$ for $L^2([0,1])$ as described in Appendix \ref{ax-basis}. Recall that this basis is generated by a Daubechies pair $(\varphi,\psi)$ where $\varphi$ has support $[-N+1,N]$. We will define a family of submodels in which we perturb $h_0$ by elements of the wavelet space $W_j$, where we choose $j$ deterministically with $n$. For given $j$, recall that the wavelet space $W_j$ consists of $2^j$ functions $\{\psi_{j,k}\}_{0 \leq k \leq 2^j-1}$, such that $\{\psi_{j,k}\}_{r \leq k \leq 2^j-N-1}$ are interior wavelets for which $\psi_{j,k}(\cdot) = 2^{j/2}\psi(2^j(\cdot)-k)$.

By construction, the support of each interior wavelet is an interval of length $2^{-j}(2r-1)$. Thus for all $j$ sufficiently large (hence the $\liminf$ in our statement of the Lemma) we may choose a set $M \subset \{r,\ldots,2^j-N-1\}$ of interior wavelets with $\#(M) \gtrsim 2^j$ such that $\mbox{support}(\psi_{j,m}) \cap \mbox{support}(\psi_{j,m'}) = \emptyset$ for all $m,m' \in M$ with $m \neq m'$. Note also that by construction we have $\#(M) \leq 2^j$ (since there are $2^j-2N$ interior wavelets).

Recall the norms $\|\cdot\|_{b^p_{\infty,\infty}}$ defined in Appendix \ref{ax-basis}. Let $h_0 \in B_\infty(p,L)$ be such that $\|h_0\|_{B^p_{\infty,\infty}} \leq L/2$, and for each $m \in M$ let
\begin{equation*}
 h_m = h_0 + c_0 2^{-j(p+1/2)} \psi_{j,m}
\end{equation*}
where $c_0$ is a positive constant to be defined subsequently. Noting that
\begin{eqnarray*}
  c_0 2^{-j(p+1/2)} \|\psi_{j,m}\|_{B^p_{\infty,\infty}} & \lesssim &  c_0 2^{-j(p+1/2)} \|\psi_{j,m}\|_{b^p_{\infty,\infty}} \\
  & \leq & c_0
\end{eqnarray*}
it follows by the triangle inequality that $\|h_m\|_{B^p_{\infty,\infty}} \leq L$ uniformly in $m$ for all sufficiently small $c_0$. By Condition LB, let $W_i$ be distributed such that $X_i$ has uniform marginal distribution on $[0,1]$. For $m \in \{0\} \cup M$ let $P_m$ be the joint distribution of $\{(W_i,Y_i)\}_{i=1}^n$ with $Y_i = T h_m(W_i) + u_i$ for the Gaussian NPIR model (\ref{e-npir}).

For any $m \in M$
\begin{eqnarray*}
 \| \partial^\alpha h_0 - \partial^\alpha h_m\|_\infty & = & c_0 2^{-j(p+1/2)}\|\partial^\alpha \psi_{j,m} \|_\infty \\
 & = & c_0 2^{-j(p-|\alpha|)}\|\psi^{(|\alpha|)}\|_\infty
\end{eqnarray*}
where $\psi^{(|\alpha|)}$ denotes the $|\alpha|$th derivative of $\psi$. Moreover, for any $m,m' \in M$ with $m \neq m'$
\begin{eqnarray*}
 \|\partial^\alpha h_m - \partial^\alpha h_{m'}\|_\infty & = & c_0 2^{-j(p+1/2)}\| \partial^{\alpha} \psi_{j,m} - \partial^{\alpha}\psi_{j,m'} \|_\infty \\
 & = & 2 c_0 2^{-j(p-|\alpha|)}\|\psi^{(|\alpha|)}\|_\infty
\end{eqnarray*}
by virtue of the disjoint support of $\{\psi_{j,m}\}_{m \in M}$.

By Condition LB(iii),
\[
 \|T \psi_{j,m}(W_i)\|_{L^2(W)} \lesssim \nu(2^j)^2 \langle \psi_{j,m} ,\psi_{j,m} \rangle_X^2 = \nu(2^j)^2
\]
(because $c_0 2^{-j(p+1/2)}\psi_{j,m} \in \mathcal H_2(p,L)$ for sufficiently small $c_0$) where $\nu(2^j) = 2^{-j \varsigma}$ in the mildly ill-posed case and $\nu(2^j) = \exp(-  2^{j \varsigma})$ in the severely ill-posed case.
The KL distance $K(P_m,P_0)$ is
\begin{eqnarray*}
 K(P_m,P_0) & \leq & \frac{1}{2}\sum_{i=1}^n (c_0 2^{-j(p+1/2)})^2 E\left[\frac{(T \psi_{j,m}(W_i))^2 }{\sigma^2(W_i)}\right] \\
 & \leq & \frac{1}{2}\sum_{i=1}^n (c_0 2^{-j(p+1/2)})^2 \frac{E\left[(T \psi_{j,m}(W_i))^2\right] }{\underline \sigma^2}  \\
 & \lesssim &  n(c_0 2^{-j(p+1/2)})^2 \nu(2^j)^2 \,.
\end{eqnarray*}
In the mildly ill-posed case ($\nu(2^j) =  2^{-j \varsigma}$) we choose $2^{j} \asymp (n/(\log n))^{1/(2(p+\varsigma)+1)}$. This yields:
\begin{align*}
 K(P_m,P_0) &\lesssim c_0^2 \log n \quad \mbox{ uniformly in $m$}\\
 \log(\#(M)) &\gtrsim \log n + \log \log n \,.
\end{align*}
since $\#(M) \asymp 2^j$.

In the severely ill-posed case ($\nu(2^j)= \exp(- \frac{1}{2} 2^{j \varsigma})$) we choose $2^j = (c_1 \log n)^{1/\varsigma}$ with $c_1 > 1$. This yields:
\begin{align*}
 K(P_m,P_0) &\lesssim n^{-(c_1-1)} \quad \mbox{ uniformly in $m$}\\
 \log(\#(M)) &\gtrsim  \log \log n \,.
\end{align*}
In both the mildly and severely ill-posed cases, we may choose $c_0$ sufficiently small that both $\|h_m\|_{B^p_{\infty,\infty}} \leq L$ and $K(P_m,P_0) \leq \frac{1}{8} \log (\#(M))$ hold uniformly in $m$ for all $n$ sufficiently large. All conditions of Theorem 2.5 of \cite{Tsybakov2009} are satisfied and hence we obtain the lower bound result.

In the multivariate case ($d > 1$) we let $\widetilde \psi_{j,k,G}(x)$ denote an orthonormal tensor-product wavelet for $L^2([0,1]^d)$ at resolution level $j$ (see Appendix \ref{ax-basis}). We construct a family of submodels analogously to the univariate case, setting $h_m = h_0 + c_0 2^{-j(p+d/2)} \widetilde \psi_{j,m,G}$ where $\widetilde \psi_{j,m,G}$ is now the tensor product of $d$ interior univariate wavelets at resolution level $j$ with $G = (w_\psi)^d$ and where $\#(M) \asymp 2^{jd}$. By condition LB we obtain
\begin{equation*}
 \|\partial^\alpha h_m - \partial^\alpha h_{m'}\|_\infty \gtrsim c_0 2^{-j(p-|\alpha|)}
\end{equation*}
for each $m,m' \in \{0\} \cup M$ with $m \neq m'$, and
\begin{eqnarray*}
 K(P_m,P_0) & \lesssim &  n (c_0 2^{-j(p+d/2)})^2 \nu(2^j)^2
\end{eqnarray*}
for each $m \in M$, where $\nu(2^j) = 2^{-j \varsigma}$ in the mildly ill-posed case and $\nu(2^j) \asymp \exp(-  2^{j \varsigma})$ in the severely ill-posed case. We choose $2^j \asymp (n/\log n)^{1/(2(p+\varsigma)+d)}$ in the mildly ill-posed case and $2^j = (c_1 \log n)^{1/\varsigma}$ in the severely ill-posed case.
The result follows as in the univariate case.
\end{proof}

The following theorem is a special case of Theorem 2.5 on p. 99 of \cite{Tsybakov2009} which we use to prove the minimax lower bounds in sup- and $L^2$-norm loss for $h_0$ and its derivatives. We state the result here for convenience.

\begin{theorem}[\cite{Tsybakov2009}] \label{t-tsybakov}
Assume that $\#(M) \geq 2$ and suppose that $(\mathcal H,\|\cdot\|_{\mathcal H})$ contains elements $\{h_m : m \in \{0\} \cup M\}$ such that:\\
(i) $\|\partial^\alpha h_m - \partial^\alpha h_{m'}\|_{\mathcal H} \geq 2s > 0$ for each $m,m' \in M \cup \{0\}$ with $m \neq m'$;\\
(ii) $P_m \ll P_0$ for each $m \in M$ and
\[
 \frac{1}{\#(M)} \sum_{m \in M} K(P_m,P_0) \leq \mf a \log (\#(M))
\]
with $0 < \mf a < \frac{1}{8}$ and where $P_m$ denotes the distribution of the data when $h = h_m$ for each $m \in \{0\} \cup M$. Then:
\[
 \inf_{ \wh g} \sup_{h \in \mathcal H} \mb P_h(\|\wh g - \partial^\alpha h\|_{\mathcal H} \geq s) \geq \frac{\sqrt{\#(M)}}{1 + \sqrt {\#(M)}} \left( 1- 2 \mf a - \sqrt{\frac{2\mf a}{\log (\#(M))}} \right) > 0\,.
\]
\end{theorem}

\subsection{Proofs for Section \ref{s-exog}}

\begin{proof}[\textbf{Proof of Lemma \ref{lem-bck}}]
We first prove \textbf{Result (1)}. Let $P_{J-1,z} = clsp \{\phi_{01,z},\ldots,\phi_{0J-1,z}\}$ and let $P_{J-1,z}^\perp$ denote its orthogonal complement in $L^2(X_1|Z=z)$. Observe that by definition of the singular values, for each $z$ we have:
\begin{align} \label{e-singular}
 \sup_{h_z \in P_{J-1,z}^\perp: \|h_z\|_{L^2(X_1|Z=z)} = 1} \| T_z h_z \|_{L^2(W_1|Z=z)}^2 & = \sup_{h_z \in P_{J-1,z}^\perp: \|h_z\|_{L^2(X_1|Z=z)} = 1}  \langle (T_z^* T_z^{\phantom *}) h_z,h_z \rangle_{X_1|Z=z} \notag \\
 & = \mu_{J,z}^2\,.
\end{align}
Then let $P_{J-1}^\perp = \{ h(x_1,z) \in L^2(X) : h(\cdot,z) \in P_{J-1,z}^\perp$ for each $z\}$. Note that $\phi_{0j} \in \{h \in P_{J-1}^\perp : \|h(\cdot,z)\|_{L^2(X_1|Z=z)}=1\,\forall z\}$ for each $j \geq J$. Then:
\begin{align}
 \tau_J^{-2} & = \inf_{h \in \Psi_J : \|h\|_{L^2(X)=1}} \|Th\|^2_{L^2(W)} \notag \\
 & \leq \inf_{h \in \Psi_J \cap P_{J-1}^\perp : \|h\|_{L^2(X_1|Z=z)=1 }\,\forall z} \|Th\|^2_{L^2(W)} \notag \\
 & \leq \sup_{h \in \Psi_J \cap P_{J-1}^\perp : \|h\|_{L^2(X_1|Z=z)=1 }\,\forall z} \|Th\|^2_{L^2(W)} \notag \\
 & \leq \sup_{h \in P_{J-1}^\perp : \|h\|_{L^2(X_1|Z=z)=1 }\;\forall z} \|Th\|^2_{L^2(W)} \label{e-bck1}\,.
\end{align}
Let $F_Z$ denote the distribution of $Z$. For any $h \in P_{J-1}^\perp$ let $h_z(x_1) = h(x_1,z)$ and observe that $h_z \in P_{J-1,z}^\perp$. By iterated expectations and (\ref{e-singular}), for any $h \in P^\perp_{J-1}$ with $\|h_z\|_{L^2(X_1|Z=z)} = 1$ for each $z$, we have:
\begin{align}
 \|Th\|^2_{L^2(W)} & = \int \| E[h(X_{1i},z)|W_{1i},Z_i =z] \|_{L^2(W_1|Z=z)}^2 \,\mathrm dF_Z(z) \notag \\
 & = \int \| T_z h_z \|_{L^2(W_1|Z=z)}^2 \,\mathrm dF_Z(z) \notag \\
 & \leq \int \mu_{J,z}^2 \|h_z\|_{L^2(X_1|Z=z)}^2 \,\mathrm dF_Z(z) \notag \\
 & = \int \mu_{J,z}^2 \,\mathrm dF_Z(z) = E[\mu_{J,Z_i}^2] \,. \label{e-bck2}
\end{align}
It follows by substituting (\ref{e-bck2}) into (\ref{e-bck1}) that $\tau_J \geq E[\mu_{J,Z_i}^2]^{-1/2}$.

To prove \textbf{Result (2)}, note that any $h \in \Psi_J$ with $h \neq 0$ can be written as $\sum_{j=1}^J a_j \phi_{0j}$ for constants $a_j = a_j(h)$ where
\[
 \|h\|^2_{L^2(X)}  = E \left[ E \left[ \left. \left( \sum_{j=1}^J a_j \phi_{0j}(X_{1i},Z_i) \right)^2 \right| Z_i \right] \right] = \sum_{j=1}^J a_j^2
\]
since $E[\phi_{0j,z}(X_i) \phi_{0k,z}(X_i)|Z_i = z] = \delta_{jk}$ where $\delta_{jk}$ denotes the Kronecker delta. Moreover:
\begin{align*}
 \|T h\|^2_{L^2(W)} & = E \left[ \left(  E \left[ \left. \sum_{j=1}^J a_j \phi_{0j}(X_{1i},Z_i)  \right| W_{1i},Z_i \right] \right)^2 \right]\\
 & = E \left[ \left( E \left[ \left. \sum_{j=1}^J a_j \phi_{0j,Z_i}(X_{1i}) \right| W_{1i},Z_i \right] \right)^2 \right] \\
 & = E \left[ \left( \sum_{j=1}^J a_j \mu_{j,Z_i} \phi_{1j,Z_i}(W_{1i})\right)^2 \right] \\
 & = E \left[ E \left[ \left. \left( \sum_{j=1}^J a_j \mu_{j,Z_i} \phi_{1j,Z_i}(W_{1i})\right)^2 \right| Z_i \right] \right]  = \sum_{j=1}^J a_j^2  E \left[ \mu_{j,Z_i}^2 \right] \geq \|h\|^2_{L^2(X)} E[\mu_{J,Z_i}^2]
\end{align*}
since $E[\phi_{1j,z}(W_{1i}) \phi_{1k,z}(W_{1i})|Z_i = z] = \delta_{jk}$. Therefore,
\[
 \tau_J =  \sup_{h \in \Psi_J} \frac{\|h\|_{L^2(X)}}{\|Th\|_{L^2(W)}} \leq \frac{1}{E[\mu_{J,Z_i}^2]^{1/2}}
\]
as required.
\end{proof}

\subsection{Proofs for Appendix \ref{s-pw} and Section \ref{s-ucb}}

Since the proofs for uniform inference theories (in Section \ref{s-ucb}) built upon that for the pointwise normality Theorem \ref{t-dist} (in Appendix \ref{s-pw}), we shall present the proof of Theorem \ref{t-dist} first.

\subsubsection{Proofs for Appendix \ref{s-pw}}

\begin{proof}[\textbf{Proof of Theorem \ref{t-dist}}]
We first prove \textbf{Result (1)}. By Assumption \ref{a-functional-prime}'(a) or \ref{a-functional-prime}'(b)(i)(ii) we have:
\[
 \sqrt n   \frac{(f(\wh h) - f(h_0))}{\sigma_n(f)}  = \sqrt n \frac{D f(h_0)[\wh h - \widetilde h]}{\sigma_n(f)} + o_p(1)\,.
\]
Define
\[
 Z_n (W_i)  = \frac{(D f(h_0)[\psi^J])' [S' G_b^{-1} S]^{-1} S' G_b^{-1} b^K(W_i)}{\sigma_n(f)}  = \Pi_K T u_n(f)(W_i)
\]
where $ u_n(f) = v_n(f)/\sigma_n(f)$ is the scaled sieve 2SLS Riesz representer. Note that $E[(Z_n(W_i)u_i)^2] = 1$. Then
\begin{eqnarray*}
 \sqrt n \frac{D f(h_0)[\wh h - \widetilde h]}{\sigma_n(f)}
 & = &  \frac{1}{\sqrt n} \sum_{i=1}^n Z_n(W_i)u_i \label{e-dist} \\
 & & + \frac{(D f(h_0)[\psi^J])'((\wh S'\wh G_b^{-}\wh S)^- \wh S' \wh G_b^{-} - (S'G_b^{-1} S)^{-1} S' G_b^{-1} )(B'u/\sqrt n)}{\sigma_n(f)} \\
 & =: & T_1 + T_2 \,.\notag
\end{eqnarray*}
We first show $T_1 \to_d N(0,1)$ by the Lindeberg-Feller theorem. To verify the Lindeberg condition, note that
\begin{eqnarray*}
 |Z_n(W_i)| & \leq & \frac{\left\| (D f(h_0)[\psi^J])'(S'G_b^{-1}S)^{-1}S' G_b^{-1/2} \right\| \left\|  G_b^{-1/2} b^K(W_i)\right\|}{ (\inf_w E[u_i^2|W_i = w])^{1/2}\left\| (D f(h_0)[\psi^J])'(S'G_b^{-1}S)^{-1}S' G_b^{-1/2} \right\| } \quad \leq \quad  \underline \sigma^{-1} \zeta_b(K)
\end{eqnarray*}
by the Cauchy-Schwarz inequality and Assumption \ref{a-residuals}(iii). Therefore,
\begin{eqnarray*}
E[u_i^2 Z_n(W_i)^2 \{ |Z_n(W_i)u_i | > \eta \sqrt n\}]
& \leq & \sup_w E[u_i^2 \{ |u_i | \gtrsim \eta (\sqrt n/\zeta_b(K))\}|W_i = w] \quad = \quad o(1)
\end{eqnarray*}
by Assumption \ref{a-residuals}(iv') and the condition on $J$. Therefore, $T_1 \to_d N(0,1)$.

For $T_2$, observe that
\begin{eqnarray*}
 |T_2| & = & \left| \frac{(D f(h_0)[\psi^J])'((\wh G_b^{-1/2}\wh S)^-_l \wh G_b^{-1/2} G_b^{1/2} - (G_b^{-1/2}S)^-_l )(G_b^{-1/2} B'u/\sqrt n)}{\sigma_n(f)} \right| \\
 & = & \left| \frac{[(D f(h_0)[\psi^J])' (G_b^{-1/2}S)^-_l ] G_b^{-1/2} S\{(\wh G_b^{-1/2}\wh S)^-_l \wh G_b^{-1/2} G_b^{1/2} - (G_b^{-1/2}S)^-_l \}(G_b^{-1/2} B'u/\sqrt n)}{\sigma_n(f)} \right| \\
 & \leq & \frac{\left\| (D f(h_0)[\psi^J])'(G_b^{-1/2}S)^{-}_l \right\| \left\| G_b^{-1/2} S\{(\wh G_b^{-1/2}\wh S)^-_l \wh G_b^{-1/2} G_b^{1/2} - (G_b^{-1/2}S)^-_l \} \right\|  \left\|  G_b^{-1/2} B'u/\sqrt n \right\|}{ (\inf_w E[u_i^2|W_i = w])^{1/2}\left\| (D f(h_0)[\psi^J])'(G_b^{-1/2}S)^{-}_l  \right\| } \\
 & \leq & \underline \sigma^{-1} \left\| G_b^{-1/2} S\{(\wh G_b^{-1/2}\wh S)^-_l \wh G_b^{-1/2} G_b^{1/2} - (G_b^{-1/2}S)^-_l \} \right\| \left\|  G_b^{-1/2} B'u/\sqrt n \right\| \\
 & = & O_p( s_{JK}^{-1} \zeta \sqrt{(J\log J)/n} )
\end{eqnarray*}
where the first inequality is by the Cauchy-Schwarz inequality, the second is by Assumption \ref{a-residuals}(iii), and the final line is by Lemmas \ref{lem-SGl2}(c) and \ref{lem-BuL2}. The result follows by the equivalence $\tau_J \asymp s_{JK}^{-1}$ (see Lemma \ref{lem-ip}) and the condition $\tau_J \zeta \sqrt{(J \log n)/n} = o(1)$.

\textbf{Result (2)} follows directly from Result (1) and Lemma \ref{lem-varest}.
\end{proof}

\begin{lemma}\label{lem-varest}
Let Assumptions \ref{a-data}(iii), \ref{a-residuals}(i)--(iii), \ref{a-sieve}(iii) and \ref{a-approx}(i) hold, $\tau_J \zeta \sqrt{(\log n)/n} = o(1)$, and Assumption \ref{a-functional-prime}'(b)(iii) hold (with $\eta_n'=0$ if $f(\cdot )$ is linear). Let $\|\wh h - h_0\|_\infty = O_p(\delta_{h,n})= o_p(1)$, and $\delta_{V,n}\equiv\big[ \zeta_{b,K}^{(2+\delta)/\delta} \sqrt{(\log K)/n} \big]^{\delta/(1+\delta)}  + \tau_J \zeta \sqrt{(\log J)/n} +\delta_{h,n} $. Then:
\[
 \left| \frac{\wh\sigma_n(f)}{\sigma_n(f)}-1\right| = O_p( \delta_{V,n} + \eta_n')=o_p (1)\,.
\]
\end{lemma}

\begin{proof}[\textbf{Proof of Lemma \ref{lem-varest}}]
First write
\begin{eqnarray*}
\frac{\wh\sigma_n(f)^2}{\sigma_n(f)^2} -1
 &=& \left( \frac{\wh \gamma_n' \Omega^o \wh \gamma_n}{\sigma_n(f)^2} -1 \right) + \frac{\wh \gamma_n' (\wh \Omega^o - \Omega^o) \wh \gamma_n}{\sigma_n(f)^2} \\
 &=& \left( \frac{(\wh \gamma_n - \gamma_n)' \Omega^o (\wh \gamma_n + \gamma_n)}{\sigma_n(f)^2} \right) + \frac{\wh \gamma_n' (\wh \Omega^o - \Omega^o) \wh \gamma_n}{\sigma_n(f)^2} \quad =: \quad T_1 + T_2
\end{eqnarray*}
where
\begin{align*}
 \wh \Omega^o & = G_b^{-1/2}\wh \Omega G_b^{-1/2} & \wh \gamma_n & = \textstyle G_b^{1/2} \wh G_b^{-1}  \wh S [ \wh S'\wh   G_b^{-1}  \wh S]^{-1} Df( \wh h)[\psi^J] \\
 \Omega^o & = G_b^{-1/2} \Omega G_b^{-1/2} & \gamma_n & = \textstyle G_b^{-1/2}  S [ S'  G_b^{-1}  S]^{-1} Df( h_0)[\psi^J]
\end{align*}
and observe that $\gamma_n'\Omega^o \gamma_n = \sigma_n(f)^2$ and $\wh \gamma_n'\wh \Omega^o \wh \gamma_n = \wh\sigma_n(f)^2$.

Control of $T_1$: We first show that
\begin{equation} \label{e-gambd1}
 \frac{\|\wh \gamma_n - \gamma_n\|_{\ell^2}}{\sigma_n(f)} = O_p( \tau_J \zeta \sqrt{(\log J)/n} + \eta_n') = o_p(1)\,.
\end{equation}
To simplify notation, let
\[
\partial = \frac{Df(h_{0})[\psi^J]}{s_n(f)}  \quad \mbox{and} \quad
\wh{\partial } = \frac{Df(\wh h)[\psi^J]}{s_n(f)}
\]
and note that $\|\partial' (G_b^{-1/2} S)^-_l\|_{\ell^2} = s_n(f)/\sigma_n(f) \asymp 1$ under Assumptions \ref{a-residuals}(i)(iii) and that $\wh \partial  = \partial$ if $f(\cdot)$ is linear. Then we have:
\begin{align*}
 \frac{\|\wh \gamma_n - \gamma_n\|_{\ell^2}}{\sigma_n(f)}
 & = \| \wh \partial ' (\wh G_b^{-1/2} \wh S)^-_l \wh G_b^{-1/2}G_b^{1/2} - \partial ' (G_b^{-1/2} S)^-_l \|_{\ell^2} \notag \\
 & \leq \| \wh \partial ' (G_b^{-1/2} S)^-_l\|_{\ell^2} \| G_b^{-1/2} S\{(\wh G_b^{-1/2} \wh S)^-_l\wh G_b^{-1/2}G_b^{1/2} - (G_b^{-1/2} S)^-_l\}\|_{\ell^2} + \underline \sigma^{-1} \| (\wh \partial' - \partial')(G_b^{-1/2} S)^-_l) \|_{\ell^2} \notag \\
 & = O_p(1) \times O_p (s_{JK}^{-1} \zeta \sqrt{(\log J)/n}) + \underline \sigma^{-1}\frac{\|\Pi_K T(\wh v_n(f)-v_n(f))\|_{L^2(W)}}{s_n(f)} \notag \\
 & = O_p(1) \times O_p (s_{JK}^{-1} \zeta \sqrt{(\log J)/n}) + O_p(\eta_n') \notag
\end{align*}
where the third line is Lemma \ref{lem-SGl2}(c) and the final line is by Assumption \ref{a-functional-prime}'(b)(iii). Therefore, (\ref{e-gambd1}) holds by the equivalence $s_{JK}^{-1} \asymp \tau_J$ (Lemma \ref{lem-ip}) and the condition $\tau_J \zeta \sqrt{(\log n)/n}$.

Finally, since all eigenvalues of $\Omega^o$ are bounded between $\underline \sigma^2$ and $\overline \sigma^2$ under Assumption \ref{a-residuals}(i)(iii), it follows from (\ref{e-gambd1}) and Cauchy-Schwarz that $|T_1| =  o_p(1)$.

Control of $T_2$: Equation (\ref{e-gambd1}) implies that $\|\wh \gamma_n\|/\sigma_n(f) = O_p(1)$. Therefore, $|T_2| \leq O_p(1) \times \|\wh \Omega^o - \Omega^o\|_{\ell^2}=o_p(1)$ by  Lemma \ref{lem-omega}.
\end{proof}

\begin{lemma}\label{lem-omega}
Let Assumptions \ref{a-residuals}(i)(ii) hold, let $\zeta_{b,K} \sqrt{(\log K)/n} = o(1)$, and let $\|\wh h - h_0\|_\infty = O_p(\delta_{h,n})$ with $\delta_{h,n} = o(1)$. Then:
\[
 \|\wh \Omega^o - \Omega^o \|_{\ell^2} = O_p\Big( \big( \zeta_{b,K}^{(2+\delta)/\delta} \sqrt{(\log K)/n} \big)^{\delta/(1+\delta)} + \delta_{h,n}\Big)
\]
\end{lemma}

\begin{proof}[\textbf{Proof of Lemma \ref{lem-omega}}]
By the triangle inequality:
\begin{eqnarray*}
 \|\wh \Omega^o - \Omega^o \|_{\ell^2} & \leq &  \left\| G_b^{-1/2}\left( \frac{1}{n} \sum_{i=1}^n u_i^2 b^K(W_i) b^K(W_i)' \right) G_b^{-1/2} \right\|_{\ell^2} \\
 & & +  \left\| G_b^{-1/2}\left( \frac{1}{n} \sum_{i=1}^n 2 u_i(\wh u_i - u_i) b^K(W_i) b^K(W_i)' \right) G_b^{-1/2} \right\|_{\ell^2} \\
 & & +  \left\| G_b^{-1/2}\left( \frac{1}{n} \sum_{i=1}^n (\wh u_i - u_i)^2 b^K(W_i) b^K(W_i)' \right) G_b^{-1/2} \right\|_{\ell^2} \\
 & \leq & O_p ( ( \zeta_{b,K}^{(2+\delta)/\delta} \sqrt{(\log K)/n} )^{\delta/(1+\delta)}  )  + \|\wh h - h_0\|_{\infty} \times O_p(1) + \|\wh h - h_0\|_{\infty}^2 \times O_p(1)
\end{eqnarray*}
where the first term may easily be deduced from the proof of Lemma 3.1 of \cite{ChenChristensen-reg}, the second then follows because $2 u_i(\wh u_i - u_i) \leq 2(1+u_i^2) \|\wh h - h_0\|_\infty$, and the third follows similarly because $\|\wh G_b^o\|_{\ell^2} = O_p(1)$ by Lemma \ref{lem-matl2}.
\end{proof}

\subsubsection{Proofs for Section \ref{s-ucb}}

\begin{proof}[\textbf{Proof of Lemma \ref{lem-bahadur}}]
Recall that
\begin{align*}
 \wh {\mb Z}_n(t) & =  \frac{(D f_t(h_0)[\psi^J])' [ S'  G_b^{-1}  S]^{-1}  S'  G_b^{-1/2}}{\sigma_n(f_t)} \left(\frac{1}{\sqrt n} \sum_{i=1}^n G_b^{-1/2} b^K(W_i) u_i\right)~, \\
 \mb Z_n(t) & =  \frac{(D f_t(h_0)[\psi^J])' [ S'  G_b^{-1}  S]^{-1}  S'  G_b^{-1/2}}{\sigma_n(f_t)}  \mathcal Z_n~~\text{where}~\mathcal Z_n \sim N(0,\Omega^o )~\text{with}~\Omega^o = G_b^{-1/2}\Omega G_b^{-1/2}.
\end{align*}
\textbf{Step 1: Uniform Bahadur representation.}
By Assumption \ref{a-functional}(a) or (b)(i)(ii), we have
\begin{eqnarray*}
 \sup_{t \in \mathcal T} \left| \sqrt n \frac{f_t(\wh h) - f_t(h_0)}{\wh\sigma_n(f_t)} - \wh{\mb Z}_n(t)\right|
  & \leq & \sup_{t \in \mathcal T} \left| \sqrt n \frac{ Df_t(h_0)[\wh h - \widetilde h]}{\sigma_n(f_t)} - \wh{\mb Z}_n(t)\right| + O_p(\eta_n)  \times \sup_{t \in \mathcal T} \left| \frac{\sigma_n(f_t)}{\wh \sigma_n(f_t)}  \right| \\
 & & +  \sup_{t \in \mathcal T} \left| \frac{\sigma_n(f_t)}{\wh\sigma_n(f_t)} -1 \right| \times \sup_{t \in \mathcal T} \left| \sqrt n \frac{ D f_t(h_0)[\wh h - \widetilde h]}{\sigma_n(f_t)} \right| \\
 & =: &  T_1 + T_2 + T_3\,.
\end{eqnarray*}
Control of $T_1$: As in the proof of Theorem \ref{t-dist},
\begin{eqnarray*}
 T_1 & = & \sup_{t \in \mathcal T} \left| \frac{(D f_t(h_0)[\psi^J])'((\wh G_b^{-1/2}\wh S)^-_l \wh G_b^{-1/2} G_b^{1/2} - (G_b^{-1/2}S)^-_l )(G_b^{-1/2} B'u/\sqrt n)}{\sigma_n(f_t)} \right| \\
 & = & \sup_{t \in \mathcal T} \left| \frac{[(D f_t(h_0)[\psi^J])' (G_b^{-1/2}S)^-_l ] G_b^{-1/2} S\{(\wh G_b^{-1/2}\wh S)^-_l \wh G_b^{-1/2} G_b^{1/2} - (G_b^{-1/2}S)^-_l \}(G_b^{-1/2} B'u/\sqrt n)}{\sigma_n(f_t)} \right| \\
 & \leq & \sup_{t \in \mathcal T} \frac{\left\| (D f_t(h_0)[\psi^J])'(G_b^{-1/2}S)^{-}_l \right\| \left\| G_b^{-1/2} S\{(\wh G_b^{-1/2}\wh S)^-_l \wh G_b^{-1/2} G_b^{1/2} - (G_b^{-1/2}S)^-_l \} \right\|  \left\|  G_b^{-1/2} B'u/\sqrt n \right\|}{ (\inf_w E[u_i^2|W_i = w])^{1/2}\left\| (D f_t(h_0)[\psi^J])'(G_b^{-1/2}S)^{-}_l  \right\| } \\
 & \leq & \underline \sigma^{-1} \left\| G_b^{-1/2} S\{(\wh G_b^{-1/2}\wh S)^-_l \wh G_b^{-1/2} G_b^{1/2} - (G_b^{-1/2}S)^-_l \} \right\| \left\|  G_b^{-1/2} B'u/\sqrt n \right\| \\
 & = & O_p( \tau_J \zeta \sqrt{(J\log J)/n} )=o_p(r_n)
\end{eqnarray*}
where the first inequality is by the Cauchy-Schwarz inequality, the second is by Assumption \ref{a-residuals}(iii), and the final line is by Lemmas \ref{lem-SGl2}(c) and \ref{lem-BuL2} and the equivalence $\tau_J \asymp s_{JK}^{-1}$ (see Lemma \ref{lem-ip}), and the last $o_p( r_n )$ is by Assumption \ref{a-Lipschitz}(ii.2).

Control of $T_2$: Lemma \ref{lem-varest-u} below shows that
\[
 \sup_{t \in \mathcal T} \left| \frac{\wh\sigma_n(f_t)}{\sigma_n(f_t)} -1 \right| = O_p( \delta_{V,n} + \eta_n') = o_p(1)
\]
from which it follows that $T_2 = O_p(\eta_n) \times O_p(1) = O_p(\eta_n)$.

Control of $T_3$: By Lemma \ref{lem-varest-u} below and the bound for $T_1$, we have:
\begin{align*}
 T_3 & = O_p( \delta_{V,n} + \eta_n') \times \sup_{t \in \mathcal T} \left| \sqrt n \frac{ Df_t(h_0)[\wh h - \widetilde h]}{\sigma_n(f_t)} \right| \\
 & = O_p( \delta_{V,n}  + \eta_n') \times \big[\sup_{t \in \mathcal T} | \wh{\mb Z}_n(t) | +o_p(r_n)\big]  \\
 & = O_p( \delta_{V,n}  + \eta_n' ) \times \big[o_p(r_n) + \sup_{t \in \mathcal T} | \mb Z_n(t)| +o_p(r_n)\big] \\
 & = O_p( \delta_{V,n}  + \eta_n' ) \times \big[ o_p(r_n) + O_p(c_n) \big]
\end{align*}
where the second-last line is by display (\ref{e-Zpt1}) step 2 below and the final line is by Lemma \ref{lem-esup} below. Therefore we have proved:
\begin{align}
 \sup_{t \in \mathcal T} \left| \sqrt n \frac{f_t(\wh h) - f_t(h_0)}{\wh\sigma_n(f_t)} - \wh{\mb Z}_n(t)\right| & =O_p( \tau_J \zeta \sqrt{(J\log J)/n} ) + O_p(\eta_n) + O_p( \delta_{V,n} + \eta_n' ) \times \big[ o_p(r_n) + O_p(c_n) \big] \notag \\
 & = o_p( r_n ) \label{e-unifzhat}
\end{align}
where the final line is by Assumption \ref{a-Lipschitz}(ii.2).

\textbf{Step 2: Approximating $\wh{\mb Z}_n(t)$ by a Gaussian process $\mb Z_n(t)$.} We use Yurinskii's coupling \cite[Theorem 10, p. 244]{PollardUGMTP} to show that there exists a sequence of $N(0,\Omega ^o)$ random vectors $\mathcal Z_n$ such that
\begin{equation}\label{e-yurinskii}
 \left\| \frac{1}{\sqrt n} \sum_{i=1}^n G_b^{-1/2} b^K(W_i) u_i - \mathcal Z_n  \right\|_{\ell^2} = o_p ( r_n) \,.
\end{equation}
By Assumption \ref{a-residuals}(iv) we have
\begin{eqnarray*}
  \sum_{i=1}^n E[ \| n^{-1/2} G_b^{-1/2} b^K(W_i) u_i \|^3_{\ell^2}] \lesssim n^{-1/2} \zeta_{b,K} E[\|G_b^{-1/2} b^K(W_i)\|^2_{\ell^2}] = \frac{\zeta_{b,K} K}{\sqrt n} = O \left( \frac{\zeta_{b,K} J}{\sqrt n} \right) \,.
\end{eqnarray*}
Existence of $\mathcal Z_n$ follows under the condition (Assumption \ref{a-Lipschitz}(ii.1))
\[
 \frac{\zeta_{b,K} J^2}{r_n^3 \sqrt n} = o(1)\,.
\]
The process $\mb Z_n(t)$ is a centered Gaussian process with the covariance function
\begin{equation} \label{e-cov-fn}
 E[\mb Z_n(t_1)\mb Z_n(t_2)] = \frac{(D f_{t_1}(h_0)[\psi^J])'[S' G_b^{-1} S]^{-1} S' G_b^{-1} \Omega G_b^{-1} S[S' G_b^{-1} S]^{-1} Df_{t_2}(h_0)[\psi^J]}{\sigma_n(f_{t_1})\sigma_n(f_{t_2})} \,.
\end{equation}

Now observe that
\begin{equation} \label{e-unif-equiv}
 \sup_{t \in \mathcal T} \left\| \frac{(Df_t(h_0)[\psi^J])' [ S'  G_b^{-1}  S]^{-1}  S'  G_b^{-1/2}}{\sigma_n(f_t)} \right\|_{\ell^2} = \sup_{t \in \mathcal T} \frac{{s_n(f_t)}}{\sigma_n(f_t)} \asymp 1
\end{equation}
by Assumption \ref{a-residuals}(i)(iii).
Therefore,
\begin{equation}\label{e-Zpt1}
 \sup_{t \in \mathcal T} \left| \wh{\mb Z}_n(t) - \mb Z_n(t) \right| = o_p(r_n)
\end{equation}
by equations (\ref{e-yurinskii}) and (\ref{e-unif-equiv}) and Cauchy-Schwarz.
\end{proof}

\begin{lemma}\label{lem-varest-u}
Let Assumptions \ref{a-data}(iii), \ref{a-residuals}(i)--(iii), \ref{a-sieve}(ii)(iii) and \ref{a-approx}(i) hold, $\tau_J \zeta \sqrt{(\log n)/n} = o(1)$, and Assumption \ref{a-functional}(b)(iii) hold (with $\eta_n'=0$ if $f_t(\cdot)$ is linear). Let $\|\wh h - h_0\|_\infty = O_p(\delta_{h,n})$ with $\delta_{h,n} = o(1)$. Then:
\[
 \sup_{t \in \mathcal T} \left| \frac{\sigma_n(f_t)}{\wh\sigma_n(f_t)} -1 \right| = O_p( \delta_{V,n} + \eta_n')=o_p(1).
\]
\end{lemma}

\begin{proof}[\textbf{Proof of Lemma \ref{lem-varest-u}}]
The proof follows by identical arguments to the proof of Lemma \ref{lem-varest}.
\end{proof}

\begin{proof}[\textbf{Proof of Theorem \ref{t-dist-u}}]
Recall that
\begin{equation*}
 \mb Z_n^*(t) = \frac{(D f_t(\wh h)[\psi^J])' [\wh S' \wh G_b^{-1} \wh S]^{-1} \wh S' \wh G_b^{-1}}{\wh\sigma (f_t)} \left(\frac{1}{\sqrt n} \sum_{i=1}^n  b^K(W_i) \wh u_i \varpi_i \right) \quad \mbox{for each $t \in \mathcal T$} \,.
\end{equation*}
\textbf{Step 1: Approximating $\mb Z_n^*(t)$ by a Gaussian process $\widetilde {\mb Z}_n^*$.} Each of the terms $n^{-1/2} G_b^{-1/2} b^K(W_i) \wh  u_i \varpi_i$ is centered under $\mb P^*$ because $E[\varpi_i|Z^n] = 0$ for $i=1,\ldots,n$. Moreover,
\[
 \sum_{i=1}^n E[ (n^{-1/2} G_b^{-1/2} b^K(W_i) \wh  u_i \varpi_i)(n^{-1/2} G_b^{-1/2} b^K(W_i) \wh  u_i \varpi_i)'| Z^n] = \wh \Omega^o
\]
where $G_b^{-1/2} \wh \Omega G_b^{-1/2} = \wh \Omega^o$, and
\[
 \sum_{i=1}^n E[ \| n^{-1/2} G_b^{-1/2} b^K(W_i) \wh u_i \varpi_i \|^3_{\ell^2}|Z^n] \lesssim n^{-3/2} \sum_{i=1}^n E[\| G_b^{-1/2} b^K(W_i)\|_{\ell^2}^2 | \wh u_i |^3]
\]
because $E[|\varpi_i|^3|Z^n] < \infty$ uniformly in $i$, and where
\[
  n^{-3/2} \sum_{i=1}^n E[\| G_b^{-1/2} b^K(W_i)\|_{\ell^2}^2 | \wh u_i |^3] \lesssim \frac{\zeta_{b,K} K}{\sqrt n}
\]
holds wpa1 (by Markov's inequality using $|\wh u_i|^3 \lesssim |u_i|^3 + \|\wh h - h_0\|_\infty^3$ and Assumption \ref{a-residuals}(iv)). A second application of Yurinskii's coupling conditional on the data $Z^n$ then yields existence of a sequence of $N(0,\wh \Omega^o)$ random vectors $\mathcal Z_n^*$ such that
\[
 \left\| \frac{1}{\sqrt n} \sum_{i=1}^n G_b^{-1/2} b^K(W_i) \wh u_i \varpi_i - \mathcal Z_n^* \right\|_{\ell^2} = o_{p^*} ( r_n )
\]
wpa1. Therefore:
\begin{equation} \label{e-Zpt2}
 \sup_{t \in \mathcal T} \left| \mb Z_n^*(t) - \frac{(Df_t(\wh h)[\psi^J])' [\wh S' \wh G_b^{-1} \wh S]^{-1} \wh S' \wh G_b^{-1/2}}{\wh\sigma_n(f_t)}\mathcal Z_n^* \right| = o_{p^*}(r_n)
\end{equation}
wpa1. Now observe that we can define a centered Gaussian process $\widetilde {\mb Z}_n^*$ under $\mb P^*$ by
\[
 \widetilde {\mb Z}_n^*(t) \equiv \frac{(D f_t( h_0)[\psi^J])' [ S'  G_b^{-1}  S]^{-1}  S'  G_b^{-1/2}}{\sigma_n(f_t)} (\Omega^o)^{1/2} (\wh \Omega^o)^{-1/2} \mathcal Z_n^*
\]
which has the same covariance function as $\mb Z_n$ (see equation (\ref{e-cov-fn})) whenever $\wh \Omega^o$ is invertible (which it is wpa1). Therefore, by Lemma \ref{lem-varest-boot} below we have:
\begin{equation} \label{e-Zpt3}
 \sup_{t \in \mathcal T} \left|  \frac{(D f_t(\wh h)[\psi^J])' [\wh S' \wh G_b^{-1} \wh S]^{-1} \wh S' \wh G_b^{-1/2}}{\wh\sigma_n(f_t)} \mathcal Z_n^* - \widetilde{\mb Z}_n^*(t)\right| = o_p^*(r_n)
\end{equation}
wpa1. It follows from equations (\ref{e-Zpt2}) and (\ref{e-Zpt3}) and Assumption \ref{a-Lipschitz}(ii) that
\begin{align}
 \sup_{t \in \mathcal T} \left| \mb Z_n^*(t) - \widetilde {\mb Z}_n^*(t) \right| \quad = \quad o_{p^*} ( r_n ) + o_p^*(r_n) \quad = \quad   o_p^*(r_n) \label{e-Zpt5}
\end{align}
wpa1.

\textbf{Step 2: Consistency.} By Lemma \ref{lem-bahadur} and display (\ref{e-Zpt1}), we have:
\[
 \sup_{t \in \mathcal T} \left| \frac{\sqrt n (f_t(\wh h) - f_t(h_0))}{\wh\sigma_n(f_t)} - \mb Z_n(t)\right|
   =  o_p(r_n) + o_p(r_n)
  = o_p(r_n) \,.
\]
Therefore, we may choose a sequence of positive constants $\epsilon_n$ with $\epsilon_n = o(1)$ such that
\begin{equation} \label{e-tstat-u-exp}
 \sup_{t \in \mathcal T} \left| \frac{\sqrt n (f_t(\wh h) - f_t(h_0))}{\wh\sigma_n(f_t)} - \mb Z_n(t)\right| \leq \epsilon_n r_n
\end{equation}
holds wpa1. By an anti-concentration inequality \cite[Theorem 2.1]{CCK-AC} and Lemma \ref{lem-esup} below, we have:
\begin{eqnarray*}
 \sup_{s \in \mb R}  \mb P\left( \sup_{t \in \mathcal T} |\mb Z_n (t) - s| \leq \epsilon_n r_n \right) \lesssim \epsilon_n r_n E[\sup_{t \in \mathcal T} |\mb Z_n(t)|]\lesssim \epsilon_n r_n c_n = o(1)
\end{eqnarray*}
due to $r_n c_n \lesssim  1$ (Assumption \ref{a-Lipschitz}(ii.1)). This, together with (\ref{e-tstat-u-exp}), yields:
\begin{equation} \label{e-diffprob}
 \sup_{s \in \mb R} \left| \mb P \left( \sup_{t \in \mathcal T} \left|\frac{\sqrt n (f_t(\wh h) - f_t(h_0))}{\wh\sigma_n(f_t)} \right| \leq s \right) - \mb P\left( \sup_{t \in \mathcal T} |\mb Z_n(t)| \leq s\right) \right| = o(1)\,.
\end{equation}

Moreover, by (\ref{e-Zpt5}) we may choose a sequence of positive constants $\epsilon_n'$ with $\epsilon_n' = o(1)$ such that
\[
 \sup_{t \in \mathcal T} \left| \mb Z_n^*(t) - \widetilde {\mb Z}_n^*(t) \right| \leq \epsilon_n' r_n
\]
holds wpa1. Similar arguments then yield:
\[
 \sup_{s \in \mb R}  \mb P^*\left( \sup_{t \in \mathcal T} |\widetilde{\mb Z}_n^*(t) - s| \leq \epsilon_n' r_n\right) \lesssim \epsilon_n' = o(1)
\]
wpa1. This, together with equation (\ref{e-Zpt5}), yields:
\begin{equation} \label{e-diffprob1}
 \sup_{s \in \mb R} \left| \mb P^*\left( \sup_{t \in \mathcal T} |\mb Z_n^*(t)| \leq s\right) - \mb P^*\left( \sup_{t \in \mathcal T} |\widetilde{\mb Z}_n^*(t)| \leq s\right) \right| = o_p(1) \,.
\end{equation}
The result is immediate from equations (\ref{e-diffprob}) and (\ref{e-diffprob1}) and the fact that
\[
 \mb P\left( \sup_{t \in \mathcal T} |\mb Z_n(t)| \leq s\right) = \mb P^*\left( \sup_{t \in \mathcal T} |\widetilde{\mb Z}_n^*(t)| \leq s\right) \quad \mbox{wpa1 in $\mb P$}
\]
holds uniformly in $s$.
\end{proof}

\begin{lemma}\label{lem-esup}
Let Assumption \ref{a-Lipschitz}(i) hold. Then: $E[{ \sup_{t \in \mathcal T}} |\mb Z_n(t)|]  \lesssim c_n$ and $\sup_{t \in \mathcal T} |\mb Z_n(t)| = O_p (c_n)$.
\end{lemma}

\begin{proof}[\textbf{Proof of Lemma \ref{lem-esup}}]
Observe that $d_n(t_1,t_2) := E[ (\mb Z_n(t_1) - \mb Z_n(t_2))^2]^{1/2}$. By Corollary 2.2.8 of \cite{vanderVaartWellner} and Assumption \ref{a-Lipschitz}(i), there exists a universal constant $C$ such that
\[
 E[{\textstyle \sup_t} |\mb Z_n(t)|] \leq E[|\mb Z_n(\bar t)|] + C \int_0^\infty \sqrt{\log N(\mathcal T,d_n,\epsilon)} \,\mathrm d \epsilon
\]
for any $\bar t \in \mathcal T$, where $E[|\mb Z_n(\bar t)|]  = \sqrt{2/\pi}$ because $\mb Z_n(\bar t) \sim N(0,1)$. Therefore, $E[{\textstyle \sup_t} |\mb Z_n(t)|] \lesssim c_n$. The second result follows by Markov's inequality.
\end{proof}

\begin{lemma}\label{lem-varest-boot}
Let Assumptions \ref{a-data}(iii), \ref{a-residuals}, \ref{a-sieve}(ii)(iii), \ref{a-approx}(i) and \ref{a-Lipschitz} hold, $\tau_J \zeta \sqrt{(\log n)/n} = o(1)$ and $\|\wh h - h_0\|_\infty = O_p(\delta_{h,n})$ with $\delta_{h,n} = o(1)$. Let Assumption \ref{a-functional}(b)(iii) hold with $\eta_n' \sqrt J = o(r_n)$ for nonlinear $f_t()$. Let $\mathcal Z_n^*$ and $\widetilde {\mb Z}_n^*(t)$ be as in the proof of Theorem \ref{t-dist-u}. Then:
\[
 \sup_{t \in \mathcal T} \left|  \frac{(Df_t(\wh h)[\psi^J])' [\wh S' \wh G_b^{-1} \wh S]^{-1} \wh S' \wh G_b^{-1/2}}{\wh\sigma_n(f_t)} \mathcal Z_n^* - \widetilde {\mb Z}_n^*(t)\right| = o_p^*(r_n) \quad \mbox{wpa1 in $\mb P$}.
\]
\end{lemma}

\begin{proof}[\textbf{Proof of Lemma \ref{lem-varest-boot}}]
First note that because $\mathcal Z_n^* \sim N(0,\wh \Omega^o)$ and the minimum and maximum eigenvalues of $\wh \Omega^o$ are uniformly bounded away from $0$ and $\infty$ wpa1 (by Lemma \ref{lem-omega} and Assumptions \ref{a-residuals}(i)(iii)), we have $\|\mathcal Z_n^*\| = O_{p^*}(\sqrt K)$ wpa1 by Chebyshev's inequality.

Now, writing out term by term we have:
\begin{align*}
 & \sup_{t \in \mathcal T} \left| \left( \frac{(Df_t(\wh h)[\psi^J])' [\wh S' \wh G_b^{-1} \wh S]^{-1} \wh S' \wh G_b^{-1/2}}{\wh\sigma_n(f_t)} - \frac{(Df_t(h_0)[\psi^J])' [ S'  G_b^{-1}  S]^{-1}  S'  G_b^{-1/2}(\Omega^o)^{1/2} (\wh \Omega^o)^{-1/2}}{\sigma_n(f_t)} \right) \mathcal Z_n^* \right| \\
 & \leq \sup_{t \in \mathcal T} \left|  \frac{ \left( Df_t(\wh h)[\psi^J] - Df_t(h_0)[\psi^J] \right)' (G_b^{-1/2}S)^-_l G_b^{-1/2}S [\wh S' \wh G_b^{-1} \wh S]^{-1} \wh S' \wh G_b^{-1/2}}{\wh\sigma_n(f_t)} \mathcal Z_n^* \right| \\
 & + \sup_{t \in \mathcal T} \left|  \frac{(D f_t(h_0)[\psi^J])'  \left( [\wh S' \wh G_b^{-1} \wh S]^{-1} \wh S' \wh G_b^{-1/2}- [ S'  G_b^{-1}  S]^{-1}  S'  G_b^{-1/2} (\Omega^o)^{1/2} (\wh \Omega^o)^{-1/2}\right) }{\sigma_n(f_t)} \mathcal Z_n^* \right| \times \sup_{t \in \mathcal T} \frac{\sigma_n(f_t)}{\wh\sigma_n(f_t)} \\
 & + \sup_{t \in \mathcal T}  \left| \frac{\sigma_n(f_t)}{\wh\sigma_n(f_t)}-1 \right| \times \sup_{t \in \mathcal T}   \left| \frac{(D f_t( h_0)[\psi^J])' [ S'  G_b^{-1}  S]^{-1}  S'  G_b^{-1/2}(\Omega^o)^{1/2} (\wh \Omega^o)^{-1/2}}{\sigma_n(f_t)}  \mathcal Z_n^* \right| \quad =: T_1 + T_2 + T_3\,.
\end{align*}

Control of $T_1$: By Cauchy-Schwarz, we have:
\begin{align*}
 T_1 & \leq \sup_{t \in \mathcal T} \frac{\|\Pi_K T(\wh v_n(f_t) - v_n(f_t))\|_{L^2(W)}}{\sigma_n(f_t)} \times \sup_{t \in \mathcal T} \frac{\sigma_n(f_t)}{\wh\sigma_n(f_t)} \times \left\|G_b^{-1/2}S [\wh S' \wh G_b^{-1} \wh S]^{-1} \wh S' \wh G_b^{-1/2} \right\|_{\ell^2} \times \|(\Omega^o)^{1/2} (\wh \Omega^o)^{-1/2}\|_{\ell^2}  \times \left\| \mathcal Z_n^*  \right\|_{\ell^2} \\
 & = o_p(\eta_n') \times O_p(1) \times O_p(1) \times O_p(1) \times O_{p^*}(\sqrt K)= o_p^*(r_n)
\end{align*}
where the first term is by Assumption \ref{a-functional}(b)(iii) (or zero if the $f_t$ are linear functionals), the second term is by Lemma \ref{lem-varest-u}, the third is by Lemma \ref{lem-SGl2}(c) (using the fact that $s_{JK} \asymp \tau_J$, see Lemma \ref{lem-ip}) and the fact that $\|G_b^{-1/2}S [ S'  G_b^{-1}S]^{-1}  S' G_b^{-1/2} \|_{\ell^2} =1$, and the fourth term is by Lemma \ref{lem-omega}. Therefore, $T_1 = O_{p^*} (\eta_n' \sqrt J) $ wpa1 (since $K \asymp J$), and is therefore $= o_p^*(r_n)$ wpa1 by the condition stated in this Lemma.

Control of $T_2$: Let $\Delta \mb Z_n(t)$ denote the Gaussian process (under $\mb P^*$) defined by
\[
 \Delta \mb Z_n(t) =  \frac{(D f_t(h_0)[\psi^J])'  \left( [\wh S' \wh G_b^{-1} \wh S]^{-1} \wh S' \wh G_b^{-1/2}- [ S'  G_b^{-1}  S]^{-1}  S'  G_b^{-1/2} (\Omega^o)^{1/2} (\wh \Omega^o)^{-1/2}\right) }{\wh\sigma_n(f_t)} \mathcal Z_n^*
\]
for each $t \in \mathcal T$. The intrinsic semi-metric $\Delta d_n(t_1,t_2)$ of $\Delta \mb Z_n(t)$ is $\Delta d_n(t_1,t_2)^2 =E^*[(\Delta \mb Z_n(t_1)-\Delta \mb Z_n(t_2))^2]$ for each $t_1,t_2 \in \mathcal T$, where $E^*$ denotes expectation under the measure $\mb P^*$. Observe that:
\begin{align*}
 \Delta d_n(t_1,t_2)
 = & \bigg\| \bigg( \frac{Df_{t_1}( h_0)[\psi^J] }{\sigma_n(f_{t_1})} - \frac{D f_{t_2}( h_0)[\psi^J] }{\sigma_n(f_{t_2})} \bigg)'  [ S'  G_b^{-1}  S]^{-1}  S'  G_b^{-1/2}(\Omega^o)^{1/2}\\
 & \times (\Omega^o)^{-1/2} G_b^{-1/2} S \left\{ [\wh S' \wh G_b^{-1} \wh S]^{-1} \wh S' \wh G_b^{-1/2}- [ S'  G_b^{-1}  S]^{-1}  S'  G_b^{-1/2} (\Omega^o)^{1/2} (\wh \Omega^o)^{-1/2} \right\}  (\wh \Omega^o)^{1/2} \bigg\|_{\ell^2} \\
 \lesssim & d_n(t_1,t_2) \times   \left\| G_b^{-1/2} S'\left\{ [\wh S' \wh G_b^{-1} \wh S]^{-1} \wh S' \wh G_b^{-1/2}- [ S'  G_b^{-1}  S]^{-1}  S'  G_b^{-1/2} (\Omega^o)^{1/2} (\wh \Omega^o)^{-1/2} \right\} \right\|_{\ell^2}
\end{align*}
wpa1, where the first line uses the fact that $\wh \Omega^o$ is invertible wpa1 and the second line uses the fact that $\Omega^o$ and $\wh \Omega^o$ have eigenvalue uniformly bounded away from $0$ and $\infty$ wpa1. It follows by Lemma \ref{lem-SGl2}(c) and Lemmas \ref{lem-omega} and \ref{lem-sqrtm} that
\begin{align*}
 & \left\| G_b^{-1/2} S'\left\{ [\wh S' \wh G_b^{-1} \wh S]^{-1} \wh S' \wh G_b^{-1/2}- [ S'  G_b^{-1}  S]^{-1}  S'  G_b^{-1/2} (\Omega^o)^{1/2} (\wh \Omega^o)^{-1/2} \right\} \right\|_{\ell^2} \\
 & \leq \left\| G_b^{-1/2} S'\left\{ [\wh S' \wh G_b^{-1} \wh S]^{-1} \wh S' \wh G_b^{-1/2}- [ S'  G_b^{-1}  S]^{-1}  S'  G_b^{-1/2} \right\}\right\|_{\ell^2} + \left\| I -  (\Omega^o)^{1/2} (\wh \Omega^o)^{-1/2} \right\|_{\ell^2} \\
 & = O_p( \tau_J \zeta\sqrt{(\log J)/n} ) +  O_p\Big( \big( \zeta_{b,K}^{(2+\delta)/\delta} \sqrt{(\log K)/n} \big)^{\delta/(1+\delta)} + \delta_{h,n}\Big) \\
 & = O_p( \delta_{V,n} ) \,.
\end{align*}
Therefore,
\[
 \Delta d_n(t_1,t_2) \leq O_p( \delta_{V,n} ) \times d_n(t_1,t_2)
\]
wpa1. Moreover, by similar arguments we have
\[
 \sup_{t \in \mathcal T} E^*[(\Delta \mb Z_n(t))^2]^{1/2} \lesssim  O_p( \delta_{V,n} )
\]
wpa1. Therefore, we can scale $\Delta \mb Z_n(t)$ by dividing through by a sequence of positive constants of order $\delta_{V,n}$ to obtain
\[
 E^*[\sup_{t \in \mathcal T}|\Delta \mb Z_n(t)| ] \lesssim O_p(  \delta_{V,n}  )\times c_n
\]
wpa1 by identical arguments to the proof of Lemma \ref{lem-esup}. Therefore,
\[
 T_2 \leq O_p(  \delta_{V,n} \times c_n ) \times  \sup_{t \in \mathcal T} \frac{\sigma_n(f_t)}{\wh\sigma_n(f_t)} = O_p( \delta_{V,n} \times c_n) \times O_p(1)
\]
wpa1 by Lemma \ref{lem-varest-u} and so $T_2 = o_{p^*} (r_n)$ under Assumption \ref{a-Lipschitz}(ii.2).

Control of $T_3$: The second term in $T_3$ is the supremum of a Gaussian process with the same distribution (under $\mb P^*$) as $\mb Z_n(t)$ (under $\mb P$). Therefore, by Lemmas \ref{lem-varest-u} and \ref{lem-esup} we have:
\begin{align*}
 T_3 & = O_p( \delta_{V,n} + \eta_n')  \times O_{p^*} (c_n)
\end{align*}
and so $T_3 = o_p^*(r_n)$ wpa1 under Assumption \ref{a-Lipschitz}(ii.2).
\end{proof}

\begin{proof}[\textbf{Proof of Remark \ref{rmk-a6suff}}]
For any $t_1,t_2 \in \mathcal T$ we have:
\begin{eqnarray}
 d_n(t_1,t_2)
 & \leq & \frac{2}{\sigma_n(f_{t_1}) \vee \sigma_n(f_{t_2})} \left\|\Omega^{1/2} G_b^{-1} S [S' G_b^{-1} S]^{-1} \left( Df_{t_1}(h_0)[\psi^J] - Df_{t_2}(h_0)[\psi^J] \right) \right\|_{\ell^2} \notag \\
 & \leq & \frac{2 \overline \sigma s_{JK}^{-1}}{\ul \sigma_n} \left\| G_\psi^{-1/2}  \left( D f_{t_1}(h_0)[\psi^J] -  Df_{t_2}(h_0)[\psi^J] \right) \right\|_{\ell^2} \notag \\
 & \lesssim & \frac{ \tau_J  \Gamma_n }{ \ul \sigma_n } \| t_1 - t_2 \|_{\ell^2}^{\gamma_n} \label{e-vlipbd}
\end{eqnarray}
where the first inequality is because $\|x /\|x\| - y/\|y\|\| \leq 2 \|x - y\|/(\|x\| \vee \|y\| )$ whenever $\|x\|,\|y\| \neq 0$ and the third is by the equivalence $s_{JK}^{-1} \asymp \tau_J$ (see Lemma \ref{lem-ip}). By (\ref{e-vlipbd}) and compactness of $\mathcal T$, we have $N(\mathcal T,d_n,\epsilon) \leq C (\tau_J \Gamma_n/(\epsilon \ul \sigma_n))^{d_T/\gamma_n} \vee 1$ for some finite constant $C$.
\end{proof}

\begin{proof}[\textbf{Proof of Corollary \ref{cor-ucb}}]
We verify the conditions of Lemma \ref{lem-bahadur} (or Theorem \ref{t-dist-u}). By assumption we may take $\ul \sigma_n \asymp \tau_J J^{a}$ with $a = \frac{1}{2} + \frac{|\alpha|}{d}$. Assumption \ref{a-functional} is therefore satisfied with $\eta_n = \sqrt n \tau_J^{-1} J^{-(p/d+1/2)}$ by Remark \ref{rmk-a5suff}(a') and Lemma \ref{lem-bias}.

The continuity condition in Remark \ref{rmk-a6suff} holds with $\Gamma_n = O(J^{a'})$ for some $a' > 0$ and $\gamma_n = 1$  since $\Psi_J$ is spanned by a B-spline basis of order $\gamma > (p \vee 2+|\alpha|)$ \cite[Section 5.3]{DeVoreLorentz}. Assumption \ref{a-Lipschitz}(i) therefore holds with $c_n = O(\sqrt{\log J})$ by Remark \ref{rmk-a6suff} because $(\tau_J \Gamma_n /(\epsilon \ul \sigma_n)) \lesssim (J^{a'-a} \epsilon^{-1})$. We can therefore take $r_n=(\log J)^{-\kappa}$ for $\kappa \in [1/2,1]$ in Assumption \ref{a-Lipschitz}(ii). The first condition in Assumption \ref{a-Lipschitz}(ii) then holds provided $J^5(\log J )^{6\kappa} /n = o(1)$. Since $\eta_n'=0$, the second condition in Assumption \ref{a-Lipschitz}(ii) holds provided
\begin{align*}
 \tau_J J \sqrt{(\log J)/n}+\sqrt n \tau_J^{-1} J^{-(p/d+1/2)} +  \Big( [J^{\frac{2+\delta}{2\delta}} \sqrt{(\log J)/n }]^{\frac{\delta}{1+\delta}} + J^{-p/d} + \tau_J \sqrt{J(\log J)/n} \Big) \sqrt{\log J} = o((\log J)^{-\kappa})
\end{align*}
(using Corollary \ref{c-deriv} for $\delta_{h,n}$). In applying Corollary \ref{c-deriv} we require that the conditions $\tau_J J/\sqrt n = O(1)$ and $J^{(2+\delta)/2\delta} \sqrt{(\log n)/n} = o(1)$ hold. Finally to apply Theorem \ref{t-dist-u} we also need $\tau_J J \sqrt{(\log J)/n} = o(1)$. Sufficient conditions for all these restrictions on $J$ are provided in the statement of this corollary. In particular, we note that $[ J^{\frac{2+\delta}{2\delta}} \sqrt{(\log n)/n}](\log J)^{\frac{1+\delta}{\delta}}$ decreases as $\delta >0$ increases. Hence the condition $J^5(\log n)^{6\kappa} /n = o(1)$ (for $\kappa \in [1/2, 1]$) implies that $[J^{\frac{2+\delta}{2\delta}} \sqrt{(\log n)/n }]^{\frac{\delta}{1+\delta}} (\log J)^{\kappa +0.5} =o(1)$ holds for all $\delta \geq 1$.
\end{proof}

\subsection{Proofs for Section \ref{s-inference}}

\begin{proof}[\textbf{Proof of Theorem \ref{t-cs}}]
The result will follow from Theorem \ref{t-dist}. Assumption \ref{a-residuals}(i)--(iii)(iv') is satisfied under Assumption CS(iii). Assumption \ref{a-sieve}(i)(ii)(iii) is satisfied by Assumption CS (iv) and the second part of Assumption CS(v), noting that $\zeta_{\psi,J} = O(\sqrt J)$ and $\zeta_{b,K} = O(\sqrt K) = O(\sqrt J)$. Since the basis spanning $\Psi_J$ is a Riesz basis for $T$ Assumption \ref{a-approx} is satisfied with $\tau_J \asymp \mu_J^{-1}$.

It remains to verify Assumption \ref{a-functional-prime}'(b). By the Riesz basis property and Assumption \ref{a-residuals}(i)--(iii) we have $[\sigma_n(f_{CS})]^2 \asymp \sum_{j=1}^J (a_j/\mu_j)^2 \lesssim J \mu_J^{-2}$ (see Section 6 of \cite{ChenPouzo2014}). For $f_{CS}$ we have
\[
 Df_{CS}(h_0)[h - h_0] = \int_{0}^{1} \left( \{h(\mf p(t),\mf y-S_{\mf y}(t)) - h_0(t,\mf y-S_{\mf y}(t))\}e^{-\int_{0}^t \partial_2 h_0(\mf p(v),\mf y-S_{\mf y}(v))\mf p'(v)\,\mathrm dv} \mf p'(t) \right)\mathrm dt
\]
\cite[p. 1471]{HausmanNewey1995} which is clearly a linear functional (Assumption \ref{a-functional-prime}'(b)(i)). Note that $\sigma_n(f_{CS}) \lesssim \sqrt{J} \mu_J^{-1}$ and Assumption CS(v) together imply $\mu_J^{-1}J^{3/2} \sqrt{(\log J)/n} = o(1)$. This, $p > 2$ and Corollary \ref{c-deriv} together imply that $\|\wh h - h_0\|_{B^2_{\infty,\infty}} = o_p(1)$ and $\|\widetilde h - h_0\|_{B^2_{\infty,\infty}} = o_p(1)$, and
\begin{align}
 \|\wh h - h_0\|_\infty & = O_p\left(J^{-p/2} + \mu_J^{-1} \sqrt{(J \log J)/n} \right), &
 \| \wh h - h_0\|_{B^1_{\infty,\infty}} & = O_p\left( \sqrt J \left( J^{-p/2} + \mu_J^{-1} \sqrt{(J \log J)/n} \right) \right) \label{e-rate-0} \\
 \|\wt h - h_0\|_\infty & = O_p\left(J^{-p/2} \right), &
 \| \wt h - h_0\|_{B^1_{\infty,\infty}} & = O_p\left( \sqrt J \left( J^{-p/2} \right) \right)~. \label{e-rate-1}
\end{align}
Applying Lemma A1 of \cite{HausmanNewey1995}, we obtain
\begin{align*}
 \left| f_{CS}(\wh h) - f_{CS}(h_0) - Df_{CS}(h_0)[\wh h - h_0] \right| & = O_p\left( \sqrt J \left( J^{-p} + \mu_J^{-2} \frac{J \log J}{n} \right) \right)  \\
  \left|  Df_{CS}(h_0)[\widetilde h - h_0] \right| & = O_p(J^{-p/2}) \,.
\end{align*}
Since $p>2$, Assumption CS(v) guarantees that Assumption \ref{a-functional-prime}'(b)(ii) holds with
\[
 \eta_n = \frac{\sqrt n}{\sigma_n(f_{CS})} \times \bigg( J^{-p/2} +\mu_J^{-2} \frac{J^{3/2} \log J}{n} \bigg)=o(1)\,.
\]
Finally, for Assumption \ref{a-functional-prime}'(b)(iii), we have
\[
 \frac{\|\Pi_K T( \wh v_n(f_{CS}) - v_n(f_{CS}))\|_{L^2(W)}}{\sigma_n(f_{CS})}  \lesssim \frac{\tau_J  \sqrt{\sum_{j=1}^J \left( Df_{CS}(\wh h)[(G_\psi^{-1/2} \psi^J)_j] - Df_{CS}(h_0) [(G_\psi^{-1/2} \psi^J)_j]\right)^2 }}{\sigma_n(f_{CS})}
\]
where $\tau_J \asymp \mu_J^{-1}$. Moreover,
\begin{align*}
 \Big| Df_{CS}(\wh h)[(G_\psi^{-1/2} \psi^J)_j] - Df_{CS}(h_0)[(G_\psi^{-1/2} \psi^J)_j] \Big| \lesssim \sqrt J \times  O_p\left( \sqrt J \left( J^{-p/2} + \mu_J^{-1} \sqrt{(J \log J)/n} \right) \right)
\end{align*}
(uniformly in $j = 1,\ldots J$) by Lemma A1 of \cite{HausmanNewey1995}. Therefore
\begin{align*}
\frac{\|\Pi_K T( \wh v_n(f_{CS}) - v_n(f_{CS}))\|_{L^2(W)}}{\sigma_n(f_{CS})} \lesssim \frac{ J^{3/2} \mu_J^{-1} O_p\left (J^{-p/2} + \mu_J^{-1} \sqrt{(J\log J)/n} \right)}{\Big(\sum_{j=1}^J (a_j/\mu_j)^2 \Big)^{1/2}} =O_p (\eta_n')
\end{align*}
which is $o_p(1)$ by Assumption CS(v). Finally we note that the condition $\mu_J^{-1}J \sqrt{(\log J)/n} = o(1)$ of Theorem \ref{t-dist} is trivially implied by $\mu_J^{-1}J^{3/2} \sqrt{(\log J)/n} = o(1)$ (which is in turn implied by Assumption CS(v)). This proves the result.
\end{proof}

\begin{proof}[\textbf{Proof of Corollary \ref{c-cs}}]
For \textbf{Result (1)}, since $\sigma_n(f_{CS})  \asymp J^{(a+\varsigma+1)/2}$, the first part of Assumption CS(v) is satisfied provided
\[
 \frac{\sqrt n}{J^{(a+\varsigma+1)/2}} \left( J^{-p/2} + J^{\varsigma+ 2}\sqrt{(\log J)}/n \right) = o(1)
\]
for which a sufficient condition is $nJ^{-(p+a+\varsigma+1)} = o(1)$ and $J^{3+\varsigma-a}(\log n)/n = o(1)$. Moreover, the condition $\mu_J^{-1}J^{3/2}\sqrt{(\log J)/n} = o(1)$ is implied by $J^{3+\varsigma-(a \wedge 0)}(\log n)/n = o(1)$. The condition $J^{3+\varsigma-(a \wedge 0)} (\log n)/n = o(1)$ also implies that $J^{(2+\delta)/(2\delta)} \sqrt{(\log n)/n} = o(1)$ holds whenever $\delta \geq 2/(2 + \varsigma - (a \wedge 0))$.

For \textbf{Result (2)}, we have $\sigma_n(f_{CS})^2 \gtrsim a_J^2/\mu_J^2 \asymp \exp( J^{\varsigma/2}+ a \log J )$. Take $J = (\log (n/(\log n)^\varrho))^{2/\varsigma}$. Then
\begin{align*}
 \sigma_n(f_{CS})^2 & \gtrsim \exp\left( \log (n/(\log n)^\varrho) + \log [(\log (n/(\log n)^\varrho))^{2a/\varsigma} ] \right) \\
 & = \exp\left( \log [ n/(\log n)^\varrho \times (\log (n/(\log n)^\varrho))^{2a/\varsigma} ] \right) \\
 & =  n/(\log n)^\varrho \times (\log (n/(\log n)^\varrho))^{2a/\varsigma}
\end{align*}
and so
\[
 \sigma_n(f_{CS}) \gtrsim \frac{\sqrt n}{(\log n)^{\varrho/2}} \times (\log (n/(\log n)^\varrho))^{a/\varsigma} \,.
\]
The first part of Assumption CS(v) is then satisfied provided
\begin{align*}
 \frac{(\log n)^{\varrho/2}}{(\log (n/(\log n)^\varrho))^{a/\varsigma}} & \Big( (\log (n/(\log n)^\varrho))^{-p/\varsigma}   + (\log n)^{-\varrho} \times (\log (n/(\log n)^\varrho))^{4/\varsigma} \log \log n \Big) = o(1)
\end{align*}
which holds provided $2p > \varrho \varsigma - 2a$ and $\varrho \varsigma > 8 - 2 a$. The condition $J^{(2+\delta)/(2\delta)} \sqrt{(\log n)/n} = o(1)$ holds for any $\delta > 0$. The remaining condition $\mu_J^{-1} J^{3/2}\sqrt{(\log J)/n} = o(1)$ is implied by
\[
  \frac{\sqrt n}{(\log n)^{\varrho/2}} (\log (n/(\log n)^\varrho))^{3/\varsigma} \sqrt{(\log\log n)/n}  = o(1)
\]
for which a sufficient condition is $\varrho \varsigma > 6$. Now, we may always choose $\varrho > 0$ so that $\varrho \varsigma > 6 \vee (8-2a)$. The remaining condition then holds provided $2p > 6 \vee (8-2a)- 2a$.
\end{proof}

\begin{proof}[\textbf{Proof of Theorem \ref{t-dl}}]
The proof follows by identical arguments to those of Theorem \ref{t-cs}, noting that
\[
 f_{DL}(h) - f_{CS}(h) = (\mf p^1 - \mf p^0)  h(\mf p^1,\mf y)
\]
and so
\begin{align*}
 f_{DL}(\wh h) - f_{DL}(h_0) &= f_{CS}(\wh h) - f_{CS}(h_0) + (\mf p^1 - \mf p^0) \left( \wh h(\mf p^1,\mf y) - h_0(\mf p^1,\mf y)  \right) \\
 Df_{DL}(\wh h)[h - h_0]  & = D f_{CS}(h_0)[h - h_0]   + (\mf p^1 - \mf p^0) \left( \wh h(\mf p^1,\mf y) - h_0(\mf p^1,\mf y)  \right) \\
 D f_{DL}(\wh h)[v] - Df_{DL}(h_0)[v] & = Df_{CS}(\wh h)[v] - Df_{CS}(h_0)[v]
\end{align*}
where clearly $|(\mf p^1 - \mf p^0) ( \wh h(\mf p^1,\mf y) - h_0(\mf p^1,\mf y)  )| \leq \mathrm{const} \times \|\wh h - h_0\|_\infty$. Since $\sigma_n (f_{DL}) \asymp \mu_J^{-1} \sqrt J$, the stated conditions on $J$ in this theorem imply that Assumption CS(v) holds.
\end{proof}

\begin{proof}[\textbf{Proof of Theorem \ref{t-apcs}}]
The result will follow from Theorem \ref{t-dist}, and is very similar to that of Theorem \ref{t-cs}.
Assumptions \ref{a-residuals}(i)--(iii)(iv'), \ref{a-sieve}(i)(ii)(iii), and \ref{a-approx} are verified as in the proof of Theorem \ref{t-cs}. It remains to verify Assumption \ref{a-functional-prime}'(b). As in the proof of Theorem \ref{t-cs} we have  $\tau_J \asymp \mu_J^{-1}$ and $[\sigma_n(f_A)]^2 \asymp \sum_{j=1}^J (a_j/\mu_j)^2$ (see Section 6 of \cite{ChenPouzo2014}). Simple expansion of $f_A$ yields
\[
 Df_A(h_0)[h - h_0] = \int w(\mf p) e^{h_0(\log \mf p,\log \mf y)}(h(\log \mf p,\log \mf y) - h_0(\log \mf p,\log \mf y)) \, \mathrm d \mf p
\]
which is clearly a linear functional (Assumption \ref{a-functional-prime}'(b)(i)), and
\begin{align*}
  & \left| f_A(\wh h) - f_A(h_0) - Df_A(h_0)[\wh h - h_0] \right| \\
  & = \int w(\mf p) \Big(e^{\wh h(\log  \mf p,\log  \mf y) - h_0(\log  \mf p,\log  \mf y)} - 1 - \big( \wh h(\log  \mf p,\log  \mf y) - h_0(\log  \mf p,\log \mf y) \big) \Big)e^{h_0(\log  \mf p,\log  \mf y)}\, \mathrm d \mf p \,.
\end{align*}
Therefore, by Corollary \ref{c-deriv} we have
\begin{align*}
 \left| f_A(\wh h) - f_A(h_0) - Df_A(h_0)[\wh h - h_0] \right| & = O_p \left( J^{-p} + \mu_J^{-2} \frac{J \log J}{n} \right) \\
  \left|  Df_A(h_0)[\widetilde h - h_0] \right|  & = O_p(J^{-p/2})\,.
\end{align*}
Since $p>0$, the stated conditions on $J$ in this theorem guarantees that Assumption \ref{a-functional-prime}'(b)(ii) holds with
\[
 \eta_n = \frac{\sqrt n}{\sigma_n(f_{A})} \times \bigg( J^{-p/2} +\mu_J^{-2} \frac{J \log J}{n} \bigg)=o(1)\,.
\]
Finally, for Assumption \ref{a-functional-prime}'(b)(iii), we have
\begin{align*}
 \frac{\|\Pi_K T(\wh v_n(f_A) - v_n(f_A))\|_{L^2(W)}}{\sigma_n(f_A)}
 \lesssim \frac{\tau_J}{\sigma_n(f_A)} \sqrt{ \left( \sum_{j=1}^J \left( Df_A(\wh h)[(G_\psi^{-1/2} \psi^J)_j] - Df_A(h_0)[(G_\psi^{-1/2} \psi^J)_j] \right)^2 \right)}
\end{align*}
where $\tau_J \asymp \mu_J^{-1}$ and where a first-order Taylor expansion of $Df_A$ yields
\begin{align*}
 \left| Df_A(\wh h)[(G_\psi^{-1/2} \psi^J)_j] - Df_A(h_0)[(G_\psi^{-1/2} \psi^J)_j]\right| & \lesssim \|(G_\psi^{-1/2} \psi^J)_j\|_\infty  \times \| \wh h - h_0\|_\infty \\
 & = O_p\left( \sqrt J \left( J^{-p/2} + \mu_J^{-1} \sqrt{(J\log J)/n} \right) \right) \,.
\end{align*}
It follows that
\[
 \frac{\|\Pi_K T(\wh v_n(f_A) - v_n(f_A))\|_{L^2(W)}}{[\sigma_n(f_A)]} \lesssim  \frac{ \mu_J^{-1}J \times  O_p\left (J^{-p/2} + \mu_J^{-1} \sqrt{(J\log J)/n}  \right) }{\Big(\sum_{j=1}^J (a_j/\mu_j)^2 \Big)^{1/2}}=O_p (\eta_n')
\]
which is $o_p(1)$ by the displayed condition on $J$ in this theorem. Finally we note that the condition $\mu_J^{-1}J \sqrt{(\log J)/n} = o(1)$ of Theorem \ref{t-dist} is implied by the displayed condition on $J$ in this theorem and the fact that $\sigma_n(f_A) \lesssim \sqrt{J} \mu_J^{-1}$. This proves the result.
\end{proof}

\begin{proof}[\textbf{Proof of Theorem \ref{t-ucb-cs}}]
We verify the conditions of Lemma \ref{lem-bahadur} and Theorem \ref{t-dist-u}. Assumptions \ref{a-data} and \ref{a-residuals} are satisfied by Assumption CS(i)(ii)(iv) and U-CS(i). Assumption \ref{a-sieve}(iii) is satisfied by Assumption U-CS(iii). Assumption \ref{a-approx}(i) is satisfied by the Riesz basis condition. For Assumption \ref{a-functional}(b), we check the conditions of Remark \ref{rmk-a5suff}(b'). It is clear that $Df_{CS,t}[h-h_0]$ (see display (\ref{e-Df_CS})) is a linear functional of $h-h_0$ for each $t \in \mc T$. As in the proof of Theorem \ref{t-cs} we have $[\sigma_n(f_{CS,t})]^2 \asymp \sum_{j=1}^J (a_{j,t}/\mu_j)^2$ uniformly in $t$. Thus, $\ul \sigma_n \lesssim \sqrt{J} \mu_J^{-1}$. This and Assumption U-CS(iv.1) together imply $\mu_J^{-1}J^{3/2} \sqrt{(\log J)/n} = o(1)$. Also we note that the first part of Assumption U-CS(iii) and $\delta \geq 1$ imply that $J^{(2+\delta)/(2\delta)}\sqrt{(\log n)/n} = o(1)$ holds. These results and $p > 2$ and Corollary \ref{c-deriv} together imply that $\|\wh h - h_0\|_{B^2_{\infty,\infty}} = o_p(1)$ and $\|\widetilde h - h_0\|_{B^2_{\infty,\infty}} = o_p(1)$, and equations (\ref{e-rate-0}) and (\ref{e-rate-1}) hold.
Therefore, $\wh h$ and $\widetilde h$ are within an $\epsilon$ neighborhood (in H\"older norm of smoothness 2) of $h_0$ wpa1.  As $\mc T = [\ul {\mf p}^0 , \ol {\mf p}^0] \times [\ul {\mf p}^1 , \ol {\mf p}^1] \times [\ul{\mf y}, \ol{\mf y}]$ where the intervals $[\ul {\mf p}^0 , \ol {\mf p}^0]$ and $[\ul {\mf p}^1 , \ol {\mf p}^1]$ are in the interior of the support of $\mf P_i$ and $[\ul{\mf y}, \ol{\mf y}]$ is in the interior of the support of $\mf Y_i$ and $h_0 \in B_\infty(p,L)$ with $p > 2$ and $0 < L < \infty$, it is straightforward to extend Lemma A1 of \cite{HausmanNewey1995} to show
\begin{align*}
 \sup_{t \in \mc T} \left| f_{CS,t}(\wh h) - f_{CS,t}(h_0) - Df_{CS,t}(h_0)[\wh h - h_0] \right| & = O_p\left( \sqrt J \left( J^{-p} + \mu_J^{-2} \frac{J \log J}{n} \right) \right)~,  \\
 \sup_{t \in \mc T} \left|  Df_{CS,t}(h_0)[\widetilde h - h_0] \right| & = O_p(J^{-p/2})
\end{align*}
by Corollary \ref{c-deriv}. Since $\ul \sigma_n \lesssim \sqrt{J} \mu_J^{-1}$, Assumption U-CS(iii) guarantees that Assumption \ref{a-functional}(b)(ii) holds with
\[
 \eta_n = \frac{\sqrt n}{ \ul \sigma_n } \times \bigg( J^{-p/2} +\mu_J^{-2} \frac{J^{3/2} \log J}{n} \bigg)\,.
\]
For Assumption \ref{a-functional}(b)(iii), we have
\begin{align*}
  \sup_{t \in \mc T} \frac{\|\Pi_K T(\wh v_n(f_{CS,t}) - v_n(f_{CS,t}))\|_{L^2(W)}}{[\sigma_n(f_{CS,t})]}
  \lesssim \sup_{t \in \mc T} \tau_J \frac{ \sqrt{\sum_{j=1}^J \left( Df_{CS,t}(\wh h)[(G_\psi^{-1/2} \psi^J)_j] - Df_{CS,t}(h_0) [(G_\psi^{-1/2} \psi^J)_j]\right)^2} }{[\sigma_n(f_{CS,t})]}
\end{align*}
where $\tau_J \asymp \mu_J^{-1}$. By straightforward extension of Lemma A1 of \cite{HausmanNewey1995} and (\ref{e-rate-0}) and (\ref{e-rate-1}):
\begin{align*}
  \sup_{t \in \mc T} \left| Df_{CS,t}(\wh h)[(G_\psi^{-1/2} \psi^J)_j] - Df_{CS,t}(h_0)[(G_\psi^{-1/2} \psi^J)_j]\right|
 & \lesssim \sqrt J \times  O_p\left( \sqrt J \left( J^{-p/2} + \mu_J^{-1} \sqrt{(J \log J)/n} \right) \right)
\end{align*}
whence Assumption \ref{a-functional}(b)(iii) holds with
\begin{align*}
 \eta_n' & =  \frac{ J^{3/2} \mu_J^{-1}}{ \ul \sigma_n } \times \left( J^{-p/2} + \mu_J^{-1} \sqrt{J(\log J)/n}  \right)\,.
\end{align*}
which is $o(1)$ by Assumption U-CS(iv). This verifies Assumption \ref{a-functional}(b).

Finally, Assumption \ref{a-Lipschitz}(i) holds with $c_n = O(\sqrt{\log J})$ by Assumption U-CS(ii) and Remark \ref{rmk-a6suff}. For Assumption \ref{a-Lipschitz}(ii) we take $r_n = [\log J]^{-1/2}$. Assumption \ref{a-Lipschitz}(ii.1) then holds provided $J^5(\log J )^{3} /n = o(1)$. Assumption \ref{a-Lipschitz}(ii.2) holds provided
\begin{align*}
 \tau_J J \sqrt{(\log J)/n}+\eta_n +  \Big( [J^{\frac{2+\delta}{2\delta}} \sqrt{(\log J)/n }]^{\frac{\delta}{1+\delta}} + J^{-p/d} + \tau_J \sqrt{J(\log J)/n} +\eta_n' \Big) \sqrt{\log J} = o((\log J)^{-1/2})
\end{align*}
(using Corollary \ref{c-deriv} for $\delta_{h,n}$), which is satisfied provided
\begin{align*}
 \tau_J J \sqrt{(\log J)/n}+\eta_n +  \eta_n' \sqrt{\log J} = o((\log J)^{-1/2})
\end{align*}
which is in turn implied by Assumption U-CS(iii) and U-CS(iv.1) and the property $\ul \sigma_n \lesssim \sqrt{J} \mu_J^{-1}$. Thus Lemma \ref{lem-bahadur} applies to $f_t = f_{CS,t}$ with a rate $r_n = [\log J]^{-1/2}$.

Next we note that the condition $ \eta_n' \sqrt J = o((\log J)^{-1/2})$ needed for Theorem \ref{t-dist-u} is directly implied by Assumption U-CS(iv.2).
\end{proof}

\begin{proof}[\textbf{Proof of Theorem \ref{t-ucb-dl}}]
Follows by similar arguments to the proofs of Theorems \ref{t-dl} and \ref{t-ucb-cs}, noting that
\[
 |Df_{DL,t_1}(h_0)[h] - Df_{DL,t_2}(h_0)[h]| \leq |Df_{CS,t_1}(h_0)[h] - Df_{CS,t_2}(h_0)[h]| + |(\mf p^1_1 - \mf p^0_1)h(\mf p^1_1,\mf y_1) - (\mf p^1_2 - \mf p^0_2)h(\mf p^1_2,\mf y_2)|
\]
and so $c_n = O(\sqrt{\log J})$ by Assumption U-CS(ii) and Remark \ref{rmk-a6suff} (see the proof of Corollary \ref{cor-ucb}). We can then take $r_n = [\log J]^{-1/2}$.
\end{proof}

\subsection{Proofs for Appendix \ref{s-l2}}

\begin{proof}[\textbf{Proof of Theorem \ref{t-l2}}]
As with the proof of Theorem \ref{t-rate}, we first decompose the error into three parts:
\begin{eqnarray*}
 \|\wh h - h_0\|_{L^2(X)} & \leq & \|\wh h - \widetilde h\|_{L^2(X)} + \|\widetilde h - \Pi_J h_0\|_{L^2(X)} + \|\Pi_J h_0 - h_0\|_{L^2(X)} \\
 & =: & T_1 + T_2 + \|h_0 - \Pi_J h_0 \|_{L^2(X)}\,.
\end{eqnarray*}
To prove \textbf{Result (1)} it is enough to show that $T_2 \leq O_p(1) \times \|h_0 - \Pi_J h_0\|_{L^2(X)}$. To do this, bound
\begin{eqnarray*}
 T_2 & \leq & \|G_\psi^{1/2} (S'G_b^{-1/2})^-_l G_b^{-1/2} B'(H_0 - \Psi c_J)/n\|_{\ell^2} \\
 & &  + \|G_\psi^{1/2} \{(\wh G_b^{-1/2} \wh S)^-_l\wh G_b^{-1/2}G_b^{1/2} - (G_b^{-1/2} S)^-_l\} G_b^{-1/2} B'(H_0 - \Psi c_J)/n\|_{\ell^2} \quad =: \quad T_{21} + T_{22}\,.
\end{eqnarray*}
For $T_{21}$,
\begin{eqnarray*}
 T_{21} & \leq & s_{JK}^{-1} \| G_b^{-1/2} B'(H_0 - \Psi c_J)/n\|_{\ell^2} \\
 & \leq & O_p ( \tau_J \zeta_{b,K}/\sqrt n ) \times \|h_0 - \Pi_J h_0\|_{L^2(X)} + \tau_J \| \Pi_K T (h_0 - \Pi_J h_0) \|_{L^2(W)} \\
 & \leq & O_p ( \tau_J \zeta_{b,K}/\sqrt n ) \times \|h_0 - \Pi_J h_0\|_{L^2(X)} + \tau_J \|  T (h_0 - \Pi_J h_0) \|_{L^2(W)} \\
 & = & O_p(1)\times \|h_0 - \Pi_J h_0\|_{L^2(X)}
\end{eqnarray*}
where the second line is by Lemma \ref{lem-BHJ} and the relations $J \asymp K$  and $\tau_J \asymp s_{JK}^{-1}$, and the final line is by Assumption \ref{a-approx}(ii) and the condition $\tau_J \zeta \sqrt{(\log J)/n} = o(1)$. Similarly,
\begin{eqnarray*}
 T_{22} & \leq & \|G_\psi^{1/2} \{(\wh G_b^{-1/2} \wh S)^-_l\wh G_b^{-1/2}G_b^{1/2} - (G_b^{-1/2} S)^-_l\} \|_{\ell^2} \| G_b^{-1/2} B'(H_0 - \Psi c_J)/n\|_{\ell^2} \\
 & \leq & O_p(s_{JK}^{-2} \zeta \sqrt{(\log J)/n} ) \times \left( O_p (  \zeta_{b,K}/\sqrt n ) \times \|h_0 - \Pi_J h_0\|_{L^2(X)} +  \| T (h_0 - \Pi_J h_0) \|_{L^2(W)} \right) \\
 & = & o_p( \tau_J \zeta \sqrt{(\log J)/n} )^2 \times  \|h_0 - \Pi_J h_0\|_{L^2(X)} + O_p(1) \times \tau_J  \| T (h_0 - \Pi_J h_0) \|_{L^2(W)} \\
 & = & O_p(1) \times \|h_0 - \Pi_J h_0\|_{L^2(X)}
\end{eqnarray*}
where the second line is by Lemmas \ref{lem-BHJ} and \ref{lem-SGl2}(b) and the relations  $J \asymp K$  and $\tau_J \asymp s_{JK}^{-1}$, and the final line is by the condition $\tau_J \zeta \sqrt{(\log J)/n} = o(1)$ and Assumption \ref{a-approx}(ii). This proves Result (1).

To prove \textbf{Result (2)} it remains to control $T_1$. To do this, bound
\begin{eqnarray*}
 T_1 & \leq & \|G_\psi^{1/2} (S'G_b^{-1/2})^-_l G_b^{-1/2} B'u/n\|_{\ell^2} + \|G_\psi^{1/2} \{(\wh G_b^{-1/2} \wh S)^-_l\wh G_b^{-1/2}G_b^{1/2} - (G_b^{-1/2} S)^-_l\} G_b^{-1/2} B'u/n\|_{\ell^2} \\
 & =: & T_{11} + T_{12}\,.
\end{eqnarray*}
For $T_{11}$, by definition of $s_{JK}$ and Lemma \ref{lem-BuL2} we have:
\begin{equation*}
 T_{11}  \leq  s_{JK}^{-1} \|G_b^{-1/2} B'u/n\|_{\ell^2}  =  O_p (s_{JK}^{-1} \sqrt{K/n})  =  O_p (\tau_J \sqrt{J/n})
\end{equation*}
where the final line is because $J \asymp K$ and $\tau_J \asymp s_{JK}^{-1}$ (Lemma \ref{lem-ip}). Similarly,
\begin{eqnarray*}
 T_{12} & \leq & \|G_\psi^{1/2} \{(\wh G_b^{-1/2} \wh S)^-_l\wh G_b^{-1/2}G_b^{1/2} - (G_b^{-1/2} S)^-_l\} \|_{\ell^2} \|G_b^{-1/2} B'u/n\|_{\ell^2} \\
 & = &  O_p(\tau_J^2 \zeta \sqrt{(\log J)/n}) \times O_p(\sqrt{J/n}) \\
 & = &  o_p(\tau_J \sqrt{J/n} )
\end{eqnarray*}
where the second line is by Lemmas \ref{lem-BuL2} and \ref{lem-SGl2}(b) and the relations $J \asymp K$  and $\tau_J \asymp s_{JK}^{-1}$, and the final line is by the condition $\tau_J \zeta \sqrt{(\log J)/n} = o(1)$.
\end{proof}

\begin{proof}[\textbf{Proof of Corollary \ref{c-rate2}}]
Analogous to the proof of Corollary \ref{c-deriv}.
\end{proof}

\begin{proof}[\textbf{Proof of Theorem \ref{t-lb-l2}}]
As in the proof of Theorem \ref{npiv lower bound}, it suffices to prove a lower bound for the Gaussian reduced-form NPIR model (\ref{e-npir}). Theorem \ref{t-lb-l2-npir} below does just this.
\end{proof}

\begin{theorem}\label{t-lb-l2-npir}
Let Condition LB hold with $B_2(p,L)$ in place of $B_\infty(p,L)$ hold for the NPIR model (\ref{e-npir}) with a random sample $\{(W_i,Y_i)\}_{i=1}^n$.  Then for any $0 \leq |\alpha| < p$:
\begin{equation*}
 \liminf_{n \to \infty} \inf_{\wh g_n} \sup_{h \in B_2(p,L)} \mb P_h \left( \|\wh g_n - \partial^\alpha h\|_{L^2(X)} \geq c n^{-(p-|\alpha|)/(2(p+\varsigma)+d)} \right) \geq c'>0
\end{equation*}
in the mildly ill-posed case, and
\begin{equation*}
 \liminf_{n \to \infty} \inf_{\wh g_n} \sup_{h \in B_2(p,L)} \mb P_h \left( \|\wh g_n - \partial^\alpha h\|_{L^2(X)} \geq c(\log n)^{-(p-|\alpha|)/\varsigma} \right) \geq c'>0
\end{equation*}
in the severely ill-posed case,
in the severely ill-posed case, where $\inf_{\wh g_n}$ denotes the infimum over all estimators of $\partial^\alpha h$ based on the sample of size $n$, $\sup_{h \in B_2(p,L)} \mb P_h$ denotes the sup over $h \in B_2(p,L)$  and distributions  $(W_i,u_i)$ which satisfy Condition LB with $\nu$ fixed, and the finite positive constants $c, c'$ depend only on $p,L,d,\varsigma$ and $\sigma_0$.
\end{theorem}

\begin{proof}[\textbf{Proof of Theorem \ref{t-lb-l2-npir}}]
We use similar arguments to the proof of Theorem \ref{npir lower bound}, using Theorem 2.5 of \cite{Tsybakov2009} (see Theorem \ref{t-tsybakov}). Again, we first explain the scalar $(d = 1)$ case in detail. Let $\{\phi_{j,k},\psi_{j,k}\}_{j,k}$ be a wavelet basis of regularity $\gamma > p$ for $L^2([0,1])$ as described in Appendix \ref{ax-basis}.

By construction, the support of each interior wavelet is an interval of length $2^{-j}(2r-1)$. Thus for all $j$ sufficiently large (hence the $\liminf$ in our statement of the Lemma) we may choose a set $M \subset \{r,\ldots,2^j-r-1\}$ of interior wavelets with cardinality $\#(M) \asymp 2^j$ such that $\mbox{support}(\psi_{j,m}) \cap \mbox{support}(\psi_{j,m'}) = \emptyset$ for all $m,m' \in M$ with $m \neq m'$.

Take $g_0 \in B(p,L/2)$ and for each $m \in M$ define $\theta = \{\theta_m\}_{m \in M}$ where each $\theta_m \in \{0,1\}$ and define
\[
 h_\theta = g_0 + c_0 2^{-j(p+1/2)}\sum_{m \in M} \theta_m \psi_{j,m}
\]
for each $\theta$, where $c_0$ is a positive constant to be defined subsequently. Note that this gives $2^{(\#(M))}$ such choices of $h_\theta$. By the equivalence $\|\cdot\|_{B^p_{2,2}} \asymp \|\cdot\|_{b^p_{2,2}}$, for each $\theta$ we have:
\begin{eqnarray*}
  \|h_\theta\|_{B^p_{2,2}} & \leq & L/2 + \left\| c_0 2^{-j(p+1/2)}\sum_{m \in M} \theta_m \psi_{j,m}\right\|_{B^p_{2,2}} \\
  & \leq & L/2 + \mbox{const} \times \left\| c_0 2^{-j(p+1/2)}\sum_{m \in M} \theta_m \psi_{j,m}\right\|_{b^p_{2,2}} \\
  & = & L/2 + \mbox{const} \times c_0  2^{-j(p+1/2)} \left( \sum_{m \in M} \theta_m ^2 2^{2jp} \right)^{1/2} \\
  & \leq & L/2 + \mbox{const} \times c_0  \,.
\end{eqnarray*}
Therefore, we can choose $c_0$ sufficiently small that $h_\theta \in B_2(p,L)$ for each $\theta$.

Since $\psi_{j,m} \in C^\gamma$ with $\gamma > |\alpha|$ is compactly supported and $X_i$ has density bounded away from $0$ and $\infty$, we have $\|2^{j/2}\psi^{(|\alpha|)}(2^j x-m)\|_{L^2(X)} \asymp 1$ (uniformly in $m$). By this and the disjoint support of the $\psi_{j,m}$, for each $\theta,\theta'$ we have:
\begin{eqnarray*}
 \| \partial^\alpha h_\theta - \partial^\alpha h_{\theta'}\|_{L^2(X)} & = & c_02^{-j(p-|\alpha| + 1/2)}\left( \sum_{m \in M} (\theta_m - \theta'_m)^2 \|2^{j/2}\psi^{(|\alpha|)}(2^j \,\cdot\,-m)\|_{L^2(X)}^2 \right)^{1/2} \\
 & \gtrsim & c_02^{-j(p-|\alpha|+ 1/2)}\sqrt{\rho(\theta,\theta')}
\end{eqnarray*}
where $\rho(\theta,\theta')$ is the Hamming distance between $\theta$ and $\theta'$. Take $j$ large enough that $\#(M)  \geq 8$. By the Varshamov-Gilbert bound \cite[Lemma 2.9]{Tsybakov2009} we may choose a subset $\theta^{(0)}, \theta^{(1)},\ldots,\theta^{(M^*)}$ such that $\theta^{0} = (0,\ldots,0)$, $\rho(\theta^{(a)},\theta^{(b)}) \geq \#(M)/8 \gtrsim 2^j$ for all $0 \leq a < b \leq M^*$ and $M^* \geq 2^{\#(M)/8}$ (recall that there were $2^{(\#(M))}$ distinct vectors $\theta$ and $\#(M) \asymp 2^j$). For each $m \in \{0,1,\ldots,m^*\}$ let $h_m := h_{\theta^{(m)}}$. Then we have
\[
 \| \partial^\alpha h_m - \partial^\alpha h_{m'}\|_{L^2(X)}  \gtrsim c_02^{-j(p-|\alpha|)}
\]
for each $0 \leq m < m' \leq M^*$.

For each $0 \leq m \leq M^*$, let $P_m$ denote the joint distribution of $\{(W_i,Y_i)\}_{i=1}^n$ with $Y_i = T h_m(W_i) + u_i$ for the Gaussian NPIR model (\ref{e-npir}). It follows from Condition LB(ii)(iii) that for each $1 \leq m \leq M^*$ the KL distance $K(P_m,P_0)$ is
\begin{eqnarray*}
 K(P_m,P_0) & \leq & \frac{1}{2}\sum_{i=1}^n (c_0 2^{-j(p+\frac{1}{2})})^2 E\left[\frac{(T \sum_{k \in M} \theta^{(m)}_k (\psi_{j,k}(W_i)))^2 }{\sigma^2(W_i)}\right] \\
 & \leq & \frac{n}{2\underline \sigma^2}(c_0 2^{-j(p+\frac{1}{2})})^2 \nu(2^j)^2 \sum_{k \in M} (\theta^{(m)}_k)^2  \|\psi_{j,k}\|_{L^2(X)}^2 \\
 & \lesssim & n c_0^2 2^{-2jp} \nu(2^j)^2
\end{eqnarray*}
where the final line is because $X_i$ has density bounded away from $0$ and $\infty$ and $\sum_{k \in M} (\theta^{(m)}_k)^2 \leq \#(M) \asymp 2^j$ for each $1 \leq m \leq M^*$.

In the mildly ill-posed case ($\nu(2^j) =  2^{-j \varsigma}$) we choose $2^{j} \asymp n^{1/(2(p+\varsigma)+1)}$. This yields:
\begin{align*}
 K(P_m,P_0) &\lesssim c_0^2 n^{\frac{1}{2(p+\varsigma)+1}} \quad \mbox{ uniformly in $m$}\\
 \log(M^*) &\gtrsim 2^j \asymp n^{1/(2(p+\varsigma)+1)} \,.
\end{align*}
since $M^* \geq 2^{\#(M)/8}$ and $\#(M) \asymp 2^j$.

In the severely ill-posed case ($\nu(2^j)= \exp(- \frac{1}{2} 2^{j \varsigma})$) we choose $2^j = (c_1 \log n)^{1/\varsigma}$ with $c_1 > 1$. This yields:
\begin{align*}
 K(P_m,P_0) &\lesssim c_0^2 n^{-(c_1-1)} \quad \mbox{ uniformly in $m$}\\
 \log(M^*) &\gtrsim  2^j \asymp (\log n)^{1/\varsigma} \,.
\end{align*}
In both the mildly and severely ill-posed cases, the result follows by choosing $c_0$ sufficiently small that both $\|h_m\|_{B^p_{2,2}} \leq L$ and $K(P_m,P_0) < \frac{1}{8} \log (M^*)$ hold uniformly in $m$ for all $n$ sufficiently large. All conditions of Theorem 2.5 of \cite{Tsybakov2009} are satisfied and hence we obtain the lower bound result.

In the multivariate case ($d > 1$) we let $\widetilde \psi_{j,k,G}(x)$ denote an orthonormal tensor-product wavelet for $L^2([0,1]^d)$ at resolution level $j$.
We construct a family of submodels analogously to the univariate case, setting $h_\theta = g_0 + c_0 2^{-j(p+d/2)}\sum_{m \in M} \theta_m \widetilde  \psi_{j,m,G}$ where $\psi_{j,m}$ is now the product of $d$ interior univariate wavelets at resolution level $j$ with $G = (w_\psi)^d$ (see Appendix \ref{ax-basis}) and where $\#(M) \asymp 2^{jd}$. We then use the Varshamov-Gilbert bound to reduce this to a family of models $h_m$ with $0 \leq m \leq M^*$ and $M^* \asymp 2^{jd}$. We then have:
\begin{equation*}
 \|\partial^\alpha h_m - \partial^\alpha h_{m'}\|_\infty \gtrsim c_0 2^{-j(p-|\alpha|)}
\end{equation*}
for each $0 \leq m < m' \leq M^*$, and
\begin{eqnarray*}
 K(P_m,P_0) & \lesssim &  n(c_0 2^{-j(p+d/2)})^2 \nu(2^j)^2
\end{eqnarray*}
for each $1 \leq m \leq M^*$, where $\nu(2^j) = 2^{-j \varsigma}$ in the mildly ill-posed case and $\nu(2^j) \asymp \exp(-  2^{j \varsigma})$ in the severely ill-posed case. We choose $2^j \asymp n^{1/(2(p+\varsigma)+d)}$ in the mildly ill-posed case and $2^j = (c_1 \log n)^{1/\varsigma}$ in the severely ill-posed case.
The result follows as in the univariate case.
\end{proof}

\subsection{Proofs for Appendix \ref{s-quad}}

\begin{proof}[\textbf{Proof of Theorem \ref{t-lbquad}}]
As in the proof of Theorem \ref{npiv lower bound}, this follows from the lower bound for NPIR in Theorem \ref{t-npirlb}.
\end{proof}

 The following is a slightly stronger ``in probability'' version of Lemma 1 in \cite{YuAFLC}, which is used to prove Theorem \ref{t-npirlb}. Let $\mc P$ be a family of probability measures, let $\theta(P)$ be a parameter with values in a pseudo-metric space $(\mc D,d)$ for some distribution $P \in \mc P$, and let $\hat \theta(P)$ be an estimator of $\theta(P)$ taking values in $(\mc D,d)$. If $\theta \in \mc D$ and $D \subset \mc D$, we let $d(\theta,D) = \inf_{\theta' \in D} d(\theta,\theta')$. Let $co(\mc P)$ denote the convex hull of a set of measures $\mc P$. Finally, if $\mb P,\mb Q \in \mc P$ we let $\|\mb P - \mb Q\|_{TV}$ denote the total variation distance and $\mr{aff}(\mb P,\mb Q) = 1-\|\mb P - \mb Q\|_{TV}$ denote the affinity between $\mb P$ and $\mb Q$.

\begin{lemma}\label{lem-lecam}
Suppose there are subsets $D_1,D_2 \subset \mc D$ that are $2\delta$ separated for some $\delta > 0$ (i.e. $d(s_1,s_2) \geq 2 \delta$ for all $s_1 \in D_1$ and $s_2 \in D_2$) and subsets $\mb P_1,\mb P_2 \subset \mc P$ for which $\theta(\mb P) \in D_1$ for all $\mb P \in \mc P_1$ and $\theta(\mb P) \in D_2$ for all $\mb P \in \mc P_2$. Then:
\[
 2\sup_{\mb P \in \mc P} \mb P( d(\wh \theta, \theta(\mb P)) \geq \delta ) \geq \sup_{\mb P_1 \in co(\mc P_1),\mb P_2 \in co(\mc P_2)} \mr{aff}( \mb P_1 , \mb P_2) \,.
\]
\end{lemma}

\begin{proof}[\textbf{Proof of Lemma \ref{lem-lecam}}]
We proceed as in the proof of Lemma 1 in \cite{YuAFLC}. Let $P_1 \in \mc P_1$ and $P_2 \in \mc P_2$. Then:
\begin{eqnarray*}
 2\sup_{\mb P \in \mc P} \mb P( d(\wh \theta, \theta(\mb P)) \geq \delta )
 & \geq & \mb P_1( d(\wh \theta, \theta(\mb P_1)) \geq \delta ) +  \mb P_2( d(\wh \theta, \theta(\mb P_2)) \geq \delta ) \\
 & \geq & \mb P_1( d(\wh \theta, D_1) \geq \delta ) +  \mb P_2( d(\wh \theta, D_2) \geq \delta )\,.
\end{eqnarray*}
Since the inequality $2\sup_{\mb P \in \mc P} \mb P( d(\wh \theta, \theta(\mb P)) \geq \delta ) \geq \mb P_1( d(\wh \theta, D_1) \geq \delta ) +  \mb P_2( d(\wh \theta, D_2) \geq \delta )$ holds for any fixed $\mb P_1 \in \mc P_1$ and $\mb P_2 \in \mc P_2$, it must also hold for any $\mb P_1 \in co(\mc P_1)$ and $\mb P_2 \in co(\mc P_2)$. Also note that
\begin{eqnarray*}
 \ind\{ d(\wh \theta, D_1) \geq \delta \} +  \ind\{ d(\wh \theta, D_2) \geq \delta \} & \geq & \ind\{ d(\wh \theta, D_1)+ d(\wh \theta, D_2) \geq 2\delta \} \\
 & \geq & \ind\{ d(D_1, D_2) \geq 2\delta \} \quad = \quad  1
\end{eqnarray*}
because $d(D_1,D_2) \geq 2 \delta$. Now by definition of $\alpha(\cdot,\cdot)$, for any $\mb P_1 \in co(\mc P_1)$ and $\mb P_2 \in co(\mc P_2)$ we have:
\begin{eqnarray*}
 2\sup_{\mb P \in \mc P} \mb P( d(\wh \theta, \theta(\mb P)) \geq \delta ) & \geq & \mb P_1( d(\wh \theta, D_1) \geq \delta ) +  \mb P_2( d(\wh \theta, D_2) \geq \delta ) \\
 & \geq & \inf\{ \mb P_1 f + \mb P_2 g : f,g \mbox{ non negative and measurable with } f + g \geq 1\} \\
 & = & \mr{aff}(\mb P_1,\mb P_2)\,.
\end{eqnarray*}
The result follows by taking the supremum of the right-hand side over $\mb P_1$ and $\mb P_2$.
\end{proof}

\begin{theorem}\label{t-npirlb}
Let Condition LB hold with $B_2(p,L)$ in place of $B_\infty(p,L)$ for the NPIR model (\ref{e-npir}) with a random sample $\{(W_i,Y_i)\}_{i=1}^n$.Then for any $0 \leq |\alpha| < p$:
\[
 \liminf_{n \to \infty} \inf_{\wh g_n} \sup_{h \in B_2(p,L)} \mb P_h \left( |\wh g_n - f(h) | > c r_n \right) \geq c' > 0
\]
where
\[
 r_n = \left[ \begin{array}{ll}
 n^{-1/2} & \mbox{in the mildly ill-posed case when } p \geq \varsigma + 2|\alpha| + d/4 \\
 n^{-4(p-|\alpha|)/(4(p+\varsigma)+d)} & \mbox{in the mildly ill-posed case when } \varsigma < p < \varsigma + 2|\alpha| + d/4 \\
 (\log n)^{-2(p-|\alpha|)/\varsigma} & \mbox{in the severely ill-posed case,}
 \end{array} \right.
\]
$\inf_{\wh g_n}$ denotes the infimum over all estimators of $f(h_0)$ based on the sample of size $n$, $\sup_{h \in B_2(p,L)} \mb P_h$ denotes the sup over $h \in B_2(p,L)$  and distributions  $(W_i,u_i)$ which satisfy Condition LB with $\nu$ fixed, and the finite positive constants $c, c'$ do not depend on $n$.
\end{theorem}

\begin{proof}[\textbf{Proof of Theorem \ref{t-npirlb}}]
We first prove the result for the scalar ($d=1$) case, then describe the modifications required in the multivariate case.

Let $\{\phi_{j,k},\psi_{j,k}\}_{j,k}$ be a CDV wavelet basis of regularity $\gamma > p$ for $L^2([0,1])$,  as described in Appendix \ref{ax-basis}. As in the proof of Theorem \ref{t-lb-l2-npir}, we choose a set $M \subset \{r,\ldots,2^j-r-1\}$ of interior wavelets with cardinality $\mf m := \#(M) \asymp 2^j$ such that $\mbox{support}(\psi_{j,m}) \cap \mbox{support}(\psi_{j,m'}) = \emptyset$ for all $m,m' \in M$ with $m \neq m'$. Let $\theta = \{\theta_m\}_{m \in M}$ where each $\theta_m \in \{-1,1\}$ and for each $\theta \in \{-1,1\}^{\mf m}$ define:
\[
 h_\theta =  \sum_{m \in M} \frac{\theta_m c_0 2^{-j p}}{\sqrt{\mf m}} \psi_{j,m} \,.
\]
and let $h_0 = 0$. By the equivalence  $\|\cdot\|_{b^p_{2,2}} \asymp \|\cdot\|_{B^p_{2,2}}$, we have:
\begin{eqnarray*}
  \left\| h_\theta \right\|_{B^p_{2,2}}
  \quad \lesssim \quad  \|h_\theta\|_{b^p_{2,2}}
  \quad = \quad \left( 2^{2jp} \sum_{m \in M} \frac{ \theta_m^2 c_0^2 2^{-2j p}}{\mf m} \right)^{1/2}
  \quad = \quad c_0 \,.
\end{eqnarray*}
Therefore, we may choose $c_0$ sufficiently small that $h_\theta \in B_2(p,L)$ for all $\theta \in \{-1,1\}^{\mf m}$.

Let $\psi^{(|\alpha|)}$ denote the $|\alpha|$th derivative of $\psi$. By disjoint support of the $\psi_{j,m}(x) = 2^{j/2} \psi(2^j x - m)$, $\mu(x) \geq \ul \mu > 0$, and a change of variables, we have:
\begin{eqnarray}
 |f(h_\theta) - f(h_0)| & = & \mf{m}^{-1} \sum_{m \in M} \int ( c_0 \theta_m 2^{-jp} \psi_{j,m}^{(|\alpha|)}(x))^2 \mu(x)\, \mr dx \notag \\
  & \gtrsim & \mf{m}^{-1} \sum_{m \in M} \int ( c_0 \theta_m 2^{-jp} \psi_{j,m}^{(|\alpha|)}(x))^2 \, \mr dx \notag \\
 & = & c_0^2 2^{-2jp} \mf{m}^{-1} \sum_{m \in M}  \int  2^{(2|\alpha|+1)j} \psi^{(|\alpha|)}(2^j x - m)^2 \,\mr dx \notag \\
 & = & c_0^2 2^{-2j(p-|\alpha|)}\int \psi^{(|\alpha|)}(u)^2 \,\mr du \quad  \gtrsim \quad c_0^2 2^{-2j(p-|\alpha|)} \,. \notag
 \end{eqnarray}
Therefore, there exists a constant $c_* > 0$ such that
\begin{equation} \label{e-fdist}
 |f(h_\theta) - f(h_0)| > 2c_* 2^{-2j(p-|\alpha|)}
\end{equation}
holds for all for each $\theta \in \{-1,1\}^{\mf m}$ whenever $j$ is sufficiently large.

Let $P_0$ (respectively $P_\theta$) denote the joint distribution of $\{(W_i,Y_i)\}_{i=1}^n$ with $Y_i = T h_0(W_i) + u_i$ (respectively $Y_i = T h_\theta(W_i) + u_i$) for the Gaussian NPIR model (\ref{e-npir}) where, under Condition LB, we may assume that $X_i$ and $W_i$ have uniform marginals and that the joint density $f_{XW}(x,w)$ of $(X_i,W_i)$ has wavelet expansion
\[
 f_{XW}(x,w) = \sum_{k=0}^{2^{r_0}-1} \lambda_{r_0} \varphi_{r_0,k}(x)\varphi_{r_0,k}(w) + \sum_{j=r_0}^\infty  \sum_{k=0}^{2^{j}-1} \lambda_{j} \psi_{j,k}(x) \psi_{j,k}(w) \,.
\]
Observe that
\[
 T \psi_{j,k}(w) = \int \psi_{j,k}(x) f_{XW}(x,w)\,\mr dx = \lambda_j \psi_{j,k}(w)
\]
for each $0 \leq k \leq 2^{j}-1$ and each $j \geq r_0$ ($r_0$ is fixed) and that $|\lambda_j| \asymp \nu(2^j)$ by Condition LB(iii). Let $P^*$ denote the mixture distribution obtained by assigning weight $2^{-\mf m}$ to $P_\theta$ for each of the $2^{\mf m}$ realizations of $\theta$. Lemma \ref{lem-tv} yields
\begin{equation} \label{e-tv0}
 \|P^* - P_0\|^2_{TV} \lesssim \frac{ n^2 2^{-4jp}\nu(2^j)^4 }{\mf m} \,.
\end{equation}

In the mildly ill-posed case ($\nu(2^j) = 2^{j \varsigma}$) we have
\[
 \|P^* - P_0\|^2_{TV} \lesssim  n^2  2^{-j(4(p+\varsigma)+1)}
\]
because $\mf m \asymp 2^j$. Choose $2^j \asymp c n^{2/(4(p+\varsigma)+1)}$ with $c$ sufficiently small so $\|P^* - P_0\|_{TV}  \leq 1-\epsilon$ for some $1>\epsilon > 0$ and all $n$ large enough, whence:
\begin{eqnarray}
 \mr{aff}(P^*,P_0) \quad = \quad  1-\| P^*-  P_0\|_{TV}  \quad \geq \quad \epsilon \label{e-aff1}
\end{eqnarray}
for all $n$ sufficiently large. It now follows by Lemma \ref{lem-lecam} and equations (\ref{e-fdist}) and (\ref{e-aff1}) that for all $n$ sufficiently large, any estimator $\wh g_n$ of $f(h)$ obeys the bound
\begin{eqnarray} \label{e-lb-aff}
 \sup_{h \in B_2(p,L)} \mb P_h \left( |\wh g_n - f(h) | > c_* 2^{-2j(p-|\alpha|)} \right) & \geq & \epsilon/2
\end{eqnarray}
where $2^{-2j(p-|\alpha|)} \asymp n^{-4(p-|\alpha|)/(4(p+\varsigma)+1)}$. This is slower than $n^{-1/2}$ whenever $p \leq \varsigma+ 2|\alpha| + 1/4$.

In the severely ill-posed case ($\nu(2^j) = \exp(-\frac{1}{2}2^{\varsigma j})$) we choose $2^j = (c \log n)^{1/\varsigma}$ for some $c \in (0,1)$. This yields  $\|P^* - P_0\|_{TV} = o(1)$ by (\ref{e-tv0}) and hence there exists $\epsilon > 0$ such that $\mr{aff}(P^*,P_0) \geq \epsilon$ for all $n$ sufficiently large. Then by Lemma \ref{lem-lecam} and equation (\ref{e-fdist}), for all $n$ sufficiently large, any estimator $\wt f_n$ of $f(h)$ obeys the same bound (\ref{e-lb-aff}) with $2^{-2j(p-|\alpha|)} \asymp (\log n)^{-2(p-|\alpha|)/\varsigma}$.

In the multivariate case ($d > 1$) we let $\wt \psi_{j,k,G}(x)$ denote an orthonormal tensor-product wavelet for $L^2([0,1]^d)$ at resolution level $j$, as described in Appendix \ref{ax-basis}. We may choose a subset $M$ of $\{0,\ldots,2^j-1\}^d$ with $\mf m := \# (M) \asymp 2^{dj}$ for which each $m \in M$ indexes a tensor-product of interior wavelets of the form $2^{j/2} \psi(2^j x_l - m_i)$, which we denote by $\wt \psi_{j,m}(x)$, such that $\wt \psi_{j,m}$ and $\wt \psi_{j,m'}$ have disjoint support for each $m ,m' \in M$ with $m \neq m'$. For each $\theta \in \{-1,1\}^{\mf m}$ we define
\[
 h_\theta = \sum_{m \in M} \frac{\theta_m c_0 2^{-j p}}{\sqrt{\mf m}} \wt \psi_{j,m}(x)
\]
with $c_0$ sufficiently small such that $h_\theta \in B_2(p,L)$ for each $\theta$.  Let $h_0(x) = 0$ for all $x \in [0,1]^d$. By disjoint support of the $\wt \psi_{j,m}$ and a change of variables, we have:
\begin{eqnarray*}
 |f(h_\theta) - f(h_0)|
 & = & c_0^2 2^{-2jp} \mf{m}^{-1} \sum_{m \in M}  \int \left(  \prod_{i = 1}^d    2^{(2\alpha_i+1)j} \psi^{(\alpha_i)}(2^j x_i - m_i)^2 \right) \mu(x)\,\mr dx \\
  & \gtrsim & c_0^2 2^{-2jp} \mf{m}^{-1} \sum_{m \in M} \int \left(  \prod_{i = 1}^d    2^{(2\alpha_i+1)j} \psi^{(\alpha_i)}(2^j x_i - m_i)^2 \right)\,\mr dx \quad \gtrsim \quad c_0^2 2^{-2j(p-|\alpha|)} \,.
 \end{eqnarray*}
Letting $P_0$, $P_\theta$, and $P^*$ be defined analogously to in the univariate case, we let $X_i$ and $W_i$ have uniform marginals on $[0,1]^d$ and their joint density $f_{XW}(x,w)$ has wavelet expansion
\begin{equation}
 f_{XW}(x,w) = \sum_{j = r_0}^\infty  \sum_{G \in G_{j,r_0}} \sum_{k} \lambda_{j} \wt \psi_{j,k,G}(x) \wt \psi_{j,k,G}(w)
\end{equation}
with $|\lambda_j| \asymp \nu(2^j)$.
Lemma \ref{lem-tv} again yields
\[
\| P^* -  P_0\|^2_{TV} \lesssim \frac{ n^2 2^{-4jp} \nu(2^j)^4 }{\mf m} \,.
\]
The result follows by choosing $2^j \asymp c n^{2/(4(p+\varsigma)+d)}$ with sufficiently small $c$ in the mildly ill-posed case and $2^j = (c \log n)^{1/\varsigma}$ for some $c \in (0,1)$ in the severely ill-posed case.
\end{proof}

\begin{lemma}\label{lem-tv}
Let the Condition LB hold with $B_2(p,L)$ in place of $B_\infty(p,L)$ for the NPIR model (\ref{e-npir}), let $P^*$ and $P_0$ be as described in the proof of Theorem \ref{t-npirlb}, and let $2^{-jp} \nu(2^j) = o(1)$. Then:
\[
\|P^* - P_0\|^2_{TV} \lesssim \frac{ n^2 2^{-4jp} \nu(2^j)^4 }{\mf m} \,.
\]
\end{lemma}

\begin{proof}[\textbf{Proof of Lemma \ref{lem-tv}}]
We prove the result for the multivariate case. For each $\theta \in \{-1,1\}^{\mf m}$, the density of $P_\theta$ with respect to $P_0$ is
\begin{eqnarray*}
 \frac{\mr d P_\theta}{\mr d P_0} & = & \prod_{i=1}^n \exp \left\{ \frac{-1}{2\sigma_0^2}  \left(  [T (h_\theta - h_0)(W_i)]^2 - 2 u_i[T (h_\theta - h_0)(W_i)] \right) \right\} \\
 & = & \prod_{i=1}^n \exp \left\{ \frac{-1}{2\sigma_0^2}  \left(   \left[  \sum_{m \in M} \frac{\theta_m c_0 2^{-j p}\lambda_j}{\sqrt{\mf m}} \wt \psi_{j,m}(W_i)\right]^2 - 2 u_i \left[ \sum_{m \in M} \frac{\theta_m c_0 2^{-j p} \lambda_j }{\sqrt{\mf m}} \wt \psi_{j,m}(W_i) \right] \right) \right\}  \,.
\end{eqnarray*}
Since the $\wt \psi_{j,m}$ have disjoint support, we have
\begin{eqnarray*}
 \frac{\mr d P_\theta}{\mr d P_0}  & = & \prod_{i=1}^n \exp \left\{ -\frac{1}{2} \sum_{m \in M} \frac{  c_0^2 2^{-2jp} \lambda_j^2}{\sigma_0^2 \mf m}  \wt \psi_{j,m}(W_i)^2 + \frac{u_i}{\sigma_0} \sum_{m \in M} \theta_m \frac{ c_0  2^{-j p}\lambda_j }{\sigma_0 \sqrt{\mf m}} \wt \psi_{j,m}(W_i) \right\} \\
 & = &   \prod_{i=1}^n  \exp\left\{ \sum_{m \in M} \left( -\frac{1}{2}\Delta_{i,j,m}^2 +  \theta_m \frac{u_i}{\sigma_0} \Delta_{i,j,m} \right) \right\}
\end{eqnarray*}
where $u_i \sim N(0,\sigma_0^2)$ under $P_0$ and
\[
 \Delta_{i,j,m} = \frac{c_0 2^{-jp} \lambda_j}{\sigma_0 \sqrt{\mf m}} \wt \psi_{j,m}(W_i) \,.
\]
Therefore:
\[
 \frac{\mr d P_\theta}{\mr d P_0} = \prod_{i=1}^n \left(1 +  A_{i,j}(\theta) \right)
\]
where
\begin{align*}
 A_{i,j}(\theta) & = \exp\left\{ \sum_{m \in M} \left( -\frac{1}{2}\Delta_{i,j,m}^2 +  \theta_m \frac{u_i}{\sigma_0} \Delta_{i,j,m} \right) \right\} - 1 \\
 & =  \left\{ \sum_{m \in M} \exp \left( -\frac{1}{2}\Delta_{i,j,m}^2 +  \theta_m \frac{u_i}{\sigma_0} \Delta_{i,j,m} \right) \right\} - 1
\end{align*}
and the second line is again by disjoint support of the $\wt \psi_{j,m}$ (which implies $\Delta_{i,j,m}$ is nonzero for at most one $m$ for each $i$).

Let $E_0$ be expectation under the measure $P_0$ and observe that $E_0[A_{i,j}(\theta)] = 0$ for each $\theta \in \{-1,1\}^{\mf m}$. For each $\theta,\theta'$ we define the vector $\kappa_{\theta,\theta'} \in \mb R^n$ whose $i$th element is:
\begin{align*}
 \kappa_{\theta,\theta'}(i) & =  E_0[A_{i,j}(\theta) A_{i,j}(\theta')] \\
  & = \sum_{m \in M} \sum_{m' \in M} E_0 \left[ \exp\left\{-\frac{1}{2}\Delta_{i,j,m}^2 -\frac{1}{2}\Delta_{i,j,m'}^2 +  \frac{u_i}{\sigma_0}(\theta_m \Delta_{i,j,m} + \theta'_{m'} \Delta_{i,j,m'})\right\} - 1\right] \\
 & = \sum_{m \in M} \sum_{m' \in M} E_0 \left[ e^{\theta_m \theta'_{m'} \Delta_{i,j,m} \Delta_{i,j,m'}} - 1\right] \\
 & = \sum_{m \in M} E_0 \left[ e^{\theta_m \theta'_{m} \Delta_{i,j,m}^2} - 1\right] \,.
\end{align*}
where the final line is again by disjoint support of the $\wt \psi_{j,m}$. Using  $|\wt \psi_{j,k}| \lesssim 2^{dj/2}$, $E_0 [\widetilde \psi_{j,k}(X_i)^2] = 1$, and $\mf m \asymp 2^{dj}$, it is straightforward to derive the bounds:
\begin{eqnarray}
 |\Delta_{i,j,m}|
 & \lesssim & \frac{ c_0 2^{-jp}\lambda_j}{ \sigma_0} \label{e-D1} \\
 E_0[\Delta_{i,j,m}^2] & = & \frac{ c_0^2 2^{-2jp}\lambda_j^2 }{ \sigma_0^2\mf m } \label{e-D2} \,.
\end{eqnarray}
By Taylor's theorem:
\begin{eqnarray*}
 \kappa_{\theta,\theta'}(i)  & = & \sum_{m \in M} \theta_m \theta'_{m} E_0[\Delta_{i,j,m}^2] + \sum_{m \in M}  \frac{1}{2} E_0[\Delta_{i,j,m}^4] + r_3(\theta,\theta') \\
 & = & \sum_{m \in M} \theta_m \theta'_{m} E_0[\Delta_{i,j,m}^2] + r_2(\theta,\theta')
\end{eqnarray*}
where $r_2$ and $r_3$ are remainder terms. Using the Lagrange remainder formula and (\ref{e-D1}) and (\ref{e-D2}), we may deduce that
\begin{eqnarray*}
 r_2(\theta,\theta')
 \quad \leq \quad \frac{1}{2} \sum_{m} E_0 \left[ e^{\Delta_{i,j,m}^2}  \Delta_{i,j,m}^4 \right]
 \quad \lesssim \quad \exp\left\{\frac{C c_0^2 2^{-2jp}\lambda_j^2}{\sigma_0^2}\right\}\frac{c_0^4 2^{-4jp}\lambda_j^4 }{\sigma_0^4 \mf m}\,.
\end{eqnarray*}
and
\begin{eqnarray*}
 r_3(\theta,\theta')
 \quad \leq \quad \frac{1}{6} \sum_{m} E_0 \left[ e^{\Delta_{i,j,m}^2}  \Delta_{i,j,m}^6 \right]
 \quad \lesssim \quad \exp\left\{\frac{C c_0^2 2^{-2jp}\lambda_j^2}{\sigma_0^2 }\right\}\frac{c_0^6 2^{-6jp}\lambda_j^6 }{\sigma_0^6 \mf m} .
\end{eqnarray*}
where $C$ is a finite positive constant and we again used the fact that $\Delta_{i,j,m}$ is nonzero for at most one $m$ for each $i$.

Lemma 22 of \cite{Pollard-metrics} provides the bound:
\[
 \| P^* -  P_0\|^2_{TV} \leq 2^{-2\mf m} \sum_{\theta \in\{-1,1\}^{\mf m}} \sum_{\theta' \in\{-1,1\}^{\mf m}} \Upsilon (\kappa_{\theta,\theta'})
\]
where for any vector $c = (c_1,\ldots,c_n)' \in \mb R^n$ the function $\Upsilon (c)$ is defined as
\[
 \Upsilon(c) = -1 + \prod_{i=1}^n (1+c_i) = \sum_{i=1}^n c_i + \sum_{i_1=1}^n \sum_{i_2=i_1 +1}^n c_{i_1} c_{i_2} + \mbox{higher-order terms}
\]
where the higher-order terms are sums over triples, quadruples, etc, with all distinct indices, up to $c_1 c_2 \ldots c_n$.
Therefore:
\begin{align}
 \|P^* - P_0\|^2_{TV} & \leq  2^{-2\mf m} \sum_{\theta,\theta'} \Bigg[ \sum_{i=1}^n \left( \sum_{m \in M} \theta_m \theta'_{m} E_0[\Delta_{i,j,m}^2] + \sum_{m \in M}  \frac{1}{2} E_0[\Delta_{i,j,m}^4] + r_3(\theta,\theta') \right) \notag \\
 &  \quad + \sum_{i_1=1}^n \sum_{i_2=i_1 +1}^n \Bigg\{ \left( \sum_{m \in M} \theta_m \theta'_{m} E_0[\Delta_{i_1,j,m}^2] + r_2(\theta,\theta') \right)  \left( \sum_{m' \in M} \theta_{m'} \theta'_{m'} E_0[\Delta_{i_2,j,m'}^2] + r_2(\theta,\theta') \right) \Bigg\} \notag \\
 &  \quad + \mbox{higher order terms} \Bigg] \,. \label{e-tv1}
\end{align}
Since $\sum_{\theta,\theta'} \theta_m \theta'_{m'} = 0$ for all $m,m' \in M$ and $\sum_{\theta,\theta'} 1 = 2^{2\mf m}$, the first-order sum in  (\ref{e-tv1}) is:
\begin{align}
 & 2^{-2\mf m} \sum_{i=1}^n  \sum_{\theta,\theta'} \left( \sum_{m}  \theta_m \theta'_{m'} E_0[\Delta_{i,j,m}^2] + \frac{1}{2} \sum_{m} E_0[\Delta_{i,j,m}^4] + r_3(\theta,\theta') \right) \notag \\
 & = \frac{n}{2}\sum_{m}  E_0 \left[   \Delta_{i,j,m}^4 \right] + n 2^{-2\mf m} \sum_{\theta,\theta'} r_3(\theta,\theta') \notag \\
 & \lesssim \frac{n  c_0^4 2^{-4jp}\lambda_j^4 }{ \sigma_0^4 \mf m}  \left( 1 + \exp\left\{\frac{c_0^2 2^{-2jp} \lambda_j^2}{\sigma_0^2}\right\}\frac{c_0^2 2^{-2jp}\lambda_j^2 }{\sigma_0^2 \mf m}  \right)\label{e-tv11}\,.
\end{align}
Also observe that
\[
 \sum_{\theta} \sum_{\theta'} (\theta_m \theta_m') (\theta_{m'} \theta_{m'}') =
 \left\{ \begin{array}{ll} 0 & \mbox{if $m \neq m'$} \\
 2^{2\mf m} & \mbox{if $m = m'$.} \end{array} \right.
\]
The second-order sum in (\ref{e-tv1}) is therefore:
\begin{align}
 & \frac{n(n-1)}{2} \left( \mf m E_0[\Delta_{i,j,m}^2]^2 + O( \mf m \max_{m \in M} E[\Delta_{i,j,m}^2] \times \max_{\theta,\theta'} r_2(\theta,\theta')) + O(\max_{\theta,\theta'} r_2(\theta,\theta')^2) \right) \notag \\
 & \asymp \frac{n^2 c_0^4 2^{-4jp} \lambda_j^4 }{\sigma_0^4 \mf m} \left( 1 + O\left( \exp\left\{\frac{2C c_0^2 2^{-2jp}\lambda_j^2 }{\sigma_0^2}\right\} \left(  2^{-2jp} \lambda_j^2 + \frac{2^{-4jp } \lambda_j^4}{\mf m} \right) \right) \right)  \,.\label{e-tv21}
 \end{align}
The higher-order terms in (\ref{e-tv1}) will be of asymptotically smaller order because $2^{-jp} \lambda_j \asymp 2^{-jp} \nu(2^j) = o(1)$  .  Substituting (\ref{e-tv11}) and (\ref{e-tv21}) into (\ref{e-tv1}) yields:
\begin{eqnarray*}
 \|P^* - P_0\|^2_{TV} \quad \lesssim \quad \frac{ n^2  2^{-4jp}\lambda_j^4 }{\mf m} \left( 1 +o(1)  \right) \quad \asymp \quad \frac{ n^2  2^{-4jp}\nu(2^j)^4 }{\mf m}
\end{eqnarray*}
as required.
\end{proof}

\subsection{Proofs for Appendix \ref{ax-basis}}

\begin{proof}[\textbf{Proof of Lemma \ref{lem-spline1}}]
Part (a) is equation (3.4) on p. 141 of \cite{DeVoreLorentz}. For part (b), let $v \in \mb R^J$, let $f_X(x)$ denote the density of $X_i$ and let $\underline f_X = \inf_x f_X(x)$ and $\overline f_X = \sup_x f_X(x)$. Then for any $v \in \mb R^J$:
\begin{eqnarray*}
 v'E[\psi^J(X_i)\psi^J(X_i)']v	& \geq & \underline f_X  \int_0^1 (\psi^J(x)'v)^2\,\mathrm dx
 \\
 & \geq & \underline f_X  c_1^2 \frac{\min_{-r+1 \leq j \leq m} (t_{j+r}-t_r)}{r} \|v\|_{\ell^2}^2 \\
 & \geq & \underline f_X c_1^2 \frac{c_2 J^{-1}}{r} \|v\|^2_{\ell^2}
\end{eqnarray*}
for some finite positive constant $c_1$, where the first inequality is by Assumption \ref{a-data}(i), the second is by Theorem 4.2 (p. 145) of \cite{DeVoreLorentz} with $p = 2$, and the third is by uniform boundedness of the mesh ratio. By the variational characterization of eigenvalues of selfadjoint matrices, we have:
\begin{eqnarray*}
 \lambda_{\min}(G_\psi) = \min_{v \in \mb R^J, v \neq 0} \frac{v'E[\psi^J(X_i)\psi^J(X_i)']v}{\|v\|_{\ell^2}^2}  \geq \underline f_X c_1^2 \frac{c_2 J^{-1}}{r}
\end{eqnarray*}
This establishes the upper bound on $\lambda_{\min}(G_\psi)^{-1}$. The proof of the lower bound for $\lambda_{\max}(G_\psi)^{-1}$ follows analogously
by Theorem 4.2 (p. 145) of \cite{DeVoreLorentz} with $p = 2$. Part (c) then follows directly from part (b).
\end{proof}

\begin{proof}[\textbf{Proof of Lemma \ref{lem-spline2}}]
The $\ell^1$ norm of the tensor product of vectors equals the product of the $\ell^1$ norms of the factors, whence part (a) follows from Lemma \ref{lem-spline1}. As $\psi^J(x)$ is formed as the tensor-product of univariate B-splines, each element of $\psi^J(x)$ is of the form $\prod_{l=1}^{d} \psi_{J i_l}(x_l)$ where $\psi_{J i_l}(x_l)$ denotes the $i_l$th element of the vector of univariate B-splines. Let $v \in \mb R^J$. We may index the elements of $v$ by the multi-indices $i_1,\ldots,i_d \in \{1,\ldots,m+r\}^{d}$. By boundedness of $f_X$ away from zero and Fubini's theorem, we have:
\begin{eqnarray*}
 v'E[\psi^J(X_i)\psi^J(X_i)']v	& \geq & \underline f_X  \int_0^1\cdots \int_0^1 \Big( \sum_{i_1,\ldots,i_l} v_{i_1\cdots i_d} \prod_{l=1}^d \psi_{J i_l}(x_l) \Big)^2\,\mathrm dx_1 \cdots \mathrm dx_d
 \\
 & = & \underline f_X  \int_0^1\cdots \int_0^1 \sum_{i_2,\ldots,i_l} \sum_{j_2,\ldots,j_l} \left( \prod_{l=2}^d \psi_{J i_l}(x_l) \right) \left( \prod_{l=2}^{d} \psi_{J j_l}(x_l) \right) \\
 & & \quad \left\{ \int_0^1 \sum_{i_1} \sum_{j_1} v_{i_1\cdots i_{d}} v_{j_1 \cdots j_{d}} \psi_{J i_1}(x_1) \psi_{J j_1}(x_1) \mathrm dx_1 \right\} \,\mathrm dx_2 \cdots \mathrm dx_d \,.
\end{eqnarray*}
Applying Theorem 4.2 (p. 145) of \cite{DeVoreLorentz} to the term in braces, and repeating for $x_2,\ldots,x_d$, we have:
\begin{eqnarray*}
 v'E[\psi^J(X_i)\psi^J(X_i)']v & \geq & \underline f_X \left(c_1^2 \frac{\min_{-r+1 \leq j \leq m} (t_{j+r}-t_r)}{r} \right)^d \|v\|_{\ell^2}^2  \\
 & \geq & \underline f_X c_1^{2d} \frac{c_2^d J^{-1}}{r^d} \|v\|^2_{\ell^2}
\end{eqnarray*}
where the second inequality is by uniform boundedness of the mesh ratio. The rest of the proof follows by identical arguments to Lemma \ref{lem-spline1}.
\end{proof}

\begin{proof}[\textbf{Proof of Lemma \ref{lem-wavelet1}}]
Each of the interior $\varphi_{j,k}$ and $\psi_{j,k}$ have support $[2^{-j}(-N+1+k),2^{-j}(N+k)]$, therefore $\varphi_{j,k}(x) \neq 0$ (respectively $\psi_{j,k}(x) \neq 0$) for less than or equal to $2N$ interior $\varphi_{j,k}$ (resp. $\psi_{j,k}$) and for any $x \in [0,1]$. Further, there are only $N$ left and right $\varphi_{j,k}$ and $\psi_{j,k}$. Therefore, $\varphi_{j,k}(x) \neq 0$ (respectively $\psi_{j,k}(x) \neq 0$) for less than or equal to $3N$ of the $\varphi_{j,k}$  (resp. $\psi_{j,k}$) at resolution level $j$ for each $x \in [0,1]$. By construction of the basis, each of $\varphi$, $\varphi^l_{j,k}$, $\psi^l_{j,k}$ for $k = 0,\ldots,N-1$ and $\varphi^r_{j,-k}$, $\psi^r_{j,-k}$ for $k =1,\ldots,N$ are continuous and therefore attain a finite maximum on $[0,1]$. Therefore, each of the $\varphi_{j,k}$ and $\psi_{j,k}$ are uniformly bounded by some multiple of $2^{j/2}$ and so:
\[
 \xi_{\psi,J} \lesssim 3N \times (\underbrace{2^{L_0/2}}_{\mbox{for the $\varphi_{L_0,k}$}} + \underbrace{2^{L_0/2}}_{\mbox{for the $\psi_{L_0,k}$}} + \ldots + \underbrace{2^{L/2}}_{\mbox{for the $\psi_{L,k}$}} ) \lesssim 2^{L/2}
\]
The result then follows because $J = 2^{L+1}$. For part (b), because $f_X$ is uniformly bounded away from $0$ and $\infty$ and the wavelet basis is orthonormal for $L^2[0,1]$, we have
\[
 v'E[\psi^J(X_i)\psi^J(X_i)'] v \asymp  v'  \left(  \int_0^1\psi^J(x)\psi^J(x)'\,\mathrm dx \right) v = \|v\|^2_{\ell^2}
\]
and so all eigenvalues of $G_\psi$ are uniformly (in $J$) bounded away from $0$ and $\infty$. Part (c) follows directly.
\end{proof}

\begin{proof}[\textbf{Proof of Lemma \ref{lem-wavelet2}}]
Lemma \ref{lem-wavelet1} implies that each of the factor vectors in the tensor product at level $j$ has $\ell^1$ norm of order $O(2^{dj/2})$ uniformly for $x = (x_1,\ldots,x_d)' \in [0,1]^d$ and in $j$. There are at most $2^d$ such tensor products at each resolution level. Therefore, $\xi_{\psi,J} = O(2^{dL/2}) = O(\sqrt J)$ since $J = O(2^{dL})$. Parts (b) and (c) follow by the same arguments of the proof of Lemma \ref{lem-wavelet1} since the tensor-product basis is orthonormal for $L^2([0,1]^d)$.
\end{proof}

\subsection{Proofs for Appendix \ref{ax-supp}}
\
\begin{proof}[\textbf{Proof of Lemma \ref{lem-inv}}]
$\|A^{-1} - I_r\|_{\ell^2} = \| A^{-1}(A - I_r) \|_{\ell^2}  \leq \|A^{-1}\|_{\ell^2} \|A - I_r\|_{\ell^2}$.
\end{proof}

\begin{proof}[\textbf{Proof of Lemma \ref{lem-linv}}]
The first assertion is immediate by Theorem 3.3 of \cite{Stewart1977} and definition of $A^-_l$ and $B^-_l$. For the second part, Weyl's inequality implies that $s_{\min}(B) \geq \frac{1}{2}s_{\min}(A)$ whenever $\|A - B\|_{\ell^2} \leq \frac{1}{2}s_{\min}(A)$.
\end{proof}

\begin{proof}[\textbf{Proof of Lemma \ref{lem-mpnorm}}]
$\|A_l^-\|_{\ell^2}^2 = \lambda_{\max}( A_l^- (A_l^-)') = \lambda_{\max}( (A'A)^{-1} ) = 1/\lambda_{\min}(A'A) = s_{\min}(A)^{-2}$.
\end{proof}

\begin{proof}[\textbf{Proof of Lemma \ref{lem-projm}}]
The result follows from \cite{LiLiCui} (see also \cite{Stewart1977}).
\end{proof}

\begin{proof}[\textbf{Proof of Lemma \ref{lem-matl2}}]
We prove the results for $\wh S^o$; convergence of $\wh G_\psi^o$ and $\wh G_b^o$ is proved in Lemma 2.1 of \cite{ChenChristensen-reg}. Note that
\[
 \wh S^o - S^o = \sum_{i=1}^n n^{-1} G_b^{-1/2}\{b^K(W_i)\psi^J(X_i)' - E[b^K(W_i)\psi^J(X_i)']\}G_\psi^{-1/2} =: \sum_{i=1}^n \Xi_i^o
\]
where $\|\Xi_i\|_{\ell^2} \leq 2n^{-1} \zeta_{b,K} \zeta_{\psi,J}$. Also,
\begin{eqnarray*}
 \left\|\sum_{i=1}^n E[\Xi_i^o \Xi_i^{o\prime}] \right\|_{\ell^2} & \leq & n^{-1} \|E[G_b^{-1/2}b^K(W_i)\psi^J(X_i)'G_\psi^{-1} \psi^J(X_i)b^K(W_i)'G_b^{-1/2}]\|_{\ell^2} \\
 & \leq & n^{-1} \zeta_{\psi,J}^2 \|E[G_b^{-1/2}b^K(W_i)b^K(W_i)'G_b^{-1/2}]\|_{\ell^2}  \\
 & = & n^{-1} \zeta_{\psi,J}^2 \|I\|_{\ell^2} \\
 & = & n^{-1} \zeta_{\psi,J}^2
\end{eqnarray*}
by the fact that $\|I\|_{\ell^2} = 1$. An identical argument yields the bound $\|\sum_{i=1}^n E[\Xi_i^{o\prime} \Xi_i^o] \|_{\ell^2} \leq n^{-1} \zeta_{b,K}^2$. Applying a Bernstein inequality for random matrices \cite[Theorem 1.6]{Tropp2012} yields
\[
 \mb P \left( \|\wh S^o - S^o\|_{\ell^2} > t \right) \leq 2\exp\left\{\log K - \frac{-t^2/2}{(\zeta_{b,K}^2 \vee \zeta_{\psi,J}^2)/n+ 2 \zeta_{b,K} \zeta_{\psi,J} t/(3n)} \right\} \,.
\]
The convergence rate $\|\wh S^o - S^o\|_{\ell^2}$ from this inequality under appropriate choice of $t$.
\end{proof}

\begin{proof}[\textbf{Proof of Lemma \ref{lem-BHJ}}]
Let $\widetilde b^K(x) = G_b^{-1/2} b^K(x)$ and denote $\widetilde b^K(x)' = (\widetilde b_{K1}(x),\ldots,\widetilde b_{KK}(x))$. As the summands have expectation zero, we have
\begin{eqnarray*}
 E\left[ \|G_b^{-1/2}\{B'(H_0 - H_J)/n - E[b^K(W_i)(h_0(X_i) - h_J(X_i))]\}\|_{\ell^2}^2 \right]
 & \leq & \frac{1}{n} E \left[ \sum_{k=1}^K (\widetilde b_{Kk}(W_i))^2 (h_0(X_i) - h_J(X_i))^2 \right] \\
 & \leq & \frac{K}{n} \|h_0 - h_J\|^2_\infty \wedge \frac{\zeta_{b,K}^2}{n} \|h_0 - h_J\|_{L^2(X)}^2\,.
\end{eqnarray*}
The result follows by Chebyshev's inequality.
\end{proof}

\begin{proof}[\textbf{Proof of Lemma \ref{lem-SGl2}}]
We begin by rewriting the target in terms of the orthonormalized matrices
\begin{eqnarray}
 (\wh G_b^{-1/2} \wh S)^-_l\wh G_b^{-1/2}G_b^{1/2} - (G_b^{-1/2} S)^-_l
 & = & G_\psi^{-1/2}\{ (\wh S^{o\prime} \wh G_b^{o-1}\wh S^o)^{-1}\wh S^{o\prime} \wh G_b^{o-1}   - (S^{o\prime}S^o)^{-1}S^{o\prime} \} \notag \\
 & = & G_\psi^{-1/2}\{ ((\wh G_b^o)^{-1/2}\wh S^o)^-_l ( \wh G_b^o)^{-1/2} - (S^o)^-_l \} \,. \label{e-maintarget}
\end{eqnarray}
We first bound the term in braces. By the triangle inequality,
\begin{equation}
 \begin{array}{rcl}
 & & \|((\wh G_b^o)^{-1/2}\wh S^o)^-_l ( \wh G_b^o)^{-1/2} - (S^o)^-_l\|_{\ell^2} \\
 & & \quad \leq \quad \|((\wh G_b^o)^{-1/2}\wh S^o)^-_l - (S^o)^-_l \|_{\ell^2} \|( \wh G_b^o)^{-1/2}\|_{\ell^2}  + \|( \wh G_b^o)^{-1/2} - I\|_{\ell^2}\|(S^o)^-_l\|_{\ell^2} \,.\label{e-target}
 \end{array}
\end{equation}
Lemma \ref{lem-matl2} provides that
\begin{eqnarray}
 \|\wh G_b^o - I_K \|_{\ell^2} & = & O_p (\zeta_{b,K} \sqrt{(\log K)/n}) \label{e-Gcgce} \\
 \|\wh S^o - S^o\|_{\ell^2} & = & O_p ((\zeta_{b,K} \vee \zeta_{\psi,J})\sqrt{(\log K)/n}) \label{e-Scgce} \,.
\end{eqnarray}
Let $\mathcal A_n$ denote the event upon which $\|\wh G_b^o - I_K \|_{\ell^2} \leq \frac{1}{2}$ and note that $\mb P(\mathcal A_n^c) = o(1)$ because $\|\wh G_b^o - I_K \|_{\ell^2}= o_p(1)$. Then by Lemmas \ref{lem-inv} and \ref{lem-sqrtm} we have
\begin{eqnarray*}
 \|(\wh G_b^o)^{-1/2} - I_K\|_{\ell^2} & \leq & \sqrt 2 \|(\wh G_b^o)^{1/2} - I_K\|_{\ell^2} \\
 & \leq &  \frac{2}{1+\sqrt 2} \|\wh G_b^o - I_K\|_{\ell^2} \notag
\end{eqnarray*}
on $\mathcal A_n$. It follows by expression (\ref{e-Gcgce}) and the fact that $\mb P(\mathcal A_n^c) = o(1)$ that
\begin{equation}\label{e-Gbinvrate}
 \|(\wh G_b^o)^{-1/2} - I\|_{\ell^2} = O_p (\zeta_{b,K} \sqrt{(\log K)/n})
\end{equation}
which in turn implies that $\|(\wh G_b^o)^{-1/2}\| = 1+o_p(1)$.

To bound $\|((\wh G_b^o)^{-1/2}\wh S^o)^-_l - (S^o)^-_l \|_{\ell^2}$, it follows by equations (\ref{e-Scgce}) and (\ref{e-Gbinvrate}) and the fact that $\|S^o\|_{\ell^2} \leq 1$ that:
\begin{eqnarray}
 \|(\wh G_b^o)^{-1/2}\wh S^o - S^o\|_{\ell^2} & \leq & \|(\wh G_b^o)^{-1/2} - I_K\|_{\ell^2} \|\wh S^o\|_{\ell^2} + \|\wh S^o - S^o\|_{\ell^2} \notag \\
 & = & O_p ((\zeta_{b,K} \vee \zeta_{\psi,J})\sqrt{(\log K)/n}) \label{e--lbound}\,.
\end{eqnarray}
Let $\mathcal A_{n,1} \subseteq \mathcal A_n$ denote the event on which $\|(\wh G_b^o)^{-1/2}\wh S^o - S^o\|_{\ell^2} \leq \frac{1}{2}s_{JK}$ and note that $\mb P(\mathcal A_{n,1}^c) = o(1)$ by virtue of the condition $s_{JK}^{-1} (\zeta_{b,K} \vee \zeta_{\psi,J})\sqrt{(\log K)/n} = o(1)$. Lemma \ref{lem-linv} provides that
\begin{equation} \label{e-stewart}
 \|((\wh G_b^o)^{-1/2}\wh S^o)^-_l - (S^o)^-_l \|_{\ell^2} \leq 2(1+\sqrt 5) s_{JK}^{-2}  \|(\wh G_b^o)^{-1/2}\wh S^o - S^o\|_{\ell^2}
\end{equation}
 on $\mathcal A_{n,1}$, and so
 \begin{equation} \label{e-asa}
  \|((\wh G_b^o)^{-1/2}\wh S^o)^-_l - (S^o)^-_l \|_{\ell^2} = O_p(s_{JK}^{-2}(\zeta_{b,K} \vee \zeta_{\psi,J})\sqrt{(\log K)/n})
\end{equation}
by (\ref{e--lbound}) and (\ref{e-stewart}). It follows from equations (\ref{e-asa}) and (\ref{e-maintarget}) that:
\[
  \| (\wh G_b^{-1/2} \wh S)^-_l\wh G_b^{-1/2}G_b^{1/2} - (G_b^{-1/2} S)^-_l  \|_{\ell^2}
  = O_p \Big(s_{JK}^{-2} (\zeta_{b,K} \vee \zeta_{\psi,J}) \sqrt{(\log K)/(n e_J)}\Big)
\]
which, together with the condition $J \leq K = O(J)$, proves part (a). Part (b) follows similarly.

For part (c), we pre and post multiply terms in the product by $G_b^{-1/2}$ and $G_\psi^{-1/2}$ to obtain:
\begin{eqnarray}
 & &  \| G_b^{-1/2} S\{(\wh G_b^{-1/2} \wh S)^-_l\wh G_b^{-1/2}G_b^{1/2} - (G_b^{-1/2} S)^-_l\}\|_{\ell^2} \notag \\
 & & \quad = \quad \| S^o  [\wh S^{o \prime} (\wh G_b^o)^{-} \wh S^o]^{-} \wh S^{o \prime} (\wh G_b^o)^{-} - S^o[S^{o\prime} S^o]^{-1} S^{o\prime}\|_{\ell^2} \notag \\
 & & \quad \leq \quad\|  S^o  [\wh S^{o \prime} (\wh G_b^o)^{-} \wh S^o]^{-} \wh S^{o \prime} (\wh G_b^o)^{-1/2} ((\wh G_b^o)^{-1/2} - I_K )\|_{\ell^2} \notag \\
 & & \quad \quad \quad + \| (S^o - (\wh G_b^o)^{-1/2}\wh S^o)  [\wh S^{o \prime} (\wh G_b^o)^{-} \wh S^o]^{-} \wh S^{o \prime} (\wh G_b^o)^{-1/2}\|_{\ell^2} \notag \\
 & & \quad \quad \quad + \| (\wh G_b^o)^{-1/2}\wh S^o  [\wh S^{o \prime} (\wh G_b^o)^{-} \wh S^o]^{-} \wh S^{o \prime} (\wh G_b^o)^{-1/2} - S^o[S^{o\prime} S^o]^{-1} S^{o\prime}\|_{\ell^2} \label{e-ssa1}\,.
\end{eqnarray}
Note that $\|((\wh G_b^o)^{-1/2}\wh S^o)^-_l\|_{\ell^2} \leq 2 s_{JK}^{-1}$ on $\mathcal A_{n,1}$ by Lemma \ref{lem-mpnorm}, so
\begin{equation} \label{e-ssa}
 \|((\wh G_b^o)^{-1/2}\wh S^o)^-_l\|_{\ell^2} = O_p (s_{JK}^{-1}) \,.
\end{equation}
It follows by substituting (\ref{e-Gbinvrate}), (\ref{e--lbound}), and (\ref{e-ssa}) into (\ref{e-ssa1}) that
\begin{eqnarray}
 & &  \|G_b^{-1/2} S\{(\wh G_b^{-1/2} \wh S)^-_l\wh G_b^{-1/2}G_b^{1/2} - (G_b^{-1/2} S)^-_l\}\|_{\ell^2} \notag \\
  & & \quad \leq \quad O_p (s_{JK}^{-1} (\zeta_{b,K} \vee \zeta_{\psi,J})\sqrt{(\log K)/n}) \notag \\
  & & \quad \quad \quad  + \|(\wh G_b^o)^{-1/2} \wh S^o  [\wh S^{o \prime} (\wh G_b^o)^{-} \wh S^o]^{-} \wh S^{o \prime} (\wh G_b^o)^{-1/2} - S^o[S^{o\prime} S^o]^{-1} S^{o\prime}\|_{\ell^2}\,. \label{e-partc1}
\end{eqnarray}
The remaining term on the right-hand side of (\ref{e-partc1}) is the $\ell^2$ norm of the difference between the orthogonal projection matrices associated with $S^o$ and $(\wh G_b^o)^{-1/2}\wh S^o$. Applying Lemma \ref{lem-projm}, we obtain:
\[
 \|(\wh G_b^o)^{-1/2}\wh S^o  [\wh S^{o \prime} (\wh G_b^o)^{-} \wh S^o]^{-} \wh S^{o \prime} (\wh G_b^o)^{-1/2} - S^o[S^{o\prime} S^o]^{-1} S^{o\prime}\|_{\ell^2} \leq 2 s_{JK}^{-1} \|(\wh G_b^o)^{-1/2}\wh S^o - S^o\|_{\ell^2}
\]
on $\mathcal A_{n,1}$. Result (c) then follows by (\ref{e--lbound}) and (\ref{e-partc1}).
\end{proof}

\singlespacing
\putbib
}
\end{bibunit}

\end{document}